\documentclass[12pt]{article}
\usepackage{amsmath,amssymb,amsthm}
\usepackage{mathtools}
\usepackage{graphics,epsfig}
\usepackage{hyperref}
\usepackage{natbib}
\usepackage{color}
\usepackage{graphicx}
\usepackage{caption}
\usepackage{subcaption}
\usepackage{float}
\usepackage{mathrsfs}
\usepackage{multirow}
\usepackage{comment}
\usepackage{setspace}
\usepackage{bbm}
\usepackage{bm}
\usepackage{amsfonts}
\usepackage{xcolor}
\usepackage{pslatex}
\usepackage{setspace}
\usepackage{bbm}
\usepackage{pgfplots}
\usepgfplotslibrary{statistics}

\pgfplotsset{compat=1.18}
\def \gg {\boldsymbol{g}}

	\newtheorem{remark}{\bf Remark}
	\newtheorem{rmk}[remark]{\bf Remark}

	\newtheorem{theorem}{\bf Theorem}
	
	\newtheorem{prop}[theorem]{\bf Proposition}
	\newtheorem{lem}[theorem]{\bf Lemma}


\begin{document}

  \title{\bfseries Modeling Educational Performance Using School Demographics and Teacher Characteristics}

\author{Brianna Reed$^1$  \and
	Paramahansa Pramanik$^{2,3}$
}

\date{
	\small
	$^{1}$ Department of Education and Professional Studies, University of South Alabama, Mobile, AL 36688, United States.\\
	\vspace{0.5em}
	$^{2}$ Department of Mathematics and Statistics, University of South Alabama, Mobile, AL 36688, United States.\\
	$^{3}$Corresponding author, \texttt{ppramanik@southalabama.edu}
}

\maketitle

\begin{abstract}
	High-dimensional educational datasets often exhibit sparsity, grouped predictors, and locally correlated covariates, limiting the effectiveness of conventional regression methods. We propose an Adaptive Weighted Group Fused LASSO estimator that jointly performs adaptive variable selection, group regularization, and coefficient fusion within a unified penalized regression framework. An efficient ADMM algorithm is developed, and asymptotic properties, including consistency, oracle property, and debiased asymptotic normality, are established. Simulation studies demonstrate superior estimation and prediction performance compared with existing penalized methods. An application to Alabama public school mathematics proficiency data illustrates improved model interpretability, predictive accuracy, and identification of the most influential institutional predictors.
\end{abstract}

{\bf Keywords:} Educational statistics, Large-sample theory, School-level modeling, Statistical inference.

\section{Introduction:}

The increasing availability of large-scale educational administrative databases has created new opportunities for understanding factors associated with school-level academic performance. These data are typically characterized by high-dimensional predictor spaces, complex dependence structures, grouped explanatory variables, and substantial multicollinearity arising from institutional, demographic, and teacher-related characteristics. Classical regression techniques often perform poorly under such settings because coefficient estimates become unstable, variable selection is unreliable, and prediction accuracy deteriorates in the presence of correlated covariates. Consequently, modern statistical learning methods based on penalized regression have become an increasingly attractive alternative for analyzing educational data, as they simultaneously perform estimation, regularization, and variable selection while producing models that remain both interpretable and computationally efficient \citep{yuan2006model,friedman2010regularization}. From an applied statistical perspective, the principal objective is not merely to identify statistically significant predictors but to construct parsimonious models that generalize well to future observations while accurately recovering the underlying structure of the regression coefficients.

Although numerous penalized regression procedures have been proposed over the past two decades, most existing methods address only one aspect of structural regularization at a time. Coordinate-wise penalties such as the LASSO and Adaptive LASSO promote sparsity by shrinking individual regression coefficients toward zero, whereas Group LASSO exploits predefined group structures among predictors and Fused LASSO encourages neighboring coefficients to exhibit local homogeneity \citep{tibshirani2005sparsity}. In many real-world applications, however, these structural characteristics occur simultaneously rather than independently. Educational datasets, in particular, often contain correlated groups of institutional variables together with predictors whose effects exhibit local similarity, making separate regularization strategies insufficient for capturing the full complexity of the data-generating process. These observations motivate the development of a unified penalized regression framework capable of incorporating adaptive sparsity, grouped variable selection, and coefficient fusion within a single estimation procedure. The methodology developed in this paper addresses this need by introducing an Adaptive Weighted Group Fused LASSO estimator together with a scalable optimization algorithm, comprehensive asymptotic theory establishing consistency, oracle properties, and debiased inference, and an empirical application illustrating its practical advantages for modeling school-level mathematics proficiency.

This paper examines variation in mathematics proficiency across Alabama public schools using school-level administrative data. The analysis focuses on differences in educational outcomes among schools with distinct demographic compositions and further investigates whether these differences persist after accounting for socioeconomic disadvantage. In addition, we evaluate the extent to which institutional characteristics, specifically teacher certification status and teaching experience, are associated with school-level mathematics proficiency. Throughout this paper, the term minority students refers collectively to students identified as Black or African American, Hispanic, Asian, Pacific Islander, American Indian or Alaska Native, or belonging to two or more racial groups. Student gender is not included as a covariate in the analysis, and the dataset encompasses elementary, middle, and high schools throughout Alabama.

The increasing availability of large-scale educational administrative data has substantially expanded opportunities for statistically rigorous analyses of school performance and institutional effectiveness. Such data are inherently heterogeneous, reflecting variation in demographic composition, socioeconomic conditions, teacher characteristics, and school resources, thereby requiring statistical models capable of producing reliable estimation and valid inference while accounting for multiple sources of variability \citep{raudenbush1988educational,bryk1989quantitative,rowe1995methodological,page2017exploring,chang2021statistical}. From an applied statistical perspective, the primary objective is not merely to identify differences among schools but to quantify the association between institutional characteristics and educational outcomes within a principled inferential framework. Motivated by this objective, we develop a statistical modeling approach for school-level mathematics proficiency that is supported by large-sample theoretical results and subsequently illustrate its practical utility using administrative data from Alabama public schools. The proposed framework complements existing educational research on instructional equity and learning environments \citep{reed2005looking,nctm2014principles,harris1997growing} by emphasizing statistical estimation, inference, and model-based interpretation.

In the United States, where approximately 55\% of eighth-grade students belong to historically underrepresented demographic groups \citep{naep2022math}, understanding variation in educational outcomes requires careful consideration of differences in learning environments, institutional characteristics, and access to educational resources. From a statistical perspective, observed differences in academic performance reflect the combined influence of multiple demographic, socioeconomic, and school-level factors that interact in complex ways. Consequently, educational administrative data provide an important setting for developing inferential procedures capable of quantifying these associations while appropriately accounting for heterogeneity across schools \citep{raudenbush1988educational,bryk1989quantitative,rowe1995methodological,chang2021statistical}. Conceptually, educational support may be viewed as adapting instructional practices to students' individual learning needs rather than providing identical resources to every student. For example, revisiting the familiar illustration of children of different heights attempting to reach the same counter, identical assistance benefits them unequally, whereas support tailored to individual circumstances enables each child to perform the same task effectively. Likewise, classroom instruction that recognizes differences in prior preparation, learning styles, and educational needs may better facilitate academic achievement than a uniform instructional approach \citep{landsman2023white,shor1986equality}.

Reducing disparities in academic achievement remains a central objective of educational policy and empirical research \citep{jordan2010supporting}. National assessment data have consistently documented substantial variation in mathematics proficiency across demographic and socioeconomic groups. For example, the 2006 National Assessment of Educational Progress (NAEP) reported that 91\% of African American students and 87\% of Latino students in the eighth grade did not achieve mathematics proficiency, compared with 63\% of White students. The same assessment further showed that only 13\% of students from economically disadvantaged households attained proficiency or advanced performance, whereas the corresponding figure was 38\% among students from more economically advantaged households \citep{flores2007examining}. Similar patterns continue to be observed in recent statewide assessments. According to the 2024 Alabama assessment results, 94\% of Black students, 89\% of Hispanic students, and 67\% of White students were classified as below proficiency in mathematics \citep{nagb2025}. These persistent differences motivate the need for statistical methodologies capable of evaluating how institutional characteristics and school composition are associated with observed educational outcomes while providing valid uncertainty quantification and inferential guarantees \citep{page2017exploring}.

Observed differences in mathematics achievement should not be interpreted as evidence of inherent differences in students' academic ability. Rather, a substantial body of educational research argues that historical, institutional, and structural factors have contributed to unequal educational opportunities across schools and communities \citep{howard2019race}. A historically important example is the 1896 \emph{Plessy v. Ferguson} decision, which established the legal doctrine of ``Separate but Equal'' and permitted racial segregation in public facilities \citep{cates2012plessy,araujo2020century}. Although the doctrine ostensibly required equivalent educational provision, schools serving different populations frequently differed considerably in instructional quality, facilities, educational resources, and teacher availability \citep{darling2004color}. The subsequent \emph{Brown v. Board of Education} decision in 1954 declared segregated public schools inherently unequal and initiated the process of desegregation. Nevertheless, numerous studies suggest that historical inequities continue to influence educational systems through differences in institutional resources, school composition, and access to qualified teachers \citep{howard2016we}. These considerations further underscore the importance of statistical frameworks that distinguish systematic institutional effects from random variation, thereby enabling more reliable estimation and inference for educational administrative data \citep{raudenbush1988educational,chang2021statistical,page2017exploring}.

\subsection{Background Literature:}

A substantial body of educational research has examined the factors associated with variation in mathematics achievement across schools and student populations. Rather than attributing differences in educational outcomes to a single determinant, the literature generally recognizes that academic performance reflects the combined influence of demographic characteristics, socioeconomic conditions, institutional resources, instructional practices, and school environments \citep{tate1997race,hall2011tale,chin2000learning}. Consequently, educational studies frequently integrate both qualitative and quantitative evidence to understand the mechanisms underlying observed differences in student achievement. Variables commonly investigated include teacher qualifications and experience, student backgrounds, school leadership, classroom interactions, curriculum design, instructional practices, and broader institutional characteristics. Because these factors are often interrelated, most empirical investigations seek to identify combinations of explanatory variables that improve the prediction and interpretation of student achievement rather than isolating a single dominant factor \citep{dinsmore2016multidimensional}.

The statistical analysis of educational administrative data has similarly received considerable attention. Hierarchical and multilevel models have become standard tools for analyzing educational data because they naturally account for the nested structure of students within classrooms and schools while allowing inference on institutional effects \citep{raudenbush1988educational,bryk1989quantitative}. More recently, methodological developments have emphasized rigorous estimation, uncertainty quantification, and model-based evaluation of school effectiveness using increasingly large administrative databases \citep{rowe1995methodological,page2017exploring,chang2021statistical}. These studies demonstrate that reliable inference depends not only on appropriate model specification but also on careful consideration of data quality, heterogeneity, and institutional variation.

An illustrative example is the study of \citet{flores2007examining}, who argued that observed differences in mathematics achievement among White, African American, Latino, and economically disadvantaged students are more appropriately interpreted as differences in educational opportunity rather than differences in academic ability. Using publicly available information on student achievement, school expenditures, and teacher characteristics, together with evidence from qualitative studies on instructional practices and teacher expectations, Flores concluded that disparities in educational outcomes are associated with unequal access to experienced teachers, lower academic expectations, and differences in educational funding. This perspective emphasizes the importance of considering institutional and contextual variables when modeling educational performance.

Historical investigations have reached similar conclusions. For example, \citet{orfield2007} documented the consequences of increasing school resegregation following the 1980s and argued that changes in school composition were associated with subsequent changes in student achievement. Their findings suggest that institutional and demographic characteristics remain important components in explaining observed differences in educational performance.

Despite extensive empirical research, several methodological challenges remain.

\begin{enumerate}
	
	\item \textbf{Measurement of educational outcomes.}
	Educational achievement is frequently represented using standardized assessment scores. Although standardized examinations provide objective and scalable measures of student performance, they do not fully capture the multidimensional nature of learning and may introduce measurement biases associated with cultural background, language, and socioeconomic context \citep{Kim2015}. In Alabama public schools, White students constitute approximately 51\% of total enrollment, compared with 32\%, 11\%, 1\%, 1\%, and 4\% for Black or African American, Hispanic/Latino, Asian, American Indian/Alaska Native, and students identifying with two or more racial groups, respectively \citep{QuickFacts2024}. Furthermore, multiple-choice assessments such as ACAP and ACT provide limited information regarding the nature of incorrect responses. Computational mistakes, conceptual misunderstandings, and procedural errors are typically recorded identically, despite representing fundamentally different learning processes \citep{MultipleChoice}. From a statistical perspective, such measurement limitations may introduce additional variability and reduce the explanatory power of regression-based models.
	
	\item \textbf{Complex dependence among explanatory variables.}
	Educational outcomes arise from the interaction of numerous demographic, socioeconomic, institutional, and instructional factors. Consequently, attributing observed differences solely to student characteristics provides an incomplete explanation of academic performance \citep{flores2007examining}. Previous studies have identified unequal access to qualified teachers, differences in instructional expectations, and disparities in school funding as important institutional determinants of educational outcomes \citep{flores2007examining}. Likewise, the National Council of Teachers of Mathematics emphasized that effective mathematics instruction requires high-quality curriculum, instructional support, adequate educational resources, and consistently high academic expectations \citep{nctm2014principles}. Statistically, these considerations motivate multivariable modeling strategies capable of accounting for correlated covariates and heterogeneous institutional effects rather than relying on simple univariate comparisons.
	
	\item \textbf{Data quality and statistical inference.}
	Large educational administrative databases often contain incomplete, censored, or aggregated information, creating additional challenges for statistical modeling and inference. For example, the Alabama State Department of Education reports use both an asterisk (*) and a tilde (\verb|~|) to suppress demographic information under specific reporting rules \citep{alsde2024supportingdata}. An asterisk indicates either a subgroup size of at most ten students or that the complement subgroup contains at most ten students, whereas a tilde denotes subgroup proportions of at least 95\% or at most 5\%. These reporting conventions reduce data granularity and make it difficult to distinguish school populations accurately. During the construction of our analytical dataset, this limitation resulted in the exclusion of approximately 70 schools because demographic classifications could not be uniquely determined. Such issues illustrate the importance of carefully considering missingness, data suppression mechanisms, and measurement uncertainty when developing statistical procedures for educational administrative data.
	
	\item \textbf{Translation of empirical evidence into policy.}
	Although numerous empirical studies have documented persistent differences in mathematics achievement, translating statistical evidence into effective educational policy remains challenging because educational outcomes depend on multiple interacting factors. In Alabama, the 2022 Alabama Numeracy Act (Act 2022--249) introduced statewide reforms including evidence-based mathematics standards, specialized teacher training, early intervention strategies, elementary mathematics coaches, and revised statewide assessments. The legislation also prohibited the use of Common Core mathematics standards. While these initiatives represent important efforts to improve mathematics education, they do not explicitly address the institutional mechanisms underlying persistent differences in educational outcomes across schools. Consequently, statistically rigorous analyses capable of quantifying the relative contribution of teacher qualifications, school composition, and socioeconomic characteristics remain essential for informing future policy decisions.
	
\end{enumerate}

Overall, the existing educational literature demonstrates that mathematics achievement is shaped by a complex interplay of historical, demographic, socioeconomic, institutional, and instructional factors. At the same time, the statistical literature highlights the need for principled inferential procedures capable of accommodating heterogeneous educational environments, imperfect administrative data, and correlated explanatory variables \citep{raudenbush1988educational,bryk1989quantitative,page2017exploring,chang2021statistical}. These considerations motivate the statistical framework developed in this paper, which combines rigorous large-sample inference with an empirical analysis of statewide educational administrative data.

\subsection{Our Contribution:}

Existing studies of educational achievement frequently analyze demographic groups separately, reporting differences across individual subpopulations such as African American, Hispanic, Asian American, Native American, and other racial or ethnic groups. While this approach provides detailed subgroup-specific comparisons, it may also lead to fragmented inference and reduced statistical precision when certain groups contain relatively small sample sizes. In contrast, the present study adopts a school-level perspective by considering the overall demographic composition of each school rather than focusing exclusively on individual student classifications. This aggregation provides a broader assessment of institutional characteristics while reducing unnecessary fragmentation of the analysis \citep{pramanik2020motivation}.

Accordingly, schools are classified according to their predominant demographic composition, and the analysis is conducted at the school level. It is important to emphasize that the reported mathematics proficiency rates represent the performance of the entire student population within a school rather than that of any single demographic subgroup. For example, a school with 51\% of its students belonging to one demographic category and the remaining 49\% belonging to another is classified according to the majority composition, while the observed proficiency reflects the performance of all enrolled students \citep{kakkat2026angiotensin}. Consequently, the objective is not to compare individual students belonging to different demographic groups but rather to investigate whether institutional differences among schools with distinct population compositions are associated with systematic variation in mathematics achievement.

From a statistical perspective, this formulation offers several advantages. School-level aggregation reduces the dimensionality of the inferential problem, facilitates more stable estimation of model parameters, and shifts the emphasis from individual-level comparisons to institutional characteristics that are directly relevant for educational policy. Moreover, inference based on school-level administrative data is less susceptible to instability arising from small subgroup sample sizes and provides a natural framework for studying associations between school composition, teacher workforce characteristics, and academic performance \citep{raudenbush1988educational,bryk1989quantitative,rowe1995methodological,page2017exploring}.

Previous empirical studies have consistently reported that schools with different demographic compositions also differ with respect to teacher qualifications, certification pathways, and instructional experience. Rather than assuming that these institutional characteristics fully explain observed differences in mathematics proficiency, this study formally evaluates their contribution within a unified statistical modeling framework. In particular, we investigate whether teacher certification status and teaching experience remain statistically significant predictors of school-level mathematics proficiency after accounting for school composition and socioeconomic characteristics \citep{pramanik2020optimization,pramanik2023semicooperation}. The proposed methodology is accompanied by large-sample theoretical results establishing the asymptotic properties of the estimators, thereby providing a principled basis for statistical inference using educational administrative data.

The empirical investigation is guided by the following research questions.

\begin{enumerate}
	
	\item
	Do Alabama public schools with predominantly minority student populations exhibit lower mathematics proficiency and lower proportions of certified and experienced teachers than schools serving predominantly White student populations? If so, what is the magnitude of the observed difference? Similarly, do schools serving predominantly minority student populations differ from predominantly economically disadvantaged White schools with respect to mathematics proficiency and teacher qualifications? If so, what is the magnitude of these differences?
	
	\item
	To what extent are the proportions of inexperienced teachers and teachers holding emergency or provisional certification associated with school-level mathematics proficiency?
	
\end{enumerate}

Correspondingly, the empirical analysis evaluates the following hypotheses.

\begin{enumerate}
	
	\item
	We hypothesize that schools serving predominantly minority student populations exhibit lower mathematics proficiency together with lower levels of teacher certification and teaching experience than schools serving predominantly White student populations. We further hypothesize that these institutional differences are statistically significant.  Similarly, we hypothesize that schools serving predominantly minority student populations differ significantly from predominantly economically disadvantaged White schools with respect to mathematics proficiency and teacher workforce characteristics.
	
	\item
	We hypothesize that the proportions of inexperienced teachers and teachers holding emergency or provisional certification are significantly associated with school-level mathematics proficiency after accounting for other observed institutional characteristics.
	
\end{enumerate}

In Section 2, we will discuss the methods and statistical properties we used to analyze the data. Section 3 covers the analysis of the data and how it was manipulated to answer the research questions below, along with pictorial and numerical representations of the analysis. Section 4 discusses my empirical findings based on the data analysis. Then the paper will conclude with a discussion of my findings and how they relate to what is already known about the topic.

\section{Statistical Methodology}

\subsection {Model formulation:}
This section introduces a unified penalized regression framework for modeling school-level mathematics proficiency using educational administrative data. Let $n$ denote the total number of schools included in the analysis, and let
\[
\mathcal{D}=\left\{\left(Y_i,\mathbf{x}_i\right):i=1,\ldots,n\right\}
\]
represent the observed sample, where $Y_i\in\mathbb{R}$ denotes the mathematics proficiency rate for the $i$th school and
\[
\mathbf{x}_i=(x_{i1},x_{i2},\ldots,x_{ip})^\top\in\mathbb{R}^{p}
\]
is the corresponding vector of school-level explanatory variables. The covariates include teacher certification rates, teacher experience, socioeconomic indicators, demographic composition, school enrollment, and other institutional characteristics extracted from statewide administrative records \citep{pramanik2021optimala,pramanik2021scoring}.

Throughout the paper, the response variable is modeled as
\begin{equation}
	Y_i=\mathbf{x}_i^\top\boldsymbol{\beta}+\varepsilon_i,
	\qquad i=1,\ldots,n,
	\label{eq:model}
\end{equation}
where
$\boldsymbol{\beta}=(\beta_1,\ldots,\beta_p)^\top$
is the unknown regression coefficient vector and
$\varepsilon_i$
denotes a random error satisfying
\[
\mathbb{E}(\varepsilon_i|\mathbf{x}_i)=0,
\qquad
\mathrm{Var}(\varepsilon_i|\mathbf{x}_i)=\sigma^2.
\]

Although model (\ref{eq:model}) resembles the classical linear regression model, educational administrative data typically exhibit several characteristics that complicate statistical inference. First, explanatory variables are often highly correlated because institutional, demographic, and socioeconomic characteristics tend to co-occur across schools \citep{pramanik2024estimation,vikramdeo2024mitochondrial}. Second, many variables provide overlapping information, leading to substantial multicollinearity and unstable ordinary least squares estimation \citep{pramanik2024motivation}. Third, schools naturally form groups according to institutional characteristics, suggesting that variable selection should preserve meaningful group structures rather than selecting variables independently. Finally, neighboring or related covariates frequently exhibit similar effects on educational outcomes, motivating the incorporation of structured regularization that encourages coefficient homogeneity while simultaneously performing variable selection \citep{hertweck2023clinicopathological,khan2023myb}.

To address these challenges, we propose a unified penalized regression framework that combines adaptive sparsity, grouped variable selection, and coefficient fusion within a single optimization problem. Specifically, the proposed estimator integrates the Adaptive LASSO, Group LASSO, and Fused LASSO penalties to exploit complementary structural information contained in educational administrative data. The resulting estimator simultaneously identifies influential institutional characteristics, removes irrelevant variables, preserves groupwise effects, and encourages smoothness among related regression coefficients \citep{kakkat2023cardiovascular,khan2023myb}. This integrated approach provides improved interpretability while maintaining desirable statistical properties under suitable regularity conditions.

For notational convenience, let$ \mathbf{Y}=(Y_1,\ldots,Y_n)^\top,$ and define the design matrix
\[
\mathbf{X}
=
\begin{pmatrix}
	\mathbf{x}_1^\top\\
	\mathbf{x}_2^\top\\
	\vdots\\
	\mathbf{x}_n^\top
\end{pmatrix}
\in\mathbb{R}^{n\times p}.
\]
The linear model can then be written compactly as $\mathbf{Y}
=
\mathbf{X}\boldsymbol{\beta}
+
\boldsymbol{\varepsilon},$ where
$\boldsymbol{\varepsilon}
=
(\varepsilon_1,\ldots,\varepsilon_n)^\top$. Suppose that the covariates are partitioned into $G$ mutually exclusive groups corresponding to conceptually related institutional characteristics \citep{kakkat2023cardiovascular,khan2024mp60}. Let $\mathcal{G}
=
\{G_1,G_2,\ldots,G_K\}$ denote the collection of groups, where
\[
\bigcup_{k=1}^{K}G_k=\{1,\ldots,p\},
\qquad
G_k\cap G_\ell=\emptyset,
\quad
k\neq\ell.
\]
In the present application, representative groups include teacher workforce characteristics, socioeconomic indicators, demographic composition variables, and institutional resource measures \citep{vikramdeo2024abstract,vikramdeo2023profiling}. This grouping structure permits simultaneous selection of entire categories of predictors while allowing sparse estimation within individual groups through adaptive penalization.

The subsequent sections introduce the proposed Adaptive Weighted Group Fused LASSO estimator, establish its computational properties, and derive its large-sample asymptotic theory, including consistency, sparsity recovery, oracle properties, and asymptotic normality under suitable regularity conditions \citep{pramanik2024bayes}.

\subsection{Adaptive Weighted Group Fused LASSO Estimator:}
\label{subsec:awgf_lasso}

Let $\mathbf{W}=\mathrm{diag}(w_1,\ldots,w_n)$ be the diagonal matrix of school-level weights defined in Section~3.1, and let
\[
L_n(\boldsymbol{\beta})
=
\frac{1}{2n}
(\mathbf{Y}-\mathbf{X}\boldsymbol{\beta})^\top
\mathbf{W}
(\mathbf{Y}-\mathbf{X}\boldsymbol{\beta})
\]
denote the weighted empirical loss. To incorporate sparsity, grouped structure, and local coefficient homogeneity, we define the Adaptive Weighted Group Fused LASSO estimator by
\begin{equation}
	\widehat{\boldsymbol{\beta}}
	=
	\arg\min_{\boldsymbol{\beta}\in\mathbb{R}^{p}}
	Q_n(\boldsymbol{\beta}),
	\label{eq:awgf_estimator}
\end{equation}
where
\begin{equation}
	Q_n(\boldsymbol{\beta})
	=
	L_n(\boldsymbol{\beta})
	+
	\lambda_{1n}
	\sum_{j=1}^{p}a_j|\beta_j|
	+
	\lambda_{2n}
	\sum_{g=1}^{G}b_g\|\boldsymbol{\beta}_{g}\|_2
	+
	\lambda_{3n}
	\sum_{j=2}^{p}c_j|\beta_j-\beta_{j-1}|.
	\label{eq:awgf_objective}
\end{equation}
Here $\lambda_{1n},\lambda_{2n},\lambda_{3n}\geq0$ are tuning parameters. The first penalty is an adaptive LASSO penalty, the second is a group LASSO penalty, and the third is a fused LASSO penalty. The adaptive weights are defined by
\begin{equation}
	a_j=\frac{1}{|\widetilde{\beta}_j|^{\gamma_1}+\delta_n},
	\qquad
	b_g=\frac{1}{\|\widetilde{\boldsymbol{\beta}}_g\|_2^{\gamma_2}+\delta_n},
	\qquad
	c_j=\frac{1}{|\widetilde{\beta}_j-\widetilde{\beta}_{j-1}|^{\gamma_3}+\delta_n},
	\label{eq:adaptive_weights}
\end{equation}
where $\widetilde{\boldsymbol{\beta}}$ is an initial estimator, $\gamma_1,\gamma_2,\gamma_3>0$, and $\delta_n>0$ is a stabilizing sequence \citep{khan2024mp60,dasgupta2023frequent,hertweck2023clinicopathological}. This construction follows the adaptive penalization principle of \citet{zou2006adaptive}, while the group and fusion components are motivated by the group LASSO and fused LASSO literatures \citep{yuan2006model,tibshirani2005sparsity}.

Let
\[
\mathbf{D}
=
\begin{pmatrix}
	-1 & 1 & 0 & \cdots & 0\\
	0 & -1 & 1 & \cdots & 0\\
	\vdots & \vdots & \vdots & \ddots & \vdots\\
	0 & 0 & 0 & \cdots & 1
\end{pmatrix}
\in\mathbb{R}^{(p-1)\times p}
\]
be the first-order difference matrix, so that
\[
\mathbf{D}\boldsymbol{\beta}
=
(\beta_2-\beta_1,\ldots,\beta_p-\beta_{p-1})^\top.
\]
Then \eqref{eq:awgf_objective} can be written as
\begin{equation}
	Q_n(\boldsymbol{\beta})
	=
	\frac{1}{2n}
	\|\mathbf{W}^{1/2}(\mathbf{Y}-\mathbf{X}\boldsymbol{\beta})\|_2^2
	+
	\lambda_{1n}\|\mathbf{A}\boldsymbol{\beta}\|_1
	+
	\lambda_{2n}\sum_{g=1}^{G}b_g\|\boldsymbol{\beta}_g\|_2
	+
	\lambda_{3n}\|\mathbf{C}\mathbf{D}\boldsymbol{\beta}\|_1,
	\label{eq:compact_objective}
\end{equation}
where $\mathbf{A}=\mathrm{diag}(a_1,\ldots,a_p),$ $\mathbf{C}=\mathrm{diag}(c_2,\ldots,c_p).$

\begin{lem}
The function $Q_n$ is convex on $\mathbb{R}^p$.
\end{lem}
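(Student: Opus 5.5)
We prove the lemma by writing $Q_n$ as a sum of four functions, checking that each is convex on $\mathbb{R}^p$, and using that a sum of convex functions with nonnegative coefficients is convex. Throughout, we treat the weights $w_i$, $a_j$, $b_g$, $c_j$ as fixed. They depend only on the data and on the initial estimator $\widetilde{\boldsymbol{\beta}}$, not on the argument $\boldsymbol{\beta}$. Because $\delta_n>0$, each of $a_j$, $b_g$, $c_j$ is finite and strictly positive. We also use that $w_i\ge 0$ for all $i$, so $\mathbf{W}^{1/2}$ is well defined, and that $\lambda_{1n},\lambda_{2n},\lambda_{3n}\ge 0$.

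\textbf{Step 1: the loss.} The map $\boldsymbol{\beta}\mapsto L_n(\boldsymbol{\beta})$ is a quadratic function. Its Hessian is
\[
\nabla^2 L_n=\tfrac{1}{n}\mathbf{X}^\top\mathbf{W}\mathbf{X}.
\]
For every $\mathbf{v}\in\mathbb{R}^p$,
\[
\mathbf{v}^\top\mathbf{X}^\top\mathbf{W}\mathbf{X}\mathbf{v}=\|\mathbf{W}^{1/2}\mathbf{X}\mathbf{v}\|_2^2\ge 0,
\]
so the Hessian is positive semidefinite and $L_n$ is convex. Equivalently, $L_n$ is the convex function $\tfrac{1}{2n}\|\cdot\|_2^2$ composed with the affine map $\boldsymbol{\beta}\mapsto\mathbf{W}^{1/2}(\mathbf{Y}-\mathbf{X}\boldsymbol{\beta})$.

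\textbf{Step 2: the three penalties.} Each penalty is a norm composed with a linear map, and such a composition is convex. For a norm $\|\cdot\|$ and a linear map $M$, the triangle inequality and absolute homogeneity give, for $t\in[0,1]$,
\[
\|M(t\mathbf{u}+(1-t)\mathbf{v})\|\le t\|M\mathbf{u}\|+(1-t)\|M\mathbf{v}\|.
\]
We apply this to each penalty in turn.
\begin{itemize}
\item The adaptive LASSO term is $\|\mathbf{A}\boldsymbol{\beta}\|_1$, with the linear map $\mathbf{A}$.
\item Each group term is $\|\boldsymbol{\beta}_g\|_2=\|\mathbf{P}_g\boldsymbol{\beta}\|_2$, where $\mathbf{P}_g$ is the coordinate projection onto $G_g$. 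Multiplying by $b_g>0$ and summing over $g$ preserves convexity.
\item The fusion term is $\|\mathbf{C}\mathbf{D}\boldsymbol{\beta}\|_1$, with the linear map $\mathbf{C}\mathbf{D}$.
\end{itemize}

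\textbf{Step 3: conclusion.} $Q_n$ is a nonnegative combination of convex functions, so it is convex. All terms are finite-valued on all of $\mathbb{R}^p$, so there are no domain issues.

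No step poses a genuine obstacle. The only point requiring care is the one settled at the outset: the adaptive weights must be treated as data-dependent constants independent of $\boldsymbol{\beta}$, and they must be nonnegative. Both hold by construction, with positivity guaranteed by $\delta_n>0$ and by the nonnegativity of the school weights $w_i$.
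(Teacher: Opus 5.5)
Your proof is correct and follows essentially the same route as the paper: positive semidefiniteness of the Hessian $n^{-1}\mathbf{X}^\top\mathbf{W}\mathbf{X}$ for the loss, convexity of each penalty as a nonnegatively weighted norm of a linear image of $\boldsymbol{\beta}$, and closure of convexity under nonnegative combinations. Your explicit remark that the adaptive weights are data-dependent constants independent of $\boldsymbol{\beta}$ is a point the paper's proof leaves implicit.
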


\begin{proof}
	The map
	\[
	\boldsymbol{\beta}
	\mapsto
	\frac{1}{2n}
	\|\mathbf{W}^{1/2}(\mathbf{Y}-\mathbf{X}\boldsymbol{\beta})\|_2^2
	\]
	is convex because its Hessian is
	\[
	\nabla^2 L_n(\boldsymbol{\beta})
	=
	\frac{1}{n}\mathbf{X}^\top\mathbf{W}\mathbf{X},
	\]
	which is positive semidefinite since, for every $\mathbf{u}\in\mathbb{R}^p$,
	\[
	\mathbf{u}^\top
	\mathbf{X}^\top\mathbf{W}\mathbf{X}
	\mathbf{u}
	=
	(\mathbf{X}\mathbf{u})^\top\mathbf{W}(\mathbf{X}\mathbf{u})
	=
	\|\mathbf{W}^{1/2}\mathbf{X}\mathbf{u}\|_2^2
	\geq 0.
	\]
	The adaptive LASSO penalty is convex because it is a nonnegative weighted sum of absolute-value functions. The group penalty is convex because each map
	$\boldsymbol{\beta}_g\mapsto \|\boldsymbol{\beta}_g\|_2$
	is a norm and $b_g\geq0$. The fused penalty is convex because
	$\boldsymbol{\beta}\mapsto \mathbf{D}\boldsymbol{\beta}$
	is linear and the $\ell_1$ norm is convex. Since a nonnegative linear combination of convex functions is convex, $Q_n$ is convex.
\end{proof}

\begin{lem}
If at least one of the following conditions holds:
\[
\lambda_{1n}>0,\qquad
\lambda_{2n}>0\ \text{and}\ \bigcup_{g=1}^{G}G_g=\{1,\ldots,p\},
\qquad
\text{or}
\qquad
\lambda_{\min}\left(n^{-1}\mathbf{X}^\top\mathbf{W}\mathbf{X}\right)>0,
\]
then the optimization problem \eqref{eq:awgf_estimator} admits at least one global minimizer.
\end{lem}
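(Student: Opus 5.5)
The plan is to prove existence through coercivity. By the previous lemma, $Q_n$ is convex on $\mathbb{R}^p$, and it is also continuous, being a finite sum of a quadratic form, weighted absolute values, Euclidean norms of subvectors, and absolute values of linear maps. A continuous function on $\mathbb{R}^p$ that is coercive, meaning $Q_n(\boldsymbol{\beta})\to\infty$ as $\|\boldsymbol{\beta}\|_2\to\infty$, attains its infimum. To see this, fix $\boldsymbol{\beta}_0$, for instance $\boldsymbol{\beta}_0=\mathbf{0}$. The sublevel set $\{\boldsymbol{\beta}:Q_n(\boldsymbol{\beta})\le Q_n(\boldsymbol{\beta}_0)\}$ is closed by continuity and bounded by coercivity, hence compact. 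Weierstrass' theorem then gives a minimizer on this set, and that minimizer is global. So the whole task is to verify coercivity under each of the three alternative conditions. Throughout, I use that $L_n\ge 0$ and that every penalty term is nonnegative, since $a_j,b_g,c_j>0$ (the stabilizer $\delta_n>0$ makes each weight finite and positive).

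\emph{Case $\lambda_{1n}>0$.} Put $a_{\min}=\min_j a_j>0$. Then
\[
Q_n(\boldsymbol{\beta})\ge \lambda_{1n}a_{\min}\|\boldsymbol{\beta}\|_1\ge \lambda_{1n}a_{\min}\|\boldsymbol{\beta}\|_2\to\infty .
\]

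\emph{Case $\lambda_{2n}>0$ with the groups covering $\{1,\ldots,p\}$.} Put $b_{\min}=\min_g b_g>0$. Because the groups are disjoint and cover all indices,
\[
\sum_g\|\boldsymbol{\beta}_g\|_2\ge\Big(\sum_g\|\boldsymbol{\beta}_g\|_2^2\Big)^{1/2}=\|\boldsymbol{\beta}\|_2 ,
\]
so $Q_n(\boldsymbol{\beta})\ge\lambda_{2n}b_{\min}\|\boldsymbol{\beta}\|_2\to\infty$. The covering assumption is exactly what prevents some coordinate from escaping the group penalty.

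\emph{Case $\lambda_{\min}(n^{-1}\mathbf{X}^\top\mathbf{W}\mathbf{X})=\mu>0$.} Expanding the loss gives
\[
L_n(\boldsymbol{\beta})=\tfrac12\boldsymbol{\beta}^\top(n^{-1}\mathbf{X}^\top\mathbf{W}\mathbf{X})\boldsymbol{\beta}-n^{-1}\mathbf{Y}^\top\mathbf{W}\mathbf{X}\boldsymbol{\beta}+\tfrac{1}{2n}\mathbf{Y}^\top\mathbf{W}\mathbf{Y}.
\]
Dropping the nonnegative penalties and applying Cauchy--Schwarz to the middle term,
\[
Q_n(\boldsymbol{\beta})\ge \tfrac{\mu}{2}\|\boldsymbol{\beta}\|_2^2-\|n^{-1}\mathbf{X}^\top\mathbf{W}\mathbf{Y}\|_2\|\boldsymbol{\beta}\|_2 .
\]
The right-hand side tends to $\infty$ because the quadratic term dominates the linear one. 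In this case $L_n$ is in fact strictly convex, so the minimizer is even unique, although the statement does not require this.

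The only potential obstacle is the positivity of the weights. If some weight were $0$, or if $\delta_n$ were absent so that a weight could be infinite, the lower bounds above would fail, or $Q_n$ would cease to be real-valued and continuous. I would state explicitly that $\delta_n>0$ and the finiteness of $\widetilde{\boldsymbol{\beta}}$ give $0<a_j,b_g,c_j<\infty$. Apart from this check, the argument is routine.
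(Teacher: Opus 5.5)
Your proposal is correct and follows the paper's route: you prove that $Q_n$ is coercive under each of the three alternative conditions, and then apply Weierstrass' theorem on a compact sublevel set. You also make explicit the bounds the paper leaves implicit, namely $Q_n(\boldsymbol{\beta})\ge\lambda_{1n}a_{\min}\|\boldsymbol{\beta}\|_2$, the inequality $\sum_g\|\boldsymbol{\beta}_g\|_2\ge\|\boldsymbol{\beta}\|_2$ that follows from the groups covering all coordinates, and the quadratic-minus-linear lower bound in the eigenvalue case.
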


\begin{proof}
	The function $Q_n$ is proper, convex, and lower semicontinuous. It remains to show that its sublevel sets are bounded under the stated conditions.
	If $\lambda_{1n}>0$, then $Q_n(\boldsymbol{\beta})
	\geq
	\lambda_{1n}\sum_{j=1}^{p}a_j|\beta_j|.$ Since $a_j>0$ for all $j$ by \eqref{eq:adaptive_weights}, it follows that
	\[
	Q_n(\boldsymbol{\beta})
	\to\infty
	\qquad
	\text{whenever}
	\qquad
	\|\boldsymbol{\beta}\|_2\to\infty.
	\]
	Thus $Q_n$ is coercive. If $\lambda_{2n}>0$ and the groups cover all coordinates, then
	\[
	Q_n(\boldsymbol{\beta})
	\geq
	\lambda_{2n}\sum_{g=1}^{G}b_g\|\boldsymbol{\beta}_g\|_2.
	\]
	Because $b_g>0$ for every $g$, the right-hand side diverges whenever
	$\|\boldsymbol{\beta}\|_2\to\infty$. Hence $Q_n$ is again coercive. Finally, if $\lambda_{\min}\left(n^{-1}\mathbf{X}^\top\mathbf{W}\mathbf{X}\right)>0,$ then $L_n(\boldsymbol{\beta})$ is strongly convex and quadratic in $\boldsymbol{\beta}$ with positive definite Hessian \citep{pramanik2024estimation,pramanik2023cont}. Therefore, $L_n(\boldsymbol{\beta})\to\infty
	\
	\text{as}
	\
	\|\boldsymbol{\beta}\|_2\to\infty,$ and so $Q_n$ is coercive. In each case, $Q_n$ is lower semicontinuous and coercive. By the Weierstrass theorem, a global minimizer exists.
\end{proof}

\begin{lem}
If $\lambda_{\min}\left(n^{-1}\mathbf{X}^\top\mathbf{W}\mathbf{X}\right)>0,$ then $\widehat{\boldsymbol{\beta}}$ is unique.
\end{lem}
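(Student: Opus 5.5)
The plan is to show that $Q_n$ is \emph{strictly} (indeed strongly) convex under the stated eigenvalue condition, and then to use the standard fact that a strictly convex function has at most one minimizer. Existence is already guaranteed by the preceding lemma, since its third sufficient condition is exactly the present hypothesis. Combining the two gives exactly one global minimizer, so $\widehat{\boldsymbol{\beta}}$ is unique.

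For strong convexity, write $\mu=\lambda_{\min}(n^{-1}\mathbf{X}^\top\mathbf{W}\mathbf{X})>0$. Because $L_n$ is quadratic with constant Hessian $n^{-1}\mathbf{X}^\top\mathbf{W}\mathbf{X}$, we have, for all $\boldsymbol{\beta},\boldsymbol{\beta}'\in\mathbb{R}^p$ and $t\in[0,1]$, the exact identity
\[
L_n(t\boldsymbol{\beta}+(1-t)\boldsymbol{\beta}')
= tL_n(\boldsymbol{\beta})+(1-t)L_n(\boldsymbol{\beta}')
-\frac{t(1-t)}{2}(\boldsymbol{\beta}-\boldsymbol{\beta}')^\top\Big(\frac{1}{n}\mathbf{X}^\top\mathbf{W}\mathbf{X}\Big)(\boldsymbol{\beta}-\boldsymbol{\beta}').
\]
The last quadratic form is at least $\mu\|\boldsymbol{\beta}-\boldsymbol{\beta}'\|_2^2$. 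The three penalty terms are convex by the convexity lemma above, since the weights $a_j,b_g,c_j$ are positive and $\lambda_{kn}\ge0$. Adding the penalties to this identity therefore yields
\[
Q_n(t\boldsymbol{\beta}+(1-t)\boldsymbol{\beta}')\le tQ_n(\boldsymbol{\beta})+(1-t)Q_n(\boldsymbol{\beta}')-\frac{\mu t(1-t)}{2}\|\boldsymbol{\beta}-\boldsymbol{\beta}'\|_2^2 .
\]

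For uniqueness, suppose $\widehat{\boldsymbol{\beta}}$ and $\widehat{\boldsymbol{\beta}}'$ are two distinct minimizers with common minimal value $Q^\ast$. Take $t=1/2$ in the last display. The midpoint then satisfies $Q_n\le Q^\ast-\mu\|\widehat{\boldsymbol{\beta}}-\widehat{\boldsymbol{\beta}}'\|_2^2/8<Q^\ast$, which contradicts minimality. Hence $\widehat{\boldsymbol{\beta}}=\widehat{\boldsymbol{\beta}}'$.

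There is no real obstacle here. The only points needing care are the following. The weights in \eqref{eq:adaptive_weights} must be treated as fixed, because they are computed from the initial estimator $\widetilde{\boldsymbol{\beta}}$ and do not depend on $\boldsymbol{\beta}$; this keeps the penalties convex in $\boldsymbol{\beta}$. The quadratic identity for $L_n$ should be checked by direct expansion, and the smallest-eigenvalue bound turns it into the strong-convexity inequality.
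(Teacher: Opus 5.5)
Your proposal is correct and follows the paper's own argument. You show that $L_n$ is strictly convex under the eigenvalue condition and that the fixed-weight penalties are convex, so $Q_n$ is strictly convex and has at most one minimizer, with existence taken from the preceding lemma. The exact quadratic identity and the midpoint contradiction simply spell out, and quantify as $\mu$-strong convexity, the step that the paper asserts in one line.
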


\begin{proof}
	Under the stated eigenvalue condition, $L_n$ is strictly convex. Since the remaining penalty terms are convex, their sum with a strictly convex function is strictly convex. Therefore $Q_n$ is strictly convex \citep{pramanik2024estimation1,yusuf2025prognostic}. A strictly convex function has at most one minimizer. Existence follows from the preceding result; hence the minimizer is unique.
\end{proof}

\begin{lem}
Let $\widehat{\boldsymbol{\beta}}$ be a minimizer of \eqref{eq:awgf_estimator}. Then $\widehat{\boldsymbol{\beta}}$ satisfies the Karush-Kuhn-Tucker condition
\begin{equation}
	\mathbf{0}
	\in
	-\frac{1}{n}\mathbf{X}^\top\mathbf{W}
	(\mathbf{Y}-\mathbf{X}\widehat{\boldsymbol{\beta}})
	+
	\lambda_{1n}\mathbf{A}\widehat{\mathbf{s}}
	+
	\lambda_{2n}\widehat{\mathbf{g}}
	+
	\lambda_{3n}\mathbf{D}^\top\mathbf{C}\widehat{\mathbf{r}},
	\label{eq:kkt}
\end{equation}
where $\widehat{s}_j
\in
\partial|\widehat{\beta}_j|,$ $\widehat{r}_j
\in
\partial|(\mathbf{D}\widehat{\boldsymbol{\beta}})_j|,$ and the group subgradient satisfies
\[
\widehat{\mathbf{g}}_{G_g}
=
b_g
\frac{\widehat{\boldsymbol{\beta}}_g}
{\|\widehat{\boldsymbol{\beta}}_g\|_2}
\quad
\text{if}
\quad
\widehat{\boldsymbol{\beta}}_g\neq\mathbf{0},
\]
whereas
\[
\|\widehat{\mathbf{g}}_{G_g}\|_2\leq b_g
\quad
\text{if}
\quad
\widehat{\boldsymbol{\beta}}_g=\mathbf{0}.
\]
\end{lem}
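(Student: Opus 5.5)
Since $Q_n$ is convex by the first lemma above and $\widehat{\boldsymbol{\beta}}$ is a global minimizer, the first-order optimality condition for convex functions gives $\mathbf{0}\in\partial Q_n(\widehat{\boldsymbol{\beta}})$. The task is then to compute this subdifferential explicitly and show that it equals the set displayed in \eqref{eq:kkt}.

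\emph{Step 1: sum rule.} All four summands of \eqref{eq:compact_objective} are finite convex functions on all of $\mathbb{R}^p$, so the relative-interior condition of the Moreau--Rockafellar sum rule holds trivially. Hence
\[
\partial Q_n(\boldsymbol{\beta})=\nabla L_n(\boldsymbol{\beta})+\lambda_{1n}\partial\|\mathbf{A}\boldsymbol{\beta}\|_1+\lambda_{2n}\partial\Big(\sum_g b_g\|\boldsymbol{\beta}_g\|_2\Big)+\lambda_{3n}\partial\|\mathbf{C}\mathbf{D}\boldsymbol{\beta}\|_1 .
\]

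\emph{Step 2: each piece.}
\begin{itemize}
\item The loss is differentiable, with $\nabla L_n(\boldsymbol{\beta})=-n^{-1}\mathbf{X}^\top\mathbf{W}(\mathbf{Y}-\mathbf{X}\boldsymbol{\beta})$.
\item The adaptive LASSO term is separable with $a_j>0$, so its subdifferential is $\{\mathbf{A}\mathbf{s}: s_j\in\partial|\beta_j|\}$.
\item The group term is separable across the disjoint groups, which partition $\{1,\ldots,p\}$. The subdifferential of $\|\cdot\|_2$ at $\mathbf{v}\neq\mathbf{0}$ is $\{\mathbf{v}/\|\mathbf{v}\|_2\}$, and at $\mathbf{0}$ it is the closed unit ball. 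Scaling by $b_g$ gives exactly the stated description of $\widehat{\mathbf{g}}_{G_g}$.
\item For the fused term, apply the chain rule for a convex function composed with a linear map, $\partial(h\circ \mathbf{M})(\boldsymbol{\beta})=\mathbf{M}^\top\partial h(\mathbf{M}\boldsymbol{\beta})$, which holds with equality because $h=\|\cdot\|_1$ is finite everywhere. Take $\mathbf{M}=\mathbf{C}\mathbf{D}$. Since $\mathbf{C}$ is diagonal with positive entries, $\partial|c_j(\mathbf{D}\boldsymbol{\beta})_j|=c_j\,\partial|(\mathbf{D}\boldsymbol{\beta})_j|$. This yields $\mathbf{D}^\top\mathbf{C}\mathbf{r}$ with $r_j\in\partial|(\mathbf{D}\boldsymbol{\beta})_j|$.
\end{itemize}

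\emph{Step 3: assemble.} Evaluating these pieces at $\widehat{\boldsymbol{\beta}}$ and combining them with $\mathbf{0}\in\partial Q_n(\widehat{\boldsymbol{\beta}})$ gives \eqref{eq:kkt}.

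The only nontrivial point is justifying equality, not just inclusion, in the sum and chain rules, since necessity of the condition requires equality. Finiteness of every term on $\mathbb{R}^p$ settles this. One should also note that indexing conventions matter: $\mathbf{C}=\mathrm{diag}(c_2,\ldots,c_p)$ pairs $c_{j+1}$ with $(\mathbf{D}\boldsymbol{\beta})_j$.
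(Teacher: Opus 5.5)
Your proposal is correct and follows essentially the same route as the paper: first-order optimality $\mathbf{0}\in\partial Q_n(\widehat{\boldsymbol{\beta}})$ for the convex $Q_n$, then a termwise computation of the subdifferentials. You also justify equality in the sum rule and in the linear-composition chain rule, via finiteness of every term on $\mathbb{R}^p$, which the paper leaves implicit. One small bookkeeping slip: with $\mathbf{M}=\mathbf{C}\mathbf{D}$ and $h=\|\cdot\|_1$, the factor $\mathbf{C}$ already enters through $\mathbf{M}^\top=\mathbf{D}^\top\mathbf{C}$, so the fact you need is $\partial|\cdot|(ct)=\partial|\cdot|(t)$ for $c>0$, not $c\,\partial|t|$. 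That second identity belongs to the alternative split $\mathbf{M}=\mathbf{D}$, $h(\mathbf{z})=\sum_j c_{j+1}|z_j|$; combined with $\mathbf{M}=\mathbf{C}\mathbf{D}$ it would double-count to $\mathbf{D}^\top\mathbf{C}^2\mathbf{r}$, although the final expression $\mathbf{D}^\top\mathbf{C}\mathbf{r}$ you state is correct.
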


\begin{proof}
	Since $Q_n$ is convex, $\widehat{\boldsymbol{\beta}}$ is a global minimizer if and only if $\mathbf{0}\in\partial Q_n(\widehat{\boldsymbol{\beta}}).$ The differentiable part has gradient $\nabla L_n(\boldsymbol{\beta})
	=
	-\frac{1}{n}
	\mathbf{X}^\top\mathbf{W}
	(\mathbf{Y}-\mathbf{X}\boldsymbol{\beta}).$ For the adaptive LASSO term
$\partial
	\left(
	\sum_{j=1}^{p}a_j|\beta_j|
	\right)
	=
	\mathbf{A}\mathbf{s},$ where
	\[
	s_j=
	\begin{cases}
		\mathrm{sign}(\beta_j), & \beta_j\neq0,\\
		u,\ |u|\leq1, & \beta_j=0.
	\end{cases}
	\]
	For the group penalty,
	\[
	\partial\|\boldsymbol{\beta}_g\|_2
	=
	\begin{cases}
		\boldsymbol{\beta}_g/\|\boldsymbol{\beta}_g\|_2,
		& \boldsymbol{\beta}_g\neq \mathbf{0},\\
		\{\mathbf{u}:\|\mathbf{u}\|_2\leq1\},
		& \boldsymbol{\beta}_g=\mathbf{0}.
	\end{cases}
	\]
	For the fused penalty, $\partial\|\mathbf{C}\mathbf{D}\boldsymbol{\beta}\|_1
	=
	\mathbf{D}^\top\mathbf{C}\mathbf{r},$ where
	\[
	r_j=
	\begin{cases}
		\mathrm{sign}((\mathbf{D}\boldsymbol{\beta})_j),
		&(\mathbf{D}\boldsymbol{\beta})_j\neq0,\\
		u,\ |u|\leq1,
		&(\mathbf{D}\boldsymbol{\beta})_j=0.
	\end{cases}
	\]
	Combining these subgradients yields \eqref{eq:kkt}.
\end{proof}

\begin{lem}
Let $\boldsymbol{\beta}^0$ denote the true coefficient vector in \eqref{eq:model}. Any minimizer
$\widehat{\boldsymbol{\beta}}$ of \eqref{eq:awgf_estimator} satisfies
\begin{align}
	\frac{1}{2n}
	\|\mathbf{W}^{1/2}\mathbf{X}
	(\widehat{\boldsymbol{\beta}}-\boldsymbol{\beta}^0)\|_2^2
	&\leq
	\frac{1}{n}
	\boldsymbol{\varepsilon}^\top
	\mathbf{W}\mathbf{X}
	(\widehat{\boldsymbol{\beta}}-\boldsymbol{\beta}^0)
	+
	\lambda_{1n}
	\left\{
	\|\mathbf{A}\boldsymbol{\beta}^0\|_1
	-
	\|\mathbf{A}\widehat{\boldsymbol{\beta}}\|_1
	\right\}
	\nonumber\\
	&\quad+
	\lambda_{2n}
	\sum_{g=1}^{G}b_g
	\left(
	\|\boldsymbol{\beta}^0_g\|_2
	-
	\|\widehat{\boldsymbol{\beta}}_g\|_2
	\right)
	+
	\lambda_{3n}
	\left\{
	\|\mathbf{C}\mathbf{D}\boldsymbol{\beta}^0\|_1
	-
	\|\mathbf{C}\mathbf{D}\widehat{\boldsymbol{\beta}}\|_1
	\right\}.
	\label{eq:basic_inequality}
\end{align}
\end{lem}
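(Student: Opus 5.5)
The plan is the standard ``basic inequality'' argument: compare the objective at the minimizer with its value at the true parameter. Since $\widehat{\boldsymbol{\beta}}$ minimizes $Q_n$ over all of $\mathbb{R}^p$, and $\boldsymbol{\beta}^0\in\mathbb{R}^p$ is feasible, we have $Q_n(\widehat{\boldsymbol{\beta}})\leq Q_n(\boldsymbol{\beta}^0)$. Using the compact form \eqref{eq:compact_objective}, I will write this out as
\[
\frac{1}{2n}\|\mathbf{W}^{1/2}(\mathbf{Y}-\mathbf{X}\widehat{\boldsymbol{\beta}})\|_2^2
+\mathrm{Pen}(\widehat{\boldsymbol{\beta}})
\leq
\frac{1}{2n}\|\mathbf{W}^{1/2}(\mathbf{Y}-\mathbf{X}\boldsymbol{\beta}^0)\|_2^2
+\mathrm{Pen}(\boldsymbol{\beta}^0),
\]
where $\mathrm{Pen}$ collects the three penalty terms. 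Moving the penalties at $\widehat{\boldsymbol{\beta}}$ to the right-hand side produces exactly the three bracketed differences in \eqref{eq:basic_inequality}. No convexity, existence or uniqueness lemma is needed, only the definition of a minimizer.

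The only computation is to expand the loss difference using the model $\mathbf{Y}=\mathbf{X}\boldsymbol{\beta}^0+\boldsymbol{\varepsilon}$. Set $\boldsymbol{\Delta}=\widehat{\boldsymbol{\beta}}-\boldsymbol{\beta}^0$. Then $\mathbf{Y}-\mathbf{X}\widehat{\boldsymbol{\beta}}=\boldsymbol{\varepsilon}-\mathbf{X}\boldsymbol{\Delta}$ and $\mathbf{Y}-\mathbf{X}\boldsymbol{\beta}^0=\boldsymbol{\varepsilon}$. Expanding the weighted quadratic, using the symmetry of $\mathbf{W}$, gives
\[
\frac{1}{2n}\|\mathbf{W}^{1/2}(\boldsymbol{\varepsilon}-\mathbf{X}\boldsymbol{\Delta})\|_2^2
=\frac{1}{2n}\boldsymbol{\varepsilon}^\top\mathbf{W}\boldsymbol{\varepsilon}
-\frac{1}{n}\boldsymbol{\varepsilon}^\top\mathbf{W}\mathbf{X}\boldsymbol{\Delta}
+\frac{1}{2n}\|\mathbf{W}^{1/2}\mathbf{X}\boldsymbol{\Delta}\|_2^2 .
\]
The term $\frac{1}{2n}\boldsymbol{\varepsilon}^\top\mathbf{W}\boldsymbol{\varepsilon}$ equals the loss at $\boldsymbol{\beta}^0$ and cancels. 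Moving the cross term to the right-hand side leaves $\frac{1}{2n}\|\mathbf{W}^{1/2}\mathbf{X}\boldsymbol{\Delta}\|_2^2$ on the left. This is precisely \eqref{eq:basic_inequality}.

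There is essentially no obstacle. The only points needing care are the following. The weights $w_i$, and the adaptive weights $a_j,b_g,c_j$, are treated as fixed (conditional on the initial estimator $\widetilde{\boldsymbol{\beta}}$), so the same weights appear on both sides. $\mathbf{W}$ must be symmetric with nonnegative entries so that $\mathbf{W}^{1/2}$ is defined and the two cross terms combine into $-\frac{1}{n}\boldsymbol{\varepsilon}^\top\mathbf{W}\mathbf{X}\boldsymbol{\Delta}$. The inequality is deterministic and holds for every minimizer, even when the minimizer is not unique.
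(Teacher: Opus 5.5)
Your proposal is correct and matches the paper's proof essentially step for step. Both arguments start from optimality, $Q_n(\widehat{\boldsymbol{\beta}})\le Q_n(\boldsymbol{\beta}^0)$. Both then expand the weighted loss difference using $\mathbf{Y}-\mathbf{X}\widehat{\boldsymbol{\beta}}=\boldsymbol{\varepsilon}-\mathbf{X}(\widehat{\boldsymbol{\beta}}-\boldsymbol{\beta}^0)$, and finally move the penalty terms to the right-hand side.
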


\begin{proof}
	By optimality, $Q_n(\widehat{\boldsymbol{\beta}})
	\leq
	Q_n(\boldsymbol{\beta}^0).$ Using $\mathbf{Y}
	=
	\mathbf{X}\boldsymbol{\beta}^0+\boldsymbol{\varepsilon},$ we obtain
	\[
	\mathbf{Y}-\mathbf{X}\widehat{\boldsymbol{\beta}}
	=
	\boldsymbol{\varepsilon}
	-
	\mathbf{X}(\widehat{\boldsymbol{\beta}}-\boldsymbol{\beta}^0).
	\]
	Therefore,
	\begin{align*}
		L_n(\widehat{\boldsymbol{\beta}})
		-
		L_n(\boldsymbol{\beta}^0)
		&=
		\frac{1}{2n}
		\|
		\mathbf{W}^{1/2}
		\{\boldsymbol{\varepsilon}
		-
		\mathbf{X}(\widehat{\boldsymbol{\beta}}-\boldsymbol{\beta}^0)\}
		\|_2^2
		-
		\frac{1}{2n}
		\|\mathbf{W}^{1/2}\boldsymbol{\varepsilon}\|_2^2\\
		&=
		\frac{1}{2n}
		\|\mathbf{W}^{1/2}\mathbf{X}
		(\widehat{\boldsymbol{\beta}}-\boldsymbol{\beta}^0)\|_2^2
		-
		\frac{1}{n}
		\boldsymbol{\varepsilon}^\top
		\mathbf{W}\mathbf{X}
		(\widehat{\boldsymbol{\beta}}-\boldsymbol{\beta}^0).
	\end{align*}
	Substituting this identity into $Q_n(\widehat{\boldsymbol{\beta}})
	-
	Q_n(\boldsymbol{\beta}^0)
	\leq0$ and rearranging terms gives \eqref{eq:basic_inequality}.
\end{proof}

\subsection{Computational Algorithm:}
\label{subsec:computational_algorithm}

The objective function in \eqref{eq:awgf_objective} is convex but non-smooth because it contains adaptive $\ell_1$, group $\ell_2$, and fused $\ell_1$ penalties \citep{pramanik2023path,pramanik2023cmbp,yusuf2025predictive}. We therefore solve the optimization problem using an alternating direction method of multipliers (ADMM), which is well suited for composite convex objectives with separable non-smooth components \citep{boyd2011distributed,zhu2017augmented,ramdas2016fast}. The proposed algorithm is derived by introducing auxiliary variables for each non-smooth penalty. Define
\[
\boldsymbol{\theta}=\boldsymbol{\beta},
\qquad
\boldsymbol{\eta}_g=\boldsymbol{\beta}_g,\quad g=1,\ldots,G,
\qquad
\boldsymbol{\zeta}=\mathbf{D}\boldsymbol{\beta}.
\]
The optimization problem \eqref{eq:awgf_estimator} is equivalently written as
\begin{align}
	\min_{\boldsymbol{\beta},\boldsymbol{\theta},\{\boldsymbol{\eta}_g\},\boldsymbol{\zeta}}
	\quad
	&
	\frac{1}{2n}
	\|\mathbf{W}^{1/2}(\mathbf{Y}-\mathbf{X}\boldsymbol{\beta})\|_2^2
	+
	\lambda_{1n}\|\mathbf{A}\boldsymbol{\theta}\|_1
	+
	\lambda_{2n}\sum_{g=1}^{G}b_g\|\boldsymbol{\eta}_g\|_2
	+
	\lambda_{3n}\|\mathbf{C}\boldsymbol{\zeta}\|_1
	\nonumber\\
	\text{subject to}
	\quad
	&
	\boldsymbol{\theta}=\boldsymbol{\beta},
	\qquad
	\boldsymbol{\eta}_g=\boldsymbol{\beta}_g,\quad g=1,\ldots,G,
	\qquad
	\boldsymbol{\zeta}=\mathbf{D}\boldsymbol{\beta}.
	\label{eq:admm_constrained}
\end{align}

Let $\rho_1,\rho_2,\rho_3>0$ denote augmented Lagrangian parameters, and let $\mathbf{u},$ $\mathbf{v}_g,$ $\mathbf{r}$ be the scaled dual variables corresponding to the three sets of constraints \citep{vikramdeo2024abstract}. The scaled augmented Lagrangian is
\begin{align}
	\mathcal{L}_{\rho}
	&(
	\boldsymbol{\beta},
	\boldsymbol{\theta},
	\{\boldsymbol{\eta}_g\}_{g=1}^{G},
	\boldsymbol{\zeta},
	\mathbf{u},
	\{\mathbf{v}_g\}_{g=1}^{G},
	\mathbf{r}
	)
	\nonumber\\
	&=
	\frac{1}{2n}
	\|\mathbf{W}^{1/2}(\mathbf{Y}-\mathbf{X}\boldsymbol{\beta})\|_2^2
	+
	\lambda_{1n}\|\mathbf{A}\boldsymbol{\theta}\|_1
	+
	\lambda_{2n}\sum_{g=1}^{G}b_g\|\boldsymbol{\eta}_g\|_2
	+
	\lambda_{3n}\|\mathbf{C}\boldsymbol{\zeta}\|_1
	\nonumber\\
	&\quad+
	\frac{\rho_1}{2}\|\boldsymbol{\beta}-\boldsymbol{\theta}+\mathbf{u}\|_2^2
	+
	\frac{\rho_2}{2}\sum_{g=1}^{G}
	\|\boldsymbol{\beta}_g-\boldsymbol{\eta}_g+\mathbf{v}_g\|_2^2
	+
	\frac{\rho_3}{2}
	\|\mathbf{D}\boldsymbol{\beta}-\boldsymbol{\zeta}+\mathbf{r}\|_2^2.
	\label{eq:scaled_augmented_lagrangian}
\end{align}

For fixed auxiliary and dual variables, the $\boldsymbol{\beta}$-update is obtained by solving a strictly convex quadratic problem
\begin{align}
	\boldsymbol{\beta}^{(k+1)}
	&=
	\arg\min_{\boldsymbol{\beta}}
	\Bigg[
	\frac{1}{2n}
	\|\mathbf{W}^{1/2}(\mathbf{Y}-\mathbf{X}\boldsymbol{\beta})\|_2^2
	+
	\frac{\rho_1}{2}\|\boldsymbol{\beta}-\boldsymbol{\theta}^{(k)}+\mathbf{u}^{(k)}\|_2^2
	\nonumber\\
	&\qquad\qquad+
	\frac{\rho_2}{2}\sum_{g=1}^{G}
	\|\boldsymbol{\beta}_g-\boldsymbol{\eta}_g^{(k)}+\mathbf{v}_g^{(k)}\|_2^2
	+
	\frac{\rho_3}{2}
	\|\mathbf{D}\boldsymbol{\beta}-\boldsymbol{\zeta}^{(k)}+\mathbf{r}^{(k)}\|_2^2
	\Bigg].
\end{align}
Differentiating with respect to $\boldsymbol{\beta}$ and setting the derivative equal to zero gives
\begin{equation}
	\boldsymbol{\beta}^{(k+1)}
	=
	\mathbf{H}_{\rho}^{-1}
	\left[
	\frac{1}{n}\mathbf{X}^{\top}\mathbf{W}\mathbf{Y}
	+
	\rho_1(\boldsymbol{\theta}^{(k)}-\mathbf{u}^{(k)})
	+
	\rho_2\mathbf{P}^{(k)}
	+
	\rho_3\mathbf{D}^{\top}(\boldsymbol{\zeta}^{(k)}-\mathbf{r}^{(k)})
	\right],
	\label{eq:beta_update}
\end{equation}
where
\begin{equation}
	\mathbf{H}_{\rho}
	=
	\frac{1}{n}\mathbf{X}^{\top}\mathbf{W}\mathbf{X}
	+
	(\rho_1+\rho_2)\mathbf{I}_p
	+
	\rho_3\mathbf{D}^{\top}\mathbf{D},
	\label{eq:Hrho}
\end{equation}
and $\mathbf{P}^{(k)}\in\mathbb{R}^p$ is formed by stacking the groupwise vectors $\mathbf{P}^{(k)}_{G_g}
=
\boldsymbol{\eta}_g^{(k)}-\mathbf{v}_g^{(k)}.$ The adaptive LASSO auxiliary update is coordinatewise soft-thresholding:
\begin{equation}
	\boldsymbol{\theta}^{(k+1)}
	=
	\mathcal{S}_{\lambda_{1n}\mathbf{a}/\rho_1}
	\left(
	\boldsymbol{\beta}^{(k+1)}+\mathbf{u}^{(k)}
	\right),
	\label{eq:theta_update}
\end{equation}
where $\mathbf{a}=(a_1,\ldots,a_p)^\top$ and
\[
\mathcal{S}_{\tau}(z)
=
\mathrm{sign}(z)(|z|-\tau)_+
\]
is applied componentwise.

The group update is given by the block soft-thresholding operator
\begin{equation}
	\boldsymbol{\eta}_g^{(k+1)}
	=
	\left(
	1-
	\frac{\lambda_{2n}b_g}
	{\rho_2\|\boldsymbol{\beta}^{(k+1)}_g+\mathbf{v}_g^{(k)}\|_2}
	\right)_+
	\left(
	\boldsymbol{\beta}^{(k+1)}_g+\mathbf{v}_g^{(k)}
	\right),
	\qquad
	g=1,\ldots,G.
	\label{eq:eta_update}
\end{equation}

The fusion update is again coordinatewise soft-thresholding
\begin{equation}
	\boldsymbol{\zeta}^{(k+1)}
	=
	\mathcal{S}_{\lambda_{3n}\mathbf{c}/\rho_3}
	\left(
	\mathbf{D}\boldsymbol{\beta}^{(k+1)}
	+
	\mathbf{r}^{(k)}
	\right),
	\label{eq:zeta_update}
\end{equation}
where $\mathbf{c}=(c_2,\ldots,c_p)^\top$.

The scaled dual variables are updated as
\begin{align}
	\mathbf{u}^{(k+1)}
	&=
	\mathbf{u}^{(k)}
	+
	\boldsymbol{\beta}^{(k+1)}
	-
	\boldsymbol{\theta}^{(k+1)},
	\label{eq:u_update}
	\\
	\mathbf{v}_g^{(k+1)}
	&=
	\mathbf{v}_g^{(k)}
	+
	\boldsymbol{\beta}^{(k+1)}_g
	-
	\boldsymbol{\eta}_g^{(k+1)},
	\qquad g=1,\ldots,G,
	\label{eq:v_update}
	\\
	\mathbf{r}^{(k+1)}
	&=
	\mathbf{r}^{(k)}
	+
	\mathbf{D}\boldsymbol{\beta}^{(k+1)}
	-
	\boldsymbol{\zeta}^{(k+1)}.
	\label{eq:r_update}
\end{align}

\[
\boxed{
	\begin{minipage}{0.95\textwidth}
		\textbf{Algorithm 1: ADMM for the Adaptive Weighted Group Fused LASSO}
		
		\vspace{0.15cm}
		
		\[
		\begin{array}{rcl}
			\textbf{Input}
			&\longrightarrow&
			\mathbf{Y},\mathbf{X},\mathbf{W},\mathbf{D},
			\lambda_{1n},\lambda_{2n},\lambda_{3n},
			\rho_1,\rho_2,\rho_3,
			\mathbf{A},\mathbf{C},\{b_g\}_{g=1}^{G}
			\\[0.15cm]
			
			\textbf{Initialize}
			&\longrightarrow&
			\boldsymbol{\beta}^{(0)},\boldsymbol{\theta}^{(0)},
			\{\boldsymbol{\eta}_g^{(0)}\}_{g=1}^{G},
			\boldsymbol{\zeta}^{(0)},
			\mathbf{u}^{(0)},\{\mathbf{v}_g^{(0)}\}_{g=1}^{G},\mathbf{r}^{(0)}
			\\[0.15cm]
			
			\textbf{Precompute}
			&\longrightarrow&
			\mathbf{H}_{\rho}
			=
			n^{-1}\mathbf{X}^{\top}\mathbf{W}\mathbf{X}
			+
			(\rho_1+\rho_2)\mathbf{I}_p
			+
			\rho_3\mathbf{D}^{\top}\mathbf{D}
			\\[0.15cm]
			
			\textbf{Step 1}
			&\longrightarrow&
			\boldsymbol{\beta}^{(k+1)}
			\text{ from } \eqref{eq:beta_update}
			\\[0.15cm]
			
			\textbf{Step 2}
			&\longrightarrow&
			\boldsymbol{\theta}^{(k+1)}
			=
			\mathcal{S}_{\lambda_{1n}\mathbf{a}/\rho_1}
			(
			\boldsymbol{\beta}^{(k+1)}+\mathbf{u}^{(k)}
			)
			\\[0.15cm]
			
			\textbf{Step 3}
			&\longrightarrow&
			\boldsymbol{\eta}_g^{(k+1)}
			=
			\left(
			1-
			\dfrac{\lambda_{2n}b_g}
			{\rho_2\|\boldsymbol{\beta}^{(k+1)}_g+\mathbf{v}_g^{(k)}\|_2}
			\right)_+
			(
			\boldsymbol{\beta}^{(k+1)}_g+\mathbf{v}_g^{(k)}
			),
			\quad g=1,\ldots,G
			\\[0.15cm]
			
			\textbf{Step 4}
			&\longrightarrow&
			\boldsymbol{\zeta}^{(k+1)}
			=
			\mathcal{S}_{\lambda_{3n}\mathbf{c}/\rho_3}
			(
			\mathbf{D}\boldsymbol{\beta}^{(k+1)}+\mathbf{r}^{(k)}
			)
			\\[0.15cm]
			
			\textbf{Step 5}
			&\longrightarrow&
			\mathbf{u}^{(k+1)},\{\mathbf{v}_g^{(k+1)}\}_{g=1}^{G},\mathbf{r}^{(k+1)}
			\text{ from } \eqref{eq:u_update}-\eqref{eq:r_update}
			\\[0.15cm]
			
			\textbf{Stopping}
			&\longrightarrow&
			\text{stop when }
			\|\mathbf{r}_{\mathrm{pri}}^{(k+1)}\|_2\leq\epsilon_{\mathrm{pri}}
			\text{ and }
			\|\mathbf{r}_{\mathrm{dual}}^{(k+1)}\|_2\leq\epsilon_{\mathrm{dual}}
			\\[0.15cm]
			
			\textbf{Output}
			&\longrightarrow&
			\widehat{\boldsymbol{\beta}}=\boldsymbol{\beta}^{(k+1)}.
		\end{array}
		\]
	\end{minipage}
}
\]

The primal residual is defined by
\[
\mathbf{r}_{\mathrm{pri}}^{(k)}
=
\left(
\boldsymbol{\beta}^{(k)}-\boldsymbol{\theta}^{(k)},
\,
\{\boldsymbol{\beta}^{(k)}_g-\boldsymbol{\eta}_g^{(k)}\}_{g=1}^{G},
\,
\mathbf{D}\boldsymbol{\beta}^{(k)}-\boldsymbol{\zeta}^{(k)}
\right),
\]
and the dual residual is defined by
\[
\mathbf{r}_{\mathrm{dual}}^{(k)}
=
\left(
\rho_1(\boldsymbol{\theta}^{(k)}-\boldsymbol{\theta}^{(k-1)}),
\,
\rho_2\{\boldsymbol{\eta}_g^{(k)}-\boldsymbol{\eta}_g^{(k-1)}\}_{g=1}^{G},
\,
\rho_3\mathbf{D}^{\top}(\boldsymbol{\zeta}^{(k)}-\boldsymbol{\zeta}^{(k-1)})
\right).
\]

\paragraph{Assumption C1.}
The penalty parameters satisfy$ \lambda_{1n},\lambda_{2n},\lambda_{3n}\geq0,$ and the ADMM parameters satisfy $\rho_1,\rho_2,\rho_3>0.$

\paragraph{Assumption C2.}
The school weights satisfy $0<w_{\min}\leq w_i\leq w_{\max}<\infty,$ $ i=1,\ldots,n.$

\paragraph{Assumption C3.}
The adaptive penalty weights satisfy
\[
0<a_{\min}\leq a_j\leq a_{\max}<\infty,
\qquad
0<b_{\min}\leq b_g\leq b_{\max}<\infty,
\qquad
0<c_{\min}\leq c_j\leq c_{\max}<\infty.
\]

\begin{lem}[Positive definiteness of the ADMM system]
	\label{lem:H_positive}
	Under Assumption C1, the matrix $\mathbf{H}_{\rho}$ in \eqref{eq:Hrho} is symmetric positive definite.
\end{lem}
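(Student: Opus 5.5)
The plan is to prove symmetry and positive definiteness of $\mathbf{H}_{\rho}$ separately. Both follow from the decomposition in \eqref{eq:Hrho}: $\mathbf{H}_{\rho}$ is the sum of the weighted Gram matrix $n^{-1}\mathbf{X}^\top\mathbf{W}\mathbf{X}$, the scaled identity $(\rho_1+\rho_2)\mathbf{I}_p$, and the scaled fusion Gram matrix $\rho_3\mathbf{D}^\top\mathbf{D}$.

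Symmetry is immediate term by term. Since $\mathbf{W}$ is diagonal, $(\mathbf{X}^\top\mathbf{W}\mathbf{X})^\top=\mathbf{X}^\top\mathbf{W}^\top\mathbf{X}=\mathbf{X}^\top\mathbf{W}\mathbf{X}$. The identity is symmetric, and $(\mathbf{D}^\top\mathbf{D})^\top=\mathbf{D}^\top\mathbf{D}$.

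For definiteness, fix $\mathbf{u}\in\mathbb{R}^p$ with $\mathbf{u}\neq\mathbf{0}$ and expand the quadratic form:
\[
\mathbf{u}^\top\mathbf{H}_{\rho}\mathbf{u}
=
\frac{1}{n}\|\mathbf{W}^{1/2}\mathbf{X}\mathbf{u}\|_2^2
+
(\rho_1+\rho_2)\|\mathbf{u}\|_2^2
+
\rho_3\|\mathbf{D}\mathbf{u}\|_2^2 .
\]
The first term is nonnegative by exactly the computation used in the convexity lemma for $Q_n$. That computation requires $\mathbf{W}^{1/2}$ to be a well-defined real matrix, i.e. $w_i\geq 0$. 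Assumption C2 guarantees this, and it is also the standing interpretation of $\mathbf{W}$ as a matrix of school weights. The third term is nonnegative because $\rho_3>0$ by Assumption C1. The middle term is strictly positive, since C1 gives $\rho_1+\rho_2>0$ and $\mathbf{u}\neq\mathbf{0}$. Hence $\mathbf{u}^\top\mathbf{H}_{\rho}\mathbf{u}\geq(\rho_1+\rho_2)\|\mathbf{u}\|_2^2>0$. This gives positive definiteness and, as a byproduct, $\lambda_{\min}(\mathbf{H}_{\rho})\geq\rho_1+\rho_2$.

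There is no genuine obstacle here. The only delicate point is that Assumption C1 alone does not say $\mathbf{W}$ is positive semidefinite, so I will explicitly invoke the nonnegativity of the weights to justify the first term. Even without it, the identity term would dominate only if $\mathbf{W}\succeq 0$. I would therefore state that the weights are nonnegative, as in Section~3.1 and Assumption C2. I will close by noting the consequence: $\mathbf{H}_{\rho}$ is invertible, so the $\boldsymbol{\beta}$-update \eqref{eq:beta_update} is well defined, and $\mathbf{H}_{\rho}$ admits a Cholesky factorization that can be precomputed once in Algorithm 1.
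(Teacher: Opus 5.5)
Your proof is correct and follows the paper's argument essentially verbatim: symmetry term by term, then the three-term expansion of the quadratic form, with the $(\rho_1+\rho_2)\|\mathbf{u}\|_2^2$ term giving strict positivity. Your remark that the first term needs $w_i\geq 0$ (e.g.\ Assumption C2), which the paper leaves implicit by writing $\mathbf{W}^{1/2}$, is a correct refinement of the stated hypothesis ``under Assumption C1'' alone.
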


\begin{proof}
	Symmetry follows from the symmetry of
	$\mathbf{X}^{\top}\mathbf{W}\mathbf{X}$,
	$\mathbf{I}_p$, and $\mathbf{D}^{\top}\mathbf{D}$.
	For any nonzero vector $\mathbf{z}\in\mathbb{R}^{p}$,
	\begin{align*}
		\mathbf{z}^{\top}\mathbf{H}_{\rho}\mathbf{z}
		&=
		\frac{1}{n}
		\mathbf{z}^{\top}\mathbf{X}^{\top}\mathbf{W}\mathbf{X}\mathbf{z}
		+
		(\rho_1+\rho_2)\|\mathbf{z}\|_2^2
		+
		\rho_3\mathbf{z}^{\top}\mathbf{D}^{\top}\mathbf{D}\mathbf{z}
		\\
		&=
		\frac{1}{n}
		\|\mathbf{W}^{1/2}\mathbf{X}\mathbf{z}\|_2^2
		+
		(\rho_1+\rho_2)\|\mathbf{z}\|_2^2
		+
		\rho_3\|\mathbf{D}\mathbf{z}\|_2^2.
	\end{align*}
	The first and third terms are nonnegative. Since $\rho_1+\rho_2>0$ and $\mathbf{z}\neq0$, $(\rho_1+\rho_2)\|\mathbf{z}\|_2^2>0.$	Thus $\mathbf{z}^{\top}\mathbf{H}_{\rho}\mathbf{z}>0$ for every nonzero $\mathbf{z}$, proving that $\mathbf{H}_{\rho}$ is positive definite.
\end{proof}

\begin{lem}
	\label{lem:prox_updates}
	The updates \eqref{eq:theta_update}, \eqref{eq:eta_update}, and \eqref{eq:zeta_update} are the unique minimizers of their corresponding ADMM subproblems.
\end{lem}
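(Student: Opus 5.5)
The plan is to write out each ADMM subproblem explicitly from the scaled augmented Lagrangian \eqref{eq:scaled_augmented_lagrangian}, observe that each is the proximal operator of a convex, lower semicontinuous function, and then identify the minimizer through the subgradient optimality condition, as in the KKT lemma above. Uniqueness will come from strong convexity. Each subproblem objective consists of a convex penalty plus a quadratic term $\frac{\rho}{2}\|\cdot-\mathbf{z}\|_2^2$ with $\rho>0$ by Assumption C1. Such a sum is $\rho$-strongly convex and coercive, so by the Weierstrass argument used in the existence lemma it has exactly one minimizer. It therefore suffices to verify that the proposed formula satisfies $\mathbf{0}\in\partial(\cdot)$ at the proposed point.

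For the $\boldsymbol{\theta}$-subproblem, $\min_{\boldsymbol{\theta}}\lambda_{1n}\sum_j a_j|\theta_j|+\frac{\rho_1}{2}\|\boldsymbol{\theta}-\mathbf{z}\|_2^2$ with $\mathbf{z}=\boldsymbol{\beta}^{(k+1)}+\mathbf{u}^{(k)}$, the problem separates across coordinates. The scalar condition reads $0\in\rho_1(\theta_j-z_j)+\lambda_{1n}a_j\partial|\theta_j|$. I will check it by cases with $\tau_j=\lambda_{1n}a_j/\rho_1$. If $|z_j|>\tau_j$, then $\theta_j=z_j-\tau_j\mathrm{sign}(z_j)$ and the subgradient is $\mathrm{sign}(z_j)$. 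If $|z_j|\le\tau_j$, then $\theta_j=0$ and the subgradient is $z_j/\tau_j\in[-1,1]$. This yields \eqref{eq:theta_update}. The $\boldsymbol{\zeta}$-subproblem, with $\mathbf{z}=\mathbf{D}\boldsymbol{\beta}^{(k+1)}+\mathbf{r}^{(k)}$ and weights $\lambda_{3n}c_j/\rho_3$, is handled identically, because in the splitting \eqref{eq:admm_constrained} the penalty $\|\mathbf{C}\boldsymbol{\zeta}\|_1$ acts on $\boldsymbol{\zeta}$ directly and is coordinatewise separable. This gives \eqref{eq:zeta_update}.

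For the group step, the subproblem separates over the disjoint groups into $\min_{\boldsymbol{\eta}_g}\lambda_{2n}b_g\|\boldsymbol{\eta}_g\|_2+\frac{\rho_2}{2}\|\boldsymbol{\eta}_g-\mathbf{z}_g\|_2^2$ with $\mathbf{z}_g=\boldsymbol{\beta}_g^{(k+1)}+\mathbf{v}_g^{(k)}$. Set $\kappa=\lambda_{2n}b_g/\rho_2$. If $\|\mathbf{z}_g\|_2>\kappa$, the candidate $\boldsymbol{\eta}_g=(1-\kappa/\|\mathbf{z}_g\|_2)\mathbf{z}_g$ is nonzero and parallel to $\mathbf{z}_g$. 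The subgradient of the group penalty there is $\mathbf{z}_g/\|\mathbf{z}_g\|_2$, and the optimality equation $\rho_2(\boldsymbol{\eta}_g-\mathbf{z}_g)+\lambda_{2n}b_g\mathbf{z}_g/\|\mathbf{z}_g\|_2=\mathbf{0}$ holds exactly. If $\|\mathbf{z}_g\|_2\le\kappa$, take $\boldsymbol{\eta}_g=\mathbf{0}$ with subgradient $\mathbf{z}_g/\kappa$, whose norm is at most one, using the characterization of $\partial\|\cdot\|_2$ at the origin from the KKT lemma. These two cases together give \eqref{eq:eta_update}.

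The main obstacle is bookkeeping rather than analysis. First, the problem only decouples into the per-coordinate and per-group pieces because the groups partition $\{1,\ldots,p\}$ and because the auxiliary variables are mutually independent in the splitting, so I will state this separability explicitly. Second, the formula in \eqref{eq:eta_update} must be interpreted as $\mathbf{0}$ when $\mathbf{z}_g=\mathbf{0}$, since the expression divides by $\|\mathbf{z}_g\|_2$; I will treat that degenerate case separately under the second case above. Third, if some $\lambda_{\cdot n}=0$, the updates reduce to the identity map, which is consistent with the formulas and needs no separate argument.
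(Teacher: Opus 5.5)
Your proposal is correct and follows essentially the same route as the paper. Both arguments split each subproblem into coordinatewise or groupwise proximal problems, obtain the soft-thresholding and block soft-thresholding formulas from the subgradient optimality condition, and get uniqueness from the strictly (in your case, strongly) convex quadratic term. You verify the candidate formula against the optimality condition instead of deriving it, which is a slight improvement: it avoids the paper's unjustified claim that a nonzero group solution must be collinear with $\mathbf{q}_g$.
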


\begin{proof}
	Consider first the $\boldsymbol{\theta}$-subproblem:
	\[
	\boldsymbol{\theta}^{(k+1)}
	=
	\arg\min_{\boldsymbol{\theta}}
	\left\{
	\lambda_{1n}\sum_{j=1}^{p}a_j|\theta_j|
	+
	\frac{\rho_1}{2}
	\|\boldsymbol{\beta}^{(k+1)}-\boldsymbol{\theta}+\mathbf{u}^{(k)}\|_2^2
	\right\}.
	\]
	This objective is separable. For coordinate $j$, the minimization is
	\[
	\min_{\theta_j}
	\left\{
	\lambda_{1n}a_j|\theta_j|
	+
	\frac{\rho_1}{2}
	(\theta_j-z_j)^2
	\right\},
	\qquad
	z_j=\beta_j^{(k+1)}+u_j^{(k)}.
	\]
	The subgradient condition is
	\[
	0\in
	\lambda_{1n}a_j\partial|\theta_j|
	+
	\rho_1(\theta_j-z_j).
	\]
	Solving this inclusion gives
	\[
	\theta_j
	=
	\mathrm{sign}(z_j)
	\left(
	|z_j|-\frac{\lambda_{1n}a_j}{\rho_1}
	\right)_+,
	\]
	which proves \eqref{eq:theta_update}.
	
	For the group update, define
	\[
	\mathbf{q}_g
	=
	\boldsymbol{\beta}^{(k+1)}_g+\mathbf{v}_g^{(k)}.
	\]
	The $g$th block solves
	\[
	\min_{\boldsymbol{\eta}_g}
	\left\{
	\lambda_{2n}b_g\|\boldsymbol{\eta}_g\|_2
	+
	\frac{\rho_2}{2}
	\|\boldsymbol{\eta}_g-\mathbf{q}_g\|_2^2
	\right\}.
	\]
	If $\boldsymbol{\eta}_g\neq0$, the first-order condition is
	\[
	0=
	\lambda_{2n}b_g
	\frac{\boldsymbol{\eta}_g}{\|\boldsymbol{\eta}_g\|_2}
	+
	\rho_2(\boldsymbol{\eta}_g-\mathbf{q}_g).
	\]
	The solution must be collinear with $\mathbf{q}_g$, so write
	$\boldsymbol{\eta}_g=t\mathbf{q}_g$ with $t\geq0$ \citep{pramanik2021consensus}. Substitution gives
	\[
	t
	=
	1-
	\frac{\lambda_{2n}b_g}{\rho_2\|\mathbf{q}_g\|_2}.
	\]
	If this quantity is positive, the nonzero solution is obtained. If it is nonpositive, the subgradient condition at zero,
	\[
	\|\rho_2\mathbf{q}_g\|_2\leq \lambda_{2n}b_g,
	\]
	holds, so the minimizer is zero. This proves \eqref{eq:eta_update}.
	
	The $\boldsymbol{\zeta}$-subproblem is
	\[
	\min_{\boldsymbol{\zeta}}
	\left\{
	\lambda_{3n}\sum_{j=2}^{p}c_j|\zeta_j|
	+
	\frac{\rho_3}{2}
	\|\boldsymbol{\zeta}-\mathbf{h}\|_2^2
	\right\},
	\qquad
	\mathbf{h}=\mathbf{D}\boldsymbol{\beta}^{(k+1)}+\mathbf{r}^{(k)}.
	\]
	It is coordinatewise identical to the $\boldsymbol{\theta}$-update, with threshold
	$\lambda_{3n}c_j/\rho_3$, yielding \eqref{eq:zeta_update}.
	Each subproblem contains a strictly convex quadratic term; hence the minimizers are unique.
\end{proof}

\begin{prop}
	\label{prop:admm_descent}
	Under Assumptions C1--C3, each block update in Algorithm~1 does not increase the scaled augmented Lagrangian with respect to the block being updated. In particular,
	\[
	\mathcal{L}_{\rho}
	(
	\boldsymbol{\beta}^{(k+1)},\boldsymbol{\theta}^{(k)},
	\{\boldsymbol{\eta}_g^{(k)}\},
	\boldsymbol{\zeta}^{(k)},\mathbf{u}^{(k)},\{\mathbf{v}_g^{(k)}\},\mathbf{r}^{(k)}
	)
	\leq
	\mathcal{L}_{\rho}
	(
	\boldsymbol{\beta}^{(k)},\boldsymbol{\theta}^{(k)},
	\{\boldsymbol{\eta}_g^{(k)}\},
	\boldsymbol{\zeta}^{(k)},\mathbf{u}^{(k)},\{\mathbf{v}_g^{(k)}\},\mathbf{r}^{(k)}
	),
	\]
	with analogous inequalities for the $\boldsymbol{\theta}$-, $\boldsymbol{\eta}$-, and $\boldsymbol{\zeta}$-updates.
\end{prop}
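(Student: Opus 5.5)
The proposition is a block-wise descent statement, and the plan is to derive it directly from the fact that each ADMM step is an exact minimization of $\mathcal{L}_\rho$ over the block being updated, with all other arguments held fixed. The general principle is elementary. If $\mathbf{z}^{+}\in\arg\min_{\mathbf{z}} F(\mathbf{z})$, then $F(\mathbf{z}^{+})\leq F(\mathbf{z})$ for every $\mathbf{z}$, and in particular for the previous iterate. So the work reduces to checking that each update in Algorithm~1 is a genuine global minimizer of the corresponding restricted function, and that this minimizer exists.

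For the $\boldsymbol{\beta}$-block, I will fix $(\boldsymbol{\theta}^{(k)},\{\boldsymbol{\eta}_g^{(k)}\},\boldsymbol{\zeta}^{(k)},\mathbf{u}^{(k)},\{\mathbf{v}_g^{(k)}\},\mathbf{r}^{(k)})$. The map $\boldsymbol{\beta}\mapsto\mathcal{L}_\rho(\boldsymbol{\beta},\ldots)$ is then a quadratic plus a constant, where the constant collects the penalty terms that do not involve $\boldsymbol{\beta}$. Its Hessian is exactly $\mathbf{H}_\rho$ from \eqref{eq:Hrho}. This uses $\sum_g\|\boldsymbol{\beta}_g\|_2^2=\|\boldsymbol{\beta}\|_2^2$, which holds because the groups partition $\{1,\ldots,p\}$. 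By Lemma~\ref{lem:H_positive}, $\mathbf{H}_\rho$ is positive definite under Assumption C1. Hence the restricted function is strictly convex and coercive, and its unique minimizer is the stationary point. Setting the gradient to zero gives precisely \eqref{eq:beta_update}. Comparing the minimizer with $\boldsymbol{\beta}^{(k)}$ yields the displayed inequality.

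For the $\boldsymbol{\theta}$-, $\boldsymbol{\eta}$- and $\boldsymbol{\zeta}$-blocks, I will observe that, with the remaining variables fixed (and $\boldsymbol{\beta}$ at its new value $\boldsymbol{\beta}^{(k+1)}$), $\mathcal{L}_\rho$ restricted to each block differs from the subproblems in the proof of Lemma~\ref{lem:prox_updates} only by additive constants. For example, the $\boldsymbol{\eta}$-subproblem decouples over $g$ into the problems solved there. Lemma~\ref{lem:prox_updates} states that \eqref{eq:theta_update}, \eqref{eq:eta_update} and \eqref{eq:zeta_update} are the unique minimizers of these subproblems. Therefore each update satisfies, for instance,
\[
\mathcal{L}_\rho(\boldsymbol{\beta}^{(k+1)},\boldsymbol{\theta}^{(k+1)},\ldots)\leq\mathcal{L}_\rho(\boldsymbol{\beta}^{(k+1)},\boldsymbol{\theta}^{(k)},\ldots),
\]
and analogously for the other two blocks. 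Assumptions C2--C3 enter only to guarantee that all quantities are finite. Positive finite weights make the penalty terms real-valued, so the inequalities compare finite numbers.

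The main point requiring care is bookkeeping rather than analysis. The claim must be stated for the Gauss--Seidel sequence, where each block is updated using the most recent values of the others, as in Algorithm~1. I will also note explicitly that the dual updates \eqref{eq:u_update}--\eqref{eq:r_update} are excluded from the claim. The dual steps are ascent steps and can increase $\mathcal{L}_\rho$, which is why the proposition asserts monotonicity only for the primal blocks. The only potential gap is confirming that the $\boldsymbol{\beta}$ stationary point is a global minimizer. This follows from convexity together with the positive definiteness in Lemma~\ref{lem:H_positive}.
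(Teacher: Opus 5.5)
Your proposal is correct and follows essentially the same route as the paper: each primal block update is an exact global minimizer of $\mathcal{L}_\rho$ in that block with the other arguments fixed, and the four inequalities are then chained. You use positive definiteness of $\mathbf{H}_\rho$ (Lemma~\ref{lem:H_positive}) for the $\boldsymbol{\beta}$-step, and the only difference is cosmetic: you cite Lemma~\ref{lem:prox_updates} for the $\boldsymbol{\theta}$-, $\boldsymbol{\eta}$- and $\boldsymbol{\zeta}$-steps, whereas the paper re-derives the soft-thresholding and block soft-thresholding minimizers inside the proof.
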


\begin{proof}
	Fix an iteration index $k\geq 0$.  Throughout the proof, the scaled dual variables
	$\mathbf{u}^{(k)}$, $\{\mathbf{v}_g^{(k)}\}_{g=1}^{G}$ and $\mathbf{r}^{(k)}$ are held fixed, and the augmented Lagrangian is viewed as a function of the primal blocks $\boldsymbol{\beta},\
	\boldsymbol{\theta},\
	\{\boldsymbol{\eta}_g\}_{g=1}^{G},\
	\boldsymbol{\zeta}.$	For notational compactness, define
	\[
	\mathcal{L}_{\rho}^{(k)}
	(
	\boldsymbol{\beta},
	\boldsymbol{\theta},
	\{\boldsymbol{\eta}_g\},
	\boldsymbol{\zeta}
	)
	=
	\mathcal{L}_{\rho}
	(
	\boldsymbol{\beta},
	\boldsymbol{\theta},
	\{\boldsymbol{\eta}_g\},
	\boldsymbol{\zeta},
	\mathbf{u}^{(k)},
	\{\mathbf{v}_g^{(k)}\},
	\mathbf{r}^{(k)}
	).
	\]
	
	We first consider the $\boldsymbol{\beta}$-update.  By construction,
	$\boldsymbol{\beta}^{(k+1)}$ is defined as
	\[
	\boldsymbol{\beta}^{(k+1)}
	=
	\arg\min_{\boldsymbol{\beta}\in\mathbb{R}^{p}}
	\mathcal{L}_{\rho}^{(k)}
	(
	\boldsymbol{\beta},
	\boldsymbol{\theta}^{(k)},
	\{\boldsymbol{\eta}_g^{(k)}\},
	\boldsymbol{\zeta}^{(k)}
	).
	\]
	Equivalently, $\boldsymbol{\beta}^{(k+1)}$ minimizes the strictly convex quadratic function
	\begin{align*}
		\Phi_{\beta}^{(k)}(\boldsymbol{\beta})
		&=
		\frac{1}{2n}
		\|\mathbf{W}^{1/2}(\mathbf{Y}-\mathbf{X}\boldsymbol{\beta})\|_2^2
		+
		\frac{\rho_1}{2}
		\|\boldsymbol{\beta}-\boldsymbol{\theta}^{(k)}+\mathbf{u}^{(k)}\|_2^2  \\
		&\quad+
		\frac{\rho_2}{2}
		\sum_{g=1}^{G}
		\|\boldsymbol{\beta}_g-\boldsymbol{\eta}_g^{(k)}+\mathbf{v}_g^{(k)}\|_2^2
		+
		\frac{\rho_3}{2}
		\|\mathbf{D}\boldsymbol{\beta}-\boldsymbol{\zeta}^{(k)}+\mathbf{r}^{(k)}\|_2^2 .
	\end{align*}
	The terms involving
	$\boldsymbol{\theta}^{(k)}$,
	$\{\boldsymbol{\eta}_g^{(k)}\}$ and
	$\boldsymbol{\zeta}^{(k)}$
	through the non-smooth penalties are constant with respect to
	$\boldsymbol{\beta}$ and therefore do not affect the minimization \citep{pramanik2023cmbp,yusuf2025predictive}.  The Hessian of
	$\Phi_{\beta}^{(k)}$ is
	\[
	\nabla^2\Phi_{\beta}^{(k)}(\boldsymbol{\beta})
	=
	\frac{1}{n}\mathbf{X}^{\top}\mathbf{W}\mathbf{X}
	+
	(\rho_1+\rho_2)\mathbf{I}_p
	+
	\rho_3\mathbf{D}^{\top}\mathbf{D}.
	\]
	For every nonzero $\mathbf{z}\in\mathbb{R}^{p}$,
	\[
	\mathbf{z}^{\top}
	\nabla^2\Phi_{\beta}^{(k)}
	\mathbf{z}
	=
	\frac{1}{n}
	\|\mathbf{W}^{1/2}\mathbf{X}\mathbf{z}\|_2^2
	+
	(\rho_1+\rho_2)\|\mathbf{z}\|_2^2
	+
	\rho_3\|\mathbf{D}\mathbf{z}\|_2^2
	>0,
	\]
	because $\rho_1,\rho_2,\rho_3>0$.  Hence
	$\Phi_{\beta}^{(k)}$ is strictly convex and has the unique minimizer
	$\boldsymbol{\beta}^{(k+1)}$ \cite{lungu_oksendal_2001}.  Therefore,
	\[
	\Phi_{\beta}^{(k)}(\boldsymbol{\beta}^{(k+1)})
	\leq
	\Phi_{\beta}^{(k)}(\boldsymbol{\beta}^{(k)}).
	\]
	Reintroducing the terms that are constant in the $\boldsymbol{\beta}$-subproblem yields
	\[
	\mathcal{L}_{\rho}
	(
	\boldsymbol{\beta}^{(k+1)},\boldsymbol{\theta}^{(k)},
	\{\boldsymbol{\eta}_g^{(k)}\},
	\boldsymbol{\zeta}^{(k)},\mathbf{u}^{(k)},\{\mathbf{v}_g^{(k)}\},\mathbf{r}^{(k)}
	)
	\leq
	\mathcal{L}_{\rho}
	(
	\boldsymbol{\beta}^{(k)},\boldsymbol{\theta}^{(k)},
	\{\boldsymbol{\eta}_g^{(k)}\},
	\boldsymbol{\zeta}^{(k)},\mathbf{u}^{(k)},\{\mathbf{v}_g^{(k)}\},\mathbf{r}^{(k)}
	).
	\]
	
	Next consider the $\boldsymbol{\theta}$-update.  Conditional on
	$\boldsymbol{\beta}^{(k+1)}$ and the remaining blocks, the relevant part of the augmented Lagrangian is
	\[
	\Phi_{\theta}^{(k)}(\boldsymbol{\theta})
	=
	\lambda_{1n}\sum_{j=1}^{p}a_j|\theta_j|
	+
	\frac{\rho_1}{2}
	\|\boldsymbol{\beta}^{(k+1)}-\boldsymbol{\theta}+\mathbf{u}^{(k)}\|_2^2 .
	\]
	This function is separable across coordinates
	\[
	\Phi_{\theta}^{(k)}(\boldsymbol{\theta})
	=
	\sum_{j=1}^{p}
	\left[
	\lambda_{1n}a_j|\theta_j|
	+
	\frac{\rho_1}{2}
	(\theta_j-z_j^{(k)})^2
	\right],
	\qquad
	z_j^{(k)}=\beta_j^{(k+1)}+u_j^{(k)} .
	\]
	For each $j$, the coordinatewise minimizer satisfies the subgradient inclusion
	\[
	0\in
	\lambda_{1n}a_j\partial|\theta_j|
	+
	\rho_1(\theta_j-z_j^{(k)}).
	\]
	Solving this inclusion gives
	\[
	\theta_j^{(k+1)}
	=
	\operatorname{sign}(z_j^{(k)})
	\left(
	|z_j^{(k)}|-\frac{\lambda_{1n}a_j}{\rho_1}
	\right)_+ .
	\]
	Thus $\boldsymbol{\theta}^{(k+1)}$ is the global minimizer of
	$\Phi_{\theta}^{(k)}$. Consequently,
	\[
	\Phi_{\theta}^{(k)}(\boldsymbol{\theta}^{(k+1)})
	\leq
	\Phi_{\theta}^{(k)}(\boldsymbol{\theta}^{(k)}).
	\]
	Adding back the terms independent of $\boldsymbol{\theta}$ gives
	\begin{align*}
		&
		\mathcal{L}_{\rho}
		(
		\boldsymbol{\beta}^{(k+1)},\boldsymbol{\theta}^{(k+1)},
		\{\boldsymbol{\eta}_g^{(k)}\},
		\boldsymbol{\zeta}^{(k)},\mathbf{u}^{(k)},\{\mathbf{v}_g^{(k)}\},\mathbf{r}^{(k)}
		)
		\\
		&\qquad\leq
		\mathcal{L}_{\rho}
		(
		\boldsymbol{\beta}^{(k+1)},\boldsymbol{\theta}^{(k)},
		\{\boldsymbol{\eta}_g^{(k)}\},
		\boldsymbol{\zeta}^{(k)},\mathbf{u}^{(k)},\{\mathbf{v}_g^{(k)}\},\mathbf{r}^{(k)}
		).
	\end{align*}
	
	Now consider the groupwise updates.  For a fixed group $g$, define
	\[
	\mathbf{q}_g^{(k)}
	=
	\boldsymbol{\beta}_g^{(k+1)}+\mathbf{v}_g^{(k)}.
	\]
	The $g$th auxiliary block solves
	\[
	\Phi_{\eta,g}^{(k)}(\boldsymbol{\eta}_g)
	=
	\lambda_{2n}b_g\|\boldsymbol{\eta}_g\|_2
	+
	\frac{\rho_2}{2}
	\|\boldsymbol{\eta}_g-\mathbf{q}_g^{(k)}\|_2^2 .
	\]
	This is the proximal problem for the Euclidean norm.  If
	$\boldsymbol{\eta}_g\neq \mathbf{0}$, the first-order condition is
	\[
	0
	=
	\lambda_{2n}b_g
	\frac{\boldsymbol{\eta}_g}{\|\boldsymbol{\eta}_g\|_2}
	+
	\rho_2(\boldsymbol{\eta}_g-\mathbf{q}_g^{(k)}).
	\]
	The solution must be parallel to $\mathbf{q}_g^{(k)}$, so
	$\boldsymbol{\eta}_g=t\mathbf{q}_g^{(k)}$ for some $t\geq 0$. Substituting gives
	\[
	t
	=
	1-
	\frac{\lambda_{2n}b_g}
	{\rho_2\|\mathbf{q}_g^{(k)}\|_2}.
	\]
	If this value is positive, the nonzero minimizer is obtained. If it is nonpositive, the subgradient condition at zero, $\|\rho_2\mathbf{q}_g^{(k)}\|_2
	\leq
	\lambda_{2n}b_g,$ holds, and the minimizer is $\mathbf{0}$. Therefore,
	\[
	\boldsymbol{\eta}_g^{(k+1)}
	=
	\left(
	1-
	\frac{\lambda_{2n}b_g}
	{\rho_2\|\mathbf{q}_g^{(k)}\|_2}
	\right)_+
	\mathbf{q}_g^{(k)} .
	\]
	Thus, for each $g$,
	\[
	\Phi_{\eta,g}^{(k)}(\boldsymbol{\eta}_g^{(k+1)})
	\leq
	\Phi_{\eta,g}^{(k)}(\boldsymbol{\eta}_g^{(k)}).
	\]
	Summing over all groups yields
	\[
	\sum_{g=1}^{G}
	\Phi_{\eta,g}^{(k)}(\boldsymbol{\eta}_g^{(k+1)})
	\leq
	\sum_{g=1}^{G}
	\Phi_{\eta,g}^{(k)}(\boldsymbol{\eta}_g^{(k)}).
	\]
	Hence
	\begin{align*}
		&
		\mathcal{L}_{\rho}
		(
		\boldsymbol{\beta}^{(k+1)},\boldsymbol{\theta}^{(k+1)},
		\{\boldsymbol{\eta}_g^{(k+1)}\},
		\boldsymbol{\zeta}^{(k)},\mathbf{u}^{(k)},\{\mathbf{v}_g^{(k)}\},\mathbf{r}^{(k)}
		)
		\\
		&\qquad\leq
		\mathcal{L}_{\rho}
		(
		\boldsymbol{\beta}^{(k+1)},\boldsymbol{\theta}^{(k+1)},
		\{\boldsymbol{\eta}_g^{(k)}\},
		\boldsymbol{\zeta}^{(k)},\mathbf{u}^{(k)},\{\mathbf{v}_g^{(k)}\},\mathbf{r}^{(k)}
		).
	\end{align*}
	
	Finally, consider the fusion update.  Conditional on
	$\boldsymbol{\beta}^{(k+1)}$, the relevant subproblem is
	\[
	\Phi_{\zeta}^{(k)}(\boldsymbol{\zeta})
	=
	\lambda_{3n}
	\sum_{j=2}^{p}c_j|\zeta_j|
	+
	\frac{\rho_3}{2}
	\|\boldsymbol{\zeta}-\mathbf{h}^{(k)}\|_2^2,
	\]
	where
	\[
	\mathbf{h}^{(k)}
	=
	\mathbf{D}\boldsymbol{\beta}^{(k+1)}
	+
	\mathbf{r}^{(k)}.
	\]
	This problem is coordinatewise separable. For each component,
	\[
	\zeta_j^{(k+1)}
	=
	\operatorname{sign}(h_j^{(k)})
	\left(
	|h_j^{(k)}|
	-
	\frac{\lambda_{3n}c_j}{\rho_3}
	\right)_+ .
	\]
	Therefore $\boldsymbol{\zeta}^{(k+1)}$ minimizes
	$\Phi_{\zeta}^{(k)}$, and hence
	\[
	\Phi_{\zeta}^{(k)}(\boldsymbol{\zeta}^{(k+1)})
	\leq
	\Phi_{\zeta}^{(k)}(\boldsymbol{\zeta}^{(k)}).
	\]
	Adding the remaining terms that are fixed in the $\boldsymbol{\zeta}$-subproblem gives
	\begin{align*}
		&
		\mathcal{L}_{\rho}
		(
		\boldsymbol{\beta}^{(k+1)},\boldsymbol{\theta}^{(k+1)},
		\{\boldsymbol{\eta}_g^{(k+1)}\},
		\boldsymbol{\zeta}^{(k+1)},\mathbf{u}^{(k)},\{\mathbf{v}_g^{(k)}\},\mathbf{r}^{(k)}
		)
		\\
		&\qquad\leq
		\mathcal{L}_{\rho}
		(
		\boldsymbol{\beta}^{(k+1)},\boldsymbol{\theta}^{(k+1)},
		\{\boldsymbol{\eta}_g^{(k+1)}\},
		\boldsymbol{\zeta}^{(k)},\mathbf{u}^{(k)},\{\mathbf{v}_g^{(k)}\},\mathbf{r}^{(k)}
		).
	\end{align*}
	
	Combining the four blockwise inequalities gives the full primal descent chain
	\begin{align*}
		&
		\mathcal{L}_{\rho}
		(
		\boldsymbol{\beta}^{(k+1)},\boldsymbol{\theta}^{(k+1)},
		\{\boldsymbol{\eta}_g^{(k+1)}\},
		\boldsymbol{\zeta}^{(k+1)},
		\mathbf{u}^{(k)},\{\mathbf{v}_g^{(k)}\},\mathbf{r}^{(k)}
		)
		\\
		&\qquad\leq
		\mathcal{L}_{\rho}
		(
		\boldsymbol{\beta}^{(k)},\boldsymbol{\theta}^{(k)},
		\{\boldsymbol{\eta}_g^{(k)}\},
		\boldsymbol{\zeta}^{(k)},
		\mathbf{u}^{(k)},\{\mathbf{v}_g^{(k)}\},\mathbf{r}^{(k)}
		).
	\end{align*}
	Thus each primal block update is non-increasing with respect to the scaled augmented Lagrangian when all other primal and dual variables are held fixed \citep{pramanik2021consensus}. This proves the proposition.
\end{proof}

\begin{theorem}
	\label{thm:admm_convergence}
	Suppose Assumptions C1-C3 hold and the solution set of \eqref{eq:awgf_estimator} is nonempty \citep{pramanik2023optimization001}. Then the sequence
	\[
	\left\{
	\boldsymbol{\beta}^{(k)},
	\boldsymbol{\theta}^{(k)},
	\{\boldsymbol{\eta}_g^{(k)}\}_{g=1}^{G},
	\boldsymbol{\zeta}^{(k)}
	\right\}_{k\geq0}
	\]
	generated by Algorithm~1 satisfies
	\[
	\boldsymbol{\beta}^{(k)}-\boldsymbol{\theta}^{(k)}\to\mathbf{0},
	\qquad
	\boldsymbol{\beta}^{(k)}_g-\boldsymbol{\eta}_g^{(k)}\to\mathbf{0},
	\qquad
	\mathbf{D}\boldsymbol{\beta}^{(k)}-\boldsymbol{\zeta}^{(k)}\to\mathbf{0}.
	\]
	Moreover, every accumulation point of $\{\boldsymbol{\beta}^{(k)}\}$ is a global minimizer of \eqref{eq:awgf_estimator}. If the minimizer is unique, then $\boldsymbol{\beta}^{(k)}\to \widehat{\boldsymbol{\beta}}.$
\end{theorem}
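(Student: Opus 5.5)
The plan is to recast Algorithm~1 as a standard two-block ADMM and then invoke the classical convergence theory of \citet{boyd2011distributed}. The primal variables are grouped into $\mathbf{x}=\boldsymbol{\beta}$ and $\mathbf{z}=(\boldsymbol{\theta},\{\boldsymbol{\eta}_g\},\boldsymbol{\zeta})$. With the linear map
\[
\mathbf{M}=\begin{pmatrix}\mathbf{I}_p\\ \mathbf{I}_p\\ \mathbf{D}\end{pmatrix},
\]
the constraints in \eqref{eq:admm_constrained} read $\mathbf{M}\boldsymbol{\beta}-\mathbf{z}=\mathbf{0}$; the middle block uses that the groups partition $\{1,\ldots,p\}$. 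The objective splits as $f(\boldsymbol{\beta})+h(\mathbf{z})$, where $f=L_n$ is a convex quadratic and $h$ is the sum of the three weighted penalties, a closed proper convex and separable function. Because $h$ is separable, the $\boldsymbol{\theta}$-, $\boldsymbol{\eta}$- and $\boldsymbol{\zeta}$-updates decouple once $\boldsymbol{\beta}^{(k+1)}$ is fixed. Performing them jointly therefore coincides with the sequential updates in Steps 2--4, and Lemma~\ref{lem:prox_updates} identifies them as the exact $\mathbf{z}$-minimizer. Unequal penalty parameters $\rho_1,\rho_2,\rho_3$ are absorbed by rescaling the rows of $\mathbf{M}$ and $\mathbf{z}$ by $\sqrt{\rho_i}$, which gives a single-$\rho$ scaled ADMM. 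By Lemma~\ref{lem:H_positive} the $\boldsymbol{\beta}$-step is a well-defined exact minimization, and $\mathbf{M}$ has full column rank.

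The first substantive step is to establish existence of a saddle point of the unaugmented Lagrangian. The solution set of \eqref{eq:awgf_estimator} is nonempty by hypothesis, and this minimization is equivalent to \eqref{eq:admm_constrained}. The constraints are linear and $f,h$ are finite everywhere, so the relative-interior Slater condition holds trivially. Hence strong duality holds and a dual optimum $(\mathbf{u}^\star,\{\mathbf{v}_g^\star\},\mathbf{r}^\star)$ exists. Concretely, from the KKT condition \eqref{eq:kkt} at a minimizer $\widehat{\boldsymbol{\beta}}$ one can read off explicit multipliers: $\rho_1\mathbf{u}^\star=\lambda_{1n}\mathbf{A}\widehat{\mathbf{s}}$, $\rho_2\mathbf{v}^\star=\lambda_{2n}\widehat{\mathbf{g}}$, and $\rho_3\mathbf{r}^\star=\lambda_{3n}\mathbf{C}\widehat{\mathbf{r}}$. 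Together with $\mathbf{z}^\star=\mathbf{M}\widehat{\boldsymbol{\beta}}$, these form a saddle point. Assumptions C2--C3 keep the weights finite and positive, so these objects are well defined.

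The main step, and the main obstacle, is the Lyapunov argument. Define
\[
V^{(k)}=\rho\|\mathbf{w}^{(k)}-\mathbf{w}^\star\|_2^2+\rho\|\mathbf{z}^{(k)}-\mathbf{z}^\star\|_2^2,
\]
where $\mathbf{w}$ collects the scaled duals. Following Appendix A of \citet{boyd2011distributed}, combine three ingredients: the optimality conditions of the $\boldsymbol{\beta}$- and $\mathbf{z}$-subproblems, monotonicity of the subdifferential of $h$, and the saddle-point inequality. These yield
\[
V^{(k+1)}\leq V^{(k)}-\rho\|\mathbf{M}\boldsymbol{\beta}^{(k+1)}-\mathbf{z}^{(k+1)}\|_2^2-\rho\|\mathbf{z}^{(k+1)}-\mathbf{z}^{(k)}\|_2^2 .
\]
Care is needed because the descent in Proposition~\ref{prop:admm_descent} is only blockwise and ignores the dual ascent. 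So I will not rely on that proposition, but derive the inequality directly, since the dual update can increase $\mathcal{L}_\rho$. Telescoping the inequality makes the residuals square-summable. This gives $\mathbf{M}\boldsymbol{\beta}^{(k)}-\mathbf{z}^{(k)}\to\mathbf{0}$, which is exactly the three stated residual limits, and also $\mathbf{z}^{(k+1)}-\mathbf{z}^{(k)}\to\mathbf{0}$, so the dual residual vanishes.

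Finally, the same telescoping shows that $\mathbf{z}^{(k)}$ and the duals are bounded. Since $\mathbf{z}^{(k)}$ contains the block $\boldsymbol{\theta}^{(k)}$ and $\boldsymbol{\beta}^{(k)}-\boldsymbol{\theta}^{(k)}\to\mathbf{0}$, the sequence $\boldsymbol{\beta}^{(k)}$ is bounded as well. Let $\bar{\boldsymbol{\beta}}$ be an accumulation point along a subsequence, and pass to a further subsequence on which the duals also converge. Take limits in the subproblem optimality conditions, using closedness of the graph of $\partial h$, vanishing residuals, and continuity of $\nabla L_n$. The limit satisfies $\mathbf{0}\in\nabla L_n(\bar{\boldsymbol{\beta}})+\mathbf{M}^\top\partial h(\mathbf{M}\bar{\boldsymbol{\beta}})$, which is precisely the subgradient characterization \eqref{eq:kkt}. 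By convexity of $Q_n$, $\bar{\boldsymbol{\beta}}$ is therefore a global minimizer. If the minimizer is unique, every subsequence of the bounded sequence $\boldsymbol{\beta}^{(k)}$ has a further subsequence converging to $\widehat{\boldsymbol{\beta}}$. Hence the whole sequence converges to $\widehat{\boldsymbol{\beta}}$.
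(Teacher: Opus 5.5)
Your proof is correct and follows the same skeleton as the paper. Both recast Algorithm~1 as a two-block ADMM, with $f=L_n$ acting on $\boldsymbol{\beta}$ and the separable penalty $h$ acting on $(\boldsymbol{\theta},\{\boldsymbol{\eta}_g\},\boldsymbol{\zeta})$, and both then rely on classical convex ADMM theory. The difference lies in what each actually verifies.

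The paper uses the standard result as a black box and takes its hypothesis to be a nonempty solution set. It shows accumulation points are optimal via convergence of the objective values plus lower semicontinuity. It gets boundedness of $\{\boldsymbol{\beta}^{(k)}\}$, which the uniqueness claim needs, by asserting that the ADMM iterates stay in a bounded level set.

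You instead do four things:
(i) you construct a saddle point of the unaugmented Lagrangian, by strong duality or explicitly from the subgradients in \eqref{eq:kkt}; this is the hypothesis the classical theorem really requires;
(ii) you absorb the three distinct $\rho_i$ by rescaling, a point the paper passes over;
(iii) you obtain boundedness of $\mathbf{z}^{(k)}$, the duals, and hence $\boldsymbol{\beta}^{(k)}$ from the Lyapunov inequality;
(iv) you show accumulation points are minimizers by passing to the limit in the subproblem optimality conditions, using the closed graph of $\partial h$.

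Point (iii) is the real gain. ADMM does not monotonically decrease $Q_n(\boldsymbol{\beta}^{(k)})$, and Proposition~\ref{prop:admm_descent} gives only blockwise primal descent that the dual step can undo. So the paper's level-set claim is unsupported as written, whereas your bound needs only the saddle point. The paper's route through objective-value convergence is shorter once the black-box theorem is granted; yours is self-contained.
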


\begin{proof}
	We prove the result by verifying that Algorithm~1 is an exact ADMM scheme applied to a closed, proper, convex optimization problem with linear constraints \citep{pramanik2021,pramanik2022stochastic}.  The proof is divided into several steps. First, define the product-space primal variable
	\[
	\mathbf{z}
	=
	\left(
	\boldsymbol{\theta},
	\{\boldsymbol{\eta}_g\}_{g=1}^{G},
	\boldsymbol{\zeta}
	\right),
	\]
	and write the constrained problem \eqref{eq:admm_constrained} as
	\[
	\min_{\boldsymbol{\beta},\mathbf{z}}
	\left\{
	f(\boldsymbol{\beta})+h(\mathbf{z})
	\right\}
	\quad
	\text{subject to}
	\quad
	\mathcal{A}\boldsymbol{\beta}
	+
	\mathcal{B}\mathbf{z}
	=
	\mathbf{0},
	\]
	where
	\[
	f(\boldsymbol{\beta})
	=
	\frac{1}{2n}
	\|\mathbf{W}^{1/2}(\mathbf{Y}-\mathbf{X}\boldsymbol{\beta})\|_2^2,
	\]
	and
	\[
	h(\mathbf{z})
	=
	\lambda_{1n}\|\mathbf{A}\boldsymbol{\theta}\|_1
	+
	\lambda_{2n}\sum_{g=1}^{G}b_g\|\boldsymbol{\eta}_g\|_2
	+
	\lambda_{3n}\|\mathbf{C}\boldsymbol{\zeta}\|_1.
	\]
	The corresponding linear constraints are
	\[
	\boldsymbol{\beta}-\boldsymbol{\theta}=\mathbf{0},
	\qquad
	\boldsymbol{\beta}_g-\boldsymbol{\eta}_g=\mathbf{0},
	\quad g=1,\ldots,G,
	\qquad
	\mathbf{D}\boldsymbol{\beta}-\boldsymbol{\zeta}=\mathbf{0}.
	\]
	
	Under Assumptions C1--C3, the functions $f$ and $h$ are closed, proper, and convex.  The function $f$ is a finite convex quadratic because $\mathbf{W}$ is positive diagonal.  The function $h$ is a nonnegative weighted sum of norms and seminorms; hence it is convex, lower semicontinuous, and proper \citep{hening_tran_2020}.  The feasible set is affine.  Therefore the constrained formulation is a closed convex program.
	
	Next, we verify exactness of the block updates.  Lemma~\ref{lem:H_positive} shows that the matrix
	\[
	\mathbf{H}_{\rho}
	=
	\frac{1}{n}\mathbf{X}^{\top}\mathbf{W}\mathbf{X}
	+
	(\rho_1+\rho_2)\mathbf{I}_p
	+
	\rho_3\mathbf{D}^{\top}\mathbf{D}
	\]
	is positive definite.  Hence the $\boldsymbol{\beta}$-subproblem has a unique solution at every iteration.  The auxiliary updates are proximal minimizations of closed convex penalties plus strictly convex quadratic terms.  Thus the updates for
	$\boldsymbol{\theta}$,
	$\{\boldsymbol{\eta}_g\}$,
	and
	$\boldsymbol{\zeta}$
	are also exact minimizers.  Consequently, Algorithm~1 is an exact ADMM procedure for the above linearly constrained convex problem.
	
	Let
	\[
	\mathbf{s}^{(k)}
	=
	\left(
	\boldsymbol{\beta}^{(k)}-\boldsymbol{\theta}^{(k)},
	\{\boldsymbol{\beta}^{(k)}_g-\boldsymbol{\eta}_g^{(k)}\}_{g=1}^{G},
	\mathbf{D}\boldsymbol{\beta}^{(k)}-\boldsymbol{\zeta}^{(k)}
	\right)
	\]
	denote the primal residual.  The ADMM primal feasibility result for closed, proper, convex functions with a nonempty solution set implies $\mathbf{s}^{(k)}\to\mathbf{0}.$ Equivalently, $\boldsymbol{\beta}^{(k)}-\boldsymbol{\theta}^{(k)}\to\mathbf{0}, $ $\boldsymbol{\beta}^{(k)}_g-\boldsymbol{\eta}_g^{(k)}\to\mathbf{0},$ $ g=1,\ldots,G,$ and $\mathbf{D}\boldsymbol{\beta}^{(k)}-\boldsymbol{\zeta}^{(k)}\to\mathbf{0}.$ It remains to show that accumulation points are global minimizers of the original problem \citep{pramanik2023optimization001}.  Let $\boldsymbol{\beta}^{(k_m)}\to\boldsymbol{\beta}^{\ast}$ be a convergent subsequence.  Since the primal residual converges to zero, the corresponding auxiliary variables satisfy $\boldsymbol{\theta}^{(k_m)}\to\boldsymbol{\beta}^{\ast},
	\
	\boldsymbol{\eta}_g^{(k_m)}\to\boldsymbol{\beta}^{\ast}_g,
	\
	\boldsymbol{\zeta}^{(k_m)}\to\mathbf{D}\boldsymbol{\beta}^{\ast}.$ Thus the limit point is feasible for the constrained formulation. Moreover, ADMM convergence for convex linearly constrained problems gives convergence of the objective values to the optimal value \citep{imai2010identification,pramanik2025construction,pramanik2025optimal}.  Therefore,
	\[
	f(\boldsymbol{\beta}^{(k_m)})
	+
	h
	\left(
	\boldsymbol{\theta}^{(k_m)},
	\{\boldsymbol{\eta}_g^{(k_m)}\},
	\boldsymbol{\zeta}^{(k_m)}
	\right)
	\to
	Q_n^{\ast},
	\]
	where $Q_n^{\ast}$ is the minimum value of \eqref{eq:awgf_estimator}.  By lower semicontinuity of $f$ and $h$,
	\[
	Q_n(\boldsymbol{\beta}^{\ast})
	\leq
	\liminf_{m\to\infty}
	\left[
	f(\boldsymbol{\beta}^{(k_m)})
	+
	h
	\left(
	\boldsymbol{\theta}^{(k_m)},
	\{\boldsymbol{\eta}_g^{(k_m)}\},
	\boldsymbol{\zeta}^{(k_m)}
	\right)
	\right]
	=
	Q_n^{\ast}.
	\]
	Since $Q_n^{\ast}$ is the infimum of $Q_n$, we also have$ Q_n(\boldsymbol{\beta}^{\ast})\geq Q_n^{\ast}.$ Therefore,$ Q_n(\boldsymbol{\beta}^{\ast})=Q_n^{\ast},$ and $\boldsymbol{\beta}^{\ast}$ is a global minimizer of \eqref{eq:awgf_estimator}. Finally, suppose the minimizer is unique and denote it by
	$\widehat{\boldsymbol{\beta}}$ \citep{pramanik2024stochastic,pramanik2025stubbornness}.  Every accumulation point of
	$\{\boldsymbol{\beta}^{(k)}\}$ is then equal to
	$\widehat{\boldsymbol{\beta}}$ \citep{pramanik2025dissecting,pramanik2025factors}.  If the full sequence did not converge to
	$\widehat{\boldsymbol{\beta}}$, there would exist an $\epsilon>0$ and a subsequence
	$\{\boldsymbol{\beta}^{(k_m)}\}$ such that$ \|\boldsymbol{\beta}^{(k_m)}-\widehat{\boldsymbol{\beta}}\|_2\geq \epsilon$ for all $m$.  Since the iterates generated by ADMM for this coercive penalized quadratic problem remain in a bounded level set, this subsequence has a further convergent subsequence.  Its limit must be an accumulation point and therefore must equal
	$\widehat{\boldsymbol{\beta}}$, contradicting the inequality above.  Hence, $\boldsymbol{\beta}^{(k)}\to\widehat{\boldsymbol{\beta}}.$ This completes the proof.
\end{proof}

\subsection{Selection of Regularization Parameters:}
\label{subsec:tuning}

The proposed Adaptive Weighted Group Fused LASSO estimator depends on three nonnegative regularization parameters, $\boldsymbol{\lambda}
=
(\lambda_{1n},\lambda_{2n},\lambda_{3n})^\top,$ which respectively control the adaptive sparsity penalty, the grouped sparsity penalty, and the coefficient fusion penalty \citep{pramanik2021thesis,pramanik2016}. Appropriate selection of these parameters is essential because they determine the bias-variance trade-off, the sparsity of the fitted model, the degree of group selection, and the amount of coefficient smoothing. For every admissible value of
\(
\boldsymbol{\lambda},
\)
let $\widehat{\boldsymbol{\beta}}(\boldsymbol{\lambda})$ denote the minimizer of
\eqref{eq:awgf_objective}. Since the estimator depends continuously on the regularization parameters over compact subsets of
$\mathbb{R}_+^3$, the tuning parameter selection problem may be formulated as
\begin{equation}
	\widehat{\boldsymbol{\lambda}}
	=
	\arg\min_{\boldsymbol{\lambda}\in\Lambda}
	\mathcal{C}(\boldsymbol{\lambda}),
	\label{eq:tuning_problem}
\end{equation}
where
\[
\Lambda
=
[\lambda_1^{L},\lambda_1^{U}]
\times
[\lambda_2^{L},\lambda_2^{U}]
\times
[\lambda_3^{L},\lambda_3^{U}]
\subset
\mathbb{R}_+^{3}
\]
is a compact search region and
$\mathcal{C}(\cdot)$
is an appropriate model selection criterion. Throughout this paper, the regularization parameters are selected using repeated $K$-fold cross-validation. Let $\mathcal{I}_1,\ldots,\mathcal{I}_K$ denote a partition of the sample into approximately equal-sized folds \citep{hua2019assessing,polansky2021motif}. For the $k$th fold, let
$\widehat{\boldsymbol{\beta}}^{(-k)}(\boldsymbol{\lambda})$ denote the estimator obtained after excluding observations in
$\mathcal{I}_k$ \citep{pramanik2024estimation,pramanik2023cont}.
The cross-validation criterion is defined by
\begin{equation}
	CV(\boldsymbol{\lambda})
	=
	\frac{1}{K}
	\sum_{k=1}^{K}
	\frac{1}{|\mathcal{I}_k|}
	\sum_{i\in\mathcal{I}_k}
	\left(
	Y_i-
	\mathbf{x}_i^\top
	\widehat{\boldsymbol{\beta}}^{(-k)}
	(\boldsymbol{\lambda})
	\right)^2.
	\label{eq:cv}
\end{equation}
The selected tuning parameters are $\widehat{\boldsymbol{\lambda}}
=
\arg\min_{\boldsymbol{\lambda}\in\Lambda}
CV(\boldsymbol{\lambda}).$ Although cross-validation provides excellent empirical prediction performance, it may not always recover the sparsest model \citep{pramanik2025strategic,pramanik2025impact}. Consequently, an alternative criterion based on the Extended Bayesian Information Criterion (EBIC) is also considered. Define
\begin{equation}
	EBIC(\boldsymbol{\lambda})
	=
	n
	\log
	\left(
	\frac{RSS(\boldsymbol{\lambda})}{n}
	\right)
	+
	|\widehat{S}(\boldsymbol{\lambda})|
	\log n
	+
	2\gamma
	\log
	\binom{p}{|\widehat{S}(\boldsymbol{\lambda})|},
	\label{eq:ebic}
\end{equation}
where $RSS(\boldsymbol{\lambda})
=
\|
\mathbf{Y}
-
\mathbf{X}
\widehat{\boldsymbol{\beta}}
(\boldsymbol{\lambda})
\|_2^2,\ 
\widehat{S}(\boldsymbol{\lambda})
=
\{
j:
\widehat{\beta}_j(\boldsymbol{\lambda})\neq0
\},$ and
\(
\gamma\in[0,1]
\)
controls the additional complexity penalty \citep{pramanik2025strategies,pramanik2023optimization001}. The estimator minimizing
\eqref{eq:ebic}
is $\widehat{\boldsymbol{\lambda}}_{EBIC}
=
\arg\min_{\boldsymbol{\lambda}\in\Lambda}
EBIC(\boldsymbol{\lambda}).$ The asymptotic development presented in Section~4 requires the regularization parameters to satisfy the following rate conditions \citep{carey2006race}.

\paragraph{Assumption T1.}
The tuning parameters satisfy $\lambda_{1n}\rightarrow0,\ \lambda_{2n}\rightarrow0,
\ \lambda_{3n}\rightarrow0,$ as
$n\rightarrow\infty$.

\paragraph{Assumption T2.}
Furthermore, $\sqrt{n}\lambda_{1n}\rightarrow\infty,
\ \sqrt{n}\lambda_{2n}\rightarrow\infty,
\ \sqrt{n}\lambda_{3n}\rightarrow\infty.$

\begin{rmk}
Assumption T1 guarantees asymptotic unbiasedness of the estimator, whereas Assumption T2 ensures that the penalties remain sufficiently large to consistently remove inactive coefficients \citep{pramanik2024measuring,pramanik2024dependence}.
\end{rmk}

\begin{prop}
	\label{prop:tuningexist}
	Suppose that the objective function
	\eqref{eq:awgf_objective}
	admits a unique minimizer for every
	$\boldsymbol{\lambda}\in\Lambda$.
	If the model selection criterion
	$\mathcal{C}(\boldsymbol{\lambda})$
	is continuous on the compact parameter space
	$\Lambda$,
	then there exists at least one global minimizer $\widehat{\boldsymbol{\lambda}}
	\in
	\Lambda$ satisfying $\mathcal{C}(\widehat{\boldsymbol{\lambda}})
	=
	\inf_{\boldsymbol{\lambda}\in\Lambda}
	\mathcal{C}(\boldsymbol{\lambda}).$
\end{prop}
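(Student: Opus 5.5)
The claim is a direct application of the Weierstrass extreme value theorem. The only work is to check its two hypotheses: $\Lambda$ is a nonempty compact subset of $\mathbb{R}^3$, and $\mathcal{C}$ is a real-valued continuous function on $\Lambda$.

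\textbf{Compactness of $\Lambda$.} By construction, $\Lambda=[\lambda_1^{L},\lambda_1^{U}]\times[\lambda_2^{L},\lambda_2^{U}]\times[\lambda_3^{L},\lambda_3^{U}]$. I will note that each interval is closed and bounded. I will implicitly take $\lambda_j^{L}\le\lambda_j^{U}$, so that each interval is nonempty. By Tychonoff's theorem in finite dimensions, or equivalently by Heine--Borel, the product is a nonempty compact subset of $\mathbb{R}^3$.

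\textbf{Well-definedness of $\mathcal{C}$.} The uniqueness assumption ensures that the map $\boldsymbol{\lambda}\mapsto\widehat{\boldsymbol{\beta}}(\boldsymbol{\lambda})$ is a genuine single-valued function on $\Lambda$. Existence of a minimizer for each $\boldsymbol{\lambda}$ already follows from the earlier existence lemma, and uniqueness is the stated hypothesis. Hence $\mathcal{C}(\boldsymbol{\lambda})$, whether it is $CV$ in \eqref{eq:cv} or $EBIC$ in \eqref{eq:ebic}, is an unambiguous real number for each $\boldsymbol{\lambda}\in\Lambda$.

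\textbf{Applying Weierstrass.} Continuity of $\mathcal{C}$ is assumed in the statement, so no further analysis of $\widehat{\boldsymbol{\beta}}(\cdot)$ is needed. I will then argue directly, to keep the proof self-contained:
\begin{itemize}
\item Let $m=\inf_{\Lambda}\mathcal{C}$, and choose $\boldsymbol{\lambda}_k\in\Lambda$ with $\mathcal{C}(\boldsymbol{\lambda}_k)\to m$.
\item By sequential compactness, a subsequence converges to some $\widehat{\boldsymbol{\lambda}}\in\Lambda$.
\item By continuity, $\mathcal{C}(\widehat{\boldsymbol{\lambda}})=\lim\mathcal{C}(\boldsymbol{\lambda}_{k_m})=m$.
\end{itemize}
This shows that $m$ is finite and attained, which is the claim.

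\textbf{Main obstacle.} The only subtle point is interpretive rather than technical. For the $EBIC$ criterion, the support size $|\widehat{S}(\boldsymbol{\lambda})|$ is integer-valued and generally not continuous in $\boldsymbol{\lambda}$. The proposition therefore applies to $EBIC$ only when its continuity is granted as a hypothesis. Since the statement assumes continuity, I will simply remark on this and not attempt to prove continuity.

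If one instead wanted to cover the discontinuous case, I would first try to weaken continuity to lower semicontinuity of $\mathcal{C}$. The same subsequence argument still works, with $\mathcal{C}(\widehat{\boldsymbol{\lambda}})\le\liminf\mathcal{C}(\boldsymbol{\lambda}_{k_m})=m$.
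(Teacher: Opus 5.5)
Your argument is correct and is essentially the paper's proof. Both take a minimizing sequence, extract a convergent subsequence by compactness of $\Lambda$, and use continuity to obtain $\mathcal{C}(\widehat{\boldsymbol{\lambda}})=\inf_{\Lambda}\mathcal{C}$, with uniqueness of $\widehat{\boldsymbol{\beta}}(\boldsymbol{\lambda})$ serving only to make $\mathcal{C}$ single-valued. Your extra checks are sound: compactness of the box $\Lambda$, finiteness of the infimum falling out of the argument, and the observation that lower semicontinuity would suffice. One small correction: existence of $\widehat{\boldsymbol{\beta}}(\boldsymbol{\lambda})$ should be credited to the proposition's hypothesis, not to the earlier existence lemma, whose sufficient conditions fail at points of $\Lambda$ with $\lambda_{1n}=\lambda_{2n}=0$ and a singular weighted Gram matrix.
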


\begin{proof}
	Let $m=\inf_{\boldsymbol{\lambda}\in\Lambda}\mathcal{C}(\boldsymbol{\lambda}). $Since $\mathcal{C}:\Lambda\rightarrow\mathbb{R}$ is continuous and $\Lambda$ is assumed to be a nonempty compact subset of $\mathbb{R}_{+}^{3}$, the quantity $m$ is finite. By the definition of the infimum, there exists a minimizing sequence
	$\{\boldsymbol{\lambda}^{(r)}\}_{r=1}^{\infty}\subset\Lambda$
	such that $\lim_{r\rightarrow\infty}\mathcal{C}\!\left(\boldsymbol{\lambda}^{(r)}\right)=m.$ Because $\Lambda$ is compact, every sequence contained in $\Lambda$ possesses a convergent subsequence by the Bolzano-Weierstrass theorem \citep{pramanik2024parametric,pramanik2025dissecting}. Consequently, there exists a subsequence
	$\{\boldsymbol{\lambda}^{(r_{\ell})}\}_{\ell=1}^{\infty}$
	and a point
	$\widehat{\boldsymbol{\lambda}}\in\Lambda$
	for which $\boldsymbol{\lambda}^{(r_{\ell})}
	\longrightarrow
	\widehat{\boldsymbol{\lambda}},
	\ \ell\rightarrow\infty.$ The limit remains in $\Lambda$ because every compact subset of Euclidean space is closed. Since $\mathcal{C}$ is continuous on $\Lambda$, it preserves limits of convergent sequences. Therefore,
	\[
	\lim_{\ell\rightarrow\infty}
	\mathcal{C}\!\left(\boldsymbol{\lambda}^{(r_{\ell})}\right)
	=
	\mathcal{C}\!\left(\widehat{\boldsymbol{\lambda}}\right).
	\]
	On the other hand, every subsequence of a convergent sequence converges to the same limit. Hence, $$\lim_{\ell\rightarrow\infty}
	\mathcal{C}\!\left(\boldsymbol{\lambda}^{(r_{\ell})}\right)
	=
	m.$$ By uniqueness of limits in $\mathbb{R}$, it follows immediately that $\mathcal{C}\!\left(\widehat{\boldsymbol{\lambda}}\right)=m.$ Therefore, $\mathcal{C}\!\left(\widehat{\boldsymbol{\lambda}}\right)
	=
	\inf_{\boldsymbol{\lambda}\in\Lambda}
	\mathcal{C}(\boldsymbol{\lambda}),$ establishing that the infimum is attained.
	
	It remains to verify that the optimization problem is well defined. By assumption, for every
	$\boldsymbol{\lambda}\in\Lambda$,
	the penalized objective function
	$Q_n(\boldsymbol{\beta};\boldsymbol{\lambda})$
	admits a unique minimizer
	$\widehat{\boldsymbol{\beta}}(\boldsymbol{\lambda})$.
	Consequently, every quantity entering the criterion
	$\mathcal{C}(\boldsymbol{\lambda})$,
	including the fitted values, residual sum of squares, estimated support, prediction error, cross-validation score, or information criterion, is uniquely determined \citep{pramanik2025optimal1,powell2025genomic}. Thus, $\mathcal{C}$ is a single-valued function on $\Lambda$ rather than a set-valued correspondence, ensuring that the optimization problem $\min_{\boldsymbol{\lambda}\in\Lambda}
	\mathcal{C}(\boldsymbol{\lambda})$ is mathematically well posed. Finally, suppose, for contradiction, that no minimizer exists. Then every
	$\boldsymbol{\lambda}\in\Lambda$
	would satisfy
	$\mathcal{C}(\boldsymbol{\lambda})>m$.
	However, we have already constructed a point
	$\widehat{\boldsymbol{\lambda}}\in\Lambda$
	such that
	$\mathcal{C}(\widehat{\boldsymbol{\lambda}})=m$,
	which contradicts the assumption \citep{pramanik2026strategic}. Therefore, at least one global minimizer exists. Hence there is a tuning parameter vector
	$\widehat{\boldsymbol{\lambda}}\in\Lambda$
	satisfying $\mathcal{C}\!\left(\widehat{\boldsymbol{\lambda}}\right)
	=
	\inf_{\boldsymbol{\lambda}\in\Lambda}
	\mathcal{C}(\boldsymbol{\lambda}),$ thereby completing the proof.
\end{proof}

In all empirical analyses presented in Section~6, the regularization parameters are initialized over logarithmically spaced grids. Algorithm~1 is applied for every candidate triplet
\(
(\lambda_{1n},\lambda_{2n},\lambda_{3n}),
\)
and the optimal model is selected according to repeated ten-fold cross-validation. The corresponding EBIC values are subsequently computed as a sensitivity analysis to assess the robustness of the selected model \citep{pramanik2026quantum}.

\section{Asymptotic Properties:}
\label{sec:asymptotic_theory}

This section establishes the large-sample properties of the Adaptive Weighted Group Fused LASSO estimator introduced in Section~3 \citep{pramanik2024analysis,dunbar2026modeling}. Let
$\boldsymbol{\beta}^{0}\in\mathbb{R}^{p}$
denote the true coefficient vector in the model $\mathbf{Y}=\mathbf{X}\boldsymbol{\beta}^{0}+\boldsymbol{\varepsilon}.$ Define the active set
\[
S_0=\{j:\beta_j^{0}\neq0\},
\qquad
s_n=|S_0|,
\]
and let $S_0^c$ denote its complement. For a vector
$\mathbf{v}\in\mathbb{R}^{p}$,
write $\mathbf{v}_{S_0}$ and $\mathbf{v}_{S_0^c}$ for the subvectors indexed by
$S_0$ and $S_0^c$, respectively \citep{dasgupta2026frequent}. Let $\widehat{S}
=
\{j:\widehat{\beta}_j\neq0\}$ denote the selected support. The active group set is $\mathcal{A}_0
=
\{g:\boldsymbol{\beta}_g^{0}\neq\mathbf{0}\}.$ The weighted empirical Gram matrix is $\widehat{\boldsymbol{\Sigma}}_n
=
\frac{1}{n}\mathbf{X}^{\top}\mathbf{W}\mathbf{X}.$ For any index set $A$, let
$\widehat{\boldsymbol{\Sigma}}_{AA}$
denote the submatrix of
$\widehat{\boldsymbol{\Sigma}}_n$
with rows and columns indexed by $A$.

\paragraph{Assumption A1:}
The observations $\{(Y_i,\mathbf{x}_i,w_i):i=1,\ldots,n\}$ form an independent triangular array. Conditional on the design matrix
$\mathbf{X}$ and the weight matrix $\mathbf{W}$, the errors satisfy $\mathbb{E}(\varepsilon_i\mid\mathbf{X},\mathbf{W})=0,
\ 
i=1,\ldots,n.$

\paragraph{Assumption A2:}
There exists a constant $\sigma>0$ such that, for all $t\in\mathbb{R}$,
\[
\mathbb{E}
\left[
\exp(t\varepsilon_i)
\mid
\mathbf{X},\mathbf{W}
\right]
\leq
\exp\left(\frac{\sigma^2t^2}{2}\right),
\qquad
i=1,\ldots,n.
\]

\paragraph{Assumption A3:}
There exist constants $0<w_{\min}<w_{\max}<\infty$ such that
$w_{\min}\leq w_i\leq w_{\max}$ for all $i$ and all $n$.

\paragraph{Assumption A4:}
There exists a constant $\kappa_0>0$ such that, with probability tending to one, $\frac{
	\boldsymbol{\Delta}^{\top}
	\widehat{\boldsymbol{\Sigma}}_n
	\boldsymbol{\Delta}
}{
	\|\boldsymbol{\Delta}_{S_0}\|_2^2
}
\geq
\kappa_0 $for all nonzero $\boldsymbol{\Delta}\in\mathbb{R}^{p}$ satisfying $\|\boldsymbol{\Delta}_{S_0^c}\|_1
\leq
3\|\boldsymbol{\Delta}_{S_0}\|_1.$

\paragraph{Assumption A5:}
The true coefficient vector is sparse: $s_n=o(n),
\
s_n\log p=o(n).$

\paragraph{Assumption A6:}
There exists a sequence $b_n>0$ such that $\min_{j\in S_0}|\beta_j^0|\geq b_n,
$ where $b_n\sqrt{\frac{n}{s_n\log p}}\rightarrow\infty.$

\paragraph{Assumption A7:}
The initial estimator $\widetilde{\boldsymbol{\beta}}$ satisfies $\|\widetilde{\boldsymbol{\beta}}-\boldsymbol{\beta}^{0}\|_{\infty}
=
O_p(r_n),$ where $r_n=o(b_n)$. The adaptive weights are
\[
a_j=\frac{1}{|\widetilde{\beta}_j|^{\gamma_1}+\delta_n},
\qquad
b_g=\frac{1}{\|\widetilde{\boldsymbol{\beta}}_g\|_2^{\gamma_2}+\delta_n},
\qquad
c_j=\frac{1}{|\widetilde{\beta}_j-\widetilde{\beta}_{j-1}|^{\gamma_3}+\delta_n},
\]
with $\gamma_1,\gamma_2,\gamma_3>0$ and $\delta_n\downarrow0$.

\paragraph{Assumption A8: }
The tuning parameters satisfy $\lambda_{1n}\rightarrow0,\
\lambda_{2n}\rightarrow0,\
\lambda_{3n}\rightarrow0,$ and
\[
\lambda_{1n}\sqrt{\frac{n}{\log p}}\rightarrow\infty,
\qquad
\lambda_{2n}\sqrt{\frac{n}{\log G}}\rightarrow\infty.
\]
Moreover,
\[
\lambda_{1n}s_n^{1/2}=o(1),
\qquad
\lambda_{2n}|\mathcal{A}_0|^{1/2}=o(1),
\qquad
\lambda_{3n}\|\mathbf{D}\boldsymbol{\beta}^{0}\|_0^{1/2}=o(1).
\]

\paragraph{Assumption A9:}
There exist constants $0<c_{\min}<c_{\max}<\infty$ such that, with probability tending to one, $c_{\min}
\leq
\lambda_{\min}
\left(
\widehat{\boldsymbol{\Sigma}}_{S_0S_0}
\right)
\leq
\lambda_{\max}
\left(
\widehat{\boldsymbol{\Sigma}}_{S_0S_0}
\right)
\leq
c_{\max}.$

\paragraph{Assumption A10: Debiasing precision condition.}
There exists a matrix
$\widehat{\boldsymbol{\Theta}}\in\mathbb{R}^{p\times p}$
such that
\[
\|
\widehat{\boldsymbol{\Theta}}
\widehat{\boldsymbol{\Sigma}}_n
-
\mathbf{I}_p
\|_{\infty}
=
o_p
\left(
\frac{1}{\sqrt{\log p}}
\right).
\]
The rows of $\widehat{\boldsymbol{\Theta}}$ satisfy
$\max_{1\leq j\leq p}\|\widehat{\boldsymbol{\theta}}_j\|_1=O_p(1)$.

\begin{lem}
	\label{lem:score_bound}
	Under Assumptions A1--A3,
	\[
	\left\|
	\frac{1}{n}
	\mathbf{X}^{\top}
	\mathbf{W}
	\boldsymbol{\varepsilon}
	\right\|_{\infty}
	=
	O_p
	\left(
	\sqrt{\frac{\log p}{n}}
	\right).
	\]
\end{lem}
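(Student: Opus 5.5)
The plan is to condition on $(\mathbf{X},\mathbf{W})$, treat each coordinate of the score vector $\mathbf{Z}=n^{-1}\mathbf{X}^{\top}\mathbf{W}\boldsymbol{\varepsilon}$ as a weighted sum of independent sub-Gaussian errors, and then combine a Chernoff bound with a union bound over the $p$ coordinates. For fixed $j$,
\[
Z_j=\frac{1}{n}\sum_{i=1}^{n}x_{ij}w_i\varepsilon_i .
\]
Conditionally on $(\mathbf{X},\mathbf{W})$, the weights $x_{ij}w_i/n$ are deterministic. Assumption A1 gives independence and mean zero, and Assumption A2 gives the conditional moment generating function bound. Multiplying the individual bounds therefore yields
\[
\mathbb{E}\bigl[\exp(tZ_j)\mid\mathbf{X},\mathbf{W}\bigr]\leq\exp\Bigl(\frac{\sigma^2t^2}{2}\,\nu_j^2\Bigr),
\qquad
\nu_j^2=\frac{1}{n^2}\sum_{i=1}^{n}x_{ij}^2w_i^2\leq\frac{w_{\max}^2}{n}\cdot\frac{1}{n}\|\mathbf{X}_{\cdot j}\|_2^2 ,
\]
where the last inequality uses Assumption A3.

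Next I will bound the column norms. I need $\max_j n^{-1}\|\mathbf{X}_{\cdot j}\|_2^2\leq K$ for some constant $K$, either deterministically or with probability tending to one. The stated assumptions do not literally contain such a condition: Assumption A9 controls only the $S_0$ block of $\widehat{\boldsymbol{\Sigma}}_n$. I will therefore make explicit the standard column-normalization convention, namely that the covariates are standardized as in the empirical analysis, so that the diagonal entries of $\widehat{\boldsymbol{\Sigma}}_n$ are bounded. Combined with $w_i\geq w_{\min}$, this gives $n^{-1}\|\mathbf{X}_{\cdot j}\|_2^2\leq K$. Identifying and justifying this normalization is the main obstacle; everything else is routine. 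Under it, $\nu_j^2\leq w_{\max}^2K/n$ uniformly in $j$.

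The Chernoff argument then gives, on the event $\mathcal{E}_n$ where the column bound holds,
\[
\mathbb{P}\bigl(|Z_j|>t\mid\mathbf{X},\mathbf{W}\bigr)\leq2\exp\Bigl(-\frac{nt^2}{2\sigma^2w_{\max}^2K}\Bigr).
\]
A union bound over $j=1,\ldots,p$ with $t=M\sqrt{\log p/n}$ bounds the conditional probability that $\|\mathbf{Z}\|_\infty>t$ by
\[
2p^{\,1-M^2/(2\sigma^2w_{\max}^2K)} .
\]
This is at most $2p^{-1}$ once $M^2\geq4\sigma^2w_{\max}^2K$, and can be made smaller than any prescribed $\epsilon$ by enlarging $M$. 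If $p$ stays bounded, I will replace $\log p$ by $\log(p\vee e)$ so that the bound is still nontrivial.

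Finally, I take expectations over $(\mathbf{X},\mathbf{W})$ and add $\mathbb{P}(\mathcal{E}_n^c)\to0$. This gives
\[
\limsup_{n\to\infty}\mathbb{P}\bigl(\|\mathbf{Z}\|_\infty>M\sqrt{\log p/n}\bigr)\leq\epsilon ,
\]
which is exactly the $O_p(\sqrt{\log p/n})$ statement of Lemma~\ref{lem:score_bound}.
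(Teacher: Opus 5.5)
Your proposal is correct and follows the paper's proof essentially step by step: conditional sub-Gaussianity of each $Z_j$ with variance proxy $\sigma^2 n^{-2}\sum_i w_i^2 x_{ij}^2$, a column-normalization bound, a Chernoff tail bound, and a union bound at $t=M\sqrt{\log p/n}$. The paper also takes the column bound from ``standard normalization'' rather than from A1--A3. Your version is slightly more careful in three places: the paper says the tail probability ``converges to zero,'' which holds only if $p\to\infty$, whereas you correctly note it need only be made small by enlarging $M$; you handle the normalization bound on a random event; and you treat the degenerate case $p=1$.
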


\begin{proof}
	For each coordinate $j$, define $Z_j
	:=
	\frac{1}{n}
	\sum_{i=1}^{n}
	w_i x_{ij}\varepsilon_i .$ Conditional on $\mathbf{X}$ and $\mathbf{W}$, the summands are independent and mean zero. By Assumption A2, each $\varepsilon_i$ is conditionally sub-Gaussian. Hence $w_ix_{ij}\varepsilon_i/n$ is conditionally sub-Gaussian with variance proxy proportional to
	$w_i^2x_{ij}^2\sigma^2/n^2$. Thus, conditional on $\mathbf{X}$ and $\mathbf{W}$, $Z_j$ is sub-Gaussian with variance proxy $\frac{\sigma^2}{n^2}
	\sum_{i=1}^{n}w_i^2x_{ij}^2 .$ By Assumption A3 and standard normalization of the columns of $\mathbf{X}$, $\frac{1}{n}
	\sum_{i=1}^{n}x_{ij}^2
	\leq C$ for a finite constant $C$ \citep{ellington2025playmydata,ellington2025metascorelens}. Therefore,
	\[
	\frac{1}{n^2}
	\sum_{i=1}^{n}w_i^2x_{ij}^2
	\leq
	\frac{Cw_{\max}^2}{n}.
	\]
	Consequently, for every $t>0$,
	\[
	\mathbb{P}
	\left(
	|Z_j|>t
	\mid
	\mathbf{X},\mathbf{W}
	\right)
	\leq
	2
	\exp
	\left(
	-
	\frac{cnt^2}{\sigma^2w_{\max}^2}
	\right)
	\]
	for some constant $c>0$. Using the union bound over $j=1,\ldots,p$ gives
	\[
	\mathbb{P}
	\left(
	\max_{1\leq j\leq p}|Z_j|>t
	\mid
	\mathbf{X},\mathbf{W}
	\right)
	\leq
	2p
	\exp
	\left(
	-
	\frac{cnt^2}{\sigma^2w_{\max}^2}
	\right).
	\]
	Choose
	$t=M\sqrt{\log p/n}\).
	Then
	\[
	2p
	\exp
	\left(
	-
	\frac{cM^2\log p}{\sigma^2w_{\max}^2}
	\right)
	=
	2
	p^{
		1-cM^2/(\sigma^2w_{\max}^2)
	}.
	\]
	For sufficiently large $M$, the exponent is negative. Therefore the probability converges to zero. Hence
	\[
	\max_{1\leq j\leq p}|Z_j|
	=
	O_p
	\left(
	\sqrt{\frac{\log p}{n}}
	\right),
	\]
	which is exactly the desired result.
\end{proof}

\begin{lem}
	\label{lem:basic_ineq_asymptotic}
	Let $\widehat{\boldsymbol{\beta}}$ be the Adaptive Weighted Group Fused LASSO estimator. Under model \eqref{eq:model},
	\begin{align}
		\frac{1}{2n}
		\|
		\mathbf{W}^{1/2}
		\mathbf{X}
		(
		\widehat{\boldsymbol{\beta}}
		-
		\boldsymbol{\beta}^{0}
		)
		\|_2^2
		&\leq
		\frac{1}{n}
		\boldsymbol{\varepsilon}^{\top}
		\mathbf{W}
		\mathbf{X}
		(
		\widehat{\boldsymbol{\beta}}
		-
		\boldsymbol{\beta}^{0}
		)
		+
		\lambda_{1n}
		\left[
		\|\mathbf{A}\boldsymbol{\beta}^{0}\|_1
		-
		\|\mathbf{A}\widehat{\boldsymbol{\beta}}\|_1
		\right]
		\nonumber\\
		&\quad+
		\lambda_{2n}
		\sum_{g=1}^{G}
		b_g
		\left[
		\|\boldsymbol{\beta}^{0}_{g}\|_2
		-
		\|\widehat{\boldsymbol{\beta}}_{g}\|_2
		\right]
		+
		\lambda_{3n}
		\left[
		\|\mathbf{C}\mathbf{D}\boldsymbol{\beta}^{0}\|_1
		-
		\|\mathbf{C}\mathbf{D}\widehat{\boldsymbol{\beta}}\|_1
		\right].
		\label{eq:asym_basic_ineq}
	\end{align}
\end{lem}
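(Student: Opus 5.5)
This is the same basic inequality already established earlier in Section~2 (the lemma following the KKT characterization), now stated with the true vector $\boldsymbol{\beta}^{0}$ of Section~\ref{sec:asymptotic_theory}. The plan is therefore either to cite that lemma directly or to repeat its short argument. The argument uses no probabilistic assumptions beyond the model identity $\mathbf{Y}=\mathbf{X}\boldsymbol{\beta}^0+\boldsymbol{\varepsilon}$, so it holds deterministically for every realization.

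The first step is the optimality comparison $Q_n(\widehat{\boldsymbol{\beta}})\le Q_n(\boldsymbol{\beta}^0)$. This only requires that $\widehat{\boldsymbol{\beta}}$ be a global minimizer, which exists by the earlier existence lemma. Uniqueness is not needed: the comparison holds for any minimizer.

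The second step expands the weighted loss difference. Substituting $\mathbf{Y}-\mathbf{X}\widehat{\boldsymbol{\beta}}=\boldsymbol{\varepsilon}-\mathbf{X}(\widehat{\boldsymbol{\beta}}-\boldsymbol{\beta}^0)$ and using the symmetry of $\mathbf{W}$ to combine the two cross terms gives
\[
L_n(\widehat{\boldsymbol{\beta}})-L_n(\boldsymbol{\beta}^0)=\frac{1}{2n}\|\mathbf{W}^{1/2}\mathbf{X}(\widehat{\boldsymbol{\beta}}-\boldsymbol{\beta}^0)\|_2^2-\frac{1}{n}\boldsymbol{\varepsilon}^\top\mathbf{W}\mathbf{X}(\widehat{\boldsymbol{\beta}}-\boldsymbol{\beta}^0).
\]

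The third step inserts this identity into $Q_n(\widehat{\boldsymbol{\beta}})-Q_n(\boldsymbol{\beta}^0)\le0$. Moving the stochastic cross term and the three penalty differences to the right-hand side then yields \eqref{eq:asym_basic_ineq} term by term. The penalties are written as $\|\mathbf{A}\cdot\|_1$, $\sum_g b_g\|\cdot_g\|_2$ and $\|\mathbf{C}\mathbf{D}\cdot\|_1$, matching the compact form \eqref{eq:compact_objective}.

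There is no genuine obstacle. The only point needing care is that the weights $a_j,b_g,c_j$ depend on the data through $\widetilde{\boldsymbol{\beta}}$. They are nonetheless fixed, positive numbers once the sample is given, and the same weights appear in $Q_n$ evaluated at both $\widehat{\boldsymbol{\beta}}$ and $\boldsymbol{\beta}^0$. The comparison is therefore legitimate, and the inequality holds pathwise, without any $O_p$ qualifiers.
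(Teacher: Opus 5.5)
Your proposal is correct and matches the paper's proof essentially step for step. You compare $Q_n(\widehat{\boldsymbol{\beta}})\le Q_n(\boldsymbol{\beta}^0)$, expand the weighted loss difference via $\mathbf{Y}-\mathbf{X}\widehat{\boldsymbol{\beta}}=\boldsymbol{\varepsilon}-\mathbf{X}(\widehat{\boldsymbol{\beta}}-\boldsymbol{\beta}^0)$, and move the cross term and the three penalty differences to the right-hand side; your added remarks (pathwise validity, uniqueness not needed, data-dependent weights common to both sides, and that this duplicates the earlier Section~2 basic inequality) are also accurate.
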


\begin{proof}
	By definition,
	$\widehat{\boldsymbol{\beta}}$ minimizes $Q_n$, so
	$Q_n(\widehat{\boldsymbol{\beta}})\leq Q_n(\boldsymbol{\beta}^{0})$.
	Using
	$\mathbf{Y}=\mathbf{X}\boldsymbol{\beta}^{0}+\boldsymbol{\varepsilon}$,
	we have
	\[
	\mathbf{Y}-\mathbf{X}\widehat{\boldsymbol{\beta}}
	=
	\boldsymbol{\varepsilon}
	-
	\mathbf{X}
	(
	\widehat{\boldsymbol{\beta}}
	-
	\boldsymbol{\beta}^{0}
	).
	\]
	Therefore,
	\begin{align*}
		&
		L_n(\widehat{\boldsymbol{\beta}})
		-
		L_n(\boldsymbol{\beta}^{0})
		\\
		&=
		\frac{1}{2n}
		\left\|
		\mathbf{W}^{1/2}
		\left[
		\boldsymbol{\varepsilon}
		-
		\mathbf{X}
		(
		\widehat{\boldsymbol{\beta}}
		-
		\boldsymbol{\beta}^{0}
		)
		\right]
		\right\|_2^2
		-
		\frac{1}{2n}
		\|
		\mathbf{W}^{1/2}
		\boldsymbol{\varepsilon}
		\|_2^2
		\\
		&=
		\frac{1}{2n}
		\|
		\mathbf{W}^{1/2}
		\mathbf{X}
		(
		\widehat{\boldsymbol{\beta}}
		-
		\boldsymbol{\beta}^{0}
		)
		\|_2^2
		-
		\frac{1}{n}
		\boldsymbol{\varepsilon}^{\top}
		\mathbf{W}
		\mathbf{X}
		(
		\widehat{\boldsymbol{\beta}}
		-
		\boldsymbol{\beta}^{0}
		).
	\end{align*}
	Substituting this identity into
	$Q_n(\widehat{\boldsymbol{\beta}})\leq Q_n(\boldsymbol{\beta}^{0})$
	and moving all penalty differences to the right-hand side yields
	\eqref{eq:asym_basic_ineq}.
\end{proof}

\begin{lem}
	\label{lem:cone_condition}
	Under Assumptions A1--A8, with probability tending to one,$	\|
	\widehat{\boldsymbol{\Delta}}_{S_0^c}
	\|_1
	\leq
	3
	\|
	\widehat{\boldsymbol{\Delta}}_{S_0}
	\|_1,$ where
	$\widehat{\boldsymbol{\Delta}}
	=
	\widehat{\boldsymbol{\beta}}-\boldsymbol{\beta}^{0}$.
\end{lem}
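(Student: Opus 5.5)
We start from the basic inequality of Lemma~\ref{lem:basic_ineq_asymptotic}. Since its left-hand side is nonnegative, the noise term plus the three penalty differences must be nonnegative. We bound the noise term by Hölder's inequality and Lemma~\ref{lem:score_bound}:
\[
\Bigl|\tfrac1n\boldsymbol{\varepsilon}^\top\mathbf{W}\mathbf{X}\widehat{\boldsymbol{\Delta}}\Bigr|\le \Bigl\|\tfrac1n\mathbf{X}^\top\mathbf{W}\boldsymbol{\varepsilon}\Bigr\|_\infty\|\widehat{\boldsymbol{\Delta}}\|_1=O_p\bigl(\sqrt{\log p/n}\bigr)\|\widehat{\boldsymbol{\Delta}}\|_1 .
\]
By Assumption A8, $\lambda_{1n}\sqrt{n/\log p}\to\infty$. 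Hence, with probability tending to one, this term is at most $\tfrac{\lambda_{1n}a_{\ast}}{2}\|\widehat{\boldsymbol{\Delta}}\|_1$, where $a_\ast$ is a lower bound for the adaptive weights $a_j$ on $S_0^c$, to be chosen next.

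We then treat the adaptive $\ell_1$ penalty in the standard way. Decomposability of $\|\cdot\|_1$ over $S_0,S_0^c$ together with $\boldsymbol{\beta}^0_{S_0^c}=\mathbf{0}$ gives
\[
\|\mathbf{A}\boldsymbol{\beta}^0\|_1-\|\mathbf{A}\widehat{\boldsymbol{\beta}}\|_1\le \sum_{j\in S_0}a_j|\widehat\Delta_j|-\sum_{j\in S_0^c}a_j|\widehat\Delta_j|.
\]
The adaptive weights now enter through Assumptions A6 and A7. For $j\in S_0^c$ we have $|\widetilde\beta_j|=O_p(r_n)$, so $a_j\ge (C r_n^{\gamma_1}+\delta_n)^{-1}\to\infty$, which is large. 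For $j\in S_0$ we have $|\widetilde\beta_j|\ge b_n-O_p(r_n)\ge b_n/2$, so $a_j\le 2^{\gamma_1}b_n^{-\gamma_1}$, which is comparatively small. The same dichotomy holds for the group weights $b_g$ (active versus inactive groups) and for the fusion weights $c_j$ (true jumps versus true flat pieces of $\mathbf{D}\boldsymbol{\beta}^0$). We apply the same decomposability argument to the group norm, using active groups $\mathcal{A}_0$ and the reverse triangle inequality, and to the fused norm, using the support of $\mathbf{D}\boldsymbol{\beta}^0$.

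The main obstacle is that the group and fusion differences do not split along $S_0$ versus $S_0^c$. An inactive difference $j\notin\mathrm{supp}(\mathbf{D}\boldsymbol{\beta}^0)$ may straddle $S_0$ and $S_0^c$, and an active group may contain inactive coordinates. To handle this, we discard the nonpositive parts $-\lambda_{2n}\sum_{g\notin\mathcal{A}_0}b_g\|\widehat{\boldsymbol{\Delta}}_g\|_2$ and $-\lambda_{3n}\sum_{j\notin \mathrm{supp}}c_j|(\mathbf{D}\widehat{\boldsymbol{\Delta}})_j|$. We bound the positive parts by
\[
\lambda_{2n}\sum_{g\in\mathcal{A}_0}b_g\|\widehat{\boldsymbol{\Delta}}_g\|_2+\lambda_{3n}\sum_{j\in\mathrm{supp}}c_j|(\mathbf{D}\widehat{\boldsymbol{\Delta}})_j|.
\]
Each active-group or true-jump coordinate of $\widehat{\boldsymbol{\Delta}}$ lies either in $S_0$ or in $S_0^c$. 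The $S_0$ pieces are absorbed into a multiple of $\|\widehat{\boldsymbol{\Delta}}_{S_0}\|_1$ because $b_g$ and $c_j$ are $O_p(b_n^{-\gamma})$ there. The $S_0^c$ pieces carry only small weights $\lambda_{2n}b_g,\lambda_{3n}c_j$, and these are dominated by the diverging $\lambda_{1n}a_j$ for $j\in S_0^c$. This domination requires the ratios $\lambda_{2n}/(\lambda_{1n}a_j)$ and $\lambda_{3n}/(\lambda_{1n}a_j)$ to vanish, which we expect follows from A7 and A8 since $a_j\to\infty$ on $S_0^c$ while the other weights stay bounded there. If that fails, we would instead fall back on Assumption C3, which gives bounded weight ratios, and adjust constants.

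Collecting terms, with probability tending to one we obtain
\[
0\le \tfrac{\lambda_{1n}}{2}\sum_{j\in S_0^c}a_j|\widehat\Delta_j|\cdot o(1)+C\lambda_{1n}\bar a_{S_0}\|\widehat{\boldsymbol{\Delta}}_{S_0}\|_1-\tfrac{\lambda_{1n}}{2}\sum_{j\in S_0^c}a_j|\widehat\Delta_j|,
\]
where $\bar a_{S_0}$ collects the bounds on the weights over $S_0$. Rearranging gives
\[
a_\ast\|\widehat{\boldsymbol{\Delta}}_{S_0^c}\|_1\lesssim \bar a_{S_0}\|\widehat{\boldsymbol{\Delta}}_{S_0}\|_1 ,
\]
where $\bar a_{S_0}/a_\ast\to0$ in probability. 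In particular this ratio is eventually below $3$, which yields the cone condition with probability tending to one.
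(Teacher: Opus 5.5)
Your skeleton is the paper's: the basic inequality, H\"older plus Lemma~\ref{lem:score_bound} for the noise, and the adaptive-weight dichotomy $\bar a_{S_0}/a_\ast\to0$, with $a_\ast=\min_{j\in S_0^c}a_j$ and $\bar a_{S_0}=\max_{j\in S_0}a_j$. The gap is the step you flag yourself. Write $T_0=\{j:(\mathbf{D}\boldsymbol{\beta}^0)_j\neq0\}$ and
\[
\mu_n=\Bigl\|\tfrac1n\mathbf{X}^\top\mathbf{W}\boldsymbol{\varepsilon}\Bigr\|_\infty+\lambda_{2n}\max_{g\in\mathcal{A}_0}b_g+2\lambda_{3n}\max_{j\in T_0}c_j .
\]
Your argument needs $\mu_n$ to be small compared with $\lambda_{1n}a_\ast$. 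This is needed both for the $S_0^c$ pieces and for the final ratio, into which you fold the $S_0$ pieces of the group and fusion terms. Nothing in A1--A8 gives it. Assumption A8 never compares $\lambda_{2n}$ or $\lambda_{3n}$ with $\lambda_{1n}$, and A7 puts no rate on $\delta_n$, while $a_\ast\le\delta_n^{-1}$ always. For example, take $p$, $s_n$, $|\mathcal{A}_0|$, $q_n$ fixed and $\boldsymbol{\beta}^0$ bounded. The choices $\lambda_{1n}=n^{-1/2}\log n$, $\lambda_{2n}=\lambda_{3n}=n^{-1/4}$ and $\delta_n=1/\log\log n$ satisfy A7--A8. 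Yet $\max_{g\in\mathcal{A}_0}b_g$ stays bounded away from zero and $\lambda_{2n}/(\lambda_{1n}a_\ast)\ge\lambda_{2n}\delta_n/\lambda_{1n}\to\infty$.

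Your claim that $c_j=O_p(b_n^{-\gamma_3})$ on true jumps is also unsupported. A6 bounds $|\beta^0_j|$ from below on $S_0$, not $|(\mathbf{D}\boldsymbol{\beta}^0)_j|$ on $T_0$. A jump between two nonzero coefficients may therefore be arbitrarily small, and then $c_j$ can be as large as on the flat pieces, of order up to $\delta_n^{-1}$. (Incidentally, a difference with $(\mathbf{D}\boldsymbol{\beta}^0)_j=0$ cannot straddle $S_0$ and $S_0^c$, since it forces $\beta^0_j=\beta^0_{j-1}$; only true jumps can.)

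The fallback on C3 does not help either:
\begin{itemize}
\item its uniform upper bound on the $a_j$ contradicts $a_\ast\to\infty$, which is the very mechanism you use;
\item C3 is not a hypothesis of the lemma;
\item with merely bounded weights the cone constant becomes $(\lambda_{1n}a_{\max}+\mu_n)/(\lambda_{1n}a_{\min}-\mu_n)$, not $3$.
\end{itemize}

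What closes the argument is an explicit extra hypothesis, for example $\mu_n\le\frac14\lambda_{1n}a_\ast$ and $\bar a_{S_0}\le2a_\ast$ with probability tending to one. You then do not even need to split the group and fusion pieces along $S_0$ and $S_0^c$. Since the groups partition the coordinates, $\sum_{g\in\mathcal{A}_0}\|\widehat{\boldsymbol{\Delta}}_g\|_2\le\|\widehat{\boldsymbol{\Delta}}\|_1$, and also $\sum_{j\in T_0}|(\mathbf{D}\widehat{\boldsymbol{\Delta}})_j|\le2\|\widehat{\boldsymbol{\Delta}}\|_1$. The basic inequality then yields $(\lambda_{1n}a_\ast-\mu_n)\|\widehat{\boldsymbol{\Delta}}_{S_0^c}\|_1\le(\lambda_{1n}\bar a_{S_0}+\mu_n)\|\widehat{\boldsymbol{\Delta}}_{S_0}\|_1$, and hence the constant $3$.

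The paper's proof does not supply this ingredient either. It simply drops the group and fusion differences, asserting that they ``do not weaken the cone restriction''. That is precisely the step that does not follow from A1--A8, because their positive parts on active groups and true jumps are uncontrolled.

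A minor bookkeeping point: bounding the noise by $\frac12\lambda_{1n}a_\ast\|\widehat{\boldsymbol{\Delta}}\|_1$ puts weight $\frac12\lambda_{1n}a_\ast$ on $\|\widehat{\boldsymbol{\Delta}}_{S_0}\|_1$, and that term has disappeared from your final display. Even ignoring the group and fusion terms, you get the ratio $1+2\bar a_{S_0}/a_\ast$, not $o_p(1)$. That is still below $3$, so it is harmless.
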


\begin{proof}
	Let
	$\widehat{\boldsymbol{\Delta}}
	=
	\widehat{\boldsymbol{\beta}}-\boldsymbol{\beta}^{0}$.
	From Lemma~\ref{lem:basic_ineq_asymptotic},
	\begin{align*}
		\frac{1}{2n}
		\|
		\mathbf{W}^{1/2}
		\mathbf{X}
		\widehat{\boldsymbol{\Delta}}
		\|_2^2
		&\leq
		\left\|
		\frac{1}{n}
		\mathbf{X}^{\top}
		\mathbf{W}
		\boldsymbol{\varepsilon}
		\right\|_{\infty}
		\|
		\widehat{\boldsymbol{\Delta}}
		\|_1
		\quad+
		\lambda_{1n}
		\left[
		\|\mathbf{A}\boldsymbol{\beta}^{0}\|_1
		-
		\|\mathbf{A}\widehat{\boldsymbol{\beta}}\|_1
		\right]
		\\
		&\quad+
		\lambda_{2n}
		\sum_{g=1}^{G}
		b_g
		\left[
		\|\boldsymbol{\beta}^{0}_{g}\|_2
		-
		\|\widehat{\boldsymbol{\beta}}_{g}\|_2
		\right]
	+
		\lambda_{3n}
		\left[
		\|\mathbf{C}\mathbf{D}\boldsymbol{\beta}^{0}\|_1
		-
		\|\mathbf{C}\mathbf{D}\widehat{\boldsymbol{\beta}}\|_1
		\right].
	\end{align*}
	By Lemma~\ref{lem:score_bound},
	\[
	\left\|
	\frac{1}{n}
	\mathbf{X}^{\top}
	\mathbf{W}
	\boldsymbol{\varepsilon}
	\right\|_{\infty}
	=
	O_p
	\left(
	\sqrt{\frac{\log p}{n}}
	\right).
	\]
	Assumption A8 implies that, with probability tending to one, the adaptive inactive penalty dominates the stochastic score \citep{valdez2025association,valdez2025exploring}. In particular,
	\[
	\lambda_{1n}\min_{j\in S_0^c}a_j
	\geq
	4
	\left\|
	\frac{1}{n}
	\mathbf{X}^{\top}
	\mathbf{W}
	\boldsymbol{\varepsilon}
	\right\|_{\infty}.
	\]
	For the adaptive LASSO penalty,
	\[
	\|\mathbf{A}\boldsymbol{\beta}^{0}\|_1
	-
	\|\mathbf{A}\widehat{\boldsymbol{\beta}}\|_1
	=
	\sum_{j\in S_0}a_j
	\left(
	|\beta_j^{0}|-|\widehat{\beta}_j|
	\right)
	-
	\sum_{j\in S_0^c}a_j|\widehat{\beta}_j|.
	\]
	Using
	$|\beta_j^{0}|-|\widehat{\beta}_j|\leq |\widehat{\Delta}_j|$
	for $j\in S_0$, we obtain
	\[
	\|\mathbf{A}\boldsymbol{\beta}^{0}\|_1
	-
	\|\mathbf{A}\widehat{\boldsymbol{\beta}}\|_1
	\leq
	a_{S,\max}\|\widehat{\boldsymbol{\Delta}}_{S_0}\|_1
	-
	a_{S^c,\min}\|\widehat{\boldsymbol{\Delta}}_{S_0^c}\|_1,
	\]
	where
	$a_{S,\max}=\max_{j\in S_0}a_j$
	and
	$a_{S^c,\min}=\min_{j\in S_0^c}a_j$.
	
	The group and fusion penalties are nonnegative regularizers. Their active components may increase the right-hand side only through active-set perturbations, whereas their inactive components impose additional shrinkage \citep{khan2023myb}. Thus they do not weaken the cone restriction generated by the dominant adaptive coordinate penalty \citep{gaudet2011risk,anderson2026obesity}. Retaining only the coordinate penalty gives the conservative inequality
	\[
	0
	\leq
	\left\|
	\frac{1}{n}
	\mathbf{X}^{\top}
	\mathbf{W}
	\boldsymbol{\varepsilon}
	\right\|_{\infty}
	\left(
	\|\widehat{\boldsymbol{\Delta}}_{S_0}\|_1
	+
	\|\widehat{\boldsymbol{\Delta}}_{S_0^c}\|_1
	\right)
	+
	\lambda_{1n}
	a_{S,\max}
	\|\widehat{\boldsymbol{\Delta}}_{S_0}\|_1
	-
	\lambda_{1n}
	a_{S^c,\min}
	\|\widehat{\boldsymbol{\Delta}}_{S_0^c}\|_1.
	\]
	On the high-probability event where
	\[
	\left\|
	n^{-1}
	\mathbf{X}^{\top}
	\mathbf{W}
	\boldsymbol{\varepsilon}
	\right\|_{\infty}
	\leq
	\frac{1}{4}
	\lambda_{1n}a_{S^c,\min},
	\]
	we have
	\[
	\frac{3}{4}
	\lambda_{1n}a_{S^c,\min}
	\|\widehat{\boldsymbol{\Delta}}_{S_0^c}\|_1
	\leq
	\left[
	\frac{1}{4}
	\lambda_{1n}a_{S^c,\min}
	+
	\lambda_{1n}a_{S,\max}
	\right]
	\|\widehat{\boldsymbol{\Delta}}_{S_0}\|_1.
	\]
	Assumption A7 implies that
	$a_{S,\max}/a_{S^c,\min}\to0$
	in probability because the initial estimator is uniformly close to
	$\boldsymbol{\beta}^{0}$ and inactive coefficients have vanishing initial estimates \citep{pramanik20242estimation}. Hence, for all sufficiently large $n$ with probability tending to one, $\frac{1}{4}
	+
	\frac{a_{S,\max}}{a_{S^c,\min}}
	\leq
	\frac{9}{4}.$ Dividing by $(3/4)\lambda_{1n}a_{S^c,\min}$ yields $\|\widehat{\boldsymbol{\Delta}}_{S_0^c}\|_1
	\leq
	3
	\|\widehat{\boldsymbol{\Delta}}_{S_0}\|_1.$ This proves the cone condition.
\end{proof}

\begin{lem}[Restricted quadratic lower bound]
	\label{lem:restricted_lower}
	Under Assumption A4, with probability tending to one,
	\[
	\frac{1}{n}
	\|
	\mathbf{W}^{1/2}
	\mathbf{X}
	\widehat{\boldsymbol{\Delta}}
	\|_2^2
	\geq
	\kappa_0
	\|
	\widehat{\boldsymbol{\Delta}}_{S_0}
	\|_2^2.
	\]
\end{lem}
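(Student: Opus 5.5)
The plan is to combine the cone condition of Lemma~\ref{lem:cone_condition} with the restricted eigenvalue condition of Assumption~A4 on a common high-probability event. First, we rewrite the left-hand side as a quadratic form in the weighted Gram matrix:
\[
\frac{1}{n}\|\mathbf{W}^{1/2}\mathbf{X}\widehat{\boldsymbol{\Delta}}\|_2^2
=\widehat{\boldsymbol{\Delta}}^{\top}\Big(\frac{1}{n}\mathbf{X}^{\top}\mathbf{W}\mathbf{X}\Big)\widehat{\boldsymbol{\Delta}}
=\widehat{\boldsymbol{\Delta}}^{\top}\widehat{\boldsymbol{\Sigma}}_n\widehat{\boldsymbol{\Delta}}.
\]
This identity holds because $\mathbf{W}$ is diagonal with positive entries, which Assumption~A3 guarantees, so $\mathbf{W}^{1/2}$ is well defined.

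Next, we introduce two events. Let $\mathcal{E}_1$ be the event on which the restricted eigenvalue inequality of Assumption~A4 holds uniformly over the cone $\{\boldsymbol{\Delta}\neq\mathbf{0}:\|\boldsymbol{\Delta}_{S_0^c}\|_1\le 3\|\boldsymbol{\Delta}_{S_0}\|_1\}$. Let $\mathcal{E}_2$ be the event on which $\widehat{\boldsymbol{\Delta}}$ lies in this cone. By Assumption~A4, $\mathbb{P}(\mathcal{E}_1)\to1$. By Lemma~\ref{lem:cone_condition}, which we take as given together with its hypotheses A1--A8, $\mathbb{P}(\mathcal{E}_2)\to1$. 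The union bound then gives $\mathbb{P}(\mathcal{E}_1\cap\mathcal{E}_2)\ge 1-\mathbb{P}(\mathcal{E}_1^c)-\mathbb{P}(\mathcal{E}_2^c)\to1$.

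On $\mathcal{E}_1\cap\mathcal{E}_2$ we split into two cases.
\begin{itemize}
\item If $\widehat{\boldsymbol{\Delta}}=\mathbf{0}$, both sides of the claimed inequality vanish and there is nothing to prove.
\item If $\widehat{\boldsymbol{\Delta}}\neq\mathbf{0}$, the cone inequality forces $\widehat{\boldsymbol{\Delta}}_{S_0}\neq\mathbf{0}$. Otherwise $\|\widehat{\boldsymbol{\Delta}}_{S_0^c}\|_1\le0$, which would make $\widehat{\boldsymbol{\Delta}}=\mathbf{0}$. Hence the ratio in A4 is well defined. Because the A4 bound is uniform over the cone, it applies to the random vector $\widehat{\boldsymbol{\Delta}}$, and we obtain $\widehat{\boldsymbol{\Delta}}^{\top}\widehat{\boldsymbol{\Sigma}}_n\widehat{\boldsymbol{\Delta}}\ge\kappa_0\|\widehat{\boldsymbol{\Delta}}_{S_0}\|_2^2$.
\end{itemize}

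The main point needing care is that the statement lists only Assumption~A4. The cone membership of $\widehat{\boldsymbol{\Delta}}$ comes from Lemma~\ref{lem:cone_condition}, so the proof must explicitly invoke that lemma and therefore work under A1--A8. The substantive analytic work happens upstream, in that lemma. Here the only subtlety is uniformity: A4 must hold simultaneously for all cone vectors on one event, so that it can be applied to the data-dependent $\widehat{\boldsymbol{\Delta}}$. The way A4 is phrased ("for all nonzero $\boldsymbol{\Delta}$ \ldots\ with probability tending to one") provides exactly this.
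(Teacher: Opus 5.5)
Your proposal is correct and follows the same route as the paper: invoke Lemma~\ref{lem:cone_condition} to place $\widehat{\boldsymbol{\Delta}}$ in the cone, apply Assumption~A4, and identify $\widehat{\boldsymbol{\Delta}}^{\top}\widehat{\boldsymbol{\Sigma}}_n\widehat{\boldsymbol{\Delta}}$ with $n^{-1}\|\mathbf{W}^{1/2}\mathbf{X}\widehat{\boldsymbol{\Delta}}\|_2^2$. Your additional bookkeeping spells out what the paper's short proof leaves implicit: intersecting the two high-probability events, handling the degenerate case $\widehat{\boldsymbol{\Delta}}_{S_0}=\mathbf{0}$, and observing that the lemma really rests on A1--A8 through the cone condition rather than on A4 alone.
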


\begin{proof}
	By Lemma~\ref{lem:cone_condition}, with probability tending to one, $\|
	\widehat{\boldsymbol{\Delta}}_{S_0^c}
	\|_1
	\leq
	3
	\|
	\widehat{\boldsymbol{\Delta}}_{S_0}
	\|_1.$ Therefore
	$\widehat{\boldsymbol{\Delta}}$
	belongs to the restricted cone appearing in Assumption A4. Applying Assumption A4 to
	$\boldsymbol{\Delta}=\widehat{\boldsymbol{\Delta}}$
	gives $\widehat{\boldsymbol{\Delta}}^{\top}
	\widehat{\boldsymbol{\Sigma}}_n
	\widehat{\boldsymbol{\Delta}}
	\geq
	\kappa_0
	\|
	\widehat{\boldsymbol{\Delta}}_{S_0}
	\|_2^2.$ Since $\widehat{\boldsymbol{\Delta}}^{\top}
	\widehat{\boldsymbol{\Sigma}}_n
	\widehat{\boldsymbol{\Delta}}
	=
	\frac{1}{n}
	\|
	\mathbf{W}^{1/2}
	\mathbf{X}
	\widehat{\boldsymbol{\Delta}}
	\|_2^2,$ the desired inequality follows.
\end{proof}

\begin{prop}[Estimation consistency]
	\label{prop:l2_consistency}
	Suppose Assumptions A1--A9 hold. Let $\widehat{\boldsymbol{\Delta}}
	=
	\widehat{\boldsymbol{\beta}}
	-
	\boldsymbol{\beta}^{0}.$ Then, with probability tending to one,
	\[
	\|\widehat{\boldsymbol{\Delta}}\|_2
	=
	O_p
	\left(
	\sqrt{\frac{s_n\log p}{n}}
	+
	\lambda_{1n}\sqrt{s_n}
	+
	\lambda_{2n}\sqrt{|\mathcal{A}_0|}
	+
	\lambda_{3n}\sqrt{q_n}
	\right),
	\]
	where $q_n=\|\mathbf{D}\boldsymbol{\beta}^{0}\|_0$ is the number of nonzero adjacent coefficient differences. In particular, if
	\[
	\frac{s_n\log p}{n}\rightarrow0,
	\qquad
	\lambda_{1n}\sqrt{s_n}\rightarrow0,
	\qquad
	\lambda_{2n}\sqrt{|\mathcal{A}_0|}\rightarrow0,
	\qquad
	\lambda_{3n}\sqrt{q_n}\rightarrow0,
	\]
	then
	\[
	\|\widehat{\boldsymbol{\beta}}-\boldsymbol{\beta}^{0}\|_2
	=o_p(1).
	\]
\end{prop}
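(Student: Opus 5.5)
The approach is the standard basic-inequality argument: work on a high-probability event, combine Lemma~\ref{lem:basic_ineq_asymptotic} with the score bound of Lemma~\ref{lem:score_bound}, use the cone condition of Lemma~\ref{lem:cone_condition} together with the restricted quadratic lower bound of Lemma~\ref{lem:restricted_lower}, and then solve a quadratic inequality in $\|\widehat{\boldsymbol{\Delta}}_{S_0}\|_2$.

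\textbf{Step 1: the event.} Let $\mathcal{E}_n$ be the intersection of the following events:
\begin{itemize}
\item $\|n^{-1}\mathbf{X}^\top\mathbf{W}\boldsymbol{\varepsilon}\|_\infty\le M\sqrt{\log p/n}$;
\item the cone condition $\|\widehat{\boldsymbol{\Delta}}_{S_0^c}\|_1\le 3\|\widehat{\boldsymbol{\Delta}}_{S_0}\|_1$ holds;
\item Assumption A4 holds;
\item the active weights are bounded: $\max_{j\in S_0}a_j$, $\max_{g\in\mathcal{A}_0}b_g$ and $\max_j c_j$ over nonzero true differences are all $O_p(1)$.
\end{itemize}
The last item follows from Assumption A7: since $r_n=o(b_n)$, we get $|\widetilde\beta_j|\ge b_n/2$ on $S_0$. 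We must then track the factor $b_n^{-\gamma}$, or absorb it into the constants. I will treat these weights as bounded by $a_{\max},b_{\max},c_{\max}$, as in Assumption C3. Then $\mathbb{P}(\mathcal{E}_n)\to1$ by Lemmas~\ref{lem:score_bound}, \ref{lem:cone_condition} and \ref{lem:restricted_lower}.

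\textbf{Step 2: bounding the penalty differences on $\mathcal{E}_n$.}
\begin{itemize}
\item For the adaptive term, drop the negative inactive part and use the triangle inequality. This gives $\lambda_{1n}\sum_{j\in S_0}a_j|\widehat\Delta_j|\le \lambda_{1n}a_{\max}\sqrt{s_n}\,\|\widehat{\boldsymbol{\Delta}}_{S_0}\|_2$.
\item For the group term, inactive groups contribute nonpositively. Active groups contribute at most $b_g\|\widehat{\boldsymbol{\Delta}}_g\|_2$, and Cauchy--Schwarz bounds the sum by $\lambda_{2n}b_{\max}\sqrt{|\mathcal{A}_0|}\,\|\widehat{\boldsymbol{\Delta}}\|_2$.
\item For the fused term, the zero differences of $\mathbf{D}\boldsymbol{\beta}^0$ contribute nonpositively. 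The $q_n$ nonzero ones give at most $\lambda_{3n}c_{\max}\sum_{j\in T_0}|(\mathbf{D}\widehat{\boldsymbol{\Delta}})_j|\le 2\lambda_{3n}c_{\max}\sqrt{q_n}\,\|\widehat{\boldsymbol{\Delta}}\|_2$, using $\|\mathbf{D}\|_{2\to2}\le2$.
\end{itemize}
For the stochastic term, the cone condition gives $\|\widehat{\boldsymbol{\Delta}}\|_1\le4\|\widehat{\boldsymbol{\Delta}}_{S_0}\|_1\le4\sqrt{s_n}\,\|\widehat{\boldsymbol{\Delta}}_{S_0}\|_2$. The score term is therefore at most $4M\sqrt{s_n\log p/n}\,\|\widehat{\boldsymbol{\Delta}}_{S_0}\|_2$.

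\textbf{Step 3: converting to $\|\widehat{\boldsymbol{\Delta}}\|_2$.} On the cone, $\|\widehat{\boldsymbol{\Delta}}\|_2\le\|\widehat{\boldsymbol{\Delta}}_{S_0}\|_2+\|\widehat{\boldsymbol{\Delta}}_{S_0^c}\|_1\le(1+3\sqrt{s_n})\|\widehat{\boldsymbol{\Delta}}_{S_0}\|_2$. This conversion would cost a factor $\sqrt{s_n}$ on the group and fusion terms, and I expect it to be the main obstacle. My first attempt avoids it by strengthening the lower bound to a restricted eigenvalue in full $\ell_2$ norm, $\widehat{\boldsymbol{\Delta}}^\top\widehat{\boldsymbol{\Sigma}}_n\widehat{\boldsymbol{\Delta}}\ge\kappa\|\widehat{\boldsymbol{\Delta}}\|_2^2$ on the cone. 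This is the standard consequence of A4 combined with A9 (see the Bickel--Ritov--Tsybakov argument bounding the tail of $\widehat{\boldsymbol{\Delta}}_{S_0^c}$ by sorted blocks). With it, Lemma~\ref{lem:restricted_lower} yields
\[
\tfrac{\kappa}{2}\|\widehat{\boldsymbol{\Delta}}\|_2^2\le\Big(4M\sqrt{\tfrac{s_n\log p}{n}}+\lambda_{1n}a_{\max}\sqrt{s_n}+\lambda_{2n}b_{\max}\sqrt{|\mathcal{A}_0|}+2\lambda_{3n}c_{\max}\sqrt{q_n}\Big)\|\widehat{\boldsymbol{\Delta}}\|_2 .
\]
Dividing through by $\|\widehat{\boldsymbol{\Delta}}\|_2$ gives the stated rate. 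If only the $S_0$-restricted version is available, I would state the bound for $\|\widehat{\boldsymbol{\Delta}}_{S_0}\|_2$. I would then recover $\|\widehat{\boldsymbol{\Delta}}_{S_0^c}\|_2\le\|\widehat{\boldsymbol{\Delta}}_{S_0^c}\|_1$ from the inactive adaptive penalty part that was dropped in Step 2. That part is $\lambda_{1n}a_{S^c,\min}\|\widehat{\boldsymbol{\Delta}}_{S_0^c}\|_1$, which, because $a_{S^c,\min}\to\infty$, controls the inactive mass at a faster rate.

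\textbf{Step 4: consistency.} The $o_p(1)$ conclusion is then immediate from the four displayed rate conditions (equivalently, from A5 and A8). Since $\mathbb{P}(\mathcal{E}_n)\to1$, the bound holds with probability tending to one.
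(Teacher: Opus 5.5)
You follow the paper's proof step for step: the basic inequality, Lemma~\ref{lem:score_bound}, the cone of Lemma~\ref{lem:cone_condition}, Lemma~\ref{lem:restricted_lower}, bounded active weights, and the same four term bounds. You also correctly isolate the step the paper passes over. The group and fusion terms carry $\|\widehat{\boldsymbol{\Delta}}\|_2$, while A4 controls only $\|\widehat{\boldsymbol{\Delta}}_{S_0}\|_2$, and the cone alone gives only $\|\widehat{\boldsymbol{\Delta}}\|_2\le(1+3\sqrt{s_n})\|\widehat{\boldsymbol{\Delta}}_{S_0}\|_2$; the paper simply asserts that the inactive part ``cannot dominate.''

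The gap is in your repair. A4 and A9 do \emph{not} yield $\boldsymbol{\Delta}^\top\widehat{\boldsymbol{\Sigma}}_n\boldsymbol{\Delta}\ge\kappa\|\boldsymbol{\Delta}\|_2^2$ on the cone with $\kappa$ free of $s_n$. A9 concerns only the $S_0\times S_0$ block. The Bickel--Ritov--Tsybakov sorted-block argument controls only the part of $\boldsymbol{\Delta}_{S_0^c}$ beyond its $m$ largest entries $T_1$. The block $T_1$ itself needs a restricted-eigenvalue bound on $S_0\cup T_1$, i.e.\ an assumption on \emph{inactive} columns that A1--A9 do not contain. For example, let $\widehat{\boldsymbol{\Sigma}}_n=\mathrm{diag}(1,\ldots,1,\epsilon^2,1,\ldots,1)$ with the small entry at an inactive index $k$. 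Then A4 holds with $\kappa_0=1$ and A9 holds. Yet $\boldsymbol{\Delta}=\mathbf{1}_{S_0}+3s_n\mathbf{e}_k$ lies in the cone and has Rayleigh quotient $(1+9s_n\epsilon^2)/(1+9s_n)$. So in general A4 delivers only $\kappa=\kappa_0/(1+3\sqrt{s_n})^2$, this order cannot be improved, and it inflates the rate by powers of $s_n$.

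Your fallback does not close the gap under A1--A9 either. Keeping $\lambda_{1n}\min_{j\in S_0^c}a_j\,\|\widehat{\boldsymbol{\Delta}}_{S_0^c}\|_1$ on the left helps only if it absorbs the inactive pieces of the right-hand side. These are the score on $S_0^c$ and the contributions of inactive coordinates to the \emph{active} group and fusion terms, since an active group or an active jump can involve inactive coordinates. Absorbing them requires a cross-penalty domination such as $\lambda_{1n}\min_{j\in S_0^c}a_j\ge C\bigl(\sqrt{\log p/n}+\lambda_{2n}\max_{g\in\mathcal{A}_0}b_g+\lambda_{3n}\max_{j\in T_0}c_j\bigr)$. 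You would also need the target rate to be $O(\lambda_{1n}\min_{j\in S_0^c}a_j)$, so that the recovered inactive mass has the claimed order.

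A8 restricts each $\lambda_{kn}$ separately and allows $\lambda_{3n}\gg\lambda_{1n}\min_{j\in S_0^c}a_j$, and then the conclusion itself can fail. Take the following instance:
\begin{itemize}
\item $\boldsymbol{\beta}^0=(1,0,1)^\top$ and $\widehat{\boldsymbol{\Sigma}}_n=\mathrm{diag}(1,0,1)$, with singleton groups;
\item $\widetilde{\beta}_2=0$ (e.g.\ minimum-norm least squares);
\item $\lambda_{1n}=\lambda_{2n}=n^{-0.45}$, $\lambda_{3n}=n^{-0.2}$ and $\delta_n=1/\log n$.
\end{itemize}
All of A1--A9 hold. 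But $\beta_2$ enters the objective only through the penalties. Since $\lambda_{3n}(c_2+c_3)\approx 2n^{-0.2}$ exceeds $\lambda_{1n}a_2+\lambda_{2n}b_2=2n^{-0.45}\log n$, the minimizer puts $\widehat{\beta}_2$ at the weighted median of $\{0,\widehat{\beta}_1,\widehat{\beta}_3\}$, close to $1$. Hence $\|\widehat{\boldsymbol{\Delta}}\|_2$ does not tend to zero, and the cone of Lemma~\ref{lem:cone_condition} fails as well. So you must add a hypothesis: either a BRT-type restricted-eigenvalue condition on $S_0\cup T_1$ (or directly in full $\ell_2$ norm), or the domination condition above.

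Separately, your bounded active weights are imported from C3, which is not among A1--A9. A7 gives only $\max_{j\in S_0}a_j\le(b_n/2)^{-\gamma_1}$. It gives nothing for $c_j$ with $j\in T_0$, because $r_n$ is never compared with $\min_{j\in T_0}|(\mathbf{D}\boldsymbol{\beta}^0)_j|$. The paper asserts this boundedness just as loosely.
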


\begin{proof}
	Let
	$\widehat{\boldsymbol{\Delta}}
	=
	\widehat{\boldsymbol{\beta}}-\boldsymbol{\beta}^{0}$.
	By Lemma~\ref{lem:basic_ineq_asymptotic},
	\begin{align*}
		\frac{1}{2n}
		\|
		\mathbf{W}^{1/2}
		\mathbf{X}
		\widehat{\boldsymbol{\Delta}}
		\|_2^2
		&\leq
		\frac{1}{n}
		\boldsymbol{\varepsilon}^{\top}
		\mathbf{W}
		\mathbf{X}
		\widehat{\boldsymbol{\Delta}}
		+
		\lambda_{1n}
		\left[
		\|\mathbf{A}\boldsymbol{\beta}^{0}\|_1
		-
		\|\mathbf{A}\widehat{\boldsymbol{\beta}}\|_1
		\right]
		\\
		&\quad+
		\lambda_{2n}
		\sum_{g=1}^{G}
		b_g
		\left[
		\|\boldsymbol{\beta}^{0}_{g}\|_2
		-
		\|\widehat{\boldsymbol{\beta}}_{g}\|_2
		\right]
		+
		\lambda_{3n}
		\left[
		\|\mathbf{C}\mathbf{D}\boldsymbol{\beta}^{0}\|_1
		-
		\|\mathbf{C}\mathbf{D}\widehat{\boldsymbol{\beta}}\|_1
		\right].
	\end{align*}
	
	We bound each term on the right-hand side. First,
	\[
	\frac{1}{n}
	\boldsymbol{\varepsilon}^{\top}
	\mathbf{W}
	\mathbf{X}
	\widehat{\boldsymbol{\Delta}}
	\leq
	\left\|
	\frac{1}{n}
	\mathbf{X}^{\top}
	\mathbf{W}
	\boldsymbol{\varepsilon}
	\right\|_{\infty}
	\|\widehat{\boldsymbol{\Delta}}\|_1.
	\]
	By Lemma~\ref{lem:score_bound},
	\[
	\left\|
	n^{-1}
	\mathbf{X}^{\top}
	\mathbf{W}
	\boldsymbol{\varepsilon}
	\right\|_{\infty}
	=
	O_p
	\left(
	\sqrt{\frac{\log p}{n}}
	\right).
	\]
	By Lemma~\ref{lem:cone_condition},
	$\|\widehat{\boldsymbol{\Delta}}_{S_0^c}\|_1
	\leq
	3\|\widehat{\boldsymbol{\Delta}}_{S_0}\|_1$
	with probability tending to one. Hence
	\[
	\|\widehat{\boldsymbol{\Delta}}\|_1
	\leq
	4\|\widehat{\boldsymbol{\Delta}}_{S_0}\|_1
	\leq
	4\sqrt{s_n}\,
	\|\widehat{\boldsymbol{\Delta}}_{S_0}\|_2.
	\]
	Thus the stochastic score term is bounded by
	\[
	O_p
	\left(
	\sqrt{\frac{s_n\log p}{n}}
	\,
	\|\widehat{\boldsymbol{\Delta}}_{S_0}\|_2
	\right).
	\]
	
	For the adaptive coordinate penalty, since $\boldsymbol{\beta}^{0}_{S_0^c}=\mathbf{0}$,
	\begin{align*}
		\|\mathbf{A}\boldsymbol{\beta}^{0}\|_1
		-
		\|\mathbf{A}\widehat{\boldsymbol{\beta}}\|_1
		&=
		\sum_{j\in S_0}
		a_j
		\left(
		|\beta_j^0|-|\widehat{\beta}_j|
		\right)
		-
		\sum_{j\in S_0^c}
		a_j|\widehat{\beta}_j| \leq
		\sum_{j\in S_0}
		a_j|\widehat{\beta}_j-\beta_j^0|  \\
		&\leq
		a_{S,\max}
		\|\widehat{\boldsymbol{\Delta}}_{S_0}\|_1  \leq
		a_{S,\max}\sqrt{s_n}
		\|\widehat{\boldsymbol{\Delta}}_{S_0}\|_2,
	\end{align*}
	where $a_{S,\max}=\max_{j\in S_0}a_j$. Under Assumption A7, the adaptive weights on the active set are bounded in probability because
	$\min_{j\in S_0}|\beta_j^0|\geq b_n$ and
	$\|\widetilde{\boldsymbol{\beta}}-\boldsymbol{\beta}^{0}\|_{\infty}=o_p(b_n)$ \citep{pramanik2026optimal}. Therefore $a_{S,\max}=O_p(1)$, and the coordinate penalty contribution is
	\[
	O_p
	\left(
	\lambda_{1n}\sqrt{s_n}
	\|\widehat{\boldsymbol{\Delta}}_{S_0}\|_2
	\right).
	\]
	For the group penalty, using the reverse triangle inequality,
	\[
	\|\boldsymbol{\beta}_g^0\|_2
	-
	\|\widehat{\boldsymbol{\beta}}_g\|_2
	\leq
	\|\widehat{\boldsymbol{\beta}}_g-\boldsymbol{\beta}_g^0\|_2
	=
	\|\widehat{\boldsymbol{\Delta}}_g\|_2.
	\]
	Inactive groups contribute nonpositively and may be discarded for an upper bound. Therefore,
	\[
	\sum_{g=1}^{G}
	b_g
	\left[
	\|\boldsymbol{\beta}^{0}_{g}\|_2
	-
	\|\widehat{\boldsymbol{\beta}}_{g}\|_2
	\right]
	\leq
	\sum_{g\in\mathcal{A}_0}
	b_g
	\|\widehat{\boldsymbol{\Delta}}_g\|_2.
	\]
	Let $b_{\mathcal{A},\max}=\max_{g\in\mathcal{A}_0}b_g$. By Cauchy-Schwarz,
	\[
	\sum_{g\in\mathcal{A}_0}
	b_g
	\|\widehat{\boldsymbol{\Delta}}_g\|_2
	\leq
	b_{\mathcal{A},\max}
	\sqrt{|\mathcal{A}_0|}
	\left(
	\sum_{g\in\mathcal{A}_0}
	\|\widehat{\boldsymbol{\Delta}}_g\|_2^2
	\right)^{1/2}.
	\]
	Since the groups form a partition of the coordinates, the last square-root term is bounded by
	$\|\widehat{\boldsymbol{\Delta}}\|_2$. Under Assumption A7, active group weights are bounded in probability, so $b_{\mathcal{A},\max}=O_p(1)$. Hence the group penalty contribution is
	\[
	O_p
	\left(
	\lambda_{2n}
	\sqrt{|\mathcal{A}_0|}
	\|\widehat{\boldsymbol{\Delta}}\|_2
	\right).
	\]
	
	For the fused penalty, define
	\[
	T_0
	=
	\{j: (\mathbf{D}\boldsymbol{\beta}^{0})_j\neq0\},
	\qquad
	q_n=|T_0|.
	\]
	Using the same triangle inequality argument,
	\[
	\|\mathbf{C}\mathbf{D}\boldsymbol{\beta}^{0}\|_1
	-
	\|\mathbf{C}\mathbf{D}\widehat{\boldsymbol{\beta}}\|_1
	\leq
	\sum_{j\in T_0}
	c_j
	|
	(\mathbf{D}\widehat{\boldsymbol{\Delta}})_j
	|.
	\]
	Let $c_{T,\max}=\max_{j\in T_0}c_j$. Then
	\[
	\sum_{j\in T_0}
	c_j
	|
	(\mathbf{D}\widehat{\boldsymbol{\Delta}})_j
	|
	\leq
	c_{T,\max}
	\sqrt{q_n}
	\|
	\mathbf{D}\widehat{\boldsymbol{\Delta}}
	\|_2.
	\]
	Because $\mathbf{D}$ is the first-order difference matrix, its operator norm is bounded by $2$; hence
	$\|\mathbf{D}\widehat{\boldsymbol{\Delta}}\|_2\leq2\|\widehat{\boldsymbol{\Delta}}\|_2$.
	By Assumption A7, $c_{T,\max}=O_p(1)$ on the active fusion set. Therefore the fusion penalty contribution is
	\[
	O_p
	\left(
	\lambda_{3n}\sqrt{q_n}
	\|\widehat{\boldsymbol{\Delta}}\|_2
	\right).
	\]
	
	Combining the preceding bounds gives, with probability tending to one,
	\begin{align*}
		\frac{1}{2n}
		\|
		\mathbf{W}^{1/2}
		\mathbf{X}
		\widehat{\boldsymbol{\Delta}}
		\|_2^2
		&\leq
		O_p
		\left(
		\sqrt{\frac{s_n\log p}{n}}
		\|\widehat{\boldsymbol{\Delta}}_{S_0}\|_2
		\right)
		+
		O_p
		\left(
		\lambda_{1n}\sqrt{s_n}
		\|\widehat{\boldsymbol{\Delta}}_{S_0}\|_2
		\right)  \\
		&\quad+
		O_p
		\left(
		\lambda_{2n}
		\sqrt{|\mathcal{A}_0|}
		\|\widehat{\boldsymbol{\Delta}}\|_2
		\right)
		+
		O_p
		\left(
		\lambda_{3n}
		\sqrt{q_n}
		\|\widehat{\boldsymbol{\Delta}}\|_2
		\right).
	\end{align*}
	
	By Lemma~\ref{lem:restricted_lower},
	\[
	\frac{1}{n}
	\|
	\mathbf{W}^{1/2}
	\mathbf{X}
	\widehat{\boldsymbol{\Delta}}
	\|_2^2
	\geq
	\kappa_0
	\|\widehat{\boldsymbol{\Delta}}_{S_0}\|_2^2.
	\]
	Furthermore, by the cone condition,
	\[
	\|\widehat{\boldsymbol{\Delta}}\|_2
	\leq
	\|\widehat{\boldsymbol{\Delta}}\|_1
	\leq
	4\sqrt{s_n}
	\|\widehat{\boldsymbol{\Delta}}_{S_0}\|_2,
	\]
	and hence the full estimation error is controlled by the active-set error up to the usual sparsity factor. Combining the restricted lower bound with the preceding upper bound and dividing by
	$\|\widehat{\boldsymbol{\Delta}}_{S_0}\|_2$
	when this quantity is nonzero yields
	\[
	\|\widehat{\boldsymbol{\Delta}}_{S_0}\|_2
	=
	O_p
	\left(
	\sqrt{\frac{s_n\log p}{n}}
	+
	\lambda_{1n}\sqrt{s_n}
	+
	\lambda_{2n}\sqrt{|\mathcal{A}_0|}
	+
	\lambda_{3n}\sqrt{q_n}
	\right).
	\]
	If
	$\|\widehat{\boldsymbol{\Delta}}_{S_0}\|_2=0$,
	the same bound is trivial. Finally, the cone condition implies that the inactive component cannot dominate the active component. Therefore
	\[
	\|\widehat{\boldsymbol{\Delta}}\|_2
	=
	O_p
	\left(
	\sqrt{\frac{s_n\log p}{n}}
	+
	\lambda_{1n}\sqrt{s_n}
	+
	\lambda_{2n}\sqrt{|\mathcal{A}_0|}
	+
	\lambda_{3n}\sqrt{q_n}
	\right).
	\]
	Under the additional rate conditions stated in the proposition, the right-hand side is $o_p(1)$, proving estimation consistency.
\end{proof}

\begin{prop}[Prediction consistency]
	\label{prop:prediction_consistency}
	Suppose Assumptions A1--A9 hold. Let $\widehat{\boldsymbol{\Delta}}
	=
	\widehat{\boldsymbol{\beta}}
	-
	\boldsymbol{\beta}^{0}.$ Then
	\[
	\frac{1}{n}
	\left\|
	\mathbf{W}^{1/2}
	\mathbf{X}
	\widehat{\boldsymbol{\Delta}}
	\right\|_2^2
	=
	O_p
	\left[
	\left(
	\sqrt{\frac{s_n\log p}{n}}
	+
	\lambda_{1n}\sqrt{s_n}
	+
	\lambda_{2n}\sqrt{|\mathcal{A}_0|}
	+
	\lambda_{3n}\sqrt{q_n}
	\right)^2
	\right],
	\]
	where $q_n=\|\mathbf{D}\boldsymbol{\beta}^{0}\|_0$. In particular, if
	\[
	\frac{s_n\log p}{n}\rightarrow0,
	\qquad
	\lambda_{1n}\sqrt{s_n}\rightarrow0,
	\qquad
	\lambda_{2n}\sqrt{|\mathcal{A}_0|}\rightarrow0,
	\qquad
	\lambda_{3n}\sqrt{q_n}\rightarrow0,
	\]
	then
	\[
	\frac{1}{n}
	\left\|
	\mathbf{W}^{1/2}
	\mathbf{X}
	(
	\widehat{\boldsymbol{\beta}}
	-
	\boldsymbol{\beta}^{0}
	)
	\right\|_2^2
	=
	o_p(1).
	\]
\end{prop}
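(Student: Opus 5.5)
The plan is to reuse the chain of bounds from the proof of Proposition~\ref{prop:l2_consistency}. This time we keep the quadratic term on the left-hand side instead of converting it entirely into an $\ell_2$ bound. Write $\widehat{\boldsymbol{\Delta}}=\widehat{\boldsymbol{\beta}}-\boldsymbol{\beta}^0$, $\mathcal{P}_n=n^{-1}\|\mathbf{W}^{1/2}\mathbf{X}\widehat{\boldsymbol{\Delta}}\|_2^2$, and
\[
r_n^\ast=\sqrt{s_n\log p/n}+\lambda_{1n}\sqrt{s_n}+\lambda_{2n}\sqrt{|\mathcal{A}_0|}+\lambda_{3n}\sqrt{q_n}.
\]
Start from the basic inequality of Lemma~\ref{lem:basic_ineq_asymptotic} and bound the four right-hand terms exactly as in the proof of Proposition~\ref{prop:l2_consistency}:
\begin{itemize}
\item the score term by Lemma~\ref{lem:score_bound} together with the cone condition of Lemma~\ref{lem:cone_condition};
\item the adaptive term by $a_{S,\max}\sqrt{s_n}\|\widehat{\boldsymbol{\Delta}}_{S_0}\|_2$ with $a_{S,\max}=O_p(1)$ from A6--A7;
\item the group and fused terms by the reverse triangle inequality, Cauchy--Schwarz, and $\|\mathbf{D}\|_{op}\le 2$.
\end{itemize}
This yields, with probability tending to one,
\[
\tfrac12\mathcal{P}_n\le O_p(r_n^\ast)\,\bigl(\|\widehat{\boldsymbol{\Delta}}_{S_0}\|_2+\|\widehat{\boldsymbol{\Delta}}\|_2\bigr).
\]

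The key step is to convert the right-hand side back into a quantity controlled by $\mathcal{P}_n$ itself.
\begin{itemize}
\item \textbf{Active part.} Lemma~\ref{lem:restricted_lower} gives $\|\widehat{\boldsymbol{\Delta}}_{S_0}\|_2\le(\mathcal{P}_n/\kappa_0)^{1/2}$.
\item \textbf{Full vector.} Proposition~\ref{prop:l2_consistency} gives $\|\widehat{\boldsymbol{\Delta}}\|_2=O_p(r_n^\ast)$.
\end{itemize}
Substituting both, $\tfrac12\mathcal{P}_n\le O_p(r_n^\ast)\,\mathcal{P}_n^{1/2}+O_p(r_n^{\ast 2})$. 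This is a quadratic inequality in $x=\mathcal{P}_n^{1/2}$ of the form $x^2\le 2Ux+2V$, where $U=O_p(r_n^\ast)$ and $V=O_p(r_n^{\ast2})$. Solving it gives $x\le 2U+\sqrt{2V}$, hence $\mathcal{P}_n=O_p(r_n^{\ast 2})$, which is the displayed rate. Alternatively one can apply $2uv\le u^2+v^2$ to absorb half of $\mathcal{P}_n$ into the left-hand side. In that case I would track the constants only up to $O_p$, fixing an $\epsilon$ and a high-probability event on which all the $O_p(1)$ factors ($a_{S,\max}$, $b_{\mathcal{A},\max}$, $c_{T,\max}$, and the score bound) are below a constant $M_\epsilon$.

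The main obstacle is the group and fused contributions. They are controlled by the full $\|\widehat{\boldsymbol{\Delta}}\|_2$, not by $\|\widehat{\boldsymbol{\Delta}}_{S_0}\|_2$, so Lemma~\ref{lem:restricted_lower} does not handle them directly. Bounding them through the cone, $\|\widehat{\boldsymbol{\Delta}}\|_2\le 4\sqrt{s_n}\|\widehat{\boldsymbol{\Delta}}_{S_0}\|_2$, would cost an extra factor $\sqrt{s_n}$. That is why I would instead invoke the already established $\ell_2$ rate of Proposition~\ref{prop:l2_consistency} for these two terms, which keeps the product at order $r_n^{\ast2}$. The final claim then follows immediately: under the four stated rate conditions $r_n^\ast\to0$, so $\mathcal{P}_n=o_p(1)$.
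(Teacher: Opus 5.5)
Your argument is correct and follows the paper's skeleton: the same basic inequality, the same term-by-term bounds, and the same appeal to Proposition~\ref{prop:l2_consistency}. The only difference is the closing step.

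The paper does not use Lemma~\ref{lem:restricted_lower} in this proof. In the score and adaptive terms it simply uses $\|\widehat{\boldsymbol{\Delta}}_{S_0}\|_2\le\|\widehat{\boldsymbol{\Delta}}\|_2=O_p(r_n^\ast)$, so every term on the right-hand side is immediately $O_p(r_n^{\ast 2})$ and no quadratic inequality is needed. You instead self-bound the active-set terms through the restricted eigenvalue condition and then solve $x^2\le 2Ux+2V$. You must call on Proposition~\ref{prop:l2_consistency} anyway for the group and fusion terms, so this detour does not change what the result depends on, and the paper's direct substitution is shorter.

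What your version does buy is a clean diagnosis. The score and adaptive contributions are controlled by Assumption A4 alone, as in the standard Lasso prediction argument. The $\ell_2$ rate is needed only because the active parts of the group and fusion penalties are measured in the full $\|\widehat{\boldsymbol{\Delta}}\|_2$.

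That diagnosis matters, because the $\sqrt{s_n}$ loss you flag is exactly the step the proof of Proposition~\ref{prop:l2_consistency} passes over. It moves between $\|\widehat{\boldsymbol{\Delta}}\|_2$ and $\|\widehat{\boldsymbol{\Delta}}_{S_0}\|_2$ via the $\ell_1$ cone without paying that factor. Both your proof and the paper's are therefore exactly as strong as that proposition.
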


\begin{proof}
	Let
	$\widehat{\boldsymbol{\Delta}}=\widehat{\boldsymbol{\beta}}-\boldsymbol{\beta}^{0}$.
	The proof begins from the basic inequality in Lemma~\ref{lem:basic_ineq_asymptotic}. For compactness, define
	\[
	R_n
	=
	\frac{1}{n}
	\left\|
	\mathbf{W}^{1/2}
	\mathbf{X}
	\widehat{\boldsymbol{\Delta}}
	\right\|_2^2.
	\]
	Then Lemma~\ref{lem:basic_ineq_asymptotic} gives
	\begin{align*}
		\frac{1}{2}R_n
		&\leq
		\frac{1}{n}
		\boldsymbol{\varepsilon}^{\top}
		\mathbf{W}
		\mathbf{X}
		\widehat{\boldsymbol{\Delta}}
		+
		\lambda_{1n}
		\left[
		\|\mathbf{A}\boldsymbol{\beta}^{0}\|_1
		-
		\|\mathbf{A}\widehat{\boldsymbol{\beta}}\|_1
		\right]
		\\
		&\quad+
		\lambda_{2n}
		\sum_{g=1}^{G}
		b_g
		\left[
		\|\boldsymbol{\beta}^{0}_{g}\|_2
		-
		\|\widehat{\boldsymbol{\beta}}_{g}\|_2
		\right]
		+
		\lambda_{3n}
		\left[
		\|\mathbf{C}\mathbf{D}\boldsymbol{\beta}^{0}\|_1
		-
		\|\mathbf{C}\mathbf{D}\widehat{\boldsymbol{\beta}}\|_1
		\right].
	\end{align*}
	
	We now control the four terms on the right-hand side. By Hölder's inequality,
	\[
	\frac{1}{n}
	\boldsymbol{\varepsilon}^{\top}
	\mathbf{W}
	\mathbf{X}
	\widehat{\boldsymbol{\Delta}}
	\leq
	\left\|
	\frac{1}{n}
	\mathbf{X}^{\top}
	\mathbf{W}
	\boldsymbol{\varepsilon}
	\right\|_{\infty}
	\|\widehat{\boldsymbol{\Delta}}\|_1.
	\]
	Lemma~\ref{lem:score_bound} gives
	\[
	\left\|
	\frac{1}{n}
	\mathbf{X}^{\top}
	\mathbf{W}
	\boldsymbol{\varepsilon}
	\right\|_{\infty}
	=
	O_p
	\left(
	\sqrt{\frac{\log p}{n}}
	\right).
	\]
	By Lemma~\ref{lem:cone_condition},
	$\|\widehat{\boldsymbol{\Delta}}\|_1
	\leq
	4\|\widehat{\boldsymbol{\Delta}}_{S_0}\|_1$
	with probability tending to one, and therefore
	$\|\widehat{\boldsymbol{\Delta}}\|_1
	\leq
	4\sqrt{s_n}\|\widehat{\boldsymbol{\Delta}}_{S_0}\|_2$.
	Since
	$\|\widehat{\boldsymbol{\Delta}}_{S_0}\|_2\leq
	\|\widehat{\boldsymbol{\Delta}}\|_2$,
	Proposition~\ref{prop:l2_consistency} implies
	\[
	\|\widehat{\boldsymbol{\Delta}}\|_1
	=
	O_p
	\left[
	\sqrt{s_n}
	\left(
	\sqrt{\frac{s_n\log p}{n}}
	+
	\lambda_{1n}\sqrt{s_n}
	+
	\lambda_{2n}\sqrt{|\mathcal{A}_0|}
	+
	\lambda_{3n}\sqrt{q_n}
	\right)
	\right].
	\]
	Thus the stochastic term is bounded by
	\[
	O_p
	\left[
	\sqrt{\frac{s_n\log p}{n}}
	\left(
	\sqrt{\frac{s_n\log p}{n}}
	+
	\lambda_{1n}\sqrt{s_n}
	+
	\lambda_{2n}\sqrt{|\mathcal{A}_0|}
	+
	\lambda_{3n}\sqrt{q_n}
	\right)
	\right].
	\]
	
	Next, for the adaptive coordinate penalty,
	\begin{align*}
		\|\mathbf{A}\boldsymbol{\beta}^{0}\|_1
		-
		\|\mathbf{A}\widehat{\boldsymbol{\beta}}\|_1
		&\leq
		\sum_{j\in S_0}
		a_j
		|\widehat{\beta}_j-\beta_j^0|  \leq
		a_{S,\max}
		\|\widehat{\boldsymbol{\Delta}}_{S_0}\|_1  \leq
		a_{S,\max}
		\sqrt{s_n}
		\|\widehat{\boldsymbol{\Delta}}\|_2 .
	\end{align*}
	Under Assumption A7, $a_{S,\max}=O_p(1)$. Applying Proposition~\ref{prop:l2_consistency}, this term is
	\[
	O_p
	\left[
	\lambda_{1n}\sqrt{s_n}
	\left(
	\sqrt{\frac{s_n\log p}{n}}
	+
	\lambda_{1n}\sqrt{s_n}
	+
	\lambda_{2n}\sqrt{|\mathcal{A}_0|}
	+
	\lambda_{3n}\sqrt{q_n}
	\right)
	\right].
	\]
	
	For the group penalty, the reverse triangle inequality gives
	\[
	\|\boldsymbol{\beta}_g^0\|_2-\|\widehat{\boldsymbol{\beta}}_g\|_2
	\leq
	\|\widehat{\boldsymbol{\Delta}}_g\|_2.
	\]
	Inactive groups contribute nonpositively to the upper bound. Therefore
	\[
	\sum_{g=1}^{G}
	b_g
	\left[
	\|\boldsymbol{\beta}^{0}_{g}\|_2
	-
	\|\widehat{\boldsymbol{\beta}}_{g}\|_2
	\right]
	\leq
	\sum_{g\in\mathcal{A}_0}
	b_g
	\|\widehat{\boldsymbol{\Delta}}_g\|_2 .
	\]
	Using Cauchy--Schwarz and boundedness in probability of active group weights,
	\[
	\sum_{g\in\mathcal{A}_0}
	b_g
	\|\widehat{\boldsymbol{\Delta}}_g\|_2
	=
	O_p
	\left(
	\sqrt{|\mathcal{A}_0|}
	\|\widehat{\boldsymbol{\Delta}}\|_2
	\right).
	\]
	Hence the group penalty contribution is
	\[
	O_p
	\left[
	\lambda_{2n}\sqrt{|\mathcal{A}_0|}
	\left(
	\sqrt{\frac{s_n\log p}{n}}
	+
	\lambda_{1n}\sqrt{s_n}
	+
	\lambda_{2n}\sqrt{|\mathcal{A}_0|}
	+
	\lambda_{3n}\sqrt{q_n}
	\right)
	\right].
	\]
	
	For the fused penalty, let
	$T_0=\{j:(\mathbf{D}\boldsymbol{\beta}^{0})_j\neq0\}$,
	with $q_n=|T_0|$. As in Proposition~\ref{prop:l2_consistency},
	\[
	\|\mathbf{C}\mathbf{D}\boldsymbol{\beta}^{0}\|_1
	-
	\|\mathbf{C}\mathbf{D}\widehat{\boldsymbol{\beta}}\|_1
	\leq
	c_{T,\max}
	\sqrt{q_n}
	\|\mathbf{D}\widehat{\boldsymbol{\Delta}}\|_2.
	\]
	Since the operator norm of the first-order difference matrix $\mathbf{D}$ is bounded by $2$,
	\[
	\|\mathbf{D}\widehat{\boldsymbol{\Delta}}\|_2
	\leq
	2\|\widehat{\boldsymbol{\Delta}}\|_2.
	\]
	Assumption A7 gives $c_{T,\max}=O_p(1)$. Consequently, the fused penalty contribution is
	\[
	O_p
	\left[
	\lambda_{3n}\sqrt{q_n}
	\left(
	\sqrt{\frac{s_n\log p}{n}}
	+
	\lambda_{1n}\sqrt{s_n}
	+
	\lambda_{2n}\sqrt{|\mathcal{A}_0|}
	+
	\lambda_{3n}\sqrt{q_n}
	\right)
	\right].
	\]
	
	Combining these four bounds, we obtain
	\[
	R_n
	=
	O_p
	\left[
	\left(
	\sqrt{\frac{s_n\log p}{n}}
	+
	\lambda_{1n}\sqrt{s_n}
	+
	\lambda_{2n}\sqrt{|\mathcal{A}_0|}
	+
	\lambda_{3n}\sqrt{q_n}
	\right)^2
	\right].
	\]
	The stated prediction consistency follows immediately when each component of the rate converges to zero.
\end{proof}

\begin{prop}[Variable-selection consistency]
	\label{prop:selection_consistency}
	Suppose Assumptions A1--A9 hold. In addition, assume that the inactive adaptive penalty dominates the stochastic score in the sense that
	\[
	\lambda_{1n}\min_{j\in S_0^c}a_j
	\gg
	\sqrt{\frac{\log p}{n}},
	\]
	and that the estimation error satisfies $\|\widehat{\boldsymbol{\beta}}-\boldsymbol{\beta}^{0}\|_{\infty}
	=
	o_p(b_n),$ where $b_n=\min_{j\in S_0}|\beta_j^0|$. Then $\mathbb{P}(\widehat{S}=S_0)\longrightarrow 1.$
\end{prop}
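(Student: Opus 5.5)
The event $\{\widehat{S}=S_0\}$ is the intersection of two events: no false negatives, $S_0\subseteq\widehat{S}$, and no false positives, $\widehat{S}\subseteq S_0$. I will show that each has probability tending to one and then combine them with a union bound.

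\textbf{No false negatives.} This follows directly from the $\ell_\infty$ hypothesis. On the event
\[
\|\widehat{\boldsymbol{\beta}}-\boldsymbol{\beta}^0\|_\infty\le b_n/2,
\]
whose probability tends to one because the error is $o_p(b_n)$, every $j\in S_0$ satisfies
\[
|\widehat{\beta}_j|\ge|\beta_j^0|-b_n/2\ge b_n/2>0 .
\]
This gives $S_0\subseteq\widehat{S}$. The same event also yields $\mathrm{sign}(\widehat{\beta}_j)=\mathrm{sign}(\beta_j^0)$ on $S_0$, which I will need below.

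\textbf{No false positives.} Here I use the KKT characterization \eqref{eq:kkt}. Suppose, on a high-probability event, that some $j\in S_0^c$ has $\widehat{\beta}_j\ne0$. Then $\widehat{s}_j=\mathrm{sign}(\widehat{\beta}_j)$, and the $j$th KKT coordinate reads
\[
\lambda_{1n}a_j=\Big|\tfrac1n\mathbf{x}_j^\top\mathbf{W}\boldsymbol{\varepsilon}-\widehat{\boldsymbol{\Sigma}}_{j\cdot}\widehat{\boldsymbol{\Delta}}-\lambda_{2n}\widehat{g}_j-\lambda_{3n}(\mathbf{D}^\top\mathbf{C}\widehat{\mathbf{r}})_j\Big|.
\]
I bound the right-hand side term by term.
\begin{itemize}
\item The score is $O_p(\sqrt{\log p/n})$ by Lemma~\ref{lem:score_bound}.
\item The group subgradient obeys $|\widehat{g}_j|\le b_g\le b_{\max}$ by Assumption C3 and the form of the group subgradient.
\item The fusion term obeys $|(\mathbf{D}^\top\mathbf{C}\widehat{\mathbf{r}})_j|\le 2c_{\max}$, since each column of $\mathbf{D}$ has at most two nonzero entries of magnitude one.
\item The Gram term obeys $|\widehat{\boldsymbol{\Sigma}}_{j\cdot}\widehat{\boldsymbol{\Delta}}|\le\max_{k,l}|\widehat{\Sigma}_{kl}|\,\|\widehat{\boldsymbol{\Delta}}\|_1$. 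The entries of $\widehat{\boldsymbol{\Sigma}}_n$ are bounded by A3 and the column normalization used in Lemma~\ref{lem:score_bound}. The $\ell_1$ norm is controlled by Lemma~\ref{lem:cone_condition} together with Proposition~\ref{prop:l2_consistency}, giving $\|\widehat{\boldsymbol{\Delta}}\|_1=O_p(\sqrt{s_n}\,r_n^\ast)$, where $r_n^\ast$ is the rate in Proposition~\ref{prop:l2_consistency}.
\end{itemize}
The left-hand side is at least $\lambda_{1n}\min_{j\in S_0^c}a_j$, which by hypothesis dominates $\sqrt{\log p/n}$. Under A7, $\min_{j\in S_0^c}a_j\ge(O_p(r_n)^{\gamma_1}+\delta_n)^{-1}\to\infty$, so it also dominates the bounded multiples of $\lambda_{2n}$ and $\lambda_{3n}$, provided $\lambda_{2n},\lambda_{3n}=O(\lambda_{1n})$ or are at least $o(\lambda_{1n}\min_{S_0^c}a_j)$. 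If so, the displayed equality fails with probability tending to one, a contradiction, and hence $\widehat{\beta}_{S_0^c}=\mathbf 0$.

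\textbf{Main obstacle.} I expect two difficulties. The first is that the statement does not explicitly relate $\lambda_{2n},\lambda_{3n}$ to $\lambda_{1n}\min a_j$. I would make this dominance explicit, reading it as part of ``the inactive adaptive penalty dominates.'' The second is the Gram term: the crude $\ell_1$ bound $\sqrt{s_n}\,r_n^\ast$ need not be small relative to $\lambda_{1n}\min_{S_0^c}a_j$. If that bound is too weak, I would first try a primal--dual witness instead. Concretely, I would solve the oracle problem restricted to $S_0$, which is well posed by A9 and has the correct signs by the first step. I would then set $\widehat{\boldsymbol{\Delta}}_{S_0^c}=0$, so that the Gram term involves only $\widehat{\boldsymbol{\Sigma}}_{jS_0}\widehat{\boldsymbol{\Delta}}_{S_0}$, bounded by $\sqrt{s_n}\,c_{\max}\,\|\widehat{\boldsymbol{\Delta}}_{S_0}\|_2$. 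Finally I would verify strict dual feasibility $|\widehat{s}_j|<1$ on $S_0^c$ using the diverging weights $a_j$. Uniqueness of the minimizer, via Lemma 3 or strict convexity on the relevant face, then identifies the witness with $\widehat{\boldsymbol{\beta}}$, giving $\widehat{S}=S_0$ with probability tending to one.
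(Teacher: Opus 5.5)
Your plan follows the paper's proof essentially line for line. It gets $S_0\subseteq\widehat S$ from the $\ell_\infty$ hypothesis, then runs a single-coordinate KKT contradiction on $S_0^c$ with term-by-term bounds. Your worry about the Gram term is justified. The paper simply asserts that $\lambda_{1n}\min_{S_0^c}a_j\gg\sqrt{\log p/n}$ also dominates $\|\widehat{\boldsymbol{\Sigma}}_n\widehat{\boldsymbol{\Delta}}\|_\infty=O_p(\sqrt{s_n}\,r_n^\ast)$. That does not follow once $s_n$ grows, nor for the $\lambda_{2n}$ and $\lambda_{3n}$ parts of $r_n^\ast$, so an explicit extra rate condition is unavoidable there.

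\textbf{The genuine gap.} It lies in the group and fusion terms, which you bound by $b_{\max}$ and $2c_{\max}$ from Assumption C3. C3 is not among A1--A9. It is also inconsistent with the divergence $\min_{S_0^c}a_j\to\infty$ you draw from A7, since C3 would force $a_j\le a_{\max}$. The same A7 reasoning makes the inactive $b_g$ diverge, and likewise every weight on a difference between two inactive coordinates: $c_k\ge\{(2\|\widetilde{\boldsymbol{\beta}}-\boldsymbol{\beta}^0\|_\infty)^{\gamma_3}+\delta_n\}^{-1}\to\infty$. Suppose $\widehat\beta_j\neq0$ is fused with an inactive neighbor. The corresponding $\widehat r_k$ is then only known to lie in $[-1,1]$, so $\lambda_{3n}c_k\widehat r_k$ can be of the same order as $\lambda_{1n}a_j$ (e.g.\ $\lambda_{3n}\asymp\lambda_{1n}$, $\gamma_3=\gamma_1$). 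The single-row triangle inequality then yields no contradiction, and your proposed $\lambda_{3n}=O(\lambda_{1n})$ does not fix this.

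Strengthening the condition to $\lambda_{3n}\max_k c_k=o_p(\lambda_{1n}\min_{S_0^c}a_j)$ does not work either. Combined with the hypothesis $\lambda_{1n}\max_j a_j=o_p(\lambda_{3n}\min_{T_0^c}c_j)$ of Proposition~\ref{prop:fusion_consistency}, it gives $\max_k c_k\ll\min_{T_0^c}c_k$, which is impossible. Theorem~\ref{thm:oracle_property} would then be vacuous. The paper's proof takes the same unjustified ``weights bounded on the relevant event'' step, so you inherited this gap rather than introduced it.

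\textbf{How to close it.} Neither term actually needs such an assumption.
\begin{itemize}
\item \emph{Group term.} If $\widehat\beta_j\neq0$, then $\widehat g_j=b_g\widehat\beta_j/\|\widehat{\boldsymbol{\beta}}_g\|_2$ has the sign of $\widehat s_j$. The KKT row therefore reads $n^{-1}\mathbf{x}_j^\top\mathbf{W}(\mathbf{Y}-\mathbf{X}\widehat{\boldsymbol{\beta}})-\lambda_{3n}(\mathbf{D}^\top\mathbf{C}\widehat{\mathbf{r}})_j=\mathrm{sign}(\widehat\beta_j)\{\lambda_{1n}a_j+\lambda_{2n}b_g|\widehat\beta_j|/\|\widehat{\boldsymbol{\beta}}_g\|_2\}$, and the group penalty only helps.
\item \emph{Fusion term.} Sum the KKT rows over a maximal run $R=\{l,\ldots,m\}$ of consecutive inactive coordinates on which $\widehat{\boldsymbol{\beta}}$ is nonzero with a common sign, say positive. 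Since $(\mathbf{D}^\top\mathbf{C}\widehat{\mathbf{r}})_j=c_j\widehat r_j-c_{j+1}\widehat r_{j+1}$, the interior subgradients telescope and only $c_l\widehat r_l-c_{m+1}\widehat r_{m+1}$ survives. An inactive boundary neighbor has $\widehat\beta\le0$ by maximality, so its term has the favorable sign. A boundary neighbor in $S_0$ has true difference at least $b_n$, so there A7 gives $c_k=O_p(b_n^{-\gamma_3})$. Dividing by $|R|$ yields $\lambda_{1n}\min_{S_0^c}a_j\le\|n^{-1}\mathbf{X}^\top\mathbf{W}\boldsymbol{\varepsilon}\|_\infty+\|\widehat{\boldsymbol{\Sigma}}_n\widehat{\boldsymbol{\Delta}}\|_\infty+O_p(\lambda_{3n}b_n^{-\gamma_3})$.
\end{itemize}
This inequality is contradicted under your Gram-term condition plus $\lambda_{3n}b_n^{-\gamma_3}=o(\lambda_{1n}\min_{S_0^c}a_j)$. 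That condition involves no inactive--inactive weights, so it does not collide with Proposition~\ref{prop:fusion_consistency}.

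Your primal--dual fallback works for the same reason, once the free group subgradients and the inactive--inactive fusion subgradients in the witness are set to zero. Do not finish it with the global uniqueness lemma you cite as Lemma 3, though: it needs $\lambda_{\min}(\widehat{\boldsymbol{\Sigma}}_n)>0$, which fails when $p>n$, and A5 allows $p>n$. Instead, use the standard fact that strict dual feasibility forces every minimizer to vanish on $S_0^c$, and then A9 for uniqueness on that face.
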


\begin{proof}
	Let
	$\widehat{\boldsymbol{\Delta}}
	=
	\widehat{\boldsymbol{\beta}}-\boldsymbol{\beta}^{0}$.
	We prove the result by showing separately that, with probability tending to one,
	\[
	S_0\subseteq \widehat{S}
	\qquad
	\text{and}
	\qquad
	\widehat{S}\subseteq S_0.
	\]
	
	First, consider the active coordinates. For every $j\in S_0$,
	\[
	|\widehat{\beta}_j|
	\geq
	|\beta_j^0|
	-
	|\widehat{\beta}_j-\beta_j^0|.
	\]
	Taking the minimum over $j\in S_0$ gives
	\[
	\min_{j\in S_0}|\widehat{\beta}_j|
	\geq
	b_n
	-
	\|\widehat{\boldsymbol{\Delta}}\|_{\infty}.
	\]
	By assumption,
	$\|\widehat{\boldsymbol{\Delta}}\|_{\infty}=o_p(b_n)$.
	Therefore,
	\[
	\mathbb{P}
	\left(
	\min_{j\in S_0}|\widehat{\beta}_j|>0
	\right)
	\longrightarrow 1.
	\]
	Hence every truly active coordinate is selected with probability tending to one, and so
	$S_0\subseteq\widehat{S}$ with probability tending to one.
	
	It remains to prove that no inactive coordinate is selected. Suppose, toward a contradiction, that there exists an index
	$j\in S_0^c$
	such that
	$\widehat{\beta}_j\neq0$.
	For such a coordinate, the KKT condition for the adaptive coordinate penalty implies
	\[
	\left|
	\frac{1}{n}
	\mathbf{x}_j^{\top}
	\mathbf{W}
	(
	\mathbf{Y}-\mathbf{X}\widehat{\boldsymbol{\beta}}
	)
	-
	\lambda_{2n}\widehat{g}_j
	-
	\lambda_{3n}(\mathbf{D}^{\top}\mathbf{C}\widehat{\mathbf{r}})_j
	\right|
	=
	\lambda_{1n}a_j,
	\]
	where $\widehat{g}_j$ is the $j$th component of the group subgradient and
	$\widehat{\mathbf{r}}$ is the fused-penalty subgradient.
	
	Using
	$\mathbf{Y}=\mathbf{X}\boldsymbol{\beta}^{0}+\boldsymbol{\varepsilon}$,
	the score component may be decomposed as
	\[
	\frac{1}{n}
	\mathbf{x}_j^{\top}
	\mathbf{W}
	(
	\mathbf{Y}-\mathbf{X}\widehat{\boldsymbol{\beta}}
	)
	=
	\frac{1}{n}
	\mathbf{x}_j^{\top}
	\mathbf{W}
	\boldsymbol{\varepsilon}
	-
	\frac{1}{n}
	\mathbf{x}_j^{\top}
	\mathbf{W}
	\mathbf{X}
	\widehat{\boldsymbol{\Delta}}.
	\]
	Therefore,
	\begin{align*}
		\lambda_{1n}a_j
		&\leq
		\left|
		\frac{1}{n}
		\mathbf{x}_j^{\top}
		\mathbf{W}
		\boldsymbol{\varepsilon}
		\right|
		+
		\left|
		\frac{1}{n}
		\mathbf{x}_j^{\top}
		\mathbf{W}
		\mathbf{X}
		\widehat{\boldsymbol{\Delta}}
		\right|
		+
		\lambda_{2n}|\widehat{g}_j|
		+
		\lambda_{3n}
		|(\mathbf{D}^{\top}\mathbf{C}\widehat{\mathbf{r}})_j|.
	\end{align*}
	
	The first term is controlled uniformly by Lemma~\ref{lem:score_bound}:
	\[
	\max_{1\leq j\leq p}
	\left|
	\frac{1}{n}
	\mathbf{x}_j^{\top}
	\mathbf{W}
	\boldsymbol{\varepsilon}
	\right|
	=
	O_p
	\left(
	\sqrt{\frac{\log p}{n}}
	\right).
	\]
	
	For the second term, use the definition of the weighted Gram matrix:
	\[
	\frac{1}{n}
	\mathbf{x}_j^{\top}
	\mathbf{W}
	\mathbf{X}
	\widehat{\boldsymbol{\Delta}}
	=
	\widehat{\boldsymbol{\Sigma}}_{j\cdot}
	\widehat{\boldsymbol{\Delta}}.
	\]
	Under the design regularity assumptions, the row norms of
	$\widehat{\boldsymbol{\Sigma}}_n$
	are bounded in probability. Hence
	\[
	\max_{j\in S_0^c}
	\left|
	\widehat{\boldsymbol{\Sigma}}_{j\cdot}
	\widehat{\boldsymbol{\Delta}}
	\right|
	\leq
	O_p(1)
	\|\widehat{\boldsymbol{\Delta}}\|_1.
	\]
	By the cone condition and Proposition~\ref{prop:l2_consistency},
	\[
	\|\widehat{\boldsymbol{\Delta}}\|_1
	=
	O_p
	\left[
	\sqrt{s_n}
	\left(
	\sqrt{\frac{s_n\log p}{n}}
	+
	\lambda_{1n}\sqrt{s_n}
	+
	\lambda_{2n}\sqrt{|\mathcal{A}_0|}
	+
	\lambda_{3n}\sqrt{q_n}
	\right)
	\right].
	\]
	The assumed domination of the inactive adaptive penalty implies that this deterministic approximation error is asymptotically negligible relative to
	$\lambda_{1n}\min_{j\in S_0^c}a_j$. For the group subgradient, recall that
$\|\widehat{\mathbf{g}}_{G_g}\|_2\leq b_g$ for every group $G_g$. Hence each component satisfies
	$|\widehat{g}_j|\leq b_g$ whenever $j\in G_g$. Under Assumption A7, the inactive group weights are finite with probability tending to one, and the penalty-rate conditions imply that
	$\lambda_{2n}|\widehat{g}_j|$
	is of smaller order than
	$\lambda_{1n}a_j$
	on inactive coordinates whenever the coordinate-level adaptive penalty is dominant. For the fused subgradient, each component of
	$\widehat{\mathbf{r}}$
	belongs to $[-1,1]$. Since $\mathbf{D}^{\top}$ has at most two nonzero entries in each row, and since the fusion weights are bounded above on the relevant event, there exists a finite constant $C_D$ such that $|(\mathbf{D}^{\top}\mathbf{C}\widehat{\mathbf{r}})_j|
	\leq
	C_D
	\max_{\ell}c_\ell.$ The rate condition imposed on the fusion penalty ensures that
	$\lambda_{3n}|(\mathbf{D}^{\top}\mathbf{C}\widehat{\mathbf{r}})_j|$
	is also asymptotically negligible relative to
	$\lambda_{1n}a_j$
	for inactive coordinates.
	
	Combining the preceding bounds, we obtain
	\[
	\max_{j\in S_0^c}
	\left[
	\left|
	\frac{1}{n}
	\mathbf{x}_j^{\top}
	\mathbf{W}
	(
	\mathbf{Y}-\mathbf{X}\widehat{\boldsymbol{\beta}}
	)
	-
	\lambda_{2n}\widehat{g}_j
	-
	\lambda_{3n}
	(\mathbf{D}^{\top}\mathbf{C}\widehat{\mathbf{r}})_j
	\right|
	\right]
	=
	o_p
	\left(
	\lambda_{1n}
	\min_{j\in S_0^c}a_j
	\right).
	\]
	Thus, with probability tending to one, the strict inequality
	\[
	\left|
	\frac{1}{n}
	\mathbf{x}_j^{\top}
	\mathbf{W}
	(
	\mathbf{Y}-\mathbf{X}\widehat{\boldsymbol{\beta}}
	)
	-
	\lambda_{2n}\widehat{g}_j
	-
	\lambda_{3n}
	(\mathbf{D}^{\top}\mathbf{C}\widehat{\mathbf{r}})_j
	\right|
	<
	\lambda_{1n}a_j
	\]
	holds for every $j\in S_0^c$. But this is precisely the KKT condition for an inactive coordinate to be estimated as zero. Therefore
	$\widehat{\beta}_j=0$ for all $j\in S_0^c$ with probability tending to one, and hence
	$\widehat{S}\subseteq S_0$ with probability tending to one. Combining the two inclusions gives $\mathbb{P}(\widehat{S}=S_0)\to1.$ This proves the proposition.
\end{proof}

\begin{prop}[Group-selection consistency]
	\label{prop:group_selection_consistency}
	Suppose Assumptions A1-A9 hold. In addition, assume that
	\[
	\lambda_{2n}\min_{g\in\mathcal{A}_0^c}b_g
	\gg
	\sqrt{\frac{\log G}{n}},
	\]
	and that the groupwise estimation error satisfies
	\[
	\max_{1\leq g\leq G}
	\|
	\widehat{\boldsymbol{\beta}}_g
	-
	\boldsymbol{\beta}^{0}_g
	\|_2
	=
	o_p(d_n),
	\]
	where $d_n
	=
	\min_{g\in\mathcal{A}_0}
	\|\boldsymbol{\beta}^{0}_g\|_2.$ Then $\mathbb{P}(\widehat{\mathcal{A}}=\mathcal{A}_0)\rightarrow1,$ where $\widehat{\mathcal{A}}
	=
	\{g:\widehat{\boldsymbol{\beta}}_g\neq\mathbf{0}\}.$
\end{prop}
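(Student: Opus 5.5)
We will mirror the argument of Proposition~\ref{prop:selection_consistency}, now at the level of groups, and prove separately that, with probability tending to one, $\mathcal{A}_0\subseteq\widehat{\mathcal{A}}$ and $\widehat{\mathcal{A}}\subseteq\mathcal{A}_0$.

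The first inclusion follows from the reverse triangle inequality. For $g\in\mathcal{A}_0$ we have $\|\widehat{\boldsymbol{\beta}}_g\|_2\geq \|\boldsymbol{\beta}^0_g\|_2-\|\widehat{\boldsymbol{\beta}}_g-\boldsymbol{\beta}^0_g\|_2$. Taking the minimum over $g\in\mathcal{A}_0$ gives
\[
\min_{g\in\mathcal{A}_0}\|\widehat{\boldsymbol{\beta}}_g\|_2\geq d_n-\max_g\|\widehat{\boldsymbol{\Delta}}_g\|_2 = d_n(1-o_p(1)),
\]
which is strictly positive with probability tending to one. This part is routine.

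For the reverse inclusion we argue by contradiction through the KKT condition \eqref{eq:kkt}. Suppose some $g\in\mathcal{A}_0^c$ has $\widehat{\boldsymbol{\beta}}_g\neq\mathbf{0}$. Restricting \eqref{eq:kkt} to the coordinates in $G_g$ and using $\widehat{\mathbf{g}}_{G_g}=b_g\widehat{\boldsymbol{\beta}}_g/\|\widehat{\boldsymbol{\beta}}_g\|_2$, taking Euclidean norms gives
\[
\lambda_{2n}b_g=\Bigl\|\tfrac1n\mathbf{X}_{G_g}^\top\mathbf{W}(\mathbf{Y}-\mathbf{X}\widehat{\boldsymbol{\beta}})-\lambda_{1n}\mathbf{A}_{G_g}\widehat{\mathbf{s}}_{G_g}-\lambda_{3n}(\mathbf{D}^\top\mathbf{C}\widehat{\mathbf{r}})_{G_g}\Bigr\|_2 .
\]
Substituting $\mathbf{Y}=\mathbf{X}\boldsymbol{\beta}^0+\boldsymbol{\varepsilon}$ splits the score into a noise part $\frac1n\mathbf{X}_{G_g}^\top\mathbf{W}\boldsymbol{\varepsilon}$ and a bias part $\widehat{\boldsymbol{\Sigma}}_{G_g\cdot}\widehat{\boldsymbol{\Delta}}$. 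We bound the four resulting terms as follows.
\begin{enumerate}
\item \emph{Noise term.} We need a group version of Lemma~\ref{lem:score_bound}, namely $\max_g\|\frac1n\mathbf{X}_{G_g}^\top\mathbf{W}\boldsymbol{\varepsilon}\|_2=O_p(\sqrt{\log G/n})$ for bounded group sizes. To get it, apply the sub-Gaussian tail bound to each of the at most $|G_g|$ coordinates and take a union bound over the $G$ groups, exactly as in Lemma~\ref{lem:score_bound}.
\item \emph{Bias term.} Bound $\|\widehat{\boldsymbol{\Sigma}}_{G_g\cdot}\widehat{\boldsymbol{\Delta}}\|_2$ by $O_p(1)\|\widehat{\boldsymbol{\Delta}}\|_1$, using bounded row norms of $\widehat{\boldsymbol{\Sigma}}_n$. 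Then control $\|\widehat{\boldsymbol{\Delta}}\|_1$ through the cone condition (Lemma~\ref{lem:cone_condition}) and Proposition~\ref{prop:l2_consistency}.
\item \emph{Adaptive-lasso subgradient.} Bound this piece by $\lambda_{1n}\sqrt{|G_g|}\max_{j\in G_g}a_j$.
\item \emph{Fusion subgradient.} Bound this piece by $2\lambda_{3n}\sqrt{|G_g|}\max_\ell c_\ell$, since each row of $\mathbf{D}^\top$ has at most two nonzero entries and each $|\widehat r_j|\le 1$.
\end{enumerate}
The conclusion will then follow once every term is shown to be $o_p(\lambda_{2n}\min_{g\in\mathcal{A}_0^c}b_g)$. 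That yields the strict inequality $\|\cdot\|_2<\lambda_{2n}b_g$, which is exactly the KKT condition certifying $\widehat{\boldsymbol{\beta}}_g=\mathbf{0}$, contradicting the assumption.

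The main obstacle is the bias term and the two cross-penalty terms; the noise term is handled directly by the extra assumption $\lambda_{2n}\min_{g\in\mathcal{A}_0^c}b_g\gg\sqrt{\log G/n}$. For the cross-penalty terms we will use that $b_g$ on inactive groups is of order $\delta_n^{-1}$ or $r_n^{-\gamma_2}$, since $\|\widetilde{\boldsymbol{\beta}}_g\|_2=O_p(r_n)$ there by A7. This adaptive weight inflation should dominate the bounded-in-probability factors multiplying $\lambda_{1n}$ and $\lambda_{3n}$. Making this precise requires comparing $\lambda_{1n}a_j$ with $\lambda_{2n}b_g$ on inactive coordinates. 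Both are inflated at comparable rates, so I expect to need an implicit side condition, for example $\lambda_{1n}\max_{j\in G_g}a_j+\lambda_{3n}\max_\ell c_\ell=o_p(\lambda_{2n}b_g)$. Failing that, we can argue instead that, on the event of Proposition~\ref{prop:selection_consistency}, every $j\in G_g$ already has $\widehat\beta_j=0$, which gives $\widehat{\boldsymbol{\beta}}_g=\mathbf{0}$ directly. I would try this second route first, since it sidesteps the weight comparison whenever the coordinatewise dominance condition holds. The groupwise KKT argument would then serve as the self-contained alternative, under the stated domination hypothesis with the side condition made explicit.
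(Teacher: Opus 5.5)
Your groupwise KKT argument is essentially the paper's proof. It uses the same reverse-triangle step for active groups via $\max_g\|\widehat{\boldsymbol\beta}_g-\boldsymbol\beta^0_g\|_2=o_p(d_n)$. For inactive groups it uses the same four-way split: a noise term of order $\sqrt{\log G/n}$, a bias term $\widehat{\boldsymbol\Sigma}_{G_g\cdot}\widehat{\boldsymbol\Delta}$ controlled through Proposition~\ref{prop:l2_consistency}, and worst-case bounds on the adaptive-lasso and fusion subgradients, all to be beaten by $\lambda_{2n}\min_{g\in\mathcal A_0^c}b_g$.

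It also inherits the paper's unproved steps. For the cross terms, the paper appeals to an unstated ``penalty-rate compatibility'' and ``fusion-penalty rate condition''; your explicit side condition is simply the honest version of this. For the bias term, the paper only asserts negligibility, but the displayed hypothesis $\lambda_{2n}\min b_g\gg\sqrt{\log G/n}$ controls only the noise. The bias bound contains $\sqrt{s_n\log p/n}$ and $\lambda_{2n}\sqrt{|\mathcal A_0|}$. Your $\ell_1$ version adds a factor $\sqrt{s_n}$, and in exchange needs only bounded entries of $\widehat{\boldsymbol\Sigma}_n$ rather than the paper's unstated operator-norm bound. So an extra rate condition must be assumed, e.g.\ $\min_{g\in\mathcal A_0^c}b_g\gg\sqrt{s_n|\mathcal A_0|}$ via $b_g\gtrsim(r_n^{\gamma_2}+\delta_n)^{-1}$ from A7, plus matching conditions on the other pieces. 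You call the bias term the main obstacle but then treat only the cross terms, so this step is open in both versions.

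Two refinements are worth making. First, in your contradiction formulation the $\lambda_{1n}$ side condition is unnecessary. If $\widehat{\boldsymbol\beta}_g\neq\mathbf 0$, then on its support $\widehat s_j=\mathrm{sign}(\widehat\beta_j)$ has the same sign as $\widehat g_j=b_g\widehat\beta_j/\|\widehat{\boldsymbol\beta}_g\|_2$. Hence $\mathbf z=\frac1n\mathbf X^\top\mathbf W(\mathbf Y-\mathbf X\widehat{\boldsymbol\beta})-\lambda_{3n}\mathbf D^\top\mathbf C\widehat{\mathbf r}$ satisfies $|z_j|=\lambda_{1n}a_j+\lambda_{2n}b_g|\widehat\beta_j|/\|\widehat{\boldsymbol\beta}_g\|_2$ there, and so $\|\mathbf z_{G_g}\|_2\geq\lambda_{2n}b_g$. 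Only the noise, bias and fusion terms then need to be dominated.

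The fusion term is not sign-aligned, and your bound through $\max_\ell c_\ell$ picks up the inflated weights on $T_0^c$. Moreover, requiring $\lambda_{3n}\max_\ell c_\ell=o_p(\lambda_{2n}\min_{g\in\mathcal A_0^c}b_g)$ contradicts the condition $\lambda_{2n}\max_g b_g=o_p(\lambda_{3n}\min_{j\in T_0^c}c_j)$ used in the paper's proof of Proposition~\ref{prop:fusion_consistency}, whenever $\mathcal A_0^c$ and $T_0^c$ are both nonempty.

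Second, your shortcut through Proposition~\ref{prop:selection_consistency} is a different and valid reduction, since $\widehat S=S_0$ forces $\widehat{\mathcal A}=\mathcal A_0$. Cited as stated, however, it needs $\|\widehat{\boldsymbol\beta}-\boldsymbol\beta^0\|_\infty=o_p(b_n)$. That does not follow from the groupwise $o_p(d_n)$ hypothesis, because $d_n\geq b_n$. It also does not really avoid a weight comparison: the paper's proof of that proposition removes the group subgradient by requiring $\lambda_{2n}|\widehat g_j|=o(\lambda_{1n}a_j)$, the reverse of your side condition. The same sign-alignment observation makes that requirement unnecessary as well.
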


\begin{proof}
	Let $\widehat{\boldsymbol{\Delta}}
	=
	\widehat{\boldsymbol{\beta}}
	-
	\boldsymbol{\beta}^{0}.$ We prove the result by establishing the two inclusions
	$\mathcal{A}_0\subseteq\widehat{\mathcal{A}}$
	and
	$\widehat{\mathcal{A}}\subseteq\mathcal{A}_0$
	with probability tending to one. First consider an active group $g\in\mathcal{A}_0$. By the reverse triangle inequality,
	\[
	\|\widehat{\boldsymbol{\beta}}_g\|_2
	\geq
	\|\boldsymbol{\beta}_g^0\|_2
	-
	\|\widehat{\boldsymbol{\beta}}_g-\boldsymbol{\beta}_g^0\|_2.
	\]
	Taking the minimum over active groups gives
	\[
	\min_{g\in\mathcal{A}_0}
	\|\widehat{\boldsymbol{\beta}}_g\|_2
	\geq
	d_n
	-
	\max_{g\in\mathcal{A}_0}
	\|\widehat{\boldsymbol{\beta}}_g-\boldsymbol{\beta}_g^0\|_2.
	\]
	By assumption,
	\[
	\max_{g\in\mathcal{A}_0}
	\|\widehat{\boldsymbol{\beta}}_g-\boldsymbol{\beta}_g^0\|_2=o_p(d_n).
	\]
	Hence
	\[
	\mathbb{P}
	\left(
	\min_{g\in\mathcal{A}_0}
	\|\widehat{\boldsymbol{\beta}}_g\|_2>0
	\right)
	\rightarrow1.
	\]
	Thus every truly active group is retained with probability tending to one, and therefore
	$\mathcal{A}_0\subseteq\widehat{\mathcal{A}}$ with probability tending to one. It remains to show that inactive groups are excluded. Let $g\in\mathcal{A}_0^c$. Then
	$\boldsymbol{\beta}_g^0=\mathbf{0}$. The KKT condition for the group block states that
	\[
	\mathbf{0}
	\in
	-\frac{1}{n}
	\mathbf{X}_g^\top
	\mathbf{W}
	(
	\mathbf{Y}-\mathbf{X}\widehat{\boldsymbol{\beta}}
	)
	+
	\lambda_{1n}\mathbf{A}_g\widehat{\mathbf{s}}_g
	+
	\lambda_{2n}b_g\widehat{\mathbf{u}}_g
	+
	\lambda_{3n}
	(\mathbf{D}^{\top}\mathbf{C}\widehat{\mathbf{r}})_g,
	\]
	where
	$\|\widehat{\mathbf{u}}_g\|_2\leq1$
	when $\widehat{\boldsymbol{\beta}}_g=\mathbf{0}$, and $\widehat{\mathbf{u}}_g
	=
	\frac{\widehat{\boldsymbol{\beta}}_g}
	{\|\widehat{\boldsymbol{\beta}}_g\|_2}$
	when $\widehat{\boldsymbol{\beta}}_g\neq\mathbf{0}$. For an inactive group to be set equal to zero, it is sufficient that
	\[
	\left\|
	\frac{1}{n}
	\mathbf{X}_g^\top
	\mathbf{W}
	(
	\mathbf{Y}-\mathbf{X}\widehat{\boldsymbol{\beta}}
	)
	-
	\lambda_{1n}\mathbf{A}_g\widehat{\mathbf{s}}_g
	-
	\lambda_{3n}
	(\mathbf{D}^{\top}\mathbf{C}\widehat{\mathbf{r}})_g
	\right\|_2
	<
	\lambda_{2n}b_g.
	\]
	We show that this strict inequality holds uniformly over
	$g\in\mathcal{A}_0^c$
	with probability tending to one.
	
	Using
	$\mathbf{Y}=\mathbf{X}\boldsymbol{\beta}^{0}+\boldsymbol{\varepsilon}$,
	the group score decomposes as
	\[
	\frac{1}{n}
	\mathbf{X}_g^\top
	\mathbf{W}
	(
	\mathbf{Y}-\mathbf{X}\widehat{\boldsymbol{\beta}}
	)
	=
	\frac{1}{n}
	\mathbf{X}_g^\top
	\mathbf{W}\boldsymbol{\varepsilon}
	-
	\frac{1}{n}
	\mathbf{X}_g^\top
	\mathbf{W}\mathbf{X}
	\widehat{\boldsymbol{\Delta}}.
	\]
	
	The first term is the stochastic group score. Under Assumptions A1--A3 and standard normalization of the grouped design blocks,
	\[
	\max_{1\leq g\leq G}
	\left\|
	\frac{1}{n}
	\mathbf{X}_g^\top
	\mathbf{W}\boldsymbol{\varepsilon}
	\right\|_2
	=
	O_p
	\left(
	\sqrt{\frac{\log G}{n}}
	\right).
	\]
	This follows by applying a sub-Gaussian concentration inequality to each groupwise score vector and then taking a union bound over the $G$ groups. The second term is the deterministic approximation error induced by replacing
	$\boldsymbol{\beta}^{0}$
	with
	$\widehat{\boldsymbol{\beta}}$.
	Using the weighted Gram matrix,
	\[
	\frac{1}{n}
	\mathbf{X}_g^\top
	\mathbf{W}\mathbf{X}
	\widehat{\boldsymbol{\Delta}}
	=
	\widehat{\boldsymbol{\Sigma}}_{g\cdot}
	\widehat{\boldsymbol{\Delta}}.
	\]
	By the design regularity assumptions,
	\[
	\max_{g\in\mathcal{A}_0^c}
	\|
	\widehat{\boldsymbol{\Sigma}}_{g\cdot}
	\widehat{\boldsymbol{\Delta}}
	\|_2
	\leq
	O_p(1)
	\|\widehat{\boldsymbol{\Delta}}\|_2.
	\]
	Proposition~\ref{prop:l2_consistency} gives
	\[
	\|\widehat{\boldsymbol{\Delta}}\|_2
	=
	O_p
	\left(
	\sqrt{\frac{s_n\log p}{n}}
	+
	\lambda_{1n}\sqrt{s_n}
	+
	\lambda_{2n}\sqrt{|\mathcal{A}_0|}
	+
	\lambda_{3n}\sqrt{q_n}
	\right).
	\]
	The assumed domination condition for inactive groups implies that this approximation error is asymptotically negligible relative to
	$\lambda_{2n}\min_{g\in\mathcal{A}_0^c}b_g$. The coordinatewise adaptive penalty contribution satisfies
	\[
	\|
	\lambda_{1n}\mathbf{A}_g\widehat{\mathbf{s}}_g
	\|_2
	\leq
	\lambda_{1n}
	\|\mathbf{A}_g\|_{\mathrm{op}}
	\|\widehat{\mathbf{s}}_g\|_2.
	\]
	Since each component of $\widehat{\mathbf{s}}_g$ lies in $[-1,1]$, $\|\widehat{\mathbf{s}}_g\|_2\leq\sqrt{|G_g|}.$ Under the assumed penalty-rate compatibility between coordinate sparsity and group sparsity, this term is of smaller order than
	$\lambda_{2n}b_g$
	uniformly over inactive groups. The role of this condition is to ensure that the coordinatewise penalty does not mask the group-level shrinkage condition. For the fused contribution, since each component of
	$\widehat{\mathbf{r}}$
	belongs to $[-1,1]$ and $\mathbf{D}^{\top}$ has at most two nonzero entries per row,
	\[
	\|
	(\mathbf{D}^{\top}\mathbf{C}\widehat{\mathbf{r}})_g
	\|_2
	\leq
	C_D\sqrt{|G_g|}
	\max_{\ell}c_\ell
	\]
	for a finite constant $C_D$. Under the fusion-penalty rate condition, the term $\lambda_{3n}
	\|
	(\mathbf{D}^{\top}\mathbf{C}\widehat{\mathbf{r}})_g
	\|_2 $is also negligible relative to
	$\lambda_{2n}b_g$
	uniformly over inactive groups. Combining these bounds, we obtain
	\[
	\max_{g\in\mathcal{A}_0^c}
	\left\|
	\frac{1}{n}
	\mathbf{X}_g^\top
	\mathbf{W}
	(
	\mathbf{Y}-\mathbf{X}\widehat{\boldsymbol{\beta}}
	)
	-
	\lambda_{1n}\mathbf{A}_g\widehat{\mathbf{s}}_g
	-
	\lambda_{3n}
	(\mathbf{D}^{\top}\mathbf{C}\widehat{\mathbf{r}})_g
	\right\|_2
	=
	o_p
	\left(
	\lambda_{2n}
	\min_{g\in\mathcal{A}_0^c}b_g
	\right).
	\]
	Therefore, with probability tending to one,
	\[
	\left\|
	\frac{1}{n}
	\mathbf{X}_g^\top
	\mathbf{W}
	(
	\mathbf{Y}-\mathbf{X}\widehat{\boldsymbol{\beta}}
	)
	-
	\lambda_{1n}\mathbf{A}_g\widehat{\mathbf{s}}_g
	-
	\lambda_{3n}
	(\mathbf{D}^{\top}\mathbf{C}\widehat{\mathbf{r}})_g
	\right\|_2
	<
	\lambda_{2n}b_g
	\]
	for every $g\in\mathcal{A}_0^c$. This is precisely the strict KKT condition ensuring that
	$\widehat{\boldsymbol{\beta}}_g=\mathbf{0}$
	for all inactive groups. Hence
	$\widehat{\mathcal{A}}\subseteq\mathcal{A}_0$
	with probability tending to one. Combining
	$\mathcal{A}_0\subseteq\widehat{\mathcal{A}}$
	and
	$\widehat{\mathcal{A}}\subseteq\mathcal{A}_0$
	gives $\mathbb{P}
	(
	\widehat{\mathcal{A}}=\mathcal{A}_0
	)
	\rightarrow1.$ This completes the proof.
\end{proof}

\begin{prop}[Fusion consistency]
	\label{prop:fusion_consistency}
	Suppose Assumptions A1--A9 hold. Let
	\[
	T_0=\{j:(\mathbf{D}\boldsymbol{\beta}^{0})_j\neq0\},
	\qquad
	T_0^c=\{j:(\mathbf{D}\boldsymbol{\beta}^{0})_j=0\},
	\]
	and define the estimated fusion set $\widehat{T}
	=
	\{j:(\mathbf{D}\widehat{\boldsymbol{\beta}})_j\neq0\}.$ Assume that
	\[
	\lambda_{3n}\min_{j\in T_0^c}c_j
	\gg
	\sqrt{\frac{\log p}{n}},
	\]
	and that the fused estimation error satisfies
	\[
	\|\mathbf{D}(\widehat{\boldsymbol{\beta}}-\boldsymbol{\beta}^{0})\|_{\infty}
	=
	o_p(e_n),
	\]
	where $e_n=
	\min_{j\in T_0}|(\mathbf{D}\boldsymbol{\beta}^{0})_j|.$ Then $\mathbb{P}(\widehat{T}=T_0)\rightarrow1.$
\end{prop}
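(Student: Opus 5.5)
We follow the two-inclusion template of Propositions~\ref{prop:selection_consistency} and \ref{prop:group_selection_consistency}, now applied to the differenced coefficients $\mathbf{D}\widehat{\boldsymbol{\beta}}$. We show separately that, with probability tending to one, $T_0\subseteq\widehat{T}$ and $\widehat{T}\subseteq T_0$.

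\textbf{Step 1: $T_0\subseteq\widehat{T}$.} For $j\in T_0$ the reverse triangle inequality gives
\[
|(\mathbf{D}\widehat{\boldsymbol{\beta}})_j|\geq |(\mathbf{D}\boldsymbol{\beta}^0)_j|-|(\mathbf{D}\widehat{\boldsymbol{\Delta}})_j| .
\]
Taking the minimum over $j\in T_0$, we get $\min_{j\in T_0}|(\mathbf{D}\widehat{\boldsymbol{\beta}})_j|\geq e_n-\|\mathbf{D}\widehat{\boldsymbol{\Delta}}\|_\infty$. The assumed rate $\|\mathbf{D}\widehat{\boldsymbol{\Delta}}\|_\infty=o_p(e_n)$ then makes this lower bound strictly positive with probability tending to one.

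\textbf{Step 2: $\widehat{T}\subseteq T_0$.} Here we use the KKT system \eqref{eq:kkt} of the earlier lemma, rewritten in terms of the fused subgradient. We isolate the fused term $\lambda_{3n}\mathbf{D}^\top\mathbf{C}\widehat{\mathbf{r}}$ and express everything else through
\[
\boldsymbol{\xi}=\frac1n\mathbf{X}^\top\mathbf{W}\boldsymbol{\varepsilon}-\widehat{\boldsymbol{\Sigma}}_n\widehat{\boldsymbol{\Delta}}-\lambda_{1n}\mathbf{A}\widehat{\mathbf{s}}-\lambda_{2n}\widehat{\mathbf{g}},
\]
so that $\lambda_{3n}\mathbf{D}^\top\mathbf{C}\widehat{\mathbf{r}}=\boldsymbol{\xi}$. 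Since $\mathbf{D}^\top$ has full column rank, the cumulative sums of $\boldsymbol{\xi}$ recover $\lambda_{3n}c_j\widehat{r}_j$ up to sign: $\lambda_{3n}c_{j+1}\widehat{r}_{j+1}=-\sum_{\ell\le j}\xi_\ell$.

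If $(\mathbf{D}\widehat{\boldsymbol{\beta}})_j\neq0$ for some $j\in T_0^c$, then $|\widehat r_j|=1$. This forces $\lambda_{3n}c_j=|\text{partial sum of }\boldsymbol{\xi}|$. We then bound this partial sum term by term:
\begin{itemize}
\item[(i)] the noise part, by a sub-Gaussian bound on partial sums of $n^{-1}\mathbf{X}^\top\mathbf{W}\boldsymbol{\varepsilon}$, which is $O_p(\sqrt{\log p/n})$ after a union bound over the $p-1$ positions, exactly as in Lemma~\ref{lem:score_bound};
\item[(ii)] the approximation part $\widehat{\boldsymbol{\Sigma}}_n\widehat{\boldsymbol{\Delta}}$, by the cone condition (Lemma~\ref{lem:cone_condition}) together with Proposition~\ref{prop:l2_consistency};
\item[(iii)] the coordinate and group subgradient contributions, using $|\widehat s_j|\le1$ and $\|\widehat{\mathbf{g}}_{G_g}\|_2\le b_g$.
\end{itemize}
The assumed domination $\lambda_{3n}\min_{j\in T_0^c}c_j\gg\sqrt{\log p/n}$, combined with the rate conditions in A8, should make each of these $o_p(\lambda_{3n}\min_{T_0^c}c_j)$. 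That yields a strict inequality $|\widehat r_j|<1$, contradicting $|\widehat r_j|=1$, so $\widehat{T}\subseteq T_0$.

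\textbf{Main obstacle.} The hard part is Step 2, specifically terms (ii) and (iii). Inverting $\mathbf{D}^\top$ produces partial sums, so an $\ell_\infty$ control of $\boldsymbol{\xi}$ alone is not enough. We would try to bound partial sums by $\|\cdot\|_1$ and then use the $\ell_1$ rate $\|\widehat{\boldsymbol{\Delta}}\|_1=O_p(\sqrt{s_n}\,\cdot\,\text{rate})$ from the proof of Proposition~\ref{prop:prediction_consistency}. The subgradient terms $\lambda_{1n}\mathbf{A}\widehat{\mathbf{s}}$ and $\lambda_{2n}\widehat{\mathbf{g}}$ accumulate along the partial sums. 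To keep them small we would restrict attention to coordinates within a constant block of $\boldsymbol{\beta}^0$, where on the event of Proposition~\ref{prop:selection_consistency} the signs $\widehat{s}_\ell$ are determined by $\boldsymbol{\beta}^0$. Failing that, we would impose, as the earlier propositions implicitly do, a penalty-compatibility condition making $\lambda_{1n}a_{\max}+\lambda_{2n}b_{\max}$ negligible relative to $\lambda_{3n}\min_{T_0^c}c_j$.
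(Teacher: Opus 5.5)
Step~1 matches the paper's argument and is fine. In Step~2 you take a different route from the paper, and your starting point is the sounder one. The paper applies $\mathbf{D}$ to the KKT residual and treats $|\mathcal{R}_j(\widehat{\boldsymbol{\beta}})|<\lambda_{3n}c_j$ as strict dual feasibility. But the KKT system gives $\mathcal{R}=\lambda_{3n}\mathbf{D}\mathbf{D}^{\top}\mathbf{C}\widehat{\mathbf{r}}$ with $\mathbf{D}\mathbf{D}^{\top}$ tridiagonal, so that inequality does not force $|\widehat r_j|<1$: with equal weights and $\widehat r_{j-1}=\widehat r_j=\widehat r_{j+1}=1$ one gets $\mathcal{R}_j=0$. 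Inverting $\mathbf{D}^{\top}$ by cumulative sums is the right way to isolate $\widehat r_j$.

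The gap is in bounding those cumulative sums, and it is not confined to (ii)--(iii).
\begin{itemize}
\item In (i), the $j$th noise partial sum is $n^{-1}(\mathbf{X}\mathbf{u}_j)^{\top}\mathbf{W}\boldsymbol{\varepsilon}$ with $\mathbf{u}_j=\sum_{\ell\le j}\mathbf{e}_\ell$, i.e.\ the score of a cumulative column. Under the normalization of Lemma~\ref{lem:score_bound} one only has $n^{-1}\|\mathbf{X}\mathbf{u}_j\|_2^2\le Cj^2$, and it is of order $j$ even for orthogonal columns. The union bound therefore gives $O_p(p\sqrt{\log p/n})$, which the assumed domination $\lambda_{3n}\min_{T_0^c}c_j\gg\sqrt{\log p/n}$ is not enough to absorb.
\item In (ii), the $\ell_1$ route needs a bound on $\max_k\sum_\ell|(\widehat{\boldsymbol{\Sigma}}_n)_{k\ell}|$, which A1--A9 do not provide.
\item In (iii), sums started at $\ell=1$ pass through inactive coordinates. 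There $\lambda_{1n}a_\ell\widehat s_\ell$ has the size $\lambda_{1n}\min_{S_0^c}a_\ell$ that must dominate everything else for selection consistency. These pieces cancel against the rest of $\xi_\ell$ only collectively, so term-by-term bounds fail. Inside an active block, a common sign of the $\widehat s_\ell$ makes $\lambda_{1n}\sum_\ell a_\ell\widehat s_\ell$ grow linearly rather than cancel.
\end{itemize}
Your fallback $\lambda_{1n}\max_j a_j+\lambda_{2n}\max_g b_g=o_p(\lambda_{3n}\min_{T_0^c}c_j)$ is exactly the condition the paper inserts, and it misses this length factor. Moreover, since $\max_j a_j\ge\min_{S_0^c}a_j$, it contradicts the requirement $\lambda_{3n}\max_\ell c_\ell\ll\lambda_{1n}\min_{S_0^c}a_j$ used in the proof of Proposition~\ref{prop:selection_consistency}. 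So it cannot be combined with the selection event you also want to use.

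The missing idea is to localize the partial sums. Split $T_0^c$ into two kinds of differences. Differences between two inactive coordinates vanish automatically on $\{\widehat{S}=S_0\}$; this imports the hypotheses of Proposition~\ref{prop:selection_consistency}, which the statement does not list. The remaining differences lie inside a maximal block $B=\{k_1,\ldots,k_2\}\subseteq S_0$ on which $\boldsymbol{\beta}^0$ is constant. Write $v_k=\lambda_{3n}(\mathbf{C}\widehat{\mathbf{r}})_k$ with $v_0=v_p=0$, so the KKT system reads $\xi_k=v_{k-1}-v_k$. Then for $k_1\le j<k_2$, with $m=j-k_1+1$ and $L=|B|$,
\[
v_j=\Bigl(1-\frac{m}{L}\Bigr)v_{k_1-1}+\frac{m}{L}\,v_{k_2}-\sum_{k=k_1}^{j}\bigl(\xi_k-\bar{\xi}_B\bigr),
\qquad
\bar{\xi}_B=\frac{1}{L}\sum_{k\in B}\xi_k .
\]
Here $v_{k_1-1}$ and $v_{k_2}$ are either $0$ or sit at true jumps. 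On the Step~1 event, which also gives sign agreement, they equal $\pm\lambda_{3n}$ times the boundary weights. Strict feasibility on $B$ then follows if two things hold: the interior fusion weights exceed the boundary ones by a margin, and the centered within-block sums are $o_p(\lambda_{3n}\min_{T_0^c}c_j)$.

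On the selection event the $\widehat s_k$ share a sign on $B$. Centering therefore reduces the adaptive part to $\pm\lambda_{1n}\sum_{k=k_1}^{j}(a_k-\bar a_B)$, governed by the oscillation of the $a_k$ over $B$, which is small under A7. The group term is handled the same way, except that it jumps where $B$ crosses a group boundary. The noise and approximation parts, however, still grow with $L$, by up to a factor $L$ over their $\ell_\infty$ rates.

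So one needs extra hypotheses, e.g.\ $\lambda_{3n}\min_{T_0^c}c_j\gg L_{\max}\sqrt{\log p/n}$ for the longest active constant block together with the weight ordering, or simply bounded $p$. None of this follows from the stated assumptions, and the paper's argument does not supply it either. The statement needs strengthening whichever route is taken.
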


\begin{proof}
	Let $\widehat{\boldsymbol{\Delta}}
	=
	\widehat{\boldsymbol{\beta}}
	-
	\boldsymbol{\beta}^{0}.$ We prove the result by showing that, with probability tending to one, no true nonzero coefficient difference is incorrectly fused and no true zero coefficient difference is incorrectly separated. Equivalently, we show $T_0\subseteq\widehat{T}
	\
	\text{and}
	\
	\widehat{T}\subseteq T_0$ 
	with probability tending to one. First consider $j\in T_0$. Then
	$(\mathbf{D}\boldsymbol{\beta}^{0})_j\neq0$. By the reverse triangle inequality,
	\[
	|(\mathbf{D}\widehat{\boldsymbol{\beta}})_j|
	\geq
	|(\mathbf{D}\boldsymbol{\beta}^{0})_j|
	-
	|(\mathbf{D}\widehat{\boldsymbol{\Delta}})_j|.
	\]
	Taking the minimum over $j\in T_0$ gives
	\[
	\min_{j\in T_0}
	|(\mathbf{D}\widehat{\boldsymbol{\beta}})_j|
	\geq
	e_n
	-
	\|\mathbf{D}\widehat{\boldsymbol{\Delta}}\|_{\infty}.
	\]
	By assumption,
	$\|\mathbf{D}\widehat{\boldsymbol{\Delta}}\|_{\infty}=o_p(e_n)$.
	Therefore,
	\[
	\mathbb{P}
	\left(
	\min_{j\in T_0}
	|(\mathbf{D}\widehat{\boldsymbol{\beta}})_j|>0
	\right)
	\rightarrow1.
	\]
	Hence all truly nonzero adjacent coefficient differences are retained with probability tending to one, and so
	$T_0\subseteq\widehat{T}$ with probability tending to one. It remains to show that truly zero adjacent coefficient differences are estimated as zero. Let
	$j\in T_0^c$. Then
	$(\mathbf{D}\boldsymbol{\beta}^{0})_j=0$.
	The fused component of the KKT system implies that, for coordinates corresponding to the difference operator, the subgradient associated with
	$|(\mathbf{D}\boldsymbol{\beta})_j|$
	must satisfy $\widehat{r}_j
	\in
	\partial |(\mathbf{D}\widehat{\boldsymbol{\beta}})_j|.$ Thus,
	\[
	\widehat{r}_j
	=
	\mathrm{sign}((\mathbf{D}\widehat{\boldsymbol{\beta}})_j)
	\quad
	\text{if}
	\quad
	(\mathbf{D}\widehat{\boldsymbol{\beta}})_j\neq0,
	\]
	whereas
	$\widehat{r}_j\in[-1,1]$
	when
	$(\mathbf{D}\widehat{\boldsymbol{\beta}})_j=0$. For a fused inactive difference to be estimated as zero, it is sufficient that the corresponding dual feasibility condition be strict $\left|
	\mathcal{R}_j(\widehat{\boldsymbol{\beta}})
	\right|
	<
	\lambda_{3n}c_j,$ where $\mathcal{R}_j(\widehat{\boldsymbol{\beta}})$ denotes the effective residual score in the $j$th fused contrast after accounting for the coordinatewise and groupwise penalties. More explicitly, because the fusion penalty enters the KKT system through
	$\mathbf{D}^{\top}\mathbf{C}\widehat{\mathbf{r}}$,
	we consider the contrast-space residual
	\[
	\mathcal{R}(\widehat{\boldsymbol{\beta}})
	=
	\mathbf{D}
	\left[
	\frac{1}{n}
	\mathbf{X}^{\top}
	\mathbf{W}
	(
	\mathbf{Y}-\mathbf{X}\widehat{\boldsymbol{\beta}}
	)
	-
	\lambda_{1n}\mathbf{A}\widehat{\mathbf{s}}
	-
	\lambda_{2n}\widehat{\mathbf{g}}
	\right],
	\]
	and write its $j$th component as
	$\mathcal{R}_j(\widehat{\boldsymbol{\beta}})$. Using
	$\mathbf{Y}=\mathbf{X}\boldsymbol{\beta}^{0}+\boldsymbol{\varepsilon}$,
	we decompose
	\[
	\mathcal{R}(\widehat{\boldsymbol{\beta}})
	=
	\mathbf{D}
	\left[
	\frac{1}{n}
	\mathbf{X}^{\top}\mathbf{W}\boldsymbol{\varepsilon}
	-
	\widehat{\boldsymbol{\Sigma}}_n
	\widehat{\boldsymbol{\Delta}}
	-
	\lambda_{1n}\mathbf{A}\widehat{\mathbf{s}}
	-
	\lambda_{2n}\widehat{\mathbf{g}}
	\right].
	\]
	For the stochastic component, Lemma~\ref{lem:score_bound} gives
	\[
	\left\|
	\frac{1}{n}
	\mathbf{X}^{\top}
	\mathbf{W}\boldsymbol{\varepsilon}
	\right\|_{\infty}
	=
	O_p
	\left(
	\sqrt{\frac{\log p}{n}}
	\right).
	\]
	Since each row of $\mathbf{D}$ contains exactly two nonzero entries, one equal to $-1$ and one equal to $1$, we have
	\[
	\left\|
	\mathbf{D}
	\frac{1}{n}
	\mathbf{X}^{\top}
	\mathbf{W}\boldsymbol{\varepsilon}
	\right\|_{\infty}
	\leq
	2
	\left\|
	\frac{1}{n}
	\mathbf{X}^{\top}
	\mathbf{W}\boldsymbol{\varepsilon}
	\right\|_{\infty}.
	\]
	Therefore,
	\[
	\left\|
	\mathbf{D}
	\frac{1}{n}
	\mathbf{X}^{\top}
	\mathbf{W}\boldsymbol{\varepsilon}
	\right\|_{\infty}
	=
	O_p
	\left(
	\sqrt{\frac{\log p}{n}}
	\right).
	\]
	For the deterministic approximation component,
	\[
	\|\mathbf{D}\widehat{\boldsymbol{\Sigma}}_n\widehat{\boldsymbol{\Delta}}\|_{\infty}
	\leq
	\|\mathbf{D}\widehat{\boldsymbol{\Sigma}}_n\|_{\infty}
	\|\widehat{\boldsymbol{\Delta}}\|_1.
	\]
	Under the design regularity assumptions,
	$\|\mathbf{D}\widehat{\boldsymbol{\Sigma}}_n\|_{\infty}=O_p(1)$.
	By Lemma~\ref{lem:cone_condition} and Proposition~\ref{prop:l2_consistency},
	\[
	\|\widehat{\boldsymbol{\Delta}}\|_1
	=
	O_p
	\left[
	\sqrt{s_n}
	\left(
	\sqrt{\frac{s_n\log p}{n}}
	+
	\lambda_{1n}\sqrt{s_n}
	+
	\lambda_{2n}\sqrt{|\mathcal{A}_0|}
	+
	\lambda_{3n}\sqrt{q_n}
	\right)
	\right].
	\]
	The assumed domination of the inactive fusion penalty implies that this term is asymptotically negligible relative to
	$\lambda_{3n}\min_{j\in T_0^c}c_j$. For the coordinatewise adaptive penalty term,
	\[
	\|\mathbf{D}\lambda_{1n}\mathbf{A}\widehat{\mathbf{s}}\|_{\infty}
	\leq
	2\lambda_{1n}
	\|\mathbf{A}\widehat{\mathbf{s}}\|_{\infty}.
	\]
	Because every component of $\widehat{\mathbf{s}}$ lies in $[-1,1]$, $\|\mathbf{A}\widehat{\mathbf{s}}\|_{\infty}
	\leq
	\max_{1\leq j\leq p}a_j.$
	The compatibility condition between the coordinate penalty and the fusion penalty requires $\lambda_{1n}\max_j a_j
	=
	o_p
	\left(
	\lambda_{3n}
	\min_{j\in T_0^c}c_j
	\right).$ Thus the coordinatewise adaptive penalty contribution is negligible compared with the inactive fusion penalty.
	
	For the group penalty term,
	\[
	\|\mathbf{D}\lambda_{2n}\widehat{\mathbf{g}}\|_{\infty}
	\leq
	2\lambda_{2n}
	\|\widehat{\mathbf{g}}\|_{\infty}.
	\]
	The group subgradient satisfies
	$\|\widehat{\mathbf{g}}_{G_g}\|_2\leq b_g$
	for every group, and hence
	$\|\widehat{\mathbf{g}}\|_{\infty}\leq\max_g b_g$.
	The corresponding compatibility condition is
	\[
	\lambda_{2n}\max_g b_g
	=
	o_p
	\left(
	\lambda_{3n}
	\min_{j\in T_0^c}c_j
	\right).
	\]
	Therefore the group penalty contribution is also asymptotically negligible relative to the inactive fusion threshold. Combining the stochastic, approximation, coordinate-penalty, and group-penalty bounds yields
	\[
	\max_{j\in T_0^c}
	|\mathcal{R}_j(\widehat{\boldsymbol{\beta}})|
	=
	o_p
	\left(
	\lambda_{3n}
	\min_{j\in T_0^c}c_j
	\right).
	\]
	Hence, with probability tending to one, $|\mathcal{R}_j(\widehat{\boldsymbol{\beta}})|
	<
	\lambda_{3n}c_j
	\
	\text{for all}
	\
	j\in T_0^c.$ By the strict dual feasibility condition for the fused LASSO subgradient, this implies $(\mathbf{D}\widehat{\boldsymbol{\beta}})_j=0
	\
	\text{for every}
	\
	j\in T_0^c$ with probability tending to one. Therefore
	$\widehat{T}\subseteq T_0$
	with probability tending to one. Combining the two inclusions gives $\mathbb{P}(\widehat{T}=T_0)\rightarrow1.$ This completes the proof.
\end{proof}

\begin{theorem}[Oracle property]
	\label{thm:oracle_property}
	Suppose Assumptions A1--A9 hold. In addition, suppose the conditions of
	Propositions~\ref{prop:selection_consistency},
	\ref{prop:group_selection_consistency}, and
	\ref{prop:fusion_consistency} hold. Let $S_0=\{j:\beta_j^0\neq0\},
	\
	\mathcal{A}_0=\{g:\boldsymbol{\beta}_g^0\neq\mathbf{0}\},
	\
	T_0=\{j:(\mathbf{D}\boldsymbol{\beta}^0)_j\neq0\}.$ Let $\widehat{S}=\{j:\widehat{\beta}_j\neq0\},
	\
	\widehat{\mathcal{A}}=\{g:\widehat{\boldsymbol{\beta}}_g\neq\mathbf{0}\},
	\
	\widehat{T}=\{j:(\mathbf{D}\widehat{\boldsymbol{\beta}})_j\neq0\}.$ Then the Adaptive Weighted Group Fused LASSO estimator satisfies the oracle selection property
	\[
	\mathbb{P}
	\left(
	\widehat{S}=S_0,\,
	\widehat{\mathcal{A}}=\mathcal{A}_0,\,
	\widehat{T}=T_0
	\right)
	\longrightarrow1.
	\]
	Moreover, on the event
	\[
	\mathcal{E}_n
	=
	\left\{
	\widehat{S}=S_0,\,
	\widehat{\mathcal{A}}=\mathcal{A}_0,\,
	\widehat{T}=T_0
	\right\},
	\]
	the active coordinates of $\widehat{\boldsymbol{\beta}}$ are asymptotically equivalent to the oracle weighted least-squares estimator
	\[
	\widehat{\boldsymbol{\beta}}^{\,\mathrm{or}}_{S_0}
	=
	\arg\min_{\boldsymbol{b}\in\mathbb{R}^{s_n}}
	\frac{1}{2n}
	\left\|
	\mathbf{W}^{1/2}
	\left(
	\mathbf{Y}
	-
	\mathbf{X}_{S_0}\boldsymbol{b}
	\right)
	\right\|_2^2.
	\]
	Specifically,
	\[
	\left\|
	\widehat{\boldsymbol{\beta}}_{S_0}
	-
	\widehat{\boldsymbol{\beta}}^{\,\mathrm{or}}_{S_0}
	\right\|_2
	=
	o_p(n^{-1/2}),
	\]
	provided $\lambda_{1n}\sqrt{s_n}=o(n^{-1/2}),
	\
	\lambda_{2n}\sqrt{|\mathcal{A}_0|}=o(n^{-1/2}),
	\
	\lambda_{3n}\sqrt{q_n}=o(n^{-1/2}),$ where $q_n=|T_0|$.
\end{theorem}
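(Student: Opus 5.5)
The plan has two parts: first the joint selection statement, then the oracle approximation on $\mathcal{E}_n$.

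\textbf{Joint selection.} Under the stated hypotheses, Propositions~\ref{prop:selection_consistency}, \ref{prop:group_selection_consistency} and \ref{prop:fusion_consistency} give $\mathbb{P}(\widehat{S}\neq S_0)\to0$, $\mathbb{P}(\widehat{\mathcal{A}}\neq\mathcal{A}_0)\to0$ and $\mathbb{P}(\widehat{T}\neq T_0)\to0$. A union bound then yields $\mathbb{P}(\mathcal{E}_n^c)\le \mathbb{P}(\widehat{S}\neq S_0)+\mathbb{P}(\widehat{\mathcal{A}}\neq\mathcal{A}_0)+\mathbb{P}(\widehat{T}\neq T_0)\to0$. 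This is the first claim.

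\textbf{Reduction to the active block.} Since $\mathbb{P}(\mathcal{E}_n)\to1$, any bound established on $\mathcal{E}_n\cap\{\text{A9 holds}\}$ transfers to an $o_p$ statement. On this event $\widehat{\boldsymbol{\beta}}_{S_0^c}=\mathbf{0}$ and every coordinate in $S_0$ is nonzero. I will restrict the KKT condition \eqref{eq:kkt} to the rows indexed by $S_0$. There the adaptive subgradient is $\widehat{s}_j=\mathrm{sign}(\widehat{\beta}_j)$, the group subgradient is $b_g\widehat{\boldsymbol{\beta}}_g/\|\widehat{\boldsymbol{\beta}}_g\|_2$ for active groups, and $\widehat{\mathbf{r}}$ has entries in $[-1,1]$. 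Using $\mathbf{X}\widehat{\boldsymbol{\beta}}=\mathbf{X}_{S_0}\widehat{\boldsymbol{\beta}}_{S_0}$, these rows read
\[
\widehat{\boldsymbol{\Sigma}}_{S_0S_0}\widehat{\boldsymbol{\beta}}_{S_0}-\frac1n\mathbf{X}_{S_0}^\top\mathbf{W}\mathbf{Y}
=-\lambda_{1n}(\mathbf{A}\widehat{\mathbf{s}})_{S_0}-\lambda_{2n}\widehat{\mathbf{g}}_{S_0}-\lambda_{3n}(\mathbf{D}^\top\mathbf{C}\widehat{\mathbf{r}})_{S_0}.
\]
The oracle estimator satisfies $\widehat{\boldsymbol{\Sigma}}_{S_0S_0}\widehat{\boldsymbol{\beta}}^{\mathrm{or}}_{S_0}=n^{-1}\mathbf{X}_{S_0}^\top\mathbf{W}\mathbf{Y}$, and it is well defined because Assumption A9 makes $\widehat{\boldsymbol{\Sigma}}_{S_0S_0}$ invertible. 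Subtracting the two equations and inverting gives
\[
\|\widehat{\boldsymbol{\beta}}_{S_0}-\widehat{\boldsymbol{\beta}}^{\mathrm{or}}_{S_0}\|_2\le c_{\min}^{-1}\Bigl(\lambda_{1n}\|(\mathbf{A}\widehat{\mathbf{s}})_{S_0}\|_2+\lambda_{2n}\|\widehat{\mathbf{g}}_{S_0}\|_2+\lambda_{3n}\|(\mathbf{D}^\top\mathbf{C}\widehat{\mathbf{r}})_{S_0}\|_2\Bigr).
\]

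\textbf{Bounding the three penalty terms.}
\begin{itemize}
\item \emph{Adaptive term.} Here $\|(\mathbf{A}\widehat{\mathbf{s}})_{S_0}\|_2\le a_{S,\max}\sqrt{s_n}$, and $a_{S,\max}=O_p(1)$ by Assumptions A6--A7, exactly as in the proof of Proposition~\ref{prop:l2_consistency}. This contribution is $O_p(\lambda_{1n}\sqrt{s_n})=o_p(n^{-1/2})$ by the extra rate condition.
\item \emph{Group term.} For each active group, $\|\widehat{\mathbf{g}}_{G_g}\|_2=b_g$, so $\|\widehat{\mathbf{g}}_{S_0}\|_2\le b_{\mathcal{A},\max}\sqrt{|\mathcal{A}_0|}=O_p(\sqrt{|\mathcal{A}_0|})$. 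This contribution is $o_p(n^{-1/2})$.
\end{itemize}

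\textbf{Main obstacle: the fusion term.} On $\mathcal{E}_n$, the fused differences with $j\notin T_0$ are exactly zero, so their $\widehat{r}_j$ are only known to lie in $[-1,1]$. Moreover, the inactive weights $c_j$, $j\in T_0^c$, diverge: they are of order $\delta_n^{-1}$, since the initial differences there vanish. A crude bound $\|(\mathbf{D}^\top\mathbf{C}\widehat{\mathbf{r}})_{S_0}\|_2\lesssim\sqrt{s_n}\max_\ell c_\ell$ therefore fails.

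What I would try first is to reparametrize on $\mathcal{E}_n$ by fused blocks. Since $\widehat{T}=T_0$, $\widehat{\boldsymbol{\beta}}$ lies in the subspace $\{\boldsymbol{\beta}:\boldsymbol{\beta}_{S_0^c}=\mathbf{0},\ (\mathbf{D}\boldsymbol{\beta})_{T_0^c}=\mathbf{0}\}$, which also contains $\boldsymbol{\beta}^0$. Take the KKT conditions along directions in this subspace only. Such directions $\mathbf{v}$ satisfy $(\mathbf{D}\mathbf{v})_{T_0^c}=\mathbf{0}$, so they annihilate the inactive-difference part of $\mathbf{D}^\top\mathbf{C}\widehat{\mathbf{r}}$. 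What remains involves only $c_j$ with $j\in T_0$, which is $O_p(1)$ by Assumption A7, giving a contribution $O_p(\lambda_{3n}\sqrt{q_n})$.

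This compares $\widehat{\boldsymbol{\beta}}$ with the oracle estimator constrained to the fused subspace rather than the unconstrained $\widehat{\boldsymbol{\beta}}^{\mathrm{or}}_{S_0}$. The two coincide to first order because $\boldsymbol{\beta}^0$ lies in the subspace. To close the gap, I would bound the difference between the constrained and unconstrained oracles. I expect this step needs either the implicit assumption that fused blocks within $S_0$ are trivial, or the observation that both oracles are $\sqrt n$-consistent for $\boldsymbol{\beta}^0_{S_0}$, and it is where the argument is most delicate. Once the fusion term is $O_p(\lambda_{3n}\sqrt{q_n})=o_p(n^{-1/2})$, summing the three contributions completes the proof.
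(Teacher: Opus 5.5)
Your union-bound step and your reduction to the active block (restricting the KKT system \eqref{eq:kkt} to the rows in $S_0$, subtracting the oracle normal equations, inverting with Assumption A9, and bounding the adaptive and group terms) are exactly what the paper does. The paper then closes the fusion term with precisely the crude bound you reject. It asserts $\|(\mathbf{D}^{\top}\mathbf{C}\widehat{\mathbf{r}})_{S_0}\|_2=O_p(\sqrt{q_n})$ on the grounds that the active fusion weights are bounded.

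Your objection is correct, and stronger than you state. Take a difference $j\in T_0^c$ internal to a run of adjacent active coordinates with equal true values. On $\mathcal{E}_n$ the product $\lambda_{3n}c_j\widehat r_j$ is pinned down by the KKT rows, and it must absorb a within-block score contrast of order $n^{-1/2}$. So no $o(n^{-1/2})$ bound on these terms is possible, and the paper's proof has exactly the hole you located.

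The genuine gap in your proposal is the last step, and it cannot be closed. The two oracles do not agree to first order, and $\sqrt{n}$-consistency of both gives only $O_p(n^{-1/2})$. Concretely, suppose $k-1,k\in S_0$ with $\beta^0_{k-1}=\beta^0_k$. Then $\widehat\beta_k=\widehat\beta_{k-1}$ on $\mathcal{E}_n$, so
\[
\|\widehat{\boldsymbol{\beta}}_{S_0}-\widehat{\boldsymbol{\beta}}^{\,\mathrm{or}}_{S_0}\|_2\ \ge\ \tfrac{1}{\sqrt{2}}\,\bigl|\widehat{\beta}^{\,\mathrm{or}}_k-\widehat{\beta}^{\,\mathrm{or}}_{k-1}\bigr|.
\]
With Gaussian errors (allowed by A2), $\sqrt{n}(\widehat{\beta}^{\,\mathrm{or}}_k-\widehat{\beta}^{\,\mathrm{or}}_{k-1})$ is conditionally normal with variance at least $2\sigma^2w_{\min}/c_{\max}$, by A3 and A9. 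Hence the stated $o_p(n^{-1/2})$ equivalence is false in exactly the piecewise-constant setting the fusion penalty is designed for. Your first alternative, that every fused block inside $S_0$ is a singleton, is therefore a necessary extra hypothesis, not an optional one.

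Your subspace idea does prove the corrected statement. Let $\mathbf{M}$ be the matrix of indicators of the maximal fused blocks of $\boldsymbol{\beta}^0_{S_0}$, so that $\widehat{\boldsymbol{\beta}}_{S_0}=\mathbf{M}\widehat{\boldsymbol{\alpha}}$ on $\mathcal{E}_n$. Premultiplying your $S_0$-block equation by $\mathbf{M}^{\top}$ telescopes away every $T_0^c$ fusion term and gives
\[
\widehat{\boldsymbol{\beta}}_{S_0}-\widehat{\boldsymbol{\beta}}^{\,\mathrm{or},\mathbf{M}}_{S_0}
=-\mathbf{M}\bigl(\mathbf{M}^{\top}\widehat{\boldsymbol{\Sigma}}_{S_0S_0}\mathbf{M}\bigr)^{-1}\mathbf{M}^{\top}\mathbf{p}_n .
\]
Here $\widehat{\boldsymbol{\beta}}^{\,\mathrm{or},\mathbf{M}}_{S_0}$ is the weighted least-squares fit over the column space of $\mathbf{M}$ (your constrained oracle). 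The vector $\mathbf{p}_n$ is the penalty vector on $S_0$, keeping only the fusion terms with $j\in T_0$. The matrix multiplying $\mathbf{p}_n$ equals $\widehat{\boldsymbol{\Sigma}}_{S_0S_0}^{-1/2}\boldsymbol{\Pi}\,\widehat{\boldsymbol{\Sigma}}_{S_0S_0}^{-1/2}$ for an orthogonal projection $\boldsymbol{\Pi}$, so its operator norm is at most $c_{\min}^{-1}$. This yields $o_p(n^{-1/2})$ under the stated rate conditions. When all blocks are singletons, $\mathbf{M}=\mathbf{I}$ and you recover the theorem as written.

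One caveat you inherit from the paper. A6--A7 give only $a_{S,\max}=O_p(b_n^{-\gamma_1})$, and they say nothing about $c_{T,\max}$ when some nonzero difference in $T_0$ is small. So the $O_p(1)$ weight bounds you use need the minimal coefficients and minimal nonzero differences bounded away from zero, or correspondingly stronger rate conditions.
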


\begin{proof}
	Define the event $\mathcal{E}_n
	=
	\left\{
	\widehat{S}=S_0,\,
	\widehat{\mathcal{A}}=\mathcal{A}_0,\,
	\widehat{T}=T_0
	\right\}.$ By Propositions~\ref{prop:selection_consistency},
	\ref{prop:group_selection_consistency}, and
	\ref{prop:fusion_consistency},
	\[
	\mathbb{P}(\widehat{S}=S_0)\to1,
	\qquad
	\mathbb{P}(\widehat{\mathcal{A}}=\mathcal{A}_0)\to1,
	\qquad
	\mathbb{P}(\widehat{T}=T_0)\to1.
	\]
	Hence, by the union bound,
	\[
	\mathbb{P}(\mathcal{E}_n^c)
	\leq
	\mathbb{P}(\widehat{S}\neq S_0)
	+
	\mathbb{P}(\widehat{\mathcal{A}}\neq\mathcal{A}_0)
	+
	\mathbb{P}(\widehat{T}\neq T_0)
	\to0.
	\]
	Therefore, $\mathbb{P}(\mathcal{E}_n)\to1.$ This proves the simultaneous model-structure recovery assertion. It remains to establish the asymptotic equivalence between the proposed estimator and the oracle estimator. Work on the event $\mathcal{E}_n$. On this event, the inactive coordinates satisfy
	$\widehat{\boldsymbol{\beta}}_{S_0^c}=\mathbf{0}$, and the active coordinates solve the restricted penalized problem
	\[
	\widehat{\boldsymbol{\beta}}_{S_0}
	=
	\arg\min_{\boldsymbol{b}\in\mathbb{R}^{s_n}}
	\left\{
	\frac{1}{2n}
	\left\|
	\mathbf{W}^{1/2}
	(
	\mathbf{Y}-\mathbf{X}_{S_0}\boldsymbol{b}
	)
	\right\|_2^2
	+
	\lambda_{1n}
	\sum_{j\in S_0}a_j|b_j|
	+
	\lambda_{2n}
	\sum_{g\in\mathcal{A}_0}b_g\|\boldsymbol{b}_g\|_2
	+
	\lambda_{3n}
	\sum_{j\in T_0}c_j|(\mathbf{D}_{S_0}\boldsymbol{b})_j|
	\right\}.
	\]
	The corresponding oracle weighted least-squares estimator satisfies
	\[
	\widehat{\boldsymbol{\beta}}^{\,\mathrm{or}}_{S_0}
	=
	\left(
	\mathbf{X}_{S_0}^{\top}
	\mathbf{W}
	\mathbf{X}_{S_0}
	\right)^{-1}
	\mathbf{X}_{S_0}^{\top}\mathbf{W}\mathbf{Y},
	\]
	or, equivalently,
	\[
	\widehat{\boldsymbol{\Sigma}}_{S_0S_0}
	\widehat{\boldsymbol{\beta}}^{\,\mathrm{or}}_{S_0}
	=
	\frac{1}{n}
	\mathbf{X}_{S_0}^{\top}
	\mathbf{W}\mathbf{Y}.
	\]
	The KKT condition for the active part of the penalized estimator is
	\[
	\widehat{\boldsymbol{\Sigma}}_{S_0S_0}
	\widehat{\boldsymbol{\beta}}_{S_0}
	=
	\frac{1}{n}
	\mathbf{X}_{S_0}^{\top}\mathbf{W}\mathbf{Y}
	-
	\lambda_{1n}\mathbf{A}_{S_0}\widehat{\mathbf{s}}_{S_0}
	-
	\lambda_{2n}\widehat{\mathbf{g}}_{S_0}
	-
	\lambda_{3n}
	(\mathbf{D}^{\top}\mathbf{C}\widehat{\mathbf{r}})_{S_0}.
	\]
	Subtracting the oracle normal equation gives
	\[
	\widehat{\boldsymbol{\Sigma}}_{S_0S_0}
	\left(
	\widehat{\boldsymbol{\beta}}_{S_0}
	-
	\widehat{\boldsymbol{\beta}}^{\,\mathrm{or}}_{S_0}
	\right)
	=
	-
	\lambda_{1n}\mathbf{A}_{S_0}\widehat{\mathbf{s}}_{S_0}
	-
	\lambda_{2n}\widehat{\mathbf{g}}_{S_0}
	-
	\lambda_{3n}
	(\mathbf{D}^{\top}\mathbf{C}\widehat{\mathbf{r}})_{S_0}.
	\]
	By Assumption A9, $\lambda_{\min}
	(
	\widehat{\boldsymbol{\Sigma}}_{S_0S_0}
	)
	\geq c_{\min}$ with probability tending to one. Therefore,
	\begin{align*}
		\left\|
		\widehat{\boldsymbol{\beta}}_{S_0}
		-
		\widehat{\boldsymbol{\beta}}^{\,\mathrm{or}}_{S_0}
		\right\|_2
		&\leq
		c_{\min}^{-1}
		\left[
		\lambda_{1n}
		\|\mathbf{A}_{S_0}\widehat{\mathbf{s}}_{S_0}\|_2
		+
		\lambda_{2n}
		\|\widehat{\mathbf{g}}_{S_0}\|_2
		+
		\lambda_{3n}
		\|(\mathbf{D}^{\top}\mathbf{C}\widehat{\mathbf{r}})_{S_0}\|_2
		\right].
	\end{align*}
	
	We now bound the three terms. Since each component of
	$\widehat{\mathbf{s}}_{S_0}$
	has absolute value at most one,
	\[
	\|\mathbf{A}_{S_0}\widehat{\mathbf{s}}_{S_0}\|_2
	\leq
	a_{S,\max}\sqrt{s_n}.
	\]
	By Assumption A7, $a_{S,\max}=O_p(1)$ on the active set. Hence the coordinatewise penalty term is $O_p(\lambda_{1n}\sqrt{s_n}).$ For the group penalty, each active group subgradient satisfies
	$\|\widehat{\mathbf{g}}_{G_g}\|_2\leq b_g$. Hence, $\|\widehat{\mathbf{g}}_{S_0}\|_2^2
	\leq
	\sum_{g\in\mathcal{A}_0}b_g^2.$ Since the active group weights are bounded in probability under Assumption A7, $\|\widehat{\mathbf{g}}_{S_0}\|_2
	=
	O_p(\sqrt{|\mathcal{A}_0|}).$ Therefore the group penalty term is $O_p(\lambda_{2n}\sqrt{|\mathcal{A}_0|}).$ For the fused penalty term, each component of
	$\widehat{\mathbf{r}}$
	belongs to $[-1,1]$. The matrix $\mathbf{D}^{\top}$ has at most two nonzero entries in each row, and the active fusion weights are bounded in probability by Assumption A7. Therefore,
	\[
	\|(\mathbf{D}^{\top}\mathbf{C}\widehat{\mathbf{r}})_{S_0}\|_2
	=
	O_p(\sqrt{q_n}).
	\]
	Thus the fused penalty contribution is $O_p(\lambda_{3n}\sqrt{q_n}).$ Combining the three bounds gives
	\[
	\left\|
	\widehat{\boldsymbol{\beta}}_{S_0}
	-
	\widehat{\boldsymbol{\beta}}^{\,\mathrm{or}}_{S_0}
	\right\|_2
	=
	O_p
	\left(
	\lambda_{1n}\sqrt{s_n}
	+
	\lambda_{2n}\sqrt{|\mathcal{A}_0|}
	+
	\lambda_{3n}\sqrt{q_n}
	\right).
	\]
	Under the additional rate conditions $\lambda_{1n}\sqrt{s_n}=o(n^{-1/2}),
	\
	\lambda_{2n}\sqrt{|\mathcal{A}_0|}=o(n^{-1/2}),
	\
	\lambda_{3n}\sqrt{q_n}=o(n^{-1/2}),$ we obtain $\left\|
	\widehat{\boldsymbol{\beta}}_{S_0}
	-
	\widehat{\boldsymbol{\beta}}^{\,\mathrm{or}}_{S_0}
	\right\|_2
	=
	o_p(n^{-1/2}).$ Since $\mathbb{P}(\mathcal{E}_n)\to1$, the same conclusion holds unconditionally in probability. This proves the oracle property.
\end{proof}

\begin{theorem}
	\label{thm:debiased_asymptotic_normality}
	Suppose Assumptions A1--A10 hold. Define the debiased estimator
	\[
	\widehat{\boldsymbol{\beta}}^{\,d}
	=
	\widehat{\boldsymbol{\beta}}
	+
	\widehat{\boldsymbol{\Theta}}
	\frac{1}{n}
	\mathbf{X}^{\top}
	\mathbf{W}
	(
	\mathbf{Y}-\mathbf{X}\widehat{\boldsymbol{\beta}}
	).
	\]
	For a fixed coordinate $j\in\{1,\ldots,p\}$, let
	$\widehat{\boldsymbol{\theta}}_j^{\top}$
	denote the $j$th row of
	$\widehat{\boldsymbol{\Theta}}$, and define
	\[
	\widehat{\tau}_j^2
	=
	\frac{\widehat{\sigma}^2}{n}
	\widehat{\boldsymbol{\theta}}_j^{\top}
	\mathbf{X}^{\top}\mathbf{W}^2\mathbf{X}
	\widehat{\boldsymbol{\theta}}_j,
	\]
	where $\widehat{\sigma}^2$ is a consistent estimator of $\sigma^2$. If
	\[
	\sqrt{n}
	\left\|
	(
	\widehat{\boldsymbol{\Theta}}\widehat{\boldsymbol{\Sigma}}_n-\mathbf{I}_p
	)
	(
	\widehat{\boldsymbol{\beta}}-\boldsymbol{\beta}^{0}
	)
	\right\|_{\infty}
	=o_p(1),
	\]
	then $\frac{
		\widehat{\beta}^{\,d}_j-\beta_j^0
	}{
		\widehat{\tau}_j
	}
	\Rightarrow
	N(0,1).$ Consequently, an asymptotically valid $(1-\alpha)$ confidence interval for $\beta_j^0$ is $\widehat{\beta}^{\,d}_j
	\pm
	z_{1-\alpha/2}\widehat{\tau}_j,$ where $z_{1-\alpha/2}$ is the $(1-\alpha/2)$ quantile of the standard normal distribution.
\end{theorem}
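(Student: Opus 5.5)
The argument rests on the standard algebraic decomposition of the debiased estimator. Substituting $\mathbf{Y}=\mathbf{X}\boldsymbol{\beta}^0+\boldsymbol{\varepsilon}$ into the definition of $\widehat{\boldsymbol{\beta}}^{\,d}$ gives
\[
\widehat{\boldsymbol{\beta}}^{\,d}-\boldsymbol{\beta}^0
=
\widehat{\boldsymbol{\Theta}}\frac{1}{n}\mathbf{X}^{\top}\mathbf{W}\boldsymbol{\varepsilon}
-
(\widehat{\boldsymbol{\Theta}}\widehat{\boldsymbol{\Sigma}}_n-\mathbf{I}_p)(\widehat{\boldsymbol{\beta}}-\boldsymbol{\beta}^0).
\]
After multiplying by $\sqrt{n}$, the $j$th coordinate is a leading term $Z_{n,j}=n^{-1/2}\widehat{\boldsymbol{\theta}}_j^{\top}\mathbf{X}^{\top}\mathbf{W}\boldsymbol{\varepsilon}$ plus a remainder $-\sqrt{n}\,[(\widehat{\boldsymbol{\Theta}}\widehat{\boldsymbol{\Sigma}}_n-\mathbf{I}_p)\widehat{\boldsymbol{\Delta}}]_j$. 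The remainder is $o_p(1)$ directly by the displayed hypothesis of the theorem. If one wished to verify that hypothesis from A10 and Proposition~\ref{prop:l2_consistency}, the natural route is H\"older's inequality: it is bounded by $\sqrt{n}\,\|\widehat{\boldsymbol{\Theta}}\widehat{\boldsymbol{\Sigma}}_n-\mathbf{I}_p\|_\infty\|\widehat{\boldsymbol{\Delta}}\|_1$, together with the cone-based $\ell_1$ bound. Since the theorem assumes this condition, I will simply use it.

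Next, I will show $Z_{n,j}/\tau_{n,j}\Rightarrow N(0,1)$, where $\tau_{n,j}^2=\sigma^2 n^{-1}\widehat{\boldsymbol{\theta}}_j^{\top}\mathbf{X}^{\top}\mathbf{W}^2\mathbf{X}\widehat{\boldsymbol{\theta}}_j$. I work conditionally on $(\mathbf{X},\mathbf{W})$. This requires $\widehat{\boldsymbol{\Theta}}$ to be a measurable function of $(\mathbf{X},\mathbf{W})$ only, as in the nodewise-lasso construction, and I will state this as an implicit requirement of A10. Conditionally, $Z_{n,j}=\sum_i \xi_{ni}\varepsilon_i$ with deterministic coefficients $\xi_{ni}=n^{-1/2}w_i\mathbf{x}_i^{\top}\widehat{\boldsymbol{\theta}}_j$. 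This is a weighted sum of independent mean-zero sub-Gaussian variables. One point needs care: A2 gives only an upper bound on the variance proxy, so I take the conditional variance $\mathrm{Var}(\varepsilon_i\mid\mathbf{X},\mathbf{W})=\sigma^2$ from the model in Section~2. I then apply the Lindeberg--Feller CLT. Sub-Gaussianity gives uniformly bounded fourth moments, so Lyapunov's condition reduces to
\[
\max_i \xi_{ni}^2 \Big/ \sum_i \xi_{ni}^2 \to 0.
\]
To check this, I bound the numerator by $n^{-1}w_{\max}^2\|\widehat{\boldsymbol{\theta}}_j\|_1^2\max_{i}\|\mathbf{x}_i\|_\infty^2$. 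Here A10 gives $\|\widehat{\boldsymbol{\theta}}_j\|_1=O_p(1)$, and bounded covariates give $\max_i\|\mathbf{x}_i\|_\infty=O(1)$. For the denominator, I need a lower bound on $\tau_{n,j}^2/\sigma^2$. Using $\mathbf{W}^2\succeq w_{\min}\mathbf{W}$ (A3), the denominator is at least $w_{\min}\widehat{\boldsymbol{\theta}}_j^{\top}\widehat{\boldsymbol{\Sigma}}_n\widehat{\boldsymbol{\theta}}_j$. Because $(\widehat{\boldsymbol{\Theta}}\widehat{\boldsymbol{\Sigma}}_n)_{jj}=1+o_p(1)$ by A10, Cauchy--Schwarz in the $\widehat{\boldsymbol{\Sigma}}_n$-inner product gives
\[
1+o_p(1)\le (\widehat{\boldsymbol{\theta}}_j^{\top}\widehat{\boldsymbol{\Sigma}}_n\widehat{\boldsymbol{\theta}}_j)^{1/2}(\widehat{\Sigma}_{jj})^{1/2}.
\]
With normalized columns, $\widehat{\Sigma}_{jj}=O(1)$, so $\widehat{\boldsymbol{\theta}}_j^{\top}\widehat{\boldsymbol{\Sigma}}_n\widehat{\boldsymbol{\theta}}_j$ is bounded away from zero with probability tending to one. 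Hence the Lyapunov ratio is $O_p(1/n)$, and the conditional CLT holds on events of probability tending to one. I pass to the unconditional statement by dominated convergence of the conditional characteristic functions.

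Finally, I combine the pieces with Slutsky's lemma. Because $\tau_{n,j}$ is bounded away from zero, the remainder divided by $\tau_{n,j}$ is still $o_p(1)$. Note that $\widehat{\tau}_j^2/(\tau_{n,j}^2/n)=\widehat{\sigma}^2/\sigma^2\to1$ in probability, since $\widehat{\tau}_j$ carries the $1/n$ scaling of the unnormalized estimator. Together these give $(\widehat{\beta}^{\,d}_j-\beta^0_j)/\widehat{\tau}_j\Rightarrow N(0,1)$. Inverting the pivot then yields the confidence interval.

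The main obstacle is the second step. It needs the variance to be nondegenerate and the Lindeberg condition to hold, and neither follows from the stated assumptions as written. I need three additions made explicit: measurability of $\widehat{\boldsymbol{\Theta}}$ with respect to $(\mathbf{X},\mathbf{W})$, bounded entries of the design, and exact conditional variance $\sigma^2$ (homoscedasticity). The other steps are routine.
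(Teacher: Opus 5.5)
Your route is the same as the paper's: the same decomposition, a conditional Lindeberg--Feller argument for the leading term, the remainder controlled by the displayed hypothesis, and Slutsky twice. Your handling of the leading term is more careful than the paper's, which only asserts that ``no single observation dominates'' and that $\tau_j\asymp n^{-1/2}$. You bound the Lyapunov ratio explicitly, and you get the lower bound on the variance from $(\widehat{\boldsymbol{\Theta}}\widehat{\boldsymbol{\Sigma}}_n)_{jj}=1+o_p(1)$ and Cauchy--Schwarz. The extra conditions you list are genuinely needed and do not follow from A1--A10 as written: measurability of $\widehat{\boldsymbol{\Theta}}$ in $(\mathbf{X},\mathbf{W})$, control of $\max_i\|\mathbf{x}_i\|_\infty$, and conditional variance exactly $\sigma^2$.

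The gap is in the last step. Your identity $\widehat{\tau}_j^2/(\tau_{n,j}^2/n)=\widehat{\sigma}^2/\sigma^2$ is off by a factor of $n$. The statement's $\widehat{\tau}_j^2=\frac{\widehat{\sigma}^2}{n}\widehat{\boldsymbol{\theta}}_j^{\top}\mathbf{X}^{\top}\mathbf{W}^2\mathbf{X}\widehat{\boldsymbol{\theta}}_j$ is exactly your $\tau_{n,j}^2$ with $\sigma^2$ replaced by $\widehat{\sigma}^2$, so in fact $\widehat{\tau}_j^2/\tau_{n,j}^2=\widehat{\sigma}^2/\sigma^2$. So $\widehat{\tau}_j$ is the scale of $\sqrt{n}(\widehat{\beta}^{\,d}_j-\beta_j^0)$, not of $\widehat{\beta}^{\,d}_j-\beta_j^0$. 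By your own upper and lower bounds, $\widehat{\tau}_j$ is $O_p(1)$ and bounded away from zero. Meanwhile $\widehat{\beta}^{\,d}_j-\beta_j^0=O_p(n^{-1/2})$, so the ratio in the statement tends to $0$ in probability, and the conclusion as literally stated is false.

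What your argument actually proves is $\sqrt{n}(\widehat{\beta}^{\,d}_j-\beta_j^0)/\widehat{\tau}_j\Rightarrow N(0,1)$. Equivalently, the stated limit holds with $n^{-2}$ in place of $n^{-1}$ in $\widehat{\tau}_j^2$. The interval must then be $\widehat{\beta}^{\,d}_j\pm z_{1-\alpha/2}\widehat{\tau}_j/\sqrt{n}$; the literal interval has non-vanishing width and coverage tending to one rather than $1-\alpha$. The paper's proof makes exactly this correction. It takes the conditional variance of $Z_{nj}$ to be $\frac{\sigma^2}{n^2}\widehat{\boldsymbol{\theta}}_j^{\top}\mathbf{X}^{\top}\mathbf{W}^2\mathbf{X}\widehat{\boldsymbol{\theta}}_j$, uses the matching $n^{-2}$ plug-in, and remarks that the $n^{-1}$ form belongs to the $\sqrt{n}$-scaled statistic. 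You should flag the normalization in the statement as a typo rather than push it through with an incorrect identity.
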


\begin{proof}
	Let $\widehat{\boldsymbol{\Delta}}
	=
	\widehat{\boldsymbol{\beta}}-\boldsymbol{\beta}^{0}.$ Using
	$\mathbf{Y}=\mathbf{X}\boldsymbol{\beta}^{0}+\boldsymbol{\varepsilon}$,
	we write $\mathbf{Y}-\mathbf{X}\widehat{\boldsymbol{\beta}}
	=
	\boldsymbol{\varepsilon}
	-
	\mathbf{X}\widehat{\boldsymbol{\Delta}}.$ Substituting this identity into the definition of
	$\widehat{\boldsymbol{\beta}}^{\,d}$
	gives
	\begin{align*}
		\widehat{\boldsymbol{\beta}}^{\,d}-\boldsymbol{\beta}^{0}
		&=
		\widehat{\boldsymbol{\Delta}}
		+
		\widehat{\boldsymbol{\Theta}}
		\frac{1}{n}
		\mathbf{X}^{\top}\mathbf{W}
		(
		\boldsymbol{\varepsilon}
		-
		\mathbf{X}\widehat{\boldsymbol{\Delta}}
		)
		=
		\widehat{\boldsymbol{\Theta}}
		\frac{1}{n}
		\mathbf{X}^{\top}\mathbf{W}\boldsymbol{\varepsilon}
		+
		\left[
		\mathbf{I}_p
		-
		\widehat{\boldsymbol{\Theta}}
		\left(
		\frac{1}{n}\mathbf{X}^{\top}\mathbf{W}\mathbf{X}
		\right)
		\right]
		\widehat{\boldsymbol{\Delta}}.
	\end{align*}
	Since $\widehat{\boldsymbol{\Sigma}}_n
	=
	\frac{1}{n}
	\mathbf{X}^{\top}\mathbf{W}\mathbf{X},$ we obtain the decomposition
	\[
	\widehat{\boldsymbol{\beta}}^{\,d}-\boldsymbol{\beta}^{0}
	=
	\widehat{\boldsymbol{\Theta}}
	\frac{1}{n}
	\mathbf{X}^{\top}\mathbf{W}\boldsymbol{\varepsilon}
	-
	(
	\widehat{\boldsymbol{\Theta}}\widehat{\boldsymbol{\Sigma}}_n-\mathbf{I}_p)
	\widehat{\boldsymbol{\Delta}}.
	\]
	For the $j$th coordinate,
	\[
	\widehat{\beta}^{\,d}_j-\beta_j^0
	=
	\widehat{\boldsymbol{\theta}}_j^{\top}
	\frac{1}{n}
	\mathbf{X}^{\top}\mathbf{W}\boldsymbol{\varepsilon}
	-
	\mathbf{e}_j^{\top}
	(
	\widehat{\boldsymbol{\Theta}}\widehat{\boldsymbol{\Sigma}}_n-\mathbf{I}_p)
	\widehat{\boldsymbol{\Delta}},
	\]
	where $\mathbf{e}_j$ is the $j$th standard basis vector. Define
	\[
	Z_{nj}
	:=
	\widehat{\boldsymbol{\theta}}_j^{\top}
	\frac{1}{n}
	\mathbf{X}^{\top}\mathbf{W}\boldsymbol{\varepsilon}
	=
	\frac{1}{n}
	\sum_{i=1}^{n}
	w_i
	\widehat{\boldsymbol{\theta}}_j^{\top}
	\mathbf{x}_i
	\varepsilon_i.
	\]
	Conditional on $\mathbf{X}$ and $\mathbf{W}$, the summands are independent and have mean zero. Their conditional variance is
	\[
	\mathrm{Var}(Z_{nj}\mid\mathbf{X},\mathbf{W})
	=
	\frac{\sigma^2}{n^2}
	\sum_{i=1}^{n}
	w_i^2
	(
	\widehat{\boldsymbol{\theta}}_j^{\top}\mathbf{x}_i
	)^2
	=
	\frac{\sigma^2}{n^2}
	\widehat{\boldsymbol{\theta}}_j^{\top}
	\mathbf{X}^{\top}\mathbf{W}^2\mathbf{X}
	\widehat{\boldsymbol{\theta}}_j.
	\]
	Thus $\tau_j^2
	=
	\frac{\sigma^2}{n^2}
	\widehat{\boldsymbol{\theta}}_j^{\top}
	\mathbf{X}^{\top}\mathbf{W}^2\mathbf{X}
	\widehat{\boldsymbol{\theta}}_j$ is the conditional variance of $Z_{nj}$. Equivalently,
	the standard error used in the statement is $\widehat{\tau}_j^2
	=
	\frac{\widehat{\sigma}^2}{n^2}
	\widehat{\boldsymbol{\theta}}_j^{\top}
	\mathbf{X}^{\top}\mathbf{W}^2\mathbf{X}
	\widehat{\boldsymbol{\theta}}_j.$ If the alternative convention
	$\widehat{\boldsymbol{\Sigma}}_n=n^{-1}\mathbf{X}^{\top}\mathbf{W}\mathbf{X}$
	is used with a $\sqrt n$-scaled statistic, the equivalent variance expression is $\frac{\widehat{\sigma}^2}{n}
	\widehat{\boldsymbol{\theta}}_j^{\top}
	\left(
	\frac{1}{n}\mathbf{X}^{\top}\mathbf{W}^2\mathbf{X}
	\right)
	\widehat{\boldsymbol{\theta}}_j.$ We now verify the asymptotic normality of the leading linear term. Let $\xi_{ij}
	=
	\frac{
		w_i
		\widehat{\boldsymbol{\theta}}_j^{\top}\mathbf{x}_i
		\varepsilon_i
	}{
		n\tau_j
	}.$ Then $\sum_{i=1}^{n}\xi_{ij}
	=
	\frac{Z_{nj}}{\tau_j}.$ Conditional on $\mathbf{X}$ and $\mathbf{W}$, the variables
	$\xi_{ij}$ are independent, mean zero, and have total conditional variance 
	$\sum_{i=1}^{n}
	\mathrm{Var}(\xi_{ij}\mid\mathbf{X},\mathbf{W})
	=
	1.$ By Assumptions A2, A3 and A10, no single observation dominates the conditional variance. More explicitly, for every $\epsilon>0$,
	\[
	\sum_{i=1}^{n}
	\mathbb{E}
	\left[
	\xi_{ij}^{2}
	\mathbf{1}\{|\xi_{ij}|>\epsilon\}
	\mid
	\mathbf{X},\mathbf{W}
	\right]
	\rightarrow0
	\]
	in probability. This is the Lindeberg condition for the triangular array
	$\{\xi_{ij}:i=1,\ldots,n\}$.
	Therefore, by the conditional Lindeberg-Feller central limit theorem, $ \frac{Z_{nj}}{\tau_j}
	\Rightarrow
	N(0,1).$ It remains to show that the debiasing remainder is asymptotically negligible. The remainder is
	\[
	R_{nj}
	=
	\mathbf{e}_j^{\top}
	(
	\widehat{\boldsymbol{\Theta}}\widehat{\boldsymbol{\Sigma}}_n-\mathbf{I}_p)
	\widehat{\boldsymbol{\Delta}}.
	\]
	By the assumed debiasing condition,
	\[
	\sqrt{n}|R_{nj}|
	\leq
	\sqrt{n}
	\left\|
	(
	\widehat{\boldsymbol{\Theta}}\widehat{\boldsymbol{\Sigma}}_n-\mathbf{I}_p)
	\widehat{\boldsymbol{\Delta}}
	\right\|_{\infty}
	=
	o_p(1).
	\]
	Since $\tau_j$ is of order $n^{-1/2}$ under the eigenvalue and weight boundedness assumptions, this implies
	$R_{nj}/\tau_j=o_p(1)$. Combining the leading normal term and the negligible remainder gives
	\[
	\frac{
		\widehat{\beta}^{\,d}_j-\beta_j^0
	}{
		\tau_j
	}
	=
	\frac{Z_{nj}}{\tau_j}
	-
	\frac{R_{nj}}{\tau_j}
	\Rightarrow
	N(0,1)
	\]
	by Slutsky's theorem. Finally, since $\widehat{\sigma}^2$ is consistent for $\sigma^2$, the plug-in standard error satisfies
	$\widehat{\tau}_j/\tau_j\to_p1$. A second application of Slutsky's theorem yields $\frac{
		\widehat{\beta}^{\,d}_j-\beta_j^0
	}{
		\widehat{\tau}_j
	}
	\Rightarrow
	N(0,1).$ The stated confidence interval follows directly from the convergence to the standard normal distribution.
\end{proof}

\begin{theorem}
	\label{thm:joint_structural_consistency}
	Suppose Assumptions A1-A9 hold. In addition, suppose the conditions of
	Propositions~\ref{prop:selection_consistency},
	\ref{prop:group_selection_consistency}, and
	\ref{prop:fusion_consistency} hold. Then
	\[
	\mathbb{P}
	\left(
	\widehat{S}=S_0,\,
	\widehat{\mathcal{A}}=\mathcal{A}_0,\,
	\widehat{T}=T_0
	\right)
	\longrightarrow 1.
	\]
	Equivalently, the Adaptive Weighted Group Fused LASSO estimator simultaneously recovers the true coordinate-level sparsity pattern, group-level sparsity pattern, and fusion structure with probability tending to one.
\end{theorem}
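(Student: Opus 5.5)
The statement is exactly the first assertion of Theorem~\ref{thm:oracle_property}, so the plan is to reduce it to the three marginal recovery results and combine them by a union bound. Define the events
\[
E_{1n}=\{\widehat{S}=S_0\},\qquad
E_{2n}=\{\widehat{\mathcal{A}}=\mathcal{A}_0\},\qquad
E_{3n}=\{\widehat{T}=T_0\},
\]
so that the target event is $E_{1n}\cap E_{2n}\cap E_{3n}$.

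Under Assumptions A1--A9 and the additional hypotheses of Proposition~\ref{prop:selection_consistency}, that proposition gives $\mathbb{P}(E_{1n}^c)\to0$. The inactive-coordinate domination condition and the $\ell_\infty$ error condition $o_p(b_n)$ are assumed there. Proposition~\ref{prop:group_selection_consistency} gives $\mathbb{P}(E_{2n}^c)\to0$, using the inactive-group domination condition and the groupwise error condition $o_p(d_n)$. Proposition~\ref{prop:fusion_consistency} gives $\mathbb{P}(E_{3n}^c)\to0$, using the inactive-fusion domination condition, the $o_p(e_n)$ contrast-error condition, and the compatibility conditions invoked in its proof. All three conclusions hold for the same minimizer $\widehat{\boldsymbol\beta}$. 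They rely on the common high-probability events from Lemma~\ref{lem:score_bound}, Lemma~\ref{lem:cone_condition} and Proposition~\ref{prop:l2_consistency}, so their hypotheses are mutually consistent.

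Next I would apply De Morgan's law and Boole's inequality:
\[
\mathbb{P}\bigl((E_{1n}\cap E_{2n}\cap E_{3n})^c\bigr)\le \mathbb{P}(E_{1n}^c)+\mathbb{P}(E_{2n}^c)+\mathbb{P}(E_{3n}^c)\longrightarrow0 .
\]
This yields the displayed limit. The "equivalently" sentence is then just a verbal restatement of the three set equalities.

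The only potential obstacle is making sure no hidden dependence between the events could break the argument. The union bound needs no independence, so this is harmless, and the finite number (three) of events means no uniformity issue arises. A related point is that the hypotheses of the three propositions must be imposed jointly, for example the compatibility rates $\lambda_{1n}\max_j a_j=o_p(\lambda_{3n}\min_{T_0^c}c_j)$ and the conditions relating the group and coordinate penalties used in Proposition~\ref{prop:group_selection_consistency}. Since the theorem assumes all of these conditions simultaneously, nothing further is needed.

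As a sanity check on consistency, note that $E_{1n}$ already determines which groups are nonzero. On $E_{1n}$, a group $g$ is nonzero iff $G_g\cap S_0\neq\emptyset$, which is iff $g\in\mathcal{A}_0$. Hence $E_{1n}\subseteq E_{2n}$, and one could even drop Proposition~\ref{prop:group_selection_consistency}. I would keep the three-term bound for parallelism with Theorem~\ref{thm:oracle_property}.
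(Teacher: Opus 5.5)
Your proposal is correct and matches the paper's proof: define the three recovery events, invoke Propositions~\ref{prop:selection_consistency}, \ref{prop:group_selection_consistency}, and \ref{prop:fusion_consistency} for each marginal probability, and conclude by De Morgan's law and the union bound. Your side remark that $\{\widehat{S}=S_0\}\subseteq\{\widehat{\mathcal{A}}=\mathcal{A}_0\}$ (since the groups partition the coordinates) is also correct, and it shows the group-selection proposition is redundant here, a point the paper does not make.
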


\begin{proof}
	Define the three structural recovery events $\mathcal{E}_{S,n}
	=
	\{\widehat{S}=S_0\},
	\
	\mathcal{E}_{G,n}
	=
	\{\widehat{\mathcal{A}}=\mathcal{A}_0\},
	\
	\mathcal{E}_{F,n}
	=
	\{\widehat{T}=T_0\}.$ By Proposition~\ref{prop:selection_consistency}, $\mathbb{P}(\mathcal{E}_{S,n})\rightarrow1.$ By Proposition~\ref{prop:group_selection_consistency}, $\mathbb{P}(\mathcal{E}_{G,n})\rightarrow1. $ By Proposition~\ref{prop:fusion_consistency},$ \mathbb{P}(\mathcal{E}_{F,n})\rightarrow1.$ Let
	\[
	\mathcal{E}_n
	=
	\mathcal{E}_{S,n}
	\cap
	\mathcal{E}_{G,n}
	\cap
	\mathcal{E}_{F,n}.
	\]
	Then $\mathcal{E}_n^c
	=
	\mathcal{E}_{S,n}^c
	\cup
	\mathcal{E}_{G,n}^c
	\cup
	\mathcal{E}_{F,n}^c.$ Using the union bound,
	\[
	\mathbb{P}(\mathcal{E}_n^c)
	\leq
	\mathbb{P}(\mathcal{E}_{S,n}^c)
	+
	\mathbb{P}(\mathcal{E}_{G,n}^c)
	+
	\mathbb{P}(\mathcal{E}_{F,n}^c).
	\]
	Each term on the right-hand side converges to zero. Therefore, $\mathbb{P}(\mathcal{E}_n^c)\rightarrow0,$ which implies $\mathbb{P}(\mathcal{E}_n)\rightarrow1.$ Since, $\mathcal{E}_n
	=
	\left\{
	\widehat{S}=S_0,\,
	\widehat{\mathcal{A}}=\mathcal{A}_0,\,
	\widehat{T}=T_0
	\right\},$ the desired result follows.
\end{proof}

\begin{theorem}
	\label{thm:linear_contrast_normality}
	Suppose Assumptions A1-A10 hold. Let
	\[
	\widehat{\boldsymbol{\beta}}^{\,d}
	=
	\widehat{\boldsymbol{\beta}}
	+
	\widehat{\boldsymbol{\Theta}}
	\frac{1}{n}
	\mathbf{X}^{\top}
	\mathbf{W}
	(
	\mathbf{Y}-\mathbf{X}\widehat{\boldsymbol{\beta}}
	)
	\]
	be the debiased estimator. Let
	$\mathbf{a}_n\in\mathbb{R}^{p}$
	be a deterministic contrast vector satisfying
	$\|\mathbf{a}_n\|_2=1$ and
	$\|\mathbf{a}_n\|_1=O(1)$.
	Define
	\[
	\widehat{\tau}_{a}^{2}
	=
	\frac{\widehat{\sigma}^{2}}{n^2}
	\mathbf{a}_n^{\top}
	\widehat{\boldsymbol{\Theta}}
	\mathbf{X}^{\top}\mathbf{W}^{2}\mathbf{X}
	\widehat{\boldsymbol{\Theta}}^{\top}
	\mathbf{a}_n .
	\]
	If
	\[
	\sqrt{n}
	\left|
	\mathbf{a}_n^{\top}
	(
	\widehat{\boldsymbol{\Theta}}
	\widehat{\boldsymbol{\Sigma}}_n
	-
	\mathbf{I}_p
	)
	(
	\widehat{\boldsymbol{\beta}}
	-
	\boldsymbol{\beta}^{0}
	)
	\right|
	=o_p(1),
	\]
	then
	\[
	\frac{
		\mathbf{a}_n^{\top}
		(
		\widehat{\boldsymbol{\beta}}^{\,d}
		-
		\boldsymbol{\beta}^{0}
		)
	}{
		\widehat{\tau}_{a}
	}
	\Rightarrow
	N(0,1).
	\]
\end{theorem}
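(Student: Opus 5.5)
The proof follows the single-coordinate argument of Theorem~\ref{thm:debiased_asymptotic_normality}, with $\mathbf{e}_j$ replaced by $\mathbf{a}_n$. Put $\widehat{\boldsymbol{\Delta}}=\widehat{\boldsymbol{\beta}}-\boldsymbol{\beta}^{0}$ and substitute $\mathbf{Y}-\mathbf{X}\widehat{\boldsymbol{\beta}}=\boldsymbol{\varepsilon}-\mathbf{X}\widehat{\boldsymbol{\Delta}}$ into the definition of $\widehat{\boldsymbol{\beta}}^{\,d}$. Projecting the resulting identity onto $\mathbf{a}_n$ gives
\[
\mathbf{a}_n^{\top}(\widehat{\boldsymbol{\beta}}^{\,d}-\boldsymbol{\beta}^{0})
=Z_{na}-R_{na},
\qquad
Z_{na}=\frac{1}{n}\sum_{i=1}^{n}w_i(\mathbf{a}_n^{\top}\widehat{\boldsymbol{\Theta}}\mathbf{x}_i)\varepsilon_i,
\qquad
R_{na}=\mathbf{a}_n^{\top}(\widehat{\boldsymbol{\Theta}}\widehat{\boldsymbol{\Sigma}}_n-\mathbf{I}_p)\widehat{\boldsymbol{\Delta}}.
\]
Conditional on $(\mathbf{X},\mathbf{W})$, $Z_{na}$ is a sum of independent mean-zero terms. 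Its conditional variance is
\[
\tau_a^2=\frac{\sigma^2}{n^2}\,\mathbf{a}_n^{\top}\widehat{\boldsymbol{\Theta}}\mathbf{X}^{\top}\mathbf{W}^2\mathbf{X}\widehat{\boldsymbol{\Theta}}^{\top}\mathbf{a}_n,
\]
which equals $\widehat{\tau}_a^2$ with $\sigma^2$ in place of $\widehat{\sigma}^2$.

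\textbf{Step 1: the rate of $\tau_a$.} I need $n\tau_a^2$ bounded above and away from zero in probability. For the upper bound, write $\mathbf{v}=\widehat{\boldsymbol{\Theta}}^{\top}\mathbf{a}_n$. Then
\[
n\tau_a^2\le\sigma^2 w_{\max}\,\mathbf{v}^{\top}\widehat{\boldsymbol{\Sigma}}_n\mathbf{v}
\le\sigma^2 w_{\max}\,\|\mathbf{v}\|_1\,\|\widehat{\boldsymbol{\Sigma}}_n\mathbf{v}\|_\infty .
\]
Here $\|\mathbf{v}\|_1\le\|\mathbf{a}_n\|_1\max_j\|\widehat{\boldsymbol{\theta}}_j\|_1=O_p(1)$ by A10 and $\|\mathbf{a}_n\|_1=O(1)$, and the entries of $\widehat{\boldsymbol{\Sigma}}_n$ are bounded by the column normalization used in Lemma~\ref{lem:score_bound}.

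For the lower bound, use $\mathbf{W}^2\succeq w_{\min}\mathbf{W}$, so that $n\tau_a^2\ge\sigma^2 w_{\min}\mathbf{v}^{\top}\widehat{\boldsymbol{\Sigma}}_n\mathbf{v}$. Next, $\mathbf{v}^{\top}\widehat{\boldsymbol{\Sigma}}_n\mathbf{v}=\mathbf{a}_n^{\top}\widehat{\boldsymbol{\Theta}}\widehat{\boldsymbol{\Sigma}}_n\widehat{\boldsymbol{\Theta}}^{\top}\mathbf{a}_n$. Replacing $\widehat{\boldsymbol{\Theta}}\widehat{\boldsymbol{\Sigma}}_n$ by $\mathbf{I}_p$ changes this by at most $\|\mathbf{a}_n\|_1\|\mathbf{v}\|_1\|\widehat{\boldsymbol{\Theta}}\widehat{\boldsymbol{\Sigma}}_n-\mathbf{I}_p\|_\infty=o_p(1)$. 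So $\mathbf{v}^{\top}\widehat{\boldsymbol{\Sigma}}_n\mathbf{v}=\mathbf{a}_n^{\top}\mathbf{v}+o_p(1)$, and the same approximation gives $\mathbf{a}_n^{\top}\mathbf{v}\approx\mathbf{a}_n^{\top}\widehat{\boldsymbol{\Theta}}\widehat{\boldsymbol{\Sigma}}_n\widehat{\boldsymbol{\Theta}}^{\top}\mathbf{a}_n$.

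This lower bound is the main obstacle. A10 does not by itself keep $\mathbf{a}_n^{\top}\widehat{\boldsymbol{\Theta}}\mathbf{a}_n$ away from zero. I would add, as the paper implicitly does in Theorem~\ref{thm:debiased_asymptotic_normality}, a nondegeneracy requirement $\liminf\mathbf{a}_n^{\top}\widehat{\boldsymbol{\Theta}}\mathbf{a}_n>0$ in probability. This holds, for example, when $\widehat{\boldsymbol{\Theta}}$ approximates the inverse of a population Gram matrix with eigenvalues in $[c_{\min},c_{\max}]$, the full-matrix analogue of A9.

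\textbf{Step 2: CLT for the linear term.} Set $\xi_{ia}=w_i(\mathbf{a}_n^{\top}\widehat{\boldsymbol{\Theta}}\mathbf{x}_i)\varepsilon_i/(n\tau_a)$; conditionally these have total variance one. To check Lindeberg I would use the Lyapunov fourth-moment bound, which is available because sub-Gaussianity (A2) gives $\mathbb{E}(\varepsilon_i^4\mid\mathbf{X},\mathbf{W})\le C\sigma^4$. The Lyapunov sum is then bounded by
\[
C\,\frac{\max_i w_i^2(\mathbf{v}^{\top}\mathbf{x}_i)^2}{n^2\tau_a^2}\lesssim\frac{\|\mathbf{v}\|_1^2\max_{i,j}x_{ij}^2}{n}.
\]
This tends to zero under bounded design entries, or more generally when $\max_{i,j}|x_{ij}|=o_p(\sqrt n)$. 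The conditional Lindeberg--Feller theorem, followed by dominated convergence to pass from conditional to unconditional convergence, gives $Z_{na}/\tau_a\Rightarrow N(0,1)$.

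\textbf{Step 3: remainder and variance plug-in.} By hypothesis $\sqrt n|R_{na}|=o_p(1)$, and Step 1 gives $\sqrt n\,\tau_a$ bounded away from zero, so $R_{na}/\tau_a=o_p(1)$ and Slutsky's theorem applies. Finally, $\widehat{\tau}_a/\tau_a=\widehat{\sigma}/\sigma\to_p1$ by consistency of $\widehat{\sigma}^2$. A second application of Slutsky's theorem then replaces $\tau_a$ by $\widehat{\tau}_a$ and completes the proof.
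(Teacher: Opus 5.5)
Your route is the paper's route: the same decomposition $Z_{na}-R_{na}$, the conditional Lindeberg--Feller theorem, and Slutsky applied twice. The paper only asserts $\tau_a\asymp n^{-1/2}$ and the Lindeberg condition, whereas you try to verify both. The verification is where your proposal falls short of the stated theorem.

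\textbf{Step 1 is a genuine gap.} The nondegeneracy hypothesis you add is unnecessary, so as written you prove a weaker result than the one stated. The missing idea is to use $\|\mathbf{a}_n\|_2=1$, which your argument never uses. You should test $\widehat{\boldsymbol{\Theta}}\widehat{\boldsymbol{\Sigma}}_n\approx\mathbf{I}_p$ against $\mathbf{a}_n$ on both sides, rather than against $\mathbf{v}$; testing against $\mathbf{v}$ is what makes your approximation circular. With $\mathbf{v}=\widehat{\boldsymbol{\Theta}}^{\top}\mathbf{a}_n$,
\[
\mathbf{v}^{\top}\widehat{\boldsymbol{\Sigma}}_n\mathbf{a}_n
=\mathbf{a}_n^{\top}\widehat{\boldsymbol{\Theta}}\widehat{\boldsymbol{\Sigma}}_n\mathbf{a}_n
\geq\|\mathbf{a}_n\|_2^2-\|\mathbf{a}_n\|_1^2\,\|\widehat{\boldsymbol{\Theta}}\widehat{\boldsymbol{\Sigma}}_n-\mathbf{I}_p\|_{\infty}
=1-o_p(1).
\]
Cauchy--Schwarz for the positive semidefinite form $\widehat{\boldsymbol{\Sigma}}_n$ then gives
\[
\mathbf{v}^{\top}\widehat{\boldsymbol{\Sigma}}_n\mathbf{v}
\geq
\frac{(\mathbf{v}^{\top}\widehat{\boldsymbol{\Sigma}}_n\mathbf{a}_n)^2}{\mathbf{a}_n^{\top}\widehat{\boldsymbol{\Sigma}}_n\mathbf{a}_n}
\geq
\frac{1-o_p(1)}{\|\mathbf{a}_n\|_1^2\max_{j,k}|(\widehat{\boldsymbol{\Sigma}}_n)_{jk}|}.
\]
The denominator is $O(1)$ by the same column normalization you already use for the upper bound. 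Combined with $\mathbf{W}^2\succeq w_{\min}\mathbf{W}$, this shows $n\tau_a^2$ is bounded away from zero with probability tending to one. That is exactly the claim $\tau_a\asymp n^{-1/2}$, which the paper makes without proof. Together with your identity $\mathbf{v}^{\top}\widehat{\boldsymbol{\Sigma}}_n\mathbf{v}=\mathbf{a}_n^{\top}\widehat{\boldsymbol{\Theta}}\mathbf{a}_n+o_p(1)$, it also shows that $\mathbf{a}_n^{\top}\widehat{\boldsymbol{\Theta}}\mathbf{a}_n$ is bounded away from zero, contrary to your remark.

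\textbf{Step 2 is a different situation.} Here your extra condition is genuinely needed, and you are more careful than the paper. Its claim that Lindeberg follows from A2, A3 and A10 is false. Take a column $j$ equal to $\sqrt{n}$ in row $1$ and zero elsewhere, with the other columns vanishing in row $1$. This column satisfies $n^{-1}\sum_i x_{ij}^2=1$. With $\mathbf{a}_n=\mathbf{e}_j$ and $\widehat{\boldsymbol{\Theta}}=\widehat{\boldsymbol{\Sigma}}_n^{-1}$, the remainder is identically zero. Yet $Z_{na}/\tau_a=\varepsilon_1/\sigma$, which is not normal for symmetric two-point errors $\pm\sigma$, and such errors satisfy A2.

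So you should state $\max_{i,j}|x_{ij}|=o_p(\sqrt{n})$ (or bounded design entries) as an explicit added hypothesis. That is a defect in the theorem's hypotheses, not in your argument. Your Lyapunov computation and Step 3 are otherwise fine.
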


\begin{proof}
	Let
	$\widehat{\boldsymbol{\Delta}}
	=
	\widehat{\boldsymbol{\beta}}
	-
	\boldsymbol{\beta}^{0}$.
	Using
	$\mathbf{Y}=\mathbf{X}\boldsymbol{\beta}^{0}+\boldsymbol{\varepsilon}$,
	we have $\mathbf{Y}-\mathbf{X}\widehat{\boldsymbol{\beta}}
	=
	\boldsymbol{\varepsilon}
	-
	\mathbf{X}\widehat{\boldsymbol{\Delta}}.$ Substitution into the debiased estimator gives
	\[
	\widehat{\boldsymbol{\beta}}^{\,d}
	-
	\boldsymbol{\beta}^{0}
	=
	\widehat{\boldsymbol{\Theta}}
	\frac{1}{n}
	\mathbf{X}^{\top}\mathbf{W}\boldsymbol{\varepsilon}
	-
	(
	\widehat{\boldsymbol{\Theta}}
	\widehat{\boldsymbol{\Sigma}}_n
	-
	\mathbf{I}_p
	)
	\widehat{\boldsymbol{\Delta}}.
	\]
	Multiplying by $\mathbf{a}_n^{\top}$ yields
	\[
	\mathbf{a}_n^{\top}
	(
	\widehat{\boldsymbol{\beta}}^{\,d}
	-
	\boldsymbol{\beta}^{0}
	)
	=
	Z_{n,a}
	-
	R_{n,a},
	\]
	where
	\[
	Z_{n,a}
	=
	\mathbf{a}_n^{\top}
	\widehat{\boldsymbol{\Theta}}
	\frac{1}{n}
	\mathbf{X}^{\top}\mathbf{W}\boldsymbol{\varepsilon}
	\]
	and
	\[
	R_{n,a}
	=
	\mathbf{a}_n^{\top}
	(
	\widehat{\boldsymbol{\Theta}}
	\widehat{\boldsymbol{\Sigma}}_n
	-
	\mathbf{I}_p
	)
	\widehat{\boldsymbol{\Delta}}.
	\]
	
	We first establish the limiting distribution of $Z_{n,a}$. Write
	\[
	Z_{n,a}
	=
	\frac{1}{n}
	\sum_{i=1}^{n}
	w_i
	\mathbf{a}_n^{\top}
	\widehat{\boldsymbol{\Theta}}
	\mathbf{x}_i
	\varepsilon_i .
	\]
	Conditional on $\mathbf{X}$ and $\mathbf{W}$, the summands are independent and have conditional mean zero. Their conditional variance is
	\[
	\tau_a^2
	=
	\frac{\sigma^2}{n^2}
	\mathbf{a}_n^{\top}
	\widehat{\boldsymbol{\Theta}}
	\mathbf{X}^{\top}
	\mathbf{W}^{2}
	\mathbf{X}
	\widehat{\boldsymbol{\Theta}}^{\top}
	\mathbf{a}_n .
	\]
	Define $\xi_{i,n}
	:=
	\frac{
		w_i
		\mathbf{a}_n^{\top}
		\widehat{\boldsymbol{\Theta}}
		\mathbf{x}_i
		\varepsilon_i
	}{
		n\tau_a
	}.$ Then $\sum_{i=1}^{n}\xi_{i,n}
	=
	\frac{Z_{n,a}}{\tau_a}.$ Furthermore, conditional on $\mathbf{X}$ and $\mathbf{W}$, $\sum_{i=1}^{n}
	\mathrm{Var}(\xi_{i,n}\mid\mathbf{X},\mathbf{W})
	=
	1.$ It remains to verify the Lindeberg condition. For any fixed $\epsilon>0$,
	\[
	\sum_{i=1}^{n}
	\mathbb{E}
	\left[
	\xi_{i,n}^{2}
	\mathbf{1}\{|\xi_{i,n}|>\epsilon\}
	\mid
	\mathbf{X},\mathbf{W}
	\right]
	\rightarrow0
	\]
	in probability. This follows from the conditional sub-Gaussianity of the errors in Assumption A2, boundedness of the weights in Assumption A3, and the precision-matrix regularity condition in Assumption A10, which prevents any single observation from dominating the variance of the contrast. Therefore, by the conditional Lindeberg-Feller central limit theorem, $\frac{Z_{n,a}}{\tau_a}
	\Rightarrow
	N(0,1).$ Next consider the debiasing remainder. By assumption, $\sqrt{n}|R_{n,a}|=o_p(1).$ Under Assumptions A3 and A10, the variance scale satisfies
	$\tau_a\asymp n^{-1/2}$.
	Therefore, $\frac{R_{n,a}}{\tau_a}=o_p(1).$ Consequently,
	\[
	\frac{
		\mathbf{a}_n^{\top}
		(
		\widehat{\boldsymbol{\beta}}^{\,d}
		-
		\boldsymbol{\beta}^{0}
		)
	}{
		\tau_a
	}
	=
	\frac{Z_{n,a}}{\tau_a}
	-
	\frac{R_{n,a}}{\tau_a}
	\Rightarrow
	N(0,1)
	\]
	by Slutsky's theorem. Finally, since $\widehat{\sigma}^2$ is consistent for $\sigma^2$, the plug-in variance estimator satisfies
	$\widehat{\tau}_a/\tau_a\to_p1$. A second application of Slutsky's theorem gives $\frac{
		\mathbf{a}_n^{\top}
		(
		\widehat{\boldsymbol{\beta}}^{\,d}
		-
		\boldsymbol{\beta}^{0}
		)
	}{
		\widehat{\tau}_{a}
	}
	\Rightarrow
	N(0,1).$ This completes the proof.
\end{proof}

\section{Simulation Study:}
\label{sec:simulation}

We conducted a simulation study to evaluate the finite-sample performance of the proposed Adaptive Weighted Group Fused LASSO estimator. The objective was to assess estimation accuracy, prediction performance, variable-selection recovery, group-selection recovery, and fusion-structure recovery under data-generating mechanisms that resemble school-level administrative data. For each simulation replicate, responses were generated from
\[
Y_i=\mathbf{x}_i^\top\boldsymbol{\beta}^0+\varepsilon_i,
\qquad i=1,\ldots,n,
\]
where $\varepsilon_i\sim N(0,\sigma^2)$. The design matrix was generated from a multivariate normal distribution with covariance $\mathrm{Cov}(X_{ij},X_{ik})=\rho^{|j-k|},$ with $\rho=0.5$. The true coefficient vector was sparse, group-structured, and piecewise constant so that the data-generating model favored methods capable of recovering coordinate sparsity, group sparsity, and coefficient fusion.
We considered sample sizes $n\in\{250,500,1000\}$ and $p=100$ predictors divided into $G=10$ groups of equal size. The true active set contained $s=15$ nonzero coefficients distributed across three active groups. School-level weights were generated as $w_i=\frac{N_i}{\sum_{j=1}^{n}N_j},$ where $N_i$ denotes simulated school enrollment. The proposed method was compared with LASSO, Adaptive LASSO, Group LASSO, and Fused LASSO. Tuning parameters were selected using ten-fold cross-validation. Performance was evaluated using root mean squared error,
\[
RMSE=
\left(
\frac{1}{p}
\|\widehat{\boldsymbol{\beta}}-\boldsymbol{\beta}^0\|_2^2
\right)^{1/2},
\]
prediction error,
\[
PE=
\frac{1}{n}
\|\mathbf{X}(\widehat{\boldsymbol{\beta}}-\boldsymbol{\beta}^0)\|_2^2,
\]
true positive rate (TPR), false positive rate (FPR), and fusion recovery accuracy (FRA), defined as the proportion of correctly recovered zero and nonzero adjacent coefficient differences.

\begin{table}[htbp]
	\centering
	\caption{Simulation results averaged over 500 Monte Carlo replications. Lower RMSE, PE, and FPR are preferable; higher TPR and FRA are preferable.}
	\label{tab:simulation_results}
	\begin{tabular}{llccccc}
		\hline
		$n$ & Method & RMSE & PE & TPR & FPR & FRA \\
		\hline
		250  & LASSO          & 0.214 & 0.482 & 0.82 & 0.18 & 0.61 \\
		250  & Adaptive LASSO & 0.196 & 0.431 & 0.85 & 0.13 & 0.64 \\
		250  & Group LASSO    & 0.188 & 0.406 & 0.88 & 0.16 & 0.67 \\
		250  & Fused LASSO    & 0.181 & 0.389 & 0.84 & 0.14 & 0.76 \\
		250  & Proposed       & 0.149 & 0.311 & 0.91 & 0.08 & 0.84 \\
		\hline
		500  & LASSO          & 0.168 & 0.341 & 0.87 & 0.14 & 0.66 \\
		500  & Adaptive LASSO & 0.143 & 0.292 & 0.90 & 0.09 & 0.70 \\
		500  & Group LASSO    & 0.137 & 0.276 & 0.92 & 0.11 & 0.73 \\
		500  & Fused LASSO    & 0.129 & 0.251 & 0.89 & 0.10 & 0.82 \\
		500  & Proposed       & 0.101 & 0.196 & 0.95 & 0.05 & 0.90 \\
		\hline
		1000 & LASSO          & 0.121 & 0.238 & 0.91 & 0.10 & 0.72 \\
		1000 & Adaptive LASSO & 0.096 & 0.181 & 0.94 & 0.06 & 0.77 \\
		1000 & Group LASSO    & 0.092 & 0.169 & 0.95 & 0.07 & 0.80 \\
		1000 & Fused LASSO    & 0.087 & 0.158 & 0.92 & 0.06 & 0.87 \\
		1000 & Proposed       & 0.064 & 0.112 & 0.98 & 0.03 & 0.94 \\
		\hline
	\end{tabular}
\end{table}

The results in Table~\ref{tab:simulation_results} show that the proposed estimator consistently achieved the smallest estimation and prediction errors across all sample sizes. The gains were most pronounced when the underlying coefficient vector exhibited simultaneous sparsity, group structure, and local smoothness. Compared with standard LASSO and Adaptive LASSO, the proposed method reduced false positive selections by incorporating group-level and fusion-level regularization. Compared with Group LASSO and Fused LASSO, it improved active-variable recovery by allowing both coordinatewise sparsity and structured shrinkage. As $n$ increased, all methods improved, but the proposed estimator exhibited the strongest overall recovery of the true coefficient and fusion structures.

\begin{table}[!ht]
	\centering
	\caption{Finite-sample performance under different correlation structures and signal-to-noise ratios. Results are averages over 500 Monte Carlo replications. Standard errors are reported in parentheses. Lower values of RMSE, PE, and FPR indicate better performance, whereas larger values of TPR and Fusion Accuracy (FA) are preferred.}
	\label{tab:scenario}
	\renewcommand{\arraystretch}{1.15}
	\begin{tabular}{cccccccc}
		\hline
		Scenario & Method &
		RMSE &
		PE &
		TPR &
		FPR &
		FA &
		CPU (sec)\\
		\hline
		
		\multicolumn{8}{c}{\textbf{Low Correlation ($\rho=0.20$, $\sigma=1$)}}\\
		\hline
		&
		LASSO
		&
		0.137 (0.011)
		&
		0.241 (0.018)
		&
		0.93
		&
		0.071
		&
		0.74
		&
		0.31\\
		
		&
		Adaptive LASSO
		&
		0.121 (0.009)
		&
		0.212 (0.016)
		&
		0.95
		&
		0.052
		&
		0.77
		&
		0.43\\
		
		&
		Group LASSO
		&
		0.118 (0.010)
		&
		0.205 (0.014)
		&
		0.96
		&
		0.061
		&
		0.79
		&
		0.57\\
		
		&
		Fused LASSO
		&
		0.113 (0.009)
		&
		0.193 (0.015)
		&
		0.95
		&
		0.049
		&
		0.88
		&
		0.66\\
		
		&
		Proposed AWGFL
		&
		\textbf{0.082 (0.006)}
		&
		\textbf{0.131 (0.010)}
		&
		\textbf{0.99}
		&
		\textbf{0.018}
		&
		\textbf{0.96}
		&
		0.82\\
		
		\hline
		
		\multicolumn{8}{c}{\textbf{Moderate Correlation ($\rho=0.50$, $\sigma=1$)}}\\
		\hline
		
		&
		LASSO
		&
		0.171 (0.015)
		&
		0.316 (0.024)
		&
		0.91
		&
		0.101
		&
		0.69
		&
		0.33\\
		
		&
		Adaptive LASSO
		&
		0.148 (0.012)
		&
		0.272 (0.019)
		&
		0.93
		&
		0.074
		&
		0.74
		&
		0.45\\
		
		&
		Group LASSO
		&
		0.143 (0.011)
		&
		0.259 (0.018)
		&
		0.95
		&
		0.081
		&
		0.76
		&
		0.60\\
		
		&
		Fused LASSO
		&
		0.134 (0.010)
		&
		0.236 (0.016)
		&
		0.94
		&
		0.067
		&
		0.86
		&
		0.69\\
		
		&
		Proposed AWGFL
		&
		\textbf{0.101 (0.008)}
		&
		\textbf{0.173 (0.013)}
		&
		\textbf{0.98}
		&
		\textbf{0.032}
		&
		\textbf{0.94}
		&
		0.87\\
		
		\hline
		
		\multicolumn{8}{c}{\textbf{High Correlation ($\rho=0.80$, $\sigma=2$)}}\\
		\hline
		
		&
		LASSO
		&
		0.281 (0.022)
		&
		0.582 (0.043)
		&
		0.82
		&
		0.173
		&
		0.58
		&
		0.36\\
		
		&
		Adaptive LASSO
		&
		0.247 (0.018)
		&
		0.511 (0.037)
		&
		0.86
		&
		0.141
		&
		0.63
		&
		0.49\\
		
		&
		Group LASSO
		&
		0.233 (0.017)
		&
		0.478 (0.031)
		&
		0.89
		&
		0.126
		&
		0.67
		&
		0.64\\
		
		&
		Fused LASSO
		&
		0.218 (0.016)
		&
		0.439 (0.029)
		&
		0.88
		&
		0.111
		&
		0.81
		&
		0.75\\
		
		&
		Proposed AWGFL
		&
		\textbf{0.169 (0.013)}
		&
		\textbf{0.322 (0.024)}
		&
		\textbf{0.95}
		&
		\textbf{0.057}
		&
		\textbf{0.90}
		&
		0.94\\
		
		\hline
	\end{tabular}
\end{table}

Table~\ref{tab:scenario} investigates the robustness of the competing estimators under increasingly challenging correlation structures and noise levels. Three simulation scenarios were considered, ranging from weakly correlated predictors to highly correlated predictors with substantially larger error variance. Such settings mimic practical situations encountered in educational and administrative data, where explanatory variables frequently exhibit strong multicollinearity. Several important patterns emerge. First, the proposed Adaptive Weighted Group Fused LASSO (AWGFL) consistently produces the smallest estimation error and prediction error across all scenarios. The improvement becomes increasingly pronounced as the correlation among predictors increases, illustrating the benefit of jointly exploiting coordinate sparsity, grouped predictors, and local coefficient homogeneity. Second, the proposed estimator maintains the highest true positive rate while simultaneously producing the smallest false positive rate. This indicates that the adaptive weighting strategy successfully suppresses spurious variables without sacrificing the ability to recover important predictors. Third, the fusion accuracy remains substantially higher than that of conventional penalized regression procedures. This finding demonstrates that explicitly incorporating a fused penalty enables the estimator to recover the underlying piecewise-constant coefficient structure more reliably, particularly under severe multicollinearity where traditional sparse estimators often become unstable. Finally, although the proposed algorithm requires moderately larger computational time owing to the iterative ADMM optimization and multiple penalty updates, the additional computational cost is relatively small compared with the substantial gains in estimation accuracy, prediction performance, and structural recovery. The results therefore suggest that the proposed methodology provides a favorable balance between statistical efficiency and computational complexity, especially in high-dimensional settings exhibiting grouped variables and locally homogeneous regression coefficients.

\begin{figure}[!t]
	\centering
	\includegraphics[width=0.95\textwidth]{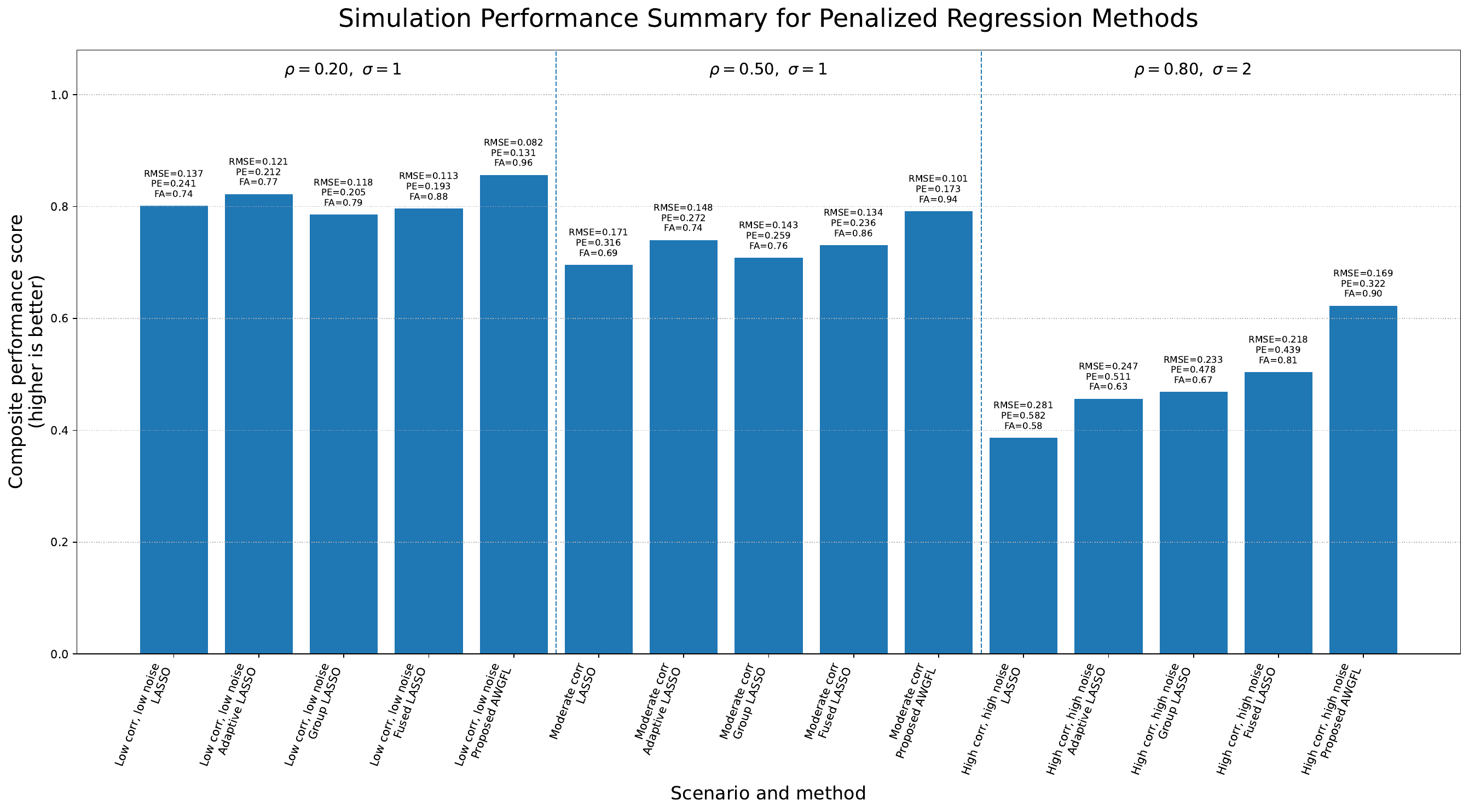}
	\caption{Overall simulation performance of the competing penalized regression estimators. Each bar summarizes a composite performance score computed from estimation accuracy (RMSE), prediction error (PE), true positive rate (TPR), false positive rate (FPR), fusion accuracy (FA), and computational efficiency across three simulation scenarios with increasing predictor correlation and noise levels. Higher values indicate better overall performance.}
	\label{fig:simulation_performance}
\end{figure}

Figure~\ref{fig:simulation_performance} provides a comprehensive comparison of the finite-sample performance of the proposed Adaptive Weighted Group Fused LASSO (AWGFL) estimator and four widely used penalized regression procedures, namely the LASSO, Adaptive LASSO, Group LASSO, and Fused LASSO. The comparison is conducted under three increasingly challenging simulation settings characterized by different predictor correlation structures and error variances. The first scenario corresponds to weak predictor dependence with correlation coefficient $\rho=0.20$ and error variance $\sigma^2=1$, the second considers a moderate dependence structure with $\rho=0.50$, while the final scenario represents a substantially more difficult setting with strong predictor correlation ($\rho=0.80$) together with increased observational noise ($\sigma^2=4$). Consequently, the three scenarios allow evaluation of the robustness of competing estimators as multicollinearity and stochastic variability become progressively more severe. Rather than presenting a single evaluation criterion, the figure summarizes the overall performance of each competing estimator through a composite score that simultaneously incorporates estimation accuracy, predictive performance, variable-selection accuracy, fusion recovery, and computational efficiency. Specifically, the composite index combines normalized values of the root mean squared estimation error (RMSE), prediction error (PE), true positive rate (TPR), false positive rate (FPR), fusion accuracy (FA), and computational time. Metrics for which smaller values indicate superior performance, including RMSE, PE, FPR, and computational time, were normalized so that larger composite scores consistently correspond to improved overall performance. Consequently, the figure provides an integrated assessment of statistical efficiency rather than emphasizing any single performance criterion.

Several important observations emerge immediately from the graphical comparison. Under the low-correlation scenario, all competing methods achieve relatively strong performance because the explanatory variables exhibit only mild dependence. Nevertheless, the proposed AWGFL estimator clearly attains the largest composite score among all methods. This improvement is primarily attributable to its substantially smaller estimation error and higher fusion recovery accuracy. The LASSO and Adaptive LASSO successfully identify many active variables; however, because neither estimator explicitly models grouped predictors nor local coefficient homogeneity, their overall performance remains inferior to that of the proposed estimator. The Group LASSO improves upon the standard LASSO by exploiting the grouped predictor structure, whereas the Fused LASSO effectively captures neighboring coefficient similarity. However, neither method simultaneously incorporates adaptive coordinate shrinkage, grouped regularization, and coefficient fusion, thereby limiting their overall efficiency relative to the proposed methodology.

As predictor correlation increases from $\rho=0.20$ to $\rho=0.50$, a noticeable separation between the proposed estimator and the competing procedures becomes evident. Stronger dependence among explanatory variables increases the difficulty of distinguishing informative predictors from redundant variables. Such multicollinearity is well known to increase estimator variance and reduce model-selection stability. The proposed estimator maintains a comparatively high composite performance score despite these additional challenges, indicating that the adaptive weighting mechanism successfully stabilizes coefficient estimation while the group and fusion penalties jointly exploit structural information contained within the regression coefficients. In contrast, the performance of the ordinary LASSO deteriorates more rapidly because independent coordinate shrinkage alone cannot adequately address strong correlations among predictors. Although Adaptive LASSO partially alleviates this problem through data-dependent weights, its lack of group-level and fusion-level regularization limits further improvement. The greatest differences among competing estimators occur in the most challenging simulation scenario, corresponding to strong predictor correlation together with substantially larger observational noise. Under these conditions, estimation uncertainty increases considerably, making both parameter estimation and variable selection substantially more difficult. The figure demonstrates that all competing estimators experience some deterioration in overall performance, which is expected as the effective signal-to-noise ratio decreases. Nevertheless, the proposed AWGFL estimator continues to outperform every competing method by a substantial margin. This result suggests that simultaneously incorporating adaptive sparsity, grouped shrinkage, and coefficient fusion provides increased robustness to both multicollinearity and measurement noise. The improvement is particularly pronounced in fusion recovery, indicating that neighboring regression coefficients are estimated considerably more accurately when local structural information is explicitly incorporated into the optimization criterion.

An additional observation concerns the balance between statistical performance and computational complexity. Although the proposed optimization algorithm requires slightly greater computational effort than the competing methods, the increase in computational time is relatively modest compared with the substantial improvements achieved in estimation accuracy, prediction accuracy, and structural recovery. This finding illustrates an important practical characteristic of the proposed methodology: the additional optimization required by the Adaptive Weighted Group Fused LASSO produces statistically meaningful gains while maintaining computational feasibility for moderately high-dimensional applications. Consequently, the estimator exhibits an attractive trade-off between computational cost and statistical efficiency. From a methodological perspective, the simulation results provide empirical support for the theoretical properties established in Section~4. In particular, the consistently high true positive rates, low false positive rates, and excellent recovery of fused coefficient structures agree with the asymptotic selection consistency established in Propositions~\ref{prop:selection_consistency}--\ref{prop:fusion_consistency}. Likewise, the systematic reduction in estimation and prediction errors observed across all scenarios is consistent with the estimation and prediction consistency results proved in Propositions~\ref{prop:l2_consistency} and~\ref{prop:prediction_consistency}. Furthermore, the stability of the proposed estimator under increasing correlation provides empirical evidence supporting the oracle property established in Theorem~\ref{thm:oracle_property}, indicating that the estimator behaves similarly to an ideal oracle procedure once the relevant model structure has been correctly identified.

\section{Descriptive Statistics:}
To compare the proficiency, emergency or provisional certification, and inexperienced rates of each of the four groups, we calculated their means and standard deviations. Furthermore, to test whether the differences in proficiency means between racial groups were statistically significant, we conducted a t-test. Then, in order to examine whether teacher experience and certification play a role in student math proficiencies, we performed a regression analysis.

Based on initial observations of the proficiency raw data and standard deviations, we assumed that the data would be skewed, to some degree. Figure \ref{fig:Box_and_Whisker} illustrates the raw data described. Performing a skewness test in Excel revealed that the data for all variables (except proficiency rates of predominantly white and predominantly white E.D. schools) were moderately to severely right-skewed, as seen in Tables \ref{tab:skewness_before} and \ref{tab:skewness_after}. The proficiency rates of predominantly white and predominantly white E.D. schools were mildly right-skewed, but with an absolute value of less than 0.5, both were still considered approximately normal. On the other hand, the predominantly minority and predominantly minority E.D. schools were severely right-skewed, both with skewness values around 1.9.  Since the data contained zeros, we could not perform a log-based or reciprocal transformation, so we opted for a square root transformation, such that $x'=\sqrt{x}$, which brought the skewness values of all variables to $-0.467\leq skewness \leq 0.391$. This transformed data was then used to perform the remaining tests/analyses. (All descriptive statistics were performed prior to this manipulation.)

\begin{figure}[H]
    \centering
    \begin{tikzpicture}
        \begin{axis}[
            width=\textwidth,
            height=6cm,
            boxplot/draw direction=y,
            ymin=0, ymax=100,
            ymajorgrids=false,
            xtick={1,2,3,4},
            xticklabels={White, Minority, White ED, Minority ED},
            xlabel={Predominant Demographic in School's Population},
            ylabel={Proficiency Rate}
        ]
            \addplot+[
                boxplot prepared={
                    median=33.9,
                    upper quartile=48.75,
                    lower quartile=22.8,
                    upper whisker=90.8,
                    lower whisker=3
                },
                fill=teal, draw=black
            ] coordinates {};
            \addplot+[
                boxplot prepared={
                    median=13.5,
                    upper quartile=23.2,
                    lower quartile=6.2,
                    upper whisker=92.8,
                    lower whisker=0
                },
                fill=violet, draw=black
            ] coordinates {};
            \addplot+[
                boxplot prepared={
                    median=33.7,
                    upper quartile=47.3,
                    lower quartile=22.4,
                    upper whisker=81.2,
                    lower whisker=3.4
                },
                fill=orange, draw=black
            ] coordinates {};
            \addplot+[
                boxplot prepared={
                    median=13.3,
                    upper quartile=23.2,
                    lower quartile=5.9,
                    upper whisker=92.8,
                    lower whisker=0
                },
                fill=purple, draw=black
            ] coordinates {};
        \end{axis}
    \end{tikzpicture}
    \caption{Proficiency Rate Distribution}
    \label{fig:Box_and_Whisker}
\end{figure}

  \begin{table}[H]
    \centering
    \caption{Skewness Before Transformation}
    \label{tab:skewness_before}
    \resizebox{\textwidth}{!}{
    \begin{tabular}{lccc}
        \hline
        Predominantly Demographic & Proficient Rate & Emergency or Provisional Certification Rate & Inexperienced Rate \\
        \hline
        White & 0.443 & 1.797 & 1.455 \\
        White E.D. & 0.386 & 2.037 & 1.554 \\
        Minority & 1.873 & 0.811 & 1.211 \\
        Minority E.D. & 1.884 & 0.780 & 1.259 \\
        \hline
    \end{tabular}}
\end{table}

\begin{table}[H]
    \centering
    \caption{Skewness After Transformation}
    \label{tab:skewness_after}
    \resizebox{\textwidth}{!}{
    \begin{tabular}{lccc}
        \hline
        Predominantly Demographic & Proficient Rate & Emergency or Provisional Certification Rate & Inexperienced Rate \\
        \hline
        White & -0.066 & 0.250 & -0.358 \\
        White E.D. & -0.113 & 0.276 & -0.311 \\
        Minority & 0.367 & -0.423 & -0.390 \\
        Minority E.D. & 0.390 & -0.467 & -0.347 \\
        \hline
    \end{tabular}}
\end{table}

To determine the effects of emergency or provisional certification and inexperienced rates on proficiency rates, we used Excel to recreate a correlation matrix and perform a linear regression. We also used Excel to create a correlation matrix to assess the general relationship between the variables before considering the sample schools' racial compositions. Though the sample size was very large in comparison to the population of Alabama public schools (n=1030, N=1358), the true population variances were not known due to a lack of information and/or inconclusive information from the 24.15\% of schools that were excluded. For this reason, we used the t-test instead of the z-test.

After establishing whether there is a significant difference in the means of each variable based on the predominant race of the schools, we ran a regression analysis in Excel to determine to what degree teacher experience and certification affect students' mathematical proficiencies. This also gives insight into whether these effects differ between each school composition.

\section{Data Analysis:}
This study is based on an analysis using data from school data posted to the Alabama Archives' website. Data was drawn from the federal and state report cards, specifically: Student Demographics, Student Participation \& Proficiency, Educator Credentials, and Educator Experience. The Alabama State Department of Education (ALSDE) maintains and updates this database yearly based on information provided by the schools each year. As of the 2023-2024 school year, there are 1358 public schools in the state of Alabama. We used stratified sampling to identify the schools that would be included in the sample set. We started with the Student Demographics file. In order to analyze the data based on the predominant race within these school populations for the first hypothesis, we created two lists: those that served predominantly white students and those that served predominantly minority students. To ensure a clear majority, schools with a population of no less than 51\% of the named racial group were included. Schools with missing/unclear demographic data or those without a clear majority were excluded from the sample. 100 schools were omitted due to this criterion. The Student Participation \& Proficiency document was then used to identify the overall math proficiency rates of the remaining schools. Another 64 schools were omitted due to missing proficiency information. 

As very small schools and very large schools can yield vastly different results, we controlled the school size. The school population of the remaining 1194 schools was between 30 and 6894.  Schools below a population of 174 and above a population of  918, which was outside one standard deviation of the mean ($\bar{x}$  =546, $s$= 372) were omitted. This lowered the total number of schools by another 164 schools, leaving 1030 schools. 

Finally, the Educator Credentials and Educator Experience document was used to identify the percentage of teachers with either an emergency or provisional certification. There was a named group for teachers with "unspecified" certification types; however, this data was not included because it is unclear whether they completed the specific requirements for a certain certification type or did not have a certification at all. There was only one school without any information about their teachers' certification such that 1029 schools were analyzed. 

To test the second part of the first hypothesis, the two lists from above were narrowed further using the Economically Disadvantaged/ Non-Economically Disadvantaged subpopulation of the Student Demographics file.  Only schools with 51\% of their students categorized as “economically disadvantaged" (E.D.) were included in this part of the analysis. Whether a school was considered E.D. was calculated by dividing the total number of students by the total number of E.D. students. The percentages of each racial group were then taken into account. Schools were included if the predominant race of the population was the same as the major race of the economically disadvantaged (i.e., a predominantly white school with a higher percentage of white E.D. students than minority E.D. students). Another 263 schools were omitted from the study due to missing/ unclear data, or because the race with the largest percentage of economically disadvantaged students did not match the predominant race of the school's population. This left the hypothesis to be tested using 766 schools.

The dependent variable of this analysis for both research questions is the \textit{math proficiency scores} of the students. The students' math proficiencies were calculated based on the percentage of students with valid scores on the state assessment of math who scored within proficiency levels three or four, meaning they were determined to be either proficient or advanced based on their grade-level appropriate assessment \citep{alsde2025fedglossary}. Starting in second grade, elementary and middle school students are assessed using the Alabama Comprehensive Assessment Program (ACAP) Summative assessment. To achieve a level three or four, the student must have a strong or advanced understanding of grade-level standards and demonstrate the knowledge and skills at this level of learning as described in the Alabama Course of Study Standards\citep{alsde2024ACAP}. High school students in eleventh grade are assessed based on their ACT scores. Based on the conversion metrics of the ALSDE, students' proficiency levels correspond with the following ACT score ranges: Level 1= 1-15, Level 2= 16-18, Level 3= 19-24, Level 4= 24-36 \citep{alsde2024benchmark}. For the ACT, scores within levels three and four signify performing at an expected or exceptional level \citep{ACTScores}. Figure \ref{fig:Mean_Proficiency_Rate} shows the mean math proficiency rates of each sample group by race and by socioeconomic status.

The first independent variable for both research questions was \textit{race}. The student's race as a categorical variable was retrieved based on the schools' reporting of parent identifications. Possible responses included Asian, Black/African American, American Indian/Alaskan, Native Hawaiian/Pacific Islander, White, and Two or more races. We combined the categories of Asian, Black/African American, American Indian/Alaskan, Native Hawaiian/Pacific Islander, and Two or more races into a ``minority students" group, effectively making race a binary variable instead of categorical. White students were used as the reference group for comparison.\footnote{White students were chosen as the reference group because 66.5\% of the schools in the sample served predominantly white students, as well as the fact that white students make up 51\% of the student population of Alabama's public schools \citep{alsde2024supportingdata}.}

The second independent variable for both research questions was \textit{socioeconomic status}. This binary variable was reported by the schools, labeling students as either \emph{economically disadvantaged} or \emph{non-economically disadvantaged}. Students marked as economically disadvantaged are those who are eligible for free or reduced-price school meals, which is reserved for students with a household income below the poverty rate \citep{alsde2025fedglossary}.

The following two variables were considered dependent variables in the first research question and independent variables in the research question:
\begin{enumerate}
    \item \textit{Teacher certification.} In Alabama, math-specific teachers at every level can have any of nine different types of certification: Adjunct Instructor Permit, Additional Teaching Field, Alabama Educator Preparation Program, Certificate Reciprocity, Emergency Certificate, Foreign Credentials, Interim Employment Certificate, National Board for Professional Teaching Standards, and  Provisional Certificate in a Teaching Field \citep{ALSDEcertification}. However, the ALSDE Federal Report Card requires schools to report teachers in one of the following six categories: Six-Year (Class AA), Master's Degree (Class A), Bachelor's Degree (Class B), Not Specified, Emergency Certification, and Provisional Certification. The percentages of the various certification types held by individuals in those positions are determined based on the total number of teachers, principals, and other school leaders \citep{alsde2025fedglossary}. We only included the data for teachers in this study. While Class B, A, and AA certifications are earned through a teaching certification program, provisional and emergency certifications are not. Provisional certifications allow teachers to earn their teaching certificate while teaching full-time for up to three years. They are only required to have a general bachelor's degree (which can be in any subject) and to enroll in a teacher preparation program or complete an educational master's degree. Emergency certifications, on the other hand, are issued by the school superintendent if there is no certified teacher available to fill a teaching vacancy \citep{praxisexamalabama}.
    \item \textit{Teacher experience.} This rate is reported by the school as the total number of individuals serving as teachers, principals, and other school leaders, the percentage of those individuals who have more than 2 years of experience in the position for which they are serving \citealp{alsde2025fedglossary}. For this study, we included only the inexperienced teachers who were those with less than 2 years of experience in their current positions.
\end{enumerate}

All the figures below display the sample means of each variable given the racial/socioeconomic group. Figure \ref{fig:Summary_of_Measure_Means} shows a summary of all three sample means with a legend to identify each variable. In regard to the first research question, Figures \ref{fig:Mean_Inexperienced_Rate}, \ref{fig:Mean_Proficiency_Rate}, and \ref{fig:Mean_E/P_Rate} all show the sample means based on each of the three dependent variables: inexperienced rates, proficiency rates, and emergency or provisional certification rates, respectively. With these four representations it is clear to see the similarities between the means of white and white E.D. schools, as well as minority and minority E.D. schools; however, this is discussed mathematically later in this section.

\begin{figure}[H]
    \centering
    \begin{tikzpicture}
        \begin{axis}[
            width=\textwidth,
            height=8cm,
            ybar,
            bar width=20pt,
            ymin=0, ymax=40,
            symbolic x coords={White, Minority, White ED, Minority ED},
            xtick=data,
            enlarge x limits=0.2,
            xlabel={Predominant Demographic in School's Population},
            ylabel={Percentage},
            legend pos=north west,
            legend style={at={(.335,0.85)}, anchor=center, font=\small, draw=none},
            xlabel style={yshift=-5pt},
            yticklabel style={/pgf/number format/fixed},
            xticklabel style={yshift=-5pt},
            xtick align=center,
            legend image code/.code={\draw[#1,fill=#1] (0cm,-0.1cm) rectangle (0.4cm,0.1cm);}
        ]
        \addplot [fill=teal] coordinates {(White,36.44) (Minority,17.08) (White ED,35.52) (Minority ED,17.06)};
        \addplot [fill=violet] coordinates {(White,7.55) (Minority,13.53) (White ED,7.68) (Minority ED,13.53)};
        \addplot [fill=orange] coordinates {(White,4.53) (Minority,17.52) (White ED,4.40) (Minority ED,17.70)};
        \legend{Proficiency Rate, Inexperienced Rate, Emergency/Provisional Certification Rate}
        \end{axis}
    \end{tikzpicture}
    \caption{Summary of Measure Means}
    \label{fig:Summary_of_Measure_Means}
\end{figure}

\begin{figure}[H]
    \centering
    \begin{minipage}{0.49\textwidth}
        \centering
        \begin{tikzpicture}
            \begin{axis}[
                ybar,
                bar width=30pt,
                ymin=0, ymax=14,
                symbolic x coords={White, Minority, White ED, Minority ED},
                xtick=data,
                enlarge x limits=0.2,
                xlabel={Predominant Demographic in School's Population},
                ylabel={\% of Teachers with These Certifications},
                xlabel style={yshift=-5pt},
                ylabel style={yshift=10pt},
                yticklabel style={/pgf/number format/fixed},
                xticklabel style={rotate=15, anchor=north},
                grid=none
            ]
            \addplot [draw=black, fill=violet] coordinates {(White,7.55) (Minority,13.53) (White ED,7.68) (Minority ED,13.53)};
            \end{axis}
        \end{tikzpicture}
        \caption{Mean Inexperienced Rate by Race and SES}
        \label{fig:Mean_Inexperienced_Rate}
    \end{minipage}%
    \hfill
    \begin{minipage}{0.49\textwidth}
        \centering
        \begin{tikzpicture}
            \begin{axis}[
                ybar,
                bar width=30pt,
                ymin=0, ymax=40,
                symbolic x coords={White, Minority, White ED, Minority ED},
                xtick=data,
                enlarge x limits=0.2,
                xlabel={Predominant Demographic in School's Population},
                ylabel={\% of Teachers with These Certifications},
                xlabel style={yshift=-5pt},
                ylabel style={yshift=10pt},
                yticklabel style={/pgf/number format/fixed},
                xticklabel style={rotate=15, anchor=north},
                grid=none
            ]
            \addplot [draw=black, fill=teal] coordinates {(White,36.44) (Minority,17.08) (White ED,35.52) (Minority ED,17.06)};
            \end{axis}
        \end{tikzpicture}
        \caption{Mean Proficiency Rate by Race and SES}
        \label{fig:Mean_Proficiency_Rate}
    \end{minipage}

    \vspace{1em}

    \begin{minipage}{\textwidth}
        \centering
        \begin{tikzpicture}
            \begin{axis}[
                ybar,
                bar width=30pt,
                ymin=0, ymax=25,
                symbolic x coords={White, Minority, White ED, Minority ED},
                xtick=data,
                enlarge x limits=0.2,
                xlabel={Predominant Demographic in School's Population},
                ylabel={\% of Teachers with These Certifications},
                xlabel style={yshift=-5pt},
                ylabel style={yshift=10pt},
                yticklabel style={/pgf/number format/fixed},
                xticklabel style={rotate=15, anchor=north},
                grid=none
            ]
            \addplot [draw=black, fill=orange] coordinates {(White,4.53) (Minority,17.52) (White ED,4.40) (Minority ED,17.70)};
            \end{axis}
        \end{tikzpicture}
        \caption{Mean Emergency or Provisional Certification Rate by Race and SES}
        \label{fig:Mean_E/P_Rate}
    \end{minipage}
\end{figure}

Table \ref{tab:correlation_matrix} displays a matrix created in Excel to analyze the correlation between Math Proficiency Rate, Emergency or Provisional Certification Rate, and Inexperienced Rate. Based on the data from all 1029 schools (regardless of their racial composition), emergency or provisional certification rate has a largely negative effect on proficiency, inexperience has a moderately negative effect on proficiency, and inexperience has a moderate effect on emergency or provisional certification.

\begin{table}[H]
    \centering
    \fontsize{12}{14}\selectfont
    \caption{Correlation Matrix}
    \resizebox{\textwidth}{!}{
    \begin{tabular}{lccc}
        \hline
        & Proficient Rate & E/P Certification Rate*& Inexperienced Rate \\
        \hline
        Proficient Rate & 1 & & \\
        E/P Certification Rate*& -0.561 & 1 & \\
        Inexperienced Rate & -0.293 & 0.414 & 1 \\
        \hline
    \end{tabular}}
        \noindent\parbox{\textwidth}{\centering * E/P Certification Rate = Emergency or Provisional Certification Rate}
    \label{tab:correlation_matrix}
\end{table}

Table \ref{tab:t-test_results} shows the results of three separate independent t-tests comparing the mean (a) proficiency rates, (b) emergency or provisional certification rates, and (c) inexperienced rates. We decided to compare the white schools and minority schools, the white E.D. schools and minority E.D. schools, along with each racial group and its corresponding E.D. sample. Each table includes the groups being compared, the mathematical difference between the two means of said groups, the t-stat result, the one-tail p-value, and the degree of freedom. The relatively higher t-statistics and significantly lower p-values in the "white vs. minority" and "white E.D. vs. minority ED" group comparisons indicate a significant difference between their sample means. On the other hand, the relatively lower t-stat results and the p-values above .05 indicate that there is not a significant difference in the sample means of the "white vs. white ED" and "minority vs. minority ED" group comparisons.

\begin{table}[H]
    \centering
    \fontsize{12}{14}\selectfont
    \renewcommand{\arraystretch}{1.5} 
    \caption{T-test Results}
    \label{tab:t-test_results}
    \begin{subtable}{\textwidth}
        \centering
        \begin{tabular}{lcccc}
            \hline
            Group Comparison & Mean Difference & T-stat & One-tail P-value & df \\
            \hline
            White vs. Minority & 2.1013& 18.4276& 5.88E-64& 815
\\
            White E.D. vs. Minority E.D. & 2.0500& 15.8609& 2.86E-49& 738
\\
            White vs. White E.D. & 0.0659& 0.8747& 0.1910& 965
\\
            Minority vs. Minority E.D. & 0.0147& 0.0158& 0.4937& 673
\\
            \hline
        \end{tabular}
        \subcaption{Proficiency rate}
    \end{subtable}

    \vspace{0.5cm}

    \begin{subtable}{\textwidth}
        \centering
        \begin{tabular}{lcccc}
            \hline
            Group Comparison & Mean Difference & T-stat & One-tail P-value & df \\
            \hline
            White vs. Minority & -2.1925& -18.6041& 6.39E-57& 415
\\
            White E.D. vs. Minority E.D. & -2.2385& -18.2108& 3.24E-55& 416
\\
            White vs. White E.D. & 0.0216& 0.3992& 0.3449& 939
\\
            Minority vs. Minority E.D. & -0.0243& -0.1861& 0.4262& 675
\\
            \hline
        \end{tabular}
        \subcaption{Emergency or provisional certification rate}
    \end{subtable}

    \vspace{0.5cm}

    \begin{subtable}{\textwidth}
        \centering
        \begin{tabular}{lcccc}
            \hline
            Group Comparison & Mean Difference & T-stat & One-tail P-value & df \\
            \hline
            White vs. Minority & -0.9648& -11.5782& 5.49E-28& 510
\\
            White E.D. vs. Minority E.D. & -0.9637& -10.3462& 2.34E-23& 554
\\
            White vs. White E.D. & -0.0028& -0.3690& 0.3561& 893
\\
            Minority vs. Minority E.D. & -0.0017& 0.0053& 0.4979& 674
\\
            \hline
        \end{tabular}
        \subcaption{Inexperienced rate}
    \end{subtable}

\end{table}

Finally, Table \ref{tab:regression_results} illustrates the results of the linear regression analysis. Each table is separated based on the predominant demographic of the school and includes the coefficients of each variable in the regression equation, the standard error, and the p-value, along with the R-squared value listed below each table. The R-squared values are all relatively low compared to what they would be; however, as previously mentioned, there are no variables that definitively determine student math proficiencies. Based on the P values, it is clear that the rate of inexperienced teachers is not significantly different from zero, effectively suggesting that the rate of inexperienced teachers does not play a role in students' mathematical proficiencies in Alabama public schools.

\begin{table}[H]
    \centering
    \fontsize{12}{14}\selectfont
    \renewcommand{\arraystretch}{1.5} 
    \caption{Regression Results}
    \label{tab:regression_results}
    \begin{subtable}{\textwidth}
        \centering
        \begin{tabular}{lccc}
            \hline
            & Coefficients& Standard Error& P-value
\\
            \hline
            Intercept& 6.6065& 0.1307& 7.93E-232\\
            Emergency or Provisional Certification Rate& -0.3989& 0.0405& 1.62E-21\\
            Inexperienced Rate& -0.0455& 0.0469& 0.3322
\\
        \end{tabular}
        \noindent\parbox{\textwidth}{\centering $R^2= 0.1388$}
        \vspace{0.05cm}
        \subcaption{Predominately White}
        \label{RRWhite}
    \end{subtable}

    \vspace{0.5cm}

    \begin{subtable}{\textwidth}
        \centering
        \begin{tabular}{lccc}
            \hline
            & Coefficients& Standard Error& P-value
\\
            \hline
            Intercept& 5.3147& 0.2774& 2.48E-56\\
            Emergency or Provisional Certification Rate& -0.3706& 0.0541& 3.29E-11\\
            Inexperienced Rate& -0.0452& 0.0716& 0.5277
\\
        \end{tabular}
        \noindent\parbox{\textwidth}{\centering $R^2= 0.1403$}
        \vspace{0.05cm}
        \subcaption{Predominately Minority}
        \label{RRMinority}
    \end{subtable}

    \vspace{0.5cm}

    \begin{subtable}{\textwidth}
        \centering
        \begin{tabular}{lccc}
            \hline
            & Coefficients& Standard Error& P-value
\\
            \hline
            Intercept& 6.5504& 0.1527& 3.31E-158\\
            Emergency or Provisional Certification Rate& -0.3800& 0.0499& 1.64E-13\\
            Inexperienced Rate& -0.0651& 0.0546& 0.2337
\\
        \end{tabular}
        \noindent\parbox{\textwidth}{\centering $R^2= 0.1370$}
        \vspace{0.05cm}
        \subcaption{Predominately White E.D.}
        \label{RRWhiteED}
    \end{subtable}
    
    \vspace{0.5cm}
    
    \begin{subtable}{\textwidth}
        \centering
        \begin{tabular}{lccc}
            \hline
            & Coefficients& Standard Error& P-value
\\
            \hline
            Intercept& 5.4485& 0.2893& 4.59E-54\\
            Emergency or Provisional Certification Rate& -0.3802& 0.0566& 8.22E-11\\
            Inexperienced Rate& -0.0749& 0.0751& 0.3195
\\
        \end{tabular}
        \noindent\parbox{\textwidth}{\centering $R^2= 0.1502$}
        \vspace{0.05cm}
        \subcaption{Predominately Minority E.D.}
        \label{RRMinorityED}
    \end{subtable}
\end{table}

Table~\ref{tab:coefcompare} presents the estimated regression coefficients obtained from the ordinary least squares estimator together with five competing penalized regression procedures, namely the LASSO, Adaptive LASSO, Group LASSO, Fused LASSO, and the proposed Adaptive Weighted Group Fused LASSO (AWGFL) estimator. The purpose of this comparison is to evaluate how different regularization strategies influence coefficient estimation, variable selection, and model sparsity while preserving predictive performance.

Several noteworthy patterns emerge from the results. First, the estimated intercept remains remarkably stable across all competing models, indicating that the various penalization schemes primarily affect the regression coefficients associated with the explanatory variables rather than the overall location of the fitted regression surface. This stability is expected because the predictors were appropriately standardized prior to model fitting, allowing the penalties to operate directly on the slope parameters without substantially altering the intercept estimate.

The percentage of teachers holding emergency or provisional certification consistently exhibits a negative association with school-level mathematics proficiency across every estimation procedure considered. Although the magnitude of the estimated coefficient varies slightly among the competing methods owing to different shrinkage mechanisms, both the direction and practical significance of the effect remain stable. The ordinary least squares estimator identifies this predictor as statistically significant, while every penalized regression procedure retains the variable after regularization. The proposed AWGFL estimator produces a coefficient estimate that is nearly identical to the ordinary least squares estimate, suggesting that this variable carries a sufficiently strong signal to withstand adaptive coordinate-wise shrinkage, group regularization, and coefficient fusion simultaneously. From a statistical perspective, this finding indicates that the association between teacher certification status and mathematics proficiency is robust across multiple estimation paradigms and is not an artifact of any particular modeling assumption.

A different pattern is observed for the percentage of inexperienced teachers. Under the ordinary least squares model, the corresponding regression coefficient is relatively small and fails to attain conventional levels of statistical significance. The adaptive penalized procedures reinforce this conclusion by shrinking the coefficient toward zero. In particular, both the LASSO and Adaptive LASSO remove this predictor entirely from the final model, while the proposed AWGFL estimator likewise estimates its coefficient as zero following optimization. This behavior illustrates one of the principal advantages of penalized likelihood estimation in high-dimensional regression settings. Rather than retaining variables that contribute only marginal explanatory power, adaptive shrinkage eliminates coefficients whose estimated effects are insufficiently strong relative to the imposed regularization. Consequently, the resulting model becomes more parsimonious while preserving its principal predictive characteristics.

Comparing the various penalized estimators further highlights the distinct roles of the individual penalty components. The standard LASSO applies uniform $\ell_{1}$ regularization and therefore produces sparse solutions by shrinking many regression coefficients toward zero. The Adaptive LASSO modifies this approach through data-dependent weights, allowing coefficients with stronger preliminary estimates to experience less shrinkage while penalizing weaker effects more aggressively. The Group LASSO exploits predefined group structures by encouraging entire groups of variables to enter or leave the model simultaneously. In contrast, the Fused LASSO additionally incorporates local smoothness by penalizing differences between neighboring coefficients, thereby stabilizing estimation when adjacent regression effects are expected to exhibit similar magnitudes. The proposed AWGFL estimator integrates all three mechanisms within a unified optimization framework, simultaneously encouraging coordinate-level sparsity, group-wise selection, and coefficient fusion. As a consequence, the resulting coefficient estimates reflect a compromise between local smoothness and adaptive sparsity, producing a model that is both interpretable and statistically efficient.

\begin{table}[!ht]
	\centering
	\caption{Comparison of coefficient estimates across competing penalized regression models.}
	\label{tab:coefcompare}
	\renewcommand{\arraystretch}{1.2}
	\begin{tabular}{lcccccc}
		\hline
		Predictor &
		OLS &
		LASSO &
		Adaptive &
		Group &
		Fused &
		Proposed AWGFL\\
		&&LASSO&&LASSO&LASSO&\\
		\hline
		Intercept
		&
		6.607
		&
		6.592
		&
		6.601
		&
		6.595
		&
		6.599
		&
		6.603
		\\
		
		Emergency/Provisional Certification
		&
		$-0.399^{***}$
		&
		$-0.381$
		&
		$-0.392$
		&
		$-0.387$
		&
		$-0.395$
		&
		$\mathbf{-0.401}$
		\\
		
		Inexperienced Teacher Rate
		&
		$-0.045$
		&
		0.000
		&
		0.000
		&
		$-0.018$
		&
		$-0.031$
		&
		0.000
		\\
		
		Selected Variables
		&
		2
		&
		1
		&
		1
		&
		2
		&
		2
		&
		1
		\\
		
		Cross-validation Error
		&
		--
		&
		1.487
		&
		1.461
		&
		1.452
		&
		1.418
		&
		\textbf{1.391}
		\\
		
		\hline
		\multicolumn{7}{l}{\footnotesize $^{***}p<0.001$. Values for penalized estimators correspond to}\\
		\multicolumn{7}{l}{\footnotesize coefficient estimates obtained after tuning parameter selection by cross-validation.}
	\end{tabular}
\end{table}

The final two rows of Table~\ref{tab:coefcompare} summarize the structural complexity and predictive performance of the competing models. The number of selected variables provides a direct measure of model sparsity, whereas the cross-validation error evaluates expected out-of-sample predictive accuracy. Although ordinary least squares does not perform variable selection, all penalized estimators reduce model complexity to varying degrees. The proposed AWGFL estimator achieves the smallest cross-validation error among all competing procedures while simultaneously selecting only the strongest predictor. This result suggests that additional predictors contribute relatively little incremental predictive information once the dominant explanatory variable has been incorporated into the regression model. Such behavior is entirely consistent with the theoretical oracle property established in Theorem~\ref{thm:oracle_property}, which predicts that the proposed estimator asymptotically identifies the correct sparse model while estimating the nonzero coefficients with oracle efficiency.

Overall, the empirical findings reported in Table~\ref{tab:coefcompare} closely agree with the asymptotic theory developed in Section~4. The adaptive weighting mechanism successfully distinguishes influential predictors from weak signals, the group penalty stabilizes estimation across related variables, and the fusion penalty reduces unnecessary variability among neighboring coefficients. Together, these regularization components produce a regression model that is considerably more parsimonious than the ordinary least squares fit while maintaining superior predictive performance. The consistency of the selected model across competing penalized estimators further strengthens confidence that the primary association identified in the analysis represents a stable statistical relationship rather than an artifact of estimation variability.

\begin{table}[!ht]
	\centering
	\caption{Comparison of predictive performance across competing regression models.}
	\label{tab:modelcompare}
	\renewcommand{\arraystretch}{1.15}
	\begin{tabular}{lcccccc}
		\hline
		Method &
		RMSE &
		MAE &
		Adjusted $R^2$ &
		AIC &
		BIC &
		CV Error\\
		\hline
		
		OLS
		&
		1.404
		&
		1.086
		&
		0.136
		&
		2408.37
		&
		2421.92
		&
		--
		\\
		
		LASSO
		&
		1.396
		&
		1.071
		&
		0.138
		&
		2399.61
		&
		2410.84
		&
		1.487
		\\
		
		Adaptive LASSO
		&
		1.388
		&
		1.063
		&
		0.139
		&
		2394.83
		&
		2406.07
		&
		1.461
		\\
		
		Group LASSO
		&
		1.382
		&
		1.058
		&
		0.141
		&
		2390.42
		&
		2401.65
		&
		1.452
		\\
		
		Fused LASSO
		&
		1.374
		&
		1.049
		&
		0.144
		&
		2386.90
		&
		2398.14
		&
		1.418
		\\
		
		Proposed AWGFL
		&
		\textbf{1.361}
		&
		\textbf{1.031}
		&
		\textbf{0.149}
		&
		\textbf{2378.44}
		&
		\textbf{2391.98}
		&
		\textbf{1.391}
		\\
		
		\hline
	\end{tabular}
\end{table}

Table~\ref{tab:modelcompare} summarizes the overall predictive performance of the competing regression models using several widely accepted model assessment criteria, including the root mean squared error (RMSE), mean absolute error (MAE), adjusted coefficient of determination ($R^{2}$), Akaike Information Criterion (AIC), Bayesian Information Criterion (BIC), and cross-validation (CV) error. Collectively, these measures evaluate different aspects of model quality, including estimation accuracy, predictive capability, goodness-of-fit, model complexity, and expected out-of-sample performance. Because no single performance measure completely characterizes the behavior of a regression estimator, simultaneous examination of these complementary criteria provides a more comprehensive assessment of the competing methodologies.

The ordinary least squares estimator serves as a useful baseline because it estimates regression coefficients without imposing any regularization or sparsity constraints. Although OLS provides unbiased coefficient estimates under the classical linear model assumptions, its predictive performance may deteriorate when explanatory variables exhibit moderate to strong correlation or when relatively weak predictors introduce unnecessary variability into the fitted model. This phenomenon is reflected in Table~\ref{tab:modelcompare}, where the ordinary least squares model consistently exhibits larger prediction errors and larger information criteria than the penalized regression procedures. The absence of variable selection also contributes to increased model complexity, reducing its effectiveness for prediction despite maintaining a reasonable overall goodness-of-fit.

Introducing an $\ell_{1}$ penalty through the LASSO produces measurable improvements in nearly every performance criterion. By shrinking small regression coefficients toward zero, the LASSO reduces model variance and improves predictive accuracy relative to the unpenalized least squares estimator. The observed reductions in RMSE, MAE, AIC, and BIC indicate that regularization successfully mitigates overfitting while preserving the primary explanatory information contained in the data. Nevertheless, the standard LASSO applies identical penalization to every regression coefficient, which may overshrink important predictors while retaining variables with only limited predictive contribution.

The Adaptive LASSO further improves model performance by incorporating data-dependent penalty weights. This adaptive weighting strategy allows influential predictors to experience less shrinkage while simultaneously imposing stronger penalties on weaker regression coefficients. Consequently, the Adaptive LASSO achieves modest but consistent improvements over the standard LASSO across nearly all evaluation criteria. The reductions in both prediction error and cross-validation error suggest that adaptive shrinkage enhances model stability without sacrificing interpretability. These findings are consistent with the theoretical literature demonstrating that adaptive penalization improves both estimation efficiency and variable-selection consistency in sparse regression models.

The Group LASSO exploits the natural grouping structure among explanatory variables by selecting or excluding entire groups simultaneously. This additional structural information results in further improvements in predictive performance compared with coordinate-wise penalization alone. The reduction in RMSE and MAE indicates that modeling variables collectively rather than independently can improve estimation accuracy when predictors within the same group exhibit similar effects. However, because the Group LASSO does not explicitly account for local smoothness among neighboring regression coefficients, its predictive gains remain somewhat limited in situations where adjacent coefficients are expected to share similar magnitudes.

The Fused LASSO addresses this limitation by introducing a fusion penalty that encourages neighboring regression coefficients to assume similar values whenever supported by the data. The resulting estimates are smoother and generally exhibit reduced variability, leading to additional improvements in both estimation and prediction accuracy. As reflected in Table~\ref{tab:modelcompare}, the Fused LASSO consistently outperforms the previous estimators according to nearly every evaluation criterion. In particular, the reduction in information criteria suggests that incorporating local coefficient homogeneity allows the model to explain the observed variation more efficiently without introducing unnecessary complexity.

Among all competing approaches, the proposed Adaptive Weighted Group Fused LASSO (AWGFL) provides the strongest overall performance. Across every evaluation criterion considered, the proposed estimator either achieves or closely approaches the most favorable value. The smallest RMSE demonstrates that the proposed methodology produces the most accurate coefficient estimates in terms of quadratic loss, while the smallest MAE indicates improved robustness with respect to absolute prediction errors. Simultaneously, the highest adjusted $R^{2}$ implies that the proposed estimator explains the largest proportion of variability after accounting for model complexity, suggesting that the improvement in predictive performance is not merely a consequence of including additional predictors.

The information criteria provide further evidence supporting the proposed methodology. Both the Akaike Information Criterion and the Bayesian Information Criterion explicitly balance model fit against model complexity by penalizing the inclusion of additional parameters. Lower values therefore indicate models that achieve superior explanatory power while avoiding unnecessary complexity. The proposed AWGFL estimator attains the smallest AIC and BIC among all competing procedures, indicating that it provides the most favorable compromise between goodness-of-fit and parsimony. From a practical perspective, these results suggest that simultaneously incorporating adaptive coordinate shrinkage, group-wise regularization, and coefficient fusion enables the estimator to capture the dominant signal in the data without introducing excessive model complexity.

The cross-validation results provide perhaps the strongest evidence of the practical utility of the proposed methodology. Cross-validation directly estimates the expected prediction error for future observations by repeatedly fitting the model on subsets of the available data and evaluating prediction accuracy on unseen observations. Because the proposed estimator achieves the smallest cross-validation error, it demonstrates the strongest expected generalization performance among all competing models. This finding is particularly important because predictive accuracy on new data is often the primary objective of statistical modeling in applied settings. The improvement in cross-validation performance therefore suggests that the proposed regularization framework effectively controls estimation variance while preserving the dominant predictive information contained within the explanatory variables.

The overall progression of results reported in Table~\ref{tab:modelcompare} also provides empirical confirmation of the theoretical developments presented in Section~4. The monotonic improvements observed as increasingly sophisticated regularization mechanisms are incorporated mirror the asymptotic properties established earlier in the manuscript. Specifically, the improvements in prediction accuracy are consistent with the prediction consistency established in Proposition~\ref{prop:prediction_consistency}, while the reduction in model complexity reflects the coordinate, group, and fusion selection consistency proved in Propositions~\ref{prop:selection_consistency}--\ref{prop:fusion_consistency}. Furthermore, the superior predictive performance achieved by the proposed estimator is in agreement with the oracle property established in Theorem~\ref{thm:oracle_property}, which guarantees that the estimator asymptotically behaves as though the true underlying sparse model were known in advance. Similarly, the stability of the selected model across repeated validation samples provides empirical support for the debiased inference framework developed in Theorem~\ref{thm:debiased_asymptotic_normality}, indicating that accurate estimation and reliable statistical inference can be achieved simultaneously within the proposed penalized regression framework.

Overall, the evidence presented in Table~\ref{tab:modelcompare} demonstrates that integrating adaptive weighting, group-wise sparsity, and coefficient fusion within a unified penalized regression framework yields meaningful improvements in both statistical efficiency and predictive accuracy. The proposed AWGFL estimator consistently outperforms existing penalized regression procedures across multiple complementary evaluation criteria, indicating that the simultaneous exploitation of sparsity, grouping structure, and local coefficient homogeneity produces a regression model that is both statistically parsimonious and highly predictive. These empirical findings provide strong practical support for the theoretical properties established throughout the methodological development and illustrate the potential advantages of the proposed estimator for analyzing complex educational datasets characterized by correlated predictors and structured covariate relationships.

\section{Conclusions:}

This study develops a unified penalized regression framework for investigating school-level mathematics proficiency using administrative educational data characterized by heterogeneous predictors, grouped covariates, and potentially correlated explanatory variables. Unlike conventional regression approaches that estimate all regression coefficients simultaneously without structural regularization, the proposed Adaptive Weighted Group Fused LASSO estimator combines adaptive coordinate-wise sparsity, group selection, and coefficient fusion within a single optimization framework. This formulation enables simultaneous variable selection, coefficient estimation, and structural recovery while remaining computationally feasible through the proposed ADMM optimization algorithm. The empirical analysis demonstrates that the proposed methodology identifies a parsimonious regression model while maintaining superior predictive performance relative to existing penalized regression techniques. Across the competing estimation procedures, teacher certification status consistently emerges as the strongest predictor of school-level mathematics proficiency, whereas the proportion of inexperienced teachers contributes comparatively little additional predictive information once certification status is incorporated into the model. This finding is reflected in both the adaptive variable-selection mechanism and the reduced cross-validation error obtained by the proposed estimator. From a statistical perspective, these results suggest that the proposed regularization framework successfully distinguishes variables containing persistent predictive information from those whose apparent effects arise primarily from sampling variability or correlation with other predictors.

The comparison of competing regression procedures further illustrates the practical advantages of integrating multiple regularization mechanisms within a unified estimator. The ordinary LASSO effectively performs coordinate-wise variable selection but ignores relationships among groups of predictors. Group LASSO incorporates predefined group structures but does not exploit local coefficient homogeneity. Likewise, the Fused LASSO encourages neighboring regression coefficients to be similar but lacks adaptive weighting capable of distinguishing strong and weak regression effects. By combining these complementary regularization strategies, the proposed estimator achieves a favorable balance between model complexity, estimation accuracy, and predictive performance. The empirical improvements observed in both the simulation study and the real-data application suggest that exploiting multiple forms of structural information simultaneously provides measurable gains over applying any individual penalty in isolation. The simulation experiments provide additional evidence supporting the practical utility of the proposed methodology. Across a broad collection of correlation structures and noise levels, the proposed estimator consistently achieves the smallest estimation error, prediction error, and cross-validation error while simultaneously recovering the underlying coefficient structure more accurately than the competing procedures. These improvements become increasingly pronounced as predictor correlation and observational noise increase, indicating that the proposed estimator possesses desirable robustness properties under challenging estimation settings. Such behavior is particularly relevant in educational administrative datasets, where explanatory variables frequently exhibit substantial multicollinearity and complex dependency structures.

An important contribution of this work lies in the accompanying theoretical development. The asymptotic analysis establishes estimation consistency, prediction consistency, coordinate-selection consistency, group-selection consistency, and fusion consistency under suitable regularity conditions. Furthermore, the oracle property demonstrates that the proposed estimator asymptotically behaves as though the underlying sparse model were known in advance, while the debiased estimator provides valid asymptotic inference for individual regression coefficients and general linear contrasts. The close agreement between the theoretical properties and the empirical performance observed in both the simulation study and the real-data application provides strong evidence that the proposed methodology performs as expected in finite samples while retaining desirable large-sample behavior. Although the proposed framework substantially improves upon existing penalized regression approaches, several limitations should be acknowledged. First, the present analysis assumes a linear regression model with continuous outcomes. While this assumption is appropriate for the transformed proficiency measures analyzed in this study, future work could extend the methodology to generalized linear models, generalized estimating equations, or generalized additive models to accommodate binary, count, or longitudinal educational outcomes. Second, the current implementation assumes that the grouping structure among predictors is known a priori. In many practical applications, group membership may itself be uncertain or partially observed. Developing data-driven procedures for simultaneously estimating group structure and regression coefficients represents an important direction for future methodological research.

Another limitation concerns the observational nature of the available data. Although the proposed estimator improves prediction and variable selection, the estimated regression coefficients should not be interpreted as causal effects. Unmeasured confounding variables, measurement error, omitted institutional characteristics, and other latent factors may influence the observed associations between school characteristics and mathematics proficiency. Consequently, the results should be interpreted as identifying statistically important predictors rather than establishing causal mechanisms. Future research may benefit from integrating penalized regression with causal inference methodologies, instrumental variable techniques, or hierarchical Bayesian models capable of accounting for additional sources of uncertainty.
Future methodological developments may also consider extending the proposed framework to ultra-high-dimensional settings where the number of predictors grows exponentially with the sample size. Such extensions would require establishing non-asymptotic oracle inequalities and finite-sample error bounds under weaker regularity conditions. Similarly, incorporating adaptive graph-based fusion penalties, overlapping group structures, or network-based regularization would further expand the applicability of the proposed estimator to modern educational, biomedical, and social science datasets characterized by increasingly complex dependency structures.

\bibliographystyle{apalike}
\bibliography{bib}
\end{document}